\gdef\@fpheader{ }
\gdef\@journal{ }
\newif\ifnotoc\notocfalse
\newif\ifemailadd\emailaddfalse
\newif\iftoccontinuous\toccontinuousfalse
\def\@subheader{\@empty}
\def\@keywords{\@empty}
\def\@abstract{\@empty}
\def\@xtum{\@empty}
\def\@dedicated{\@empty}
\def\@arxivnumber{\@empty}
\def\@collaboration{\@empty}
\def\@collaborationImg{\@empty}
\def\@proceeding{\@empty}
\def\@preprint{\@empty}
\newcommand{\subheader}[1]{\gdef\@subheader{#1}}
\newcommand{\keywords}[1]{\if!\@keywords!\gdef\@keywords{#1}\else%
\PackageWarningNoLine{\jname}{Keywords already defined.\MessageBreak Ignoring last definition.}\fi}
\renewcommand{\abstract}[1]{\gdef\@abstract{#1}}
\newcommand{\dedicated}[1]{\gdef\@dedicated{#1}}
\newcommand{\arxivnumber}[1]{\gdef\@arxivnumber{#1}}
\newcommand{\proceeding}[1]{\gdef\@proceeding{#1}}
\newcommand{\xtumfont}[1]{\textsc{#1}}
\newcommand{\correctionref}[3]{\gdef\@xtum{\xtumfont{#1} \href{#2}{#3}}}
\newcommand\jname{JHEP}
\newcommand\acknowledgments{\section*{Acknowledgments}}
\newcommand\preprint[1]{\gdef\@preprint{\hfill #1}}
\newtheorem{theorem}{Theorem}
\newtheorem{corollary}[theorem]{Corollary}
\newenvironment{proof}[1][Proof]{\noindent\textbf{#1.} }{\ \rule{0.5em}{0.5em}}
\newcommand\note[2][]{%
\if!#1!%
\stepcounter{footnote}\footnotetext{#2}%
\else%
{\renewcommand\thefootnote{#1}%
\footnotetext{#2}}%
\fi}
\newtoks\auth@toks
\renewcommand{\author}[2][]{%
  \if!#1!%
    \auth@toks=\expandafter{\the\auth@toks#2\ }%
  \else
    \auth@toks=\expandafter{\the\auth@toks#2$^{#1}$\ }%
  \fi
}
\newtoks\affil@toks\newif\ifaffil\affilfalse
\newcommand{\affiliation}[2][]{%
\affiltrue
  \if!#1!%
    \affil@toks=\expandafter{\the\affil@toks{\item[]#2}}%
  \else
    \affil@toks=\expandafter{\the\affil@toks{\item[$^{#1}$]#2}}%
  \fi
}
\newtoks\email@toks\newcounter{email@counter}%
\newcommand{\emailAdd}[1]{%
\emailaddtrue%
\ifnum\theemail@counter>0\email@toks=\expandafter{\the\email@toks, \@email{#1}}%
\else\email@toks=\expandafter{\the\email@toks\@email{#1}}%
\fi\stepcounter{email@counter}}
\newcommand{\@email}[1]{\href{mailto:#1}{\tt #1}}
\newcommand*\collaboration[1]{\gdef\@collaboration{#1}}
\newcommand*\collaborationImg[2][]{\gdef\@collaborationImg{#2}}
\newcommand\afterLogoSpace{\smallskip}
\newcommand\afterSubheaderSpace{\vskip3pt plus 2pt minus 1pt}
\newcommand\afterProceedingsSpace{\vskip21pt plus0.4fil minus15pt}
\newcommand\afterTitleSpace{\vskip23pt plus0.06fil minus13pt}
\newcommand\afterRuleSpace{\vskip23pt plus0.06fil minus13pt}
\newcommand\afterCollaborationSpace{\vskip3pt plus 2pt minus 1pt}
\newcommand\afterCollaborationImgSpace{\vskip3pt plus 2pt minus 1pt}
\newcommand\afterAuthorSpace{\vskip5pt plus4pt minus4pt}
\newcommand\afterAffiliationSpace{\vskip3pt plus3pt}
\newcommand\afterEmailSpace{\vskip16pt plus9pt minus10pt\filbreak}
\newcommand\afterXtumSpace{\par\bigskip}
\newcommand\afterAbstractSpace{\vskip16pt plus9pt minus13pt}
\newcommand\afterKeywordsSpace{\vskip16pt plus9pt minus13pt}
\newcommand\afterArxivSpace{\vskip3pt plus0.01fil minus10pt}
\newcommand\afterDedicatedSpace{\vskip0pt plus0.01fil}
\newcommand\afterTocSpace{\bigskip\medskip}
\newcommand\afterTocRuleSpace{\bigskip\bigskip}
\newlength{\affiliationsSep}\setlength{\affiliationsSep}{-3pt}
\newcommand\beforetochook{\pagestyle{myplain}\pagenumbering{roman}}
\DeclareFixedFont\trfont{OT1}{phv}{b}{sc}{11}
\renewcommand\maketitle{
%% First page
\pagestyle{empty}
\thispagestyle{titlepage}
\setcounter{page}{0}
\noindent{\small\scshape\@fpheader}\@preprint\par

\afterLogoSpace
% Subheader
\if!\@subheader!\else\noindent{\trfont{\@subheader}}\fi
\afterSubheaderSpace
% Proceedings
\if!\@proceeding!\else\noindent{\sc\@proceeding}\fi
\afterProceedingsSpace
% Title
{\LARGE\flushleft\sffamily\bfseries\@title\par}
\afterTitleSpace
% Rule
\hrule height 1.5\p@%
\afterRuleSpace
% Collaboration
\if!\@collaboration!\else
{\Large\bfseries\sffamily\raggedright\@collaboration}\par
\afterCollaborationSpace
\fi
\if!\@collaborationImg!\else
{\normalsize\bfseries\sffamily\raggedright\@collaborationImg}\par
\afterCollaborationImgSpace
%% I leave the size and font so that if there are two collaboration
%% they can be linked with an 'and'
\fi
% Author
{\bfseries\raggedright\sffamily\the\auth@toks\par}
\afterAuthorSpace
% Affiliation
\ifaffil\begin{list}{}{%
\setlength{\leftmargin}{0.28cm}%
\setlength{\labelsep}{0pt}%
\setlength{\itemsep}{\affiliationsSep}%
\setlength{\topsep}{-\parskip}}
\itshape\small%
\the\affil@toks
\end{list}\fi
\afterAffiliationSpace
% E-mail
\ifemailadd %% if emailadd is true
\noindent\hspace{0.28cm}\begin{minipage}[l]{.9\textwidth}
\begin{flushleft}
\textit{E-mail:} \the\email@toks
\end{flushleft}
\end{minipage}
\else %% if emailaddfalse do nothing
\PackageWarningNoLine{\jname}{E-mails are missing.\MessageBreak Plese use \protect\emailAdd\space macro to provide e-mails.}
\fi
\afterEmailSpace
%Erratum or addendum
\if!\@xtum!\else\noindent{\@xtum}\afterXtumSpace\fi
% Abstract
\if!\@abstract!\else\noindent{\renewcommand\baselinestretch{.9}\textsc{Abstract:}}\ \@abstract\afterAbstractSpace\fi
% Keywords
\if!\@keywords!\else\noindent{\textsc{Keywords:}} \@keywords\afterKeywordsSpace\fi
% Arxivnumber
\if!\@arxivnumber!\else\noindent{\textsc{ArXiv ePrint:}} \href{http://arxiv.org/abs/\@arxivnumber}{\@arxivnumber}\afterArxivSpace\fi
% Dedication
\if!\@dedicated!\else\vbox{\small\it\raggedleft\@dedicated}\afterDedicatedSpace\fi
\ifnotoc\else
\iftoccontinuous\else\newpage\fi
\beforetochook\hrule
\tableofcontents
\afterTocSpace
\hrule
\afterTocRuleSpace
\fi
\setcounter{footnote}{0}
\pagestyle{myplain}\pagenumbering{arabic}
} % close the \renewcommand\maketitle{
\renewcommand{\baselinestretch}{1.1}\normalsize
\renewcommand{\@dotsep}{10000}
\newcommand\ps@myplain{
\pagenumbering{arabic}
\renewcommand\@oddfoot{\hfill-- \thepage\ --\hfill}
\renewcommand\@oddhead{}}
\let\ps@plain=\ps@myplain
\newcommand\ps@titlepage{\renewcommand\@oddfoot{}\renewcommand\@oddhead{}}
\numberwithin{equation}{section}
\renewcommand\section{\@startsection{section}{1}{\z@}%
                                   {-3.5ex \@plus -1.3ex \@minus -.7ex}%
                                   {2.3ex \@plus.4ex \@minus .4ex}%
                                   {\normalfont\large\bfseries}}
\renewcommand\subsection{\@startsection{subsection}{2}{\z@}%
                                   {-2.3ex\@plus -1ex \@minus -.5ex}%
                                   {1.2ex \@plus .3ex \@minus .3ex}%
                                   {\normalfont\normalsize\bfseries}}
\renewcommand\subsubsection{\@startsection{subsubsection}{3}{\z@}%
                                   {-2.3ex\@plus -1ex \@minus -.5ex}%
                                   {1ex \@plus .2ex \@minus .2ex}%
                                   {\normalfont\normalsize\bfseries}}
\renewcommand\paragraph{\@startsection{paragraph}{4}{\z@}%
                                   {1.75ex \@plus1ex \@minus.2ex}%
                                   {-1em}%
                                   {\normalfont\normalsize\bfseries}}
\renewcommand\subparagraph{\@startsection{subparagraph}{5}{\parindent}%
                                   {1.75ex \@plus1ex \@minus .2ex}%
                                   {-1em}%
                                   {\normalfont\normalsize\bfseries}}
\def\fnum@figure{\textbf{\figurename\nobreakspace\thefigure}}
\def\fnum@table{\textbf{\tablename\nobreakspace\thetable}}
\long\def\@makecaption#1#2{%
  \vskip\abovecaptionskip
  \sbox\@tempboxa{\small #1. #2}%
  \ifdim \wd\@tempboxa >\hsize
    \small #1. #2\par
  \else
    \global \@minipagefalse
    \hb@xt@\hsize{\hfil\box\@tempboxa\hfil}%
  \fi
  \vskip\belowcaptionskip}
\renewenvironment{thebibliography}[1]{%
\begin{oldthebibliography}{#1}%
\small%
\raggedright%
\setlength{\itemsep}{5pt plus 0.2ex minus 0.05ex}%
}%
{%
\end{oldthebibliography}%
}
\begin{document}

%%%%%%%%%%%%%%%%%%±êÌâÒ³%%%%%%%%%%%%%%%%%%%%%%%%%%%%%&&&&&&&&&&&&&&&&&&&&&&

\title{\boldmath Quantum Newton duality}

% more complex case: 4 authors, 3 institutions, 2 footnotes
\author[a,b]{Wen-Du Li}
\author[b,1]{and Wu-Sheng Dai}\note{daiwusheng@tju.edu.cn.}

% The "\note" macro will give a warning: "Ignoring empty anchor..."
% you can safely ignore it.
\affiliation[a]{Theoretical Physics Division, Chern Institute of Mathematics, Nankai University, Tianjin, 300071, P. R. China}
\affiliation[b]{Department of Physics, Tianjin University, Tianjin 300350, P.R. China}
%\affiliation[b]{LiuHui Center for Applied Mathematics, Nankai University \& Tianjin University, Tianjin 300072, P.R. China}
%\affiliation[c]{DP School}

% e-mail addresses: one for each author, in the same order as the authors
%\emailAdd{Ccc@one.edu.cn}
%\emailAdd{second@asas.edu}
%\emailAdd{daiwusheng@tju.edu.cn}
%\emailAdd{fourth@one.univ}

%\title{\boldmath A title with some math: $x=1$}
%% %simple case: 2 authors, same institution
%% \author{A. Uthor}
%% \author{and A. Nother Author}
%% \affiliation{Institution,\\Address, Country}

% more complex case: 4 authors, 3 institutions, 2 footnotes
%\author[a,b,1]{F. Irst,\note{Corresponding author.}}
%\author[c]{S. Econd,}
%\author[a,2]{T. Hird\note{Also at Some University.}}
%\author[a,2]{and Fourth}

% The "\note" macro will give a warning: "Ignoring empty anchor..."
% you can safely ignore it.

%\affiliation[a]{One University,\\some-street, Country}
%\affiliation[b]{Another University,\\different-address, Country}
%\affiliation[c]{A School for Advanced Studies,\\some-location, Country}

% e-mail addresses: one for each author, in the same order as the authors

%\emailAdd{first@one.univ}
%\emailAdd{second@asas.edu}
%\emailAdd{third@one.univ}
%\emailAdd{fourth@one.univ}

\abstract{Newton revealed an underlying duality relation between power potentials in
classical mechanics. In this paper, we establish the quantum version of the
Newton duality. The main aim of this paper is threefold: (1) first
generalizing the original Newton duality to more general potentials, including
general polynomial potentials and transcendental-function potentials, 2)
constructing a quantum version of the Newton duality, including power
potentials, general polynomial potentials, transcendental-function potentials,
and power potentials in different spatial dimensions, and 3) suggesting a
method for solving eigenproblems in quantum mechanics based on the quantum
Newton duality provided in the paper. The classical Newton duality is a
duality among orbits of classical dynamical systems. Our result shows that the
Newton duality is not only limited to power potentials, but a\ more universal
duality relation among dynamical systems with various potentials. The key task
of this paper is to construct a quantum Newton duality, the quantum version of
the classical Newton duality. The quantum Newton duality provides a duality
relations among wave functions and eigenvalues. As applications, we suggest a
method for solving potentials from their Newtonianly dual potential: once the
solution of a potential is known, the solution of all its dual potentials can
be obtained by the duality transformation directly. Using this method, we
obtain a series of exact solutions of various potentials. In appendices, as
preparations, we solve the potentials which is solved by the Newton duality
method in this paper by directly solving the eigenequation.}

%\keywords{Newton duality; Dual potential; Eigenproblem.}

\maketitle
\flushbottom
%%%%%%%%%%%%%%%%%%±êÌâÒ³½áÊø%%%%%%%%%%%%%%%%%%%%%%%%%%%%%&&&&&&&&&&&&&&&&&&&

%%%%%%%%%%ÕýÎÄ¿ªÊ¼

\section{Introduction}

In classical mechanics, Newton revealed a duality between power potentials in
his \textit{Principia\ }(Corollary III of Proposition VII)\textit{
}\cite{chandrasekhar1995newton}. Newton's duality relates the orbits of two
power potentials $U\left(  r\right)  =\xi r^{a+1}$ and $V\left(  \rho\right)
=\eta\rho^{A+1}$ by the duality relation $\frac{a+3}{2}\leftrightarrow\frac
{2}{A+3}$, $E\leftrightarrow-\eta$, and $\xi\leftrightarrow-\mathcal{E}$,
where $E$ and $\mathcal{E}$ are the energies of the orbits of $U\left(
r\right)  $ and $V\left(  \rho\right)  $, respectively. That is, for these two
power potentials $U\left(  r\right)  $ and $V\left(  \rho\right)  $, when
$\frac{a+3}{2}=\frac{2}{A+3}$, the energy of a system equals the negative
coupling constant of its dual system, and then the orbit of a potential can be
obtained from the orbit of its Newtonianly dual potential.

The Newton duality reveals profound dynamical nature of mechanical systems.
The original question Newton asked is that given a power law of centripetal
attraction, does there exist a dual law for which a body with the same
constant of areas will describe the same orbit \cite{chandrasekhar1995newton}.
The modern formulation of the Newton duality is given by some authors
\cite{arnold1990huygens,needham1993newton,hall2000planetary}. Newton pointed
out that between\ the harmonic-oscillator potential ($r^{2}$-potential)\ and
the Coulomb potential ($1/r$-potential), there exists such a duality
\cite{chandrasekhar1995newton}.

We will construct a quantum version for the Newton duality. The quantum Newton
duality constructed in the present paper includes not only power-potentials,
but also more general potentials.

(1) Newton himself found the classical Newton duality between power
potentials. In this paper, we generalize the Newton duality to a more general
case --- general polynomial potentials. The general polynomial potential is a
linear combination of power potentials with arbitrary real number powers.

(2) A quantum version of the classical Newton duality is a main result of this
paper. The quantum Newton duality provides a duality relation between wave
functions and eigenvalues, just like\ that the classical Newton duality
provides a duality relation between orbits. We will show that by the quantum
Newton duality relation between two dual potentials, one can transform the
wave function and eigenvalue of one potential to those of its dual potential.
Besides power potentials, we provide duality relations for more general kinds
of potentials, including general polynomial potentials and transcendental
function potentials. Moreover, for power potentials, beyond three dimensions,
we also consider the quantum Newton duality of two power potentials in
different spatial dimensions. It should be pointed out that if two potentials
are classical Newtonianly dual, they are also quantum Newtonianly dual.

(3) Another important issue of the present paper is to make the quantum Newton
duality serve as a method of solving eigenproblems in quantum mechanics. The
quantum Newton duality allows us to transform the eigenfunction and eigenvalue
of one potential to the eigenfunction and eigenvalue of its dual potential.
That is, if the solution of one potential is known, one can obtain the
solution of its dual potential through the duality relation directly.
Concretely, the original Newton duality is a duality between two one-term
power potentials. A one-term potential has one dual potential; in other words,
a potential $V_{1}$ with its duality $V_{2}$ constitute a dual pair $\left(
V_{1},V_{2}\right)  $. In this paper, we generalize the Newton duality to
general polynomial potentials with arbitrary terms. It will be shown that an
$N$-term general polynomial potential $V_{1}$ has $N$ dual potentials $V_{2},$
$V_{3},$ $\ldots,V_{N+1}$. These $N+1$ potentials, which are Newtonianly dual
to each other, constitute a dual set $\left(  V_{1},V_{2},\ldots
,V_{N+1}\right)  $. The eigenfunction and eigenvalue of the potentials in a
dual set are related by the duality relation. In a dual set $\left(
V_{1},V_{2},\ldots,V_{N+1}\right)  $, if a potential, say, $V_{i}$, is solved,
then the solution of the other potentials in the dual set can be immediately
obtained by the duality relation. The Newton duality provides us an efficient
tool for seeking the solution in mechanics.

To illustrate how this works, we solve some potentials from their exactly
solved dual potentials by the duality relation. The potentials solved in this
paper belong to the following dual sets: ($\alpha r^{2/3}$, $\frac{\alpha
}{\sqrt{r}}$), ($\alpha r^{6}$, $\frac{\alpha}{r^{3/2}}$), ($\beta e^{\alpha
r}$, $\frac{\beta}{\left(  r\ln\alpha r\right)  ^{2}}$), ($\alpha r^{2}%
+\frac{\beta}{r}$, $\frac{\alpha}{r}+\frac{\beta}{r^{3/2}}$, $\alpha
r^{2}+\beta r^{6}$), ($\alpha r^{2}+\beta r$,$\frac{\alpha}{r}+\frac{\beta
}{\sqrt{r}}$, $\frac{\alpha}{r^{2/3}}+\beta r^{2/3}$), ($\alpha r^{2}+\beta
r$, $\frac{\alpha}{r}+\frac{\beta}{\sqrt{r}}$, $\frac{\alpha}{r^{2/3}}+\beta
r^{2/3}$), ($\frac{\alpha}{\sqrt{r}}+\frac{\beta}{r^{3/2}}$, $\alpha
r^{2/3}+\frac{\beta}{r^{4/3}}$, $\alpha r^{6}+\beta r^{4}$), ($\alpha
r^{6}+\beta r^{2}$, $\frac{\alpha}{r^{3/2}}+\frac{\beta}{r}$), and ($\alpha
r^{2}+\frac{\beta}{r}+\sigma r$, $\frac{\alpha}{r}+\frac{\beta}{r^{3/2}}%
+\frac{\sigma}{r^{1/2}}$, $\alpha r^{2}+\beta r^{6}+\sigma r^{4}$,
$\frac{\alpha}{r^{2/3}}+\beta r^{2/3}+\frac{\sigma}{r^{4/3}}$). Especially,
besides solving such potentials with the help of the quantum Newton duality
relation, as verifications, we also solve them by the conventional method,
i.e., directly solving the eigenequation, in Appendices (\ref{Vm32}%
)-(\ref{Vm2323m43}).

In section \ref{classical}, we give a brief review for the original classical
Newton duality. In sections \ref{classicalGD} and \ref{classicalGPP}, we
discuss the generalization of the classical Newton duality. In section
\ref{classicalGD}, we give a general discussion on the classical Newton
duality, which shows the possibility of the generalization of Newton's
original result. In section \ref{classicalGPP}, the original classical Newton
duality which is only a duality between power potentials is generalized to
general polynomial potentials. In sections \ref{general}-\ref{eln}, we provide
a main result of this paper, the quantum Newton duality. A general discussion
of the quantum Newton duality is given in section \ref{general}. In section
\ref{3D}, we give a quantum version of the original Newton duality between
power potentials. In section \ref{arbitraryDIM}, we provide the quantum Newton
duality between power potentials in different spatial dimensions, i.e., a
duality between an $n$-dimensional potential and its $m$-dimensional dual
potential. In section \ref{polynomialnterms}, the quantum Newton duality for
general polynomial potentials is given. In section \ref{polynomialntermsmany},
we discuss the problem of the dual set. In section \ref{eln}, we give an
example of the Newton duality between two transcendental-function potentials.
In sections \ref{SolvingGD}-\ref{Solvingeln}, we show how to solve the
eigenproblem by the duality relation. As examples, we consider some sets of
dual potentials. Examples of dual sets of one-term potentials, including
three-dimensional harmonic-oscillator potentials and three-dimensional Coulomb
potentials, $n$-dimensional harmonic-oscillator potentials and $m$-dimensional
Coulomb potentials, ($\alpha r^{2/3}$, $\frac{\alpha}{\sqrt{r}}$), ($\alpha
r^{2/3}$, $\frac{\alpha}{\sqrt{r}}$), and ($\alpha r^{6}$, $\frac{\alpha
}{r^{3/2}}$) are given in section \ref{SolvingPP}. Examples of dual sets of
two-term potentials, ($\alpha r^{2}+\frac{\beta}{r}$, $\frac{\alpha}{r}%
+\frac{\beta}{r^{3/2}}$, $\alpha r^{2}+\beta r^{6}$), ($\alpha r^{2}+\beta
r$,$\frac{\alpha}{r}+\frac{\beta}{\sqrt{r}}$, $\frac{\alpha}{r^{2/3}}+\beta
r^{2/3}$), ($\alpha r^{2}+\beta r$, $\frac{\alpha}{r}+\frac{\beta}{\sqrt{r}}$,
$\frac{\alpha}{r^{2/3}}+\beta r^{2/3}$), ($\frac{\alpha}{\sqrt{r}}+\frac
{\beta}{r^{3/2}}$, $\alpha r^{2/3}+\frac{\beta}{r^{4/3}}$, $\alpha r^{6}+\beta
r^{4}$), and ($\alpha r^{6}+\beta r^{2}$, $\frac{\alpha}{r^{3/2}}+\frac{\beta
}{r}$) are given in section \ref{Solving2T}. An example of a dual set of
three-term potentials ($\alpha r^{2}+\frac{\beta}{r}+\sigma r$, $\frac{\alpha
}{r}+\frac{\beta}{r^{3/2}}+\frac{\sigma}{r^{1/2}}$, $\alpha r^{2}+\beta
r^{6}+\sigma r^{4}$, $\frac{\alpha}{r^{2/3}}+\beta r^{2/3}+\frac{\sigma
}{r^{4/3}}$) is given in section \ref{Solving3T}. A dual set of
transcendental-function potentials ($\frac{\xi}{\left(  r\ln\beta r\right)
^{2}}$, $\eta e^{\sigma\rho}$) is given in \ref{eln}. The conclusion is given
in section \ref{Conclusion}. Moreover, in Appendices (\ref{Vm32}%
)-(\ref{Vm2323m43}), as examples and verifications, we solve exact solutions
for some potentials.

\section{The classical Newton duality}

\subsection{The original Newton duality of power potentials: Revisit
\label{classical}}

The original Newton duality revealed by Newton himself is a duality between
two power potentials in classical mechanics. A modern expression of the
original\ classical Newton duality can be found in Refs.
\cite{arnold1990huygens,needham1993newton,hall2000planetary}. For
completeness, in this section we give a brief review on the original classical
Newton duality. Our expression and proof are somewhat different from the
original version.

\begin{theorem}
Two power potentials%
\begin{align}
U\left(  r\right)   &  =\xi r^{a+1},\\
V\left(  \rho\right)   &  =\eta\rho^{A+1},
\end{align}
are classically Newtonianly dual to each other, if
\begin{equation}
\frac{a+3}{2}=\frac{2}{A+3}. \label{aA}%
\end{equation}
The orbit with the energy $\mathcal{E}$ of the potential $V\left(
\rho\right)  $ can be obtained by performing the replacements%
\begin{align}
E  &  \rightarrow-\eta,\label{Eh}\\
\xi &  \rightarrow-\mathcal{E} \label{XiE}%
\end{align}
to the orbit with the energy $E$ of the potential $U\left(  r\right)  $.
\textit{As a result, the corresponding transformation of coordinates is}%
\begin{align}
r  &  \rightarrow\rho^{\left(  A+3\right)  /2}\text{ \ \ or \ \ }r^{\left(
a+3\right)  /2}\rightarrow\rho,\\
\theta &  \rightarrow\frac{A+3}{2}\phi\text{ \ \ or \ \ }\frac{a+3}{2}%
\theta\rightarrow\phi.
\end{align}
\textit{ }
\end{theorem}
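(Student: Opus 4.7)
The plan is to derive the orbit equation for each of the two central potentials from conservation of energy and angular momentum, and then show by an explicit change of variable that the two orbit integrals are mapped into one another. For a particle of mass $m$ and angular momentum $L$ moving in a central potential $U(r)$, the orbit $r(\theta)$ is determined by the quadrature
\[
\theta-\theta_{0}=\int\frac{L\,dr}{r^{2}\sqrt{2m\bigl(E-U(r)\bigr)-L^{2}/r^{2}}}.
\]
First I would substitute $U(r)=\xi r^{a+1}$ into this integral and then perform the coordinate change $r=\rho^{\alpha}$ with $\alpha:=(A+3)/2$. Using the duality condition~(\ref{aA}), which is equivalent to $(a+3)(A+3)=4$, i.e., to the two simultaneous identities $\alpha(a+3)=2$ and $2\alpha=A+3$, the exponents of $\rho$ that appear under the square root collapse in just the right way: after pulling an appropriate common $\rho$-factor out of the root, the three terms $2mE$, $-2m\xi r^{a+1}$, $-L^{2}/r^{2}$ become $2mE\rho^{A+3}$, $-2m\xi\rho^{2}$, $-L^{2}$, respectively, and the Jacobian $dr=\alpha\rho^{\alpha-1}d\rho$ contributes an overall factor $\alpha$.

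Next I would write the orbit integral for the second potential $V(\rho)=\eta\rho^{A+1}$ at energy $\mathcal{E}$ and angular momentum $\mathcal{L}$, and reorganise it in the same way so that the three monomials under its square root are $2m\mathcal{E}\rho^{2}$, $-2m\eta\rho^{A+3}$, $-\mathcal{L}^{2}$. A term-by-term comparison of the two integrands then shows that the two orbits coincide precisely when $\theta=\alpha\phi$, $\mathcal{L}=L$, $E=-\eta$, and $\xi=-\mathcal{E}$; this reproduces the coordinate rescalings $r\to\rho^{(A+3)/2}$ and $\theta\to\tfrac{A+3}{2}\phi$ together with the parameter replacements (\ref{Eh}) and (\ref{XiE}). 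The preservation of the angular momentum (rather than a rescaling of it) is the nontrivial consistency check that forces the two sign flips in the $E\leftrightarrow-\eta$ and $\xi\leftrightarrow-\mathcal{E}$ dictionary, because only with those signs do the kinetic, potential, and centrifugal monomials appear with the correct relative signs in the transformed integrand.

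The main technical obstacle is the careful bookkeeping of powers of $\rho$ when $r=\rho^{\alpha}$ is inserted into the integrand and $\rho$-factors are shuffled between the measure, the prefactor $1/r^{2}$, and the square root. The whole argument hinges on the single exponent identity $\alpha(a+3)=2$: without it the transformed integrand would not have the same functional shape as the target, and no duality would be possible. Once that identity is invoked the remainder is a direct verification, and the explicit coordinate transformations quoted in the statement are read off from the change of variable itself rather than imposed by hand.
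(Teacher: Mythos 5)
Your proposal is correct and follows essentially the same route as the paper: both start from the standard orbit quadrature $d\theta/dr=(L/r^{2})/\sqrt{2[E-L^{2}/(2r^{2})-U(r)]}$, apply the power-law coordinate change $r=\rho^{(A+3)/2}$ governed by the exponent identity $(a+3)(A+3)=4$, and read off the dual orbit equation after the sign-flipped replacements $E\to-\eta$, $\xi\to-\mathcal{E}$ with the angular momentum held fixed. The only cosmetic differences are that you work with the integral form of the orbit equation and keep the mass $m$ explicit, whereas the paper manipulates the differential form directly in units with $2m=2$.
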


\begin{proof}
For the potential $U\left(  r\right)  =\xi r^{a+1}$, the equation of the orbit
$r=r\left(  \theta\right)  $ of the energy $E$ reads
\cite{goldstein2002classical}%
\begin{equation}
\frac{d\theta}{dr}=\frac{L/r^{2}}{\sqrt{2\left[  E-L^{2}/\left(
2r^{2}\right)  -\xi r^{a+1}\right]  }}, \label{dthetadr}%
\end{equation}
where $L$ is the angular momentum.

Substituting the replacements (\ref{aA}), (\ref{Eh}), and (\ref{XiE}) into the
orbit equation (\ref{dthetadr}), we arrive at%
\begin{equation}
\frac{d\theta}{dr}=\frac{L/r^{2}}{\sqrt{2\left[  \left(  -\eta\right)
-L^{2}/\left(  2r^{2}\right)  -\left(  -\mathcal{E}\right)  r^{\frac{4}%
{A+3}-2}\right]  }}.
\end{equation}
Rearranging the equation as%
\begin{equation}
\frac{d\left(  \frac{2}{A+3}\theta\right)  }{d\left(  r^{\frac{2}{A+3}%
}\right)  }=\frac{L/\left(  r^{\frac{2}{A+3}}\right)  ^{2}}{\sqrt{2\left\{
\mathcal{E}-L^{2}/\left[  2\left(  r^{\frac{2}{A+3}}\right)  ^{2}\right]
-\eta\left(  r^{\frac{2}{A+3}}\right)  ^{A+1}\right\}  }},
\end{equation}
we can see that this is just the equation of the orbit $\rho\left(
\phi\right)  $ with the energy $\mathcal{E}$ of the potential $V\left(
\rho\right)  =\eta\rho^{A+1}$ with a transformation of coordinates%
\begin{align}
r^{\frac{2}{A+3}}  &  \rightarrow\rho,\\
\frac{2}{A+3}\theta &  \rightarrow\phi.
\end{align}

\end{proof}

It can be seen that the energy of an orbit of a system equals the negative
coupling constant of its dual system.

It is worth mentioning that in classical mechanics the angular momenta of
these two Newtonianly dual systems are the same.

\subsection{General discussions \label{classicalGD}}

The original Newton duality is only for power potentials. In this paper, we
will generalize Newton's result to more general cases. To show the possibility
of the generalization of the original classical Newton duality, we first give
a general discussion on the duality in classical mechanics.

In classical mechanics, the orbit equation of the energy $E$ of the potential
$U\left(  r\right)  $ reads \cite{goldstein2002classical}%
\begin{equation}
\frac{d\theta\left(  r\right)  }{dr}=\frac{L/r^{2}}{\sqrt{2\left[
E-L^{2}/\left(  2r^{2}\right)  -U\left(  r\right)  \right]  }}. \label{eqr}%
\end{equation}
Writing the transformation of coordinates on the orbital plane as%
\begin{align}
r  &  \rightarrow g\left(  \rho\right)  ,\\
\theta\left(  r\right)   &  \rightarrow f\left(  \rho\right)  \phi\left(
\rho\right)
\end{align}
gives
\begin{equation}
\frac{d\phi\left(  \rho\right)  }{d\rho}=\frac{L/\rho^{2}}{\sqrt{2\left[
E\left(  \frac{g^{2}\left(  \rho\right)  f\left(  \rho\right)  }{g^{\prime
}\left(  \rho\right)  \rho^{2}}\right)  ^{2}-\frac{L^{2}}{2g^{2}\left(
\rho\right)  }\left(  \frac{g^{2}\left(  \rho\right)  f\left(  \rho\right)
}{g^{\prime}\left(  \rho\right)  \rho^{2}}\right)  ^{2}-U\left(  \rho\right)
\left(  \frac{g^{2}\left(  \rho\right)  f\left(  \rho\right)  }{g^{\prime
}\left(  \rho\right)  \rho^{2}}\right)  ^{2}\right]  }}-\frac{f^{\prime
}\left(  \rho\right)  }{f\left(  \rho\right)  }\phi\left(  \rho\right)  .
\label{eqphirho1}%
\end{equation}
Here the only change is a transformation of coordinates, so, of course, Eq.
(\ref{eqphirho1}) must also be an orbit equation.

In order to make Eq. (\ref{eqphirho1}) still serve as an orbit equation, it is
required that%
\begin{equation}
\frac{f^{\prime}\left(  \rho\right)  }{f\left(  \rho\right)  }=0,
\end{equation}
i.e.,%
\begin{equation}
f\left(  \rho\right)  =\beta\label{clafbeta}%
\end{equation}
with $\beta$ a constant. Substituting Eq. (\ref{clafbeta}) into Eq.
(\ref{eqphirho1}) gives%
\begin{equation}
\frac{d\phi\left(  \rho\right)  }{d\rho}=\frac{L/\rho^{2}}{\sqrt{2\left[
\beta^{2}E\left(  \frac{g^{2}\left(  \rho\right)  }{g^{\prime}\left(
\rho\right)  \rho^{2}}\right)  ^{2}-\beta^{2}\frac{L^{2}}{2g^{2}\left(
\rho\right)  }\left(  \frac{g^{2}\left(  \rho\right)  }{g^{\prime}\left(
\rho\right)  \rho^{2}}\right)  ^{2}-\beta^{2}U\left(  \rho\right)  \left(
\frac{g^{2}\left(  \rho\right)  }{g^{\prime}\left(  \rho\right)  \rho^{2}%
}\right)  ^{2}\right]  }}. \label{eqphirho2}%
\end{equation}
Insisting that Eq. (\ref{eqphirho2}) is still an orbit equation, we must
requires that in
\begin{equation}
\beta^{2}E\left(  \frac{g^{2}\left(  \rho\right)  }{g^{\prime}\left(
\rho\right)  \rho^{2}}\right)  ^{2}-\beta^{2}\frac{L^{2}}{2g^{2}\left(
\rho\right)  }\left(  \frac{g^{2}\left(  \rho\right)  }{g^{\prime}\left(
\rho\right)  \rho^{2}}\right)  ^{2}-\beta^{2}U\left(  \rho\right)  \left(
\frac{g^{2}\left(  \rho\right)  }{g^{\prime}\left(  \rho\right)  \rho^{2}%
}\right)  ^{2}, \label{transeq}%
\end{equation}%
\begin{align}
&  \text{serving as the centrifugal potential, there must exist a term being
proportional to }1/\rho^{2}\text{,}\label{classicalcd1}\\
&  \text{serving as the energy, there must exist a constant term which is
independent of }\rho\text{. } \label{classicalcd2}%
\end{align}

The choice is not unique, and different choices lead to different dualities.

\subsection{General polynomial potentials \label{classicalGPP}}

In this section, we generalize the classical Newton duality to general
polynomial potentials. By \textit{general polynomial }here we mean a
superposition power series containing arbitrary real-number powers.
The\ original Newton duality, a duality between power potentials, is a special
case of this general duality since a power potential is nothing but a one-term
general polynomial potential.

\subsubsection{Generalizing the classical Newton duality to general polynomial
potentials}

\begin{theorem}
\label{cndgpp} \textit{Two general polynomial potentials}%
\begin{equation}
U\left(  r\right)  =\xi r^{a+1}+\sum_{n}\mu_{n}r^{b_{n}+1}\text{ \ and
\ }V\left(  \rho\right)  =\eta\rho^{A+1}+\sum_{n}\lambda_{n}\rho^{B_{n}+1}%
\end{equation}
\textit{are classically Newtonianly dual to each other, if}%
\begin{align}
\frac{a+3}{2}  &  =\frac{2}{A+3},\label{claAanda}\\
\sqrt{\frac{2}{a+3}}\left(  b_{n}+3\right)   &  =\sqrt{\frac{2}{A+3}}\left(
B_{n}+3\right)  . \label{claBandb}%
\end{align}
The orbit with the energy $\mathcal{E}$ of the potential $V\left(
\rho\right)  $ can be obtained by performing the replacements%
\begin{align}
E  &  \rightarrow-\eta,\label{claEta}\\
\xi &  \rightarrow-\mathcal{E},\label{claxieta}\\
\mu_{n}  &  \rightarrow\lambda_{n} \label{clamulamb}%
\end{align}
to the orbit with the energy $E$ of the potential $U\left(  r\right)  $.
\textit{As a result, the corresponding transformation of coordinates is}%
\begin{align}
r  &  \rightarrow\rho^{\left(  A+3\right)  /2}\text{ \ \ or \ \ }r^{\left(
a+3\right)  /2}\rightarrow\rho,\\
\theta &  \rightarrow\frac{A+3}{2}\phi\text{ \ \ or \ \ }\frac{a+3}{2}%
\theta\rightarrow\phi.
\end{align}

\end{theorem}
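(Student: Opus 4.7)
The plan is to generalize the one-term proof of Section \ref{classical} by carrying the extra polynomial terms through the same coordinate change. Start with the orbit equation (\ref{eqr}) for $U(r)=\xi r^{a+1}+\sum_{n}\mu_{n}r^{b_{n}+1}$ at energy $E$, apply the proposed replacements $E\to-\eta$, $\xi\to-\mathcal{E}$, $\mu_{n}\to\lambda_{n}$, and then substitute the coordinate transformation $r=\rho^{(A+3)/2}$, $\theta=\frac{A+3}{2}\phi$. The Jacobian contributes an overall factor $\rho^{-(A+1)/2}$ on the right-hand side, while $L/r^{2}$ becomes $L/\rho^{A+3}$ and the centrifugal term $L^{2}/(2r^{2})$ becomes $L^{2}/(2\rho^{A+3})$.

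The core of the argument is exponent bookkeeping. Condition (\ref{claAanda}) rearranges to $a+1=-2(A+1)/(A+3)$, so $r^{a+1}\mapsto\rho^{-(A+1)}$. Using $\sqrt{a+3}=2/\sqrt{A+3}$ in condition (\ref{claBandb}) gives $b_{n}+1=2(B_{n}-A)/(A+3)$, so $r^{b_{n}+1}\mapsto\rho^{B_{n}-A}$. The expression under the square root therefore becomes
\[
-\eta-\frac{L^{2}}{2\rho^{A+3}}+\mathcal{E}\rho^{-(A+1)}-\sum_{n}\lambda_{n}\rho^{B_{n}-A}.
\]
The decisive step is to factor $\rho^{-(A+1)}$ out of this bracket: it simultaneously converts $\mathcal{E}\rho^{-(A+1)}$ into the new energy $\mathcal{E}$, the constant $-\eta$ into $-\eta\rho^{A+1}$, the centrifugal $-L^{2}/(2\rho^{A+3})$ into $-L^{2}/(2\rho^{2})$, and each $-\lambda_{n}\rho^{B_{n}-A}$ into $-\lambda_{n}\rho^{B_{n}+1}$. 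The resulting $\rho^{-(A+1)/2}$ factor outside the square root cancels the Jacobian prefactor and combines with $L/\rho^{A+3}$ to give precisely $L/\rho^{2}$, leaving the orbit equation of $V(\rho)=\eta\rho^{A+1}+\sum_{n}\lambda_{n}\rho^{B_{n}+1}$ at energy $\mathcal{E}$.

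I do not anticipate a genuine obstacle; the proof is essentially a verification. The non-obvious design feature, rather than a difficulty, is that the two duality conditions are chosen so that a single factoring operation aligns every term at once: (\ref{claAanda}) ensures $a+1+(A+1)=0$ so that $\xi$, which has become $-\mathcal{E}$, lands on the constant-in-$\rho$ slot, while (\ref{claBandb}) ensures $B_{n}-A+(A+1)=B_{n}+1$ so that each auxiliary coupling $\mu_{n}\to\lambda_{n}$ lands on the correct dual exponent. The two conditions are therefore complementary and jointly necessary; once they are in place, the calculation is mechanical and extends term by term to arbitrarily many polynomial summands.
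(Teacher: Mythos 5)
Your proposal is correct and follows essentially the same route as the paper: apply the replacements $E\to-\eta$, $\xi\to-\mathcal{E}$, $\mu_{n}\to\lambda_{n}$ to the orbit equation and then perform the power-law change of coordinates, with conditions (\ref{claAanda}) and (\ref{claBandb}) guaranteeing that the constant, centrifugal, and polynomial slots all align after factoring out $\rho^{-(A+1)}$. The only difference is presentational — you write the substitution as $r=\rho^{(A+3)/2}$ and track the Jacobian explicitly, whereas the paper regroups the same equation in terms of $r^{2/(A+3)}$ — and your exponent bookkeeping checks out.
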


\begin{proof}
For the potential $U\left(  r\right)  =\xi r^{a+1}+\sum_{n}\mu_{n}r^{b_{n}+1}%
$, the orbit $r=r\left(  \theta\right)  $ of the energy $E$ reads%
\begin{equation}
\frac{d\theta}{dr}=\frac{L/r^{2}}{\sqrt{2\left[  E-L^{2}/\left(
2r^{2}\right)  -\left(  \xi r^{a+1}+\sum_{n}\mu_{n}r^{b_{n}+1}\right)
\right]  }}. \label{claorb}%
\end{equation}

Substituting the replacements (\ref{claAanda}), (\ref{claEta}),
(\ref{claxieta}), and (\ref{clamulamb}) into the orbit equation (\ref{claorb}%
), we arrive at%
\begin{equation}
\frac{d\theta}{dr}=\frac{L/r^{2}}{\sqrt{2\left[  \left(  -\eta\right)
-L^{2}/\left(  2r^{2}\right)  -\left(  -\mathcal{E}\right)  r^{\frac{4}%
{A+3}-2}-\sum_{n}\lambda_{n}r^{\frac{2}{A+3}\left(  B_{n}+3\right)
-2}\right]  }},
\end{equation}
Rearranging the equation as%
\begin{equation}
\frac{d\left(  \frac{2}{A+3}\theta\right)  }{d\left(  r^{\frac{2}{A+3}%
}\right)  }=\frac{L/\left(  r^{\frac{2}{A+3}}\right)  ^{2}}{\sqrt{2\left\{
\mathcal{E}-L^{2}/\left[  2\left(  r^{\frac{2}{A+3}}\right)  ^{2}\right]
-\eta\left(  r^{\frac{2}{A+3}}\right)  ^{A+1}-\sum_{n}\lambda_{n}\left(
r^{\frac{2}{A+3}}\right)  ^{B_{n}+1}\right\}  }},
\end{equation}
we can see that this is just an orbit equation of the energy $\mathcal{E}$ for
the potential $V\left(  \rho\right)  =\eta\rho^{A+1}+\sum\lambda_{n}%
\rho^{B_{n}+1}$ with the transformation of coordinates%
\begin{align}
r^{\frac{2}{A+3}}  &  \rightarrow\rho,\\
\frac{2}{A+3}\theta &  \rightarrow\phi.
\end{align}

\end{proof}

It is worth showing how the result given by Theorem \ref{cndgpp} is reached.

The classical Newton duality for general polynomial potentials given by
Theorem \ref{cndgpp} can be obtained by choosing the term $\beta^{2}%
\frac{L^{2}}{2g^{2}\left(  \rho\right)  }\left(  \frac{g^{2}\left(
\rho\right)  }{g^{\prime}\left(  \rho\right)  \rho^{2}}\right)  ^{2}$ in Eq.
(\ref{transeq}) as the centrifugal potential term, i.e.,%
\begin{equation}
\beta^{2}\frac{L^{2}}{2g^{2}\left(  \rho\right)  }\left(  \frac{g^{2}\left(
\rho\right)  }{g^{\prime}\left(  \rho\right)  \rho^{2}}\right)  ^{2}%
=\frac{L^{2}}{2\rho^{2}} \label{cgppL}%
\end{equation}
to satisfy Condition \ref{classicalcd1}. Solving Eq. (\ref{cgppL}) gives
\begin{equation}
g\left(  \rho\right)  =\rho^{\beta},
\end{equation}
i.e., the transformation of coordinates is%
\begin{equation}
r\rightarrow\rho^{\beta}\text{ \ or \ }r^{1/\beta}\rightarrow\rho.
\label{rrho}%
\end{equation}
With the transformation of coordinates (\ref{rrho}), Eq. (\ref{transeq})
becomes%
\begin{equation}
E\rho^{2\beta-2}-\frac{L^{2}}{2\rho^{2}}-\rho^{2\beta-2}U\left(  \rho\right)
. \label{rrho1}%
\end{equation}
To satisfy Condition \ref{classicalcd2}, we choose%
\begin{equation}
-\rho^{2\beta-2}U\left(  \rho\right)  =\mathcal{E}-\sum_{n}\kappa_{n}\rho^{n}.
\end{equation}
Then we have%
\begin{equation}
U\left(  \rho\right)  =-\mathcal{E}\rho^{2-2\beta}+\sum_{n}\kappa_{n}%
\rho^{n+2-2\beta},
\end{equation}
and Eq. (\ref{transeq}) becomes%
\begin{equation}
\mathcal{E}-\frac{L^{2}}{2\rho^{2}}+E\rho^{2\beta-2}-\sum_{n}\kappa_{n}%
\rho^{n}.
\end{equation}
Applying the transformation of coordinates (\ref{rrho}), we obtain a pair of
dual potentials%
\begin{align}
U\left(  r\right)   &  =-\mathcal{E}r^{\left(  2-2\beta\right)  /\beta}%
+\sum_{n}\kappa_{n}r^{\left(  n+2-2\beta\right)  /\beta},\label{Urcgp}\\
V\left(  \rho\right)   &  =-E\rho^{2\beta-2}+\sum_{n}\kappa_{n}\rho^{n}.
\label{Vrhocgp}%
\end{align}

Rewrite the potential $U\left(  r\right)  $ given by Eq. (\ref{Urcgp}) as%
\begin{equation}
U\left(  r\right)  =\xi r^{a+1}+\sum_{n}\mu_{n}r^{b_{n}+1},
\end{equation}
with%
\begin{align}
\xi &  \rightarrow-\mathcal{E},\label{claxieta2}\\
\beta &  \rightarrow\frac{2}{a+3},\label{clabetaa}\\
b_{n}  &  \rightarrow\frac{\left(  2+n\right)  \left(  a+3\right)  }%
{2}-3,\label{claba}\\
\kappa_{n}  &  \rightarrow\mu_{n}; \label{kappamu}%
\end{align}
rewrite the potential $V\left(  \rho\right)  $ given by Eq. (\ref{Vrhocgp})
as
\begin{equation}
V\left(  \rho\right)  =\eta\rho^{A+1}+\sum_{n}\lambda_{n}\rho^{B_{n}+1},
\end{equation}
with%
\begin{align}
E  &  \rightarrow-\eta,\label{claeeta}\\
\beta &  \rightarrow\frac{A+3}{2},\label{clabetaA}\\
B_{n}  &  \rightarrow n-1,\label{claB}\\
\kappa_{n}  &  \rightarrow\lambda_{n}. \label{kappalambda}%
\end{align}
Then comparing Eq. (\ref{claxieta2}) with Eq. (\ref{claeeta}), Eq.
(\ref{clabetaa}) with Eq. (\ref{clabetaA}), Eq. (\ref{claba}) with\ Eq.
(\ref{claB}), and Eq. (\ref{kappamu}) with Eq. (\ref{kappalambda}), we arrive
at the duality relation given by Theorem \ref{cndgpp}.

\subsubsection{General polynomial potentials consist of two Newton dual
potentials \label{classical polynomial}}

In this section, we consider an interesting special case of the Newton duality
of general polynomial potentials. In this case, $U_{1}\left(  r\right)  $ and
$V_{1}\left(  \rho\right)  $ are Newtonianly dual, $U_{2}\left(  r\right)  $
and $V_{2}\left(  \rho\right)  $ are Newtonianly dual, and their sum $U\left(
r\right)  =U_{1}\left(  r\right)  +U_{2}\left(  r\right)  $ and $V\left(
\rho\right)  =V_{1}\left(  \rho\right)  +V_{2}\left(  \rho\right)  $ are also
Newtonianly dual.

\begin{theorem}
Two general polynomial potentials\newline%
\begin{align}
U\left(  r\right)   &  =\xi r^{a+1}+\mu r^{2\left(  \sqrt{\frac{a+3}{2}%
}-1\right)  },\\
V\left(  \rho\right)   &  =\eta\rho^{A+1}+\mu\rho^{2\left(  \sqrt{\frac
{A+3}{2}}-1\right)  },
\end{align}
are Newtonianly dual, while $\xi r^{a+1}$ and $\eta\rho^{A+1}$ are Newtonianly
dual and $\mu r^{2\left(  \sqrt{\frac{a+3}{2}}-1\right)  }$ and $\mu
\rho^{2\left(  \sqrt{\frac{A+3}{2}}-1\right)  }$ are Newtonianly dual, if
\begin{equation}
\frac{a+3}{2}=\frac{2}{A+3}.
\end{equation}

\end{theorem}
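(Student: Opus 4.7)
The plan is to reduce the claim to two applications of the results already established earlier in the paper: Theorem \ref{cndgpp} (for the joint duality of $U$ and $V$) and the original Newton duality theorem for power potentials (for the pairwise dualities of the individual terms). The only real computation is an exponent check, which turns out to be designed so that it is automatic.

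First I would verify the joint duality. Matching $U(r) = \xi r^{a+1} + \mu r^{2(\sqrt{(a+3)/2}-1)}$ with the form $\xi r^{a+1} + \sum_n \mu_n r^{b_n+1}$ in Theorem \ref{cndgpp}, I read off a single extra term with $b+1 = 2(\sqrt{(a+3)/2}-1)$, hence $b+3 = 2\sqrt{(a+3)/2}$. Analogously, for $V(\rho)$ we have $B+3 = 2\sqrt{(A+3)/2}$. The hypothesis $\frac{a+3}{2} = \frac{2}{A+3}$ is exactly condition \eqref{claAanda} of Theorem \ref{cndgpp}. It remains to check condition \eqref{claBandb}, namely
\begin{equation}
\sqrt{\tfrac{2}{a+3}}(b+3) = \sqrt{\tfrac{2}{A+3}}(B+3).
\end{equation}
Substituting the values of $b+3$ and $B+3$ above gives both sides equal to $2\sqrt{\tfrac{2}{a+3}\cdot\tfrac{a+3}{2}} = 2$ and $2\sqrt{\tfrac{2}{A+3}\cdot\tfrac{A+3}{2}} = 2$, respectively. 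Thus the exponent $2(\sqrt{(a+3)/2}-1)$ is tailored so that condition \eqref{claBandb} holds automatically for any choice of $a$ and $A$. Theorem \ref{cndgpp} then immediately implies that $U$ and $V$ are Newtonianly dual.

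Next I would check the pairwise dualities. For the first pair $\xi r^{a+1}$ and $\eta\rho^{A+1}$, the hypothesis $\frac{a+3}{2}=\frac{2}{A+3}$ is exactly the requirement \eqref{aA} of the original Newton duality theorem, so they are Newtonianly dual. For the second pair $\mu r^{2(\sqrt{(a+3)/2}-1)}$ and $\mu\rho^{2(\sqrt{(A+3)/2}-1)}$, I introduce exponents $\tilde{a}$, $\tilde{A}$ via $\tilde{a}+1 = 2(\sqrt{(a+3)/2}-1)$ and $\tilde{A}+1 = 2(\sqrt{(A+3)/2}-1)$. Then $(\tilde{a}+3)/2 = \sqrt{(a+3)/2}$ and $2/(\tilde{A}+3) = \sqrt{2/(A+3)}$, and the hypothesis $\frac{a+3}{2}=\frac{2}{A+3}$ immediately gives $(\tilde{a}+3)/2 = 2/(\tilde{A}+3)$, which is again condition \eqref{aA}. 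Hence the second pair is also Newtonianly dual by the original theorem.

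There is no real obstacle here; the only delicate point is to notice why the special exponent $2(\sqrt{(a+3)/2}-1)$ appears in the statement. The point is that Theorem \ref{cndgpp} demands that every extra power term in $U(r)$ be paired with a corresponding power in $V(\rho)$ satisfying \eqref{claBandb}; the chosen exponent is precisely the one for which the dual exponent in $V(\rho)$ depends on $A$ in the mirror-image form $2(\sqrt{(A+3)/2}-1)$, making $\mu$ map to itself rather than to an independent coupling. Once that is observed, the proof is a direct composition of the previously proved theorems.
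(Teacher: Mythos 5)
Your proof is correct and matches the paper's intent: the paper offers no written argument beyond the remark that ``the proof is similar to the proof of Theorem \ref{cndgpp},'' and your verification that $b+3=2\sqrt{(a+3)/2}$ and $B+3=2\sqrt{(A+3)/2}$ make condition \eqref{claBandb} hold identically (both sides equal $2$) is exactly the computation that remark is pointing at. Invoking Theorem \ref{cndgpp} directly for the joint duality and the original power-potential theorem for the two pairwise dualities is a clean and complete way to discharge the claim.
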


The proof is similar to the proof of Theorem \ref{cndgpp}.

\section{The quantum Newton duality}

\subsection{General discussions \label{general}}

The main aim of this paper is to provide a quantum version of the classical
Newton duality, including power potentials, general polynomial potentials, and
some other potentials.

In quantum mechanics, instead of the orbit, the wave function is used to
describe a mechanical system. Therefore, the quantum Newton duality is a
duality relation between wave functions. Moreover, the energy of
quantum-mechanical bound states is discrete, so we also need a duality
relation between discrete bound-state eigenvalue spectra.

Now we first give a general discussion on the quantum Newton duality.

Generally speaking, the existence of the Newton duality between two potentials
means that there exists a duality relation which can transform the
eigenfunction and eigenvalue of a potential to those of the other potential.
In other words, the eigenfunction and eigenvalue of a potential can be
achieved from the solution of its dual potential with the help of the duality relation.

Consider two potentials $U\left(  r\right)  $ and $V\left(  \rho\right)  $.
The radial equation of $U\left(  r\right)  $ is
\begin{equation}
\frac{d^{2}u\left(  r\right)  }{dr^{2}}+\left[  E-\frac{l\left(  l+1\right)
}{r^{2}}-U\left(  r\right)  \right]  u\left(  r\right)  =0 \label{1}%
\end{equation}
with the eigenfunction $u\left(  r\right)  $ and eigenvalue $E$; the radial
equation of $V\left(  \rho\right)  $ is
\begin{equation}
\frac{d^{2}v\left(  \rho\right)  }{d\rho^{2}}+\left[  \mathcal{E}-\frac
{\ell\left(  \ell+1\right)  }{\rho^{2}}-V\left(  \rho\right)  \right]
v\left(  \rho\right)  =0,
\end{equation}
with the eigenfunction $v\left(  \rho\right)  $ and eigenvalue $\mathcal{E}$.
Two potentials $U\left(  r\right)  $ and $V\left(  \rho\right)  $ are called a
pair of dual potentials, if there exists a duality transform which can
transform $u\left(  r\right)  $ and $E$ to $v\left(  \rho\right)  $ and
$\mathcal{E}$.

Generally, write the duality transformations of coordinates and eigenfunctions
as
\begin{align}
r  &  \rightarrow g\left(  \rho\right)  ,\label{t1}\\
u\left(  r\right)   &  \rightarrow f\left(  \rho\right)  v\left(  \rho\right)
, \label{t2}%
\end{align}
where the duality transform is described by the functions $g\left(
\rho\right)  $ and $f\left(  \rho\right)  $. Substituting Eqs. (\ref{t1}) and
(\ref{t2}) into the radial equation of $U\left(  r\right)  $, Eq. (\ref{1}),
gives
\begin{equation}
v^{\prime\prime}\left(  \rho\right)  +\left(  2\frac{f^{\prime}\left(
\rho\right)  }{f\left(  \rho\right)  }-\frac{g^{\prime\prime}\left(
\rho\right)  }{g^{\prime}\left(  \rho\right)  }\right)  v^{\prime}\left(
\rho\right)  +\left(  \frac{f^{\prime\prime}\left(  \rho\right)  }{f\left(
\rho\right)  }-\frac{f^{\prime}\left(  \rho\right)  }{f\left(  \rho\right)
}\frac{g^{\prime\prime}\left(  \rho\right)  }{g^{\prime}\left(  \rho\right)
}\right)  v\left(  \rho\right)  +g^{\prime}\left(  \rho\right)  ^{2}\left[
E-\frac{l\left(  l+1\right)  }{g^{2}\left(  \rho\right)  }-U\left(  g\left(
\rho\right)  \right)  \right]  v\left(  \rho\right)  =0. \label{2}%
\end{equation}

If requiring that Eq. (\ref{2}) is the radial equation of the potential
$V\left(  \rho\right)  $ with the eigenfunction $v\left(  \rho\right)  $, the
first-order derivative term in Eq. (\ref{2}) must vanish, i.e.,%
\begin{equation}
2\frac{f^{\prime}\left(  \rho\right)  }{f\left(  \rho\right)  }-\frac
{g^{\prime\prime}\left(  \rho\right)  }{g^{\prime}\left(  \rho\right)  }=0,
\label{fdgdd}%
\end{equation}
and the solution of Eq. (\ref{fdgdd}) gives a relation between $f\left(
\rho\right)  $ and $g\left(  \rho\right)  $,%
\begin{equation}
f\left(  \rho\right)  =\sqrt{g^{\prime}\left(  \rho\right)  }. \label{fg}%
\end{equation}

Then substituting Eq. (\ref{fg}) into Eq. (\ref{2}) gives%
\begin{equation}
v^{\prime\prime}\left(  \rho\right)  +\left[  g^{\prime}\left(  \rho\right)
^{2}E+\frac{g^{\prime\prime\prime}\left(  \rho\right)  }{2g^{\prime}\left(
\rho\right)  }-\frac{3g^{\prime\prime}\left(  \rho\right)  ^{2}}{4g^{\prime
}\left(  \rho\right)  ^{2}}-l\left(  l+1\right)  \frac{g^{\prime}\left(
\rho\right)  ^{2}}{g^{2}\left(  \rho\right)  }-g^{\prime}\left(  \rho\right)
^{2}U\left(  g\left(  \rho\right)  \right)  \right]  v\left(  \rho\right)  =0.
\label{3}%
\end{equation}
\ \ \ \ \ \ \ \ \ \ \ \ \ \

If insisting that Eq. (\ref{3}) is still a radial equation, then it must
requires that%
\begin{align}
&  \text{there must exist a term in the coefficient of }v\left(  \rho\right)
\text{ playing the role of the centrifugal potential,}\nonumber\\
&  \text{which should be porportional to }1/\rho^{2}\text{,}\label{cd1}\\
&  \text{there must exist a term in the coefficient of }v\left(  \rho\right)
\text{ playing the role of the eigenvalue,}\nonumber\\
&  \text{which should be a constant }\mathcal{E}\text{.} \label{cd2}%
\end{align}

Different choices lead different duality relations.

\subsection{Three-dimensional power potentials \label{3D}}

The original Newton duality in \textit{Principia} is the duality between power
potentials. In this section, we present a quantum version of the Newton
duality for three-dimensional power potentials.

\begin{theorem}
\label{Twf3D} Two power potentials%
\begin{equation}
U\left(  r\right)  =\xi r^{a+1}\text{ \ and }V\left(  \rho\right)  =\eta
\rho^{A+1} \label{UV}%
\end{equation}
are quantum Newtonianly dual to each other, if%
\begin{equation}
\frac{a+3}{2}=\frac{2}{A+3}. \label{aA3D}%
\end{equation}
The bound-state eigenfunction of the energy $\mathcal{E}$ and the angular
quantum number $\ell$ of the potential $V\left(  \rho\right)  $\ can be
obtained by performing the replacements%
\begin{align}
E  &  \rightarrow-\eta\left(  \frac{2}{A+3}\right)  ^{2},\label{coup}\\
\xi &  \rightarrow-\mathcal{E}\left(  \frac{2}{A+3}\right)  ^{2} \label{en}%
\end{align}
to the bound-state eigenfunction of the energy $E$ and the angular quantum
number $l$ of the potential $U\left(  r\right)  $. As a result, the duality
relation of the angular momentum between these two dual systems is
\begin{equation}
l+\frac{1}{2}\rightarrow\frac{2}{A+3}\left(  \ell+\frac{1}{2}\right)  ,
\label{ltrans}%
\end{equation}
the transformation of coordinates is%
\begin{equation}
r\rightarrow\rho^{\left(  A+3\right)  /2}\text{ \ or \ }r^{\left(  a+3\right)
/2}\rightarrow\rho, \label{coorTpower}%
\end{equation}
and the transformation of the eigenfunction is%
\begin{equation}
u\left(  r\right)  \rightarrow\rho^{\left(  A+1\right)  /4}v\left(
\rho\right)  \text{ \ or \ }r^{\left(  a+1\right)  /4}u\left(  r\right)
\rightarrow v\left(  \rho\right)  . \label{wTpower}%
\end{equation}

\end{theorem}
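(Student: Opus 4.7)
The plan is to apply the general machinery developed in section \ref{general} with the specific choice $g(\rho)=\rho^{(A+3)/2}$. First I would take the radial equation \eqref{1} for $U(r)=\xi r^{a+1}$ with angular quantum number $l$ and energy $E$, and perform the substitutions $r\to g(\rho)$ and $u(r)\to f(\rho)v(\rho)$. The relation $f(\rho)=\sqrt{g'(\rho)}$ derived in \eqref{fg} kills the first-derivative term automatically, reducing the task to analyzing the zeroth-order piece of equation \eqref{3}, where I must identify one $1/\rho^2$ term (centrifugal), one constant term (new eigenvalue), and one term proportional to $\rho^{A+1}$ (new potential).

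Next I would compute the derivative pieces explicitly. Setting $p:=(A+3)/2$, so $g'(\rho)=p\rho^{p-1}$, a short calculation gives
\[
\frac{g'''(\rho)}{2g'(\rho)}-\frac{3g''(\rho)^{2}}{4g'(\rho)^{2}}=-\frac{p^{2}-1}{4\rho^{2}},
\]
while the transformed centrifugal term contributes $-l(l+1)g'(\rho)^{2}/g(\rho)^{2}=-l(l+1)p^{2}/\rho^{2}$. These combine into $-[p^{2}(l+\tfrac12)^{2}-\tfrac14]/\rho^{2}$, which is equal to the target centrifugal term $-[(\ell+\tfrac12)^{2}-\tfrac14]/\rho^{2}$ precisely when $p(l+\tfrac12)=\ell+\tfrac12$, yielding \eqref{ltrans}.

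The key algebraic step is then identifying the two remaining terms. One has $g'(\rho)^{2}E=p^{2}E\rho^{A+1}$, which has exactly the $\rho$-dependence of $V(\rho)=\eta\rho^{A+1}$; demanding that it play the role of $-V(\rho)$ in \eqref{3} forces $p^{2}E=-\eta$, which is \eqref{coup}. The other is $-g'(\rho)^{2}U(g(\rho))=-p^{2}\xi\,\rho^{A+1+(A+3)(a+1)/2}$, and here the duality condition \eqref{aA3D} — equivalently $(A+3)(a+1)/2=-(A+1)$ — collapses the exponent to zero, leaving the constant $-p^{2}\xi$, which plays the role of the new eigenvalue $\mathcal{E}$ and produces \eqref{en}. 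Collecting all terms then reconstitutes the radial equation for $V(\rho)$ with eigenvalue $\mathcal{E}$ and angular quantum number $\ell$.

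The routine but error-prone step will be verifying that the Schwarzian-like piece from the $g$-derivatives combines with the transformed centrifugal term to give a pure $1/\rho^{2}$ with the correct coefficient; this is what forces the shift by $\tfrac12$ in \eqref{ltrans} rather than a naive $l\to p\ell$ relation. Once this is done, the identifications of $E$ with the dual coupling constant and of $\xi$ with the dual energy are dictated purely by matching powers of $\rho$, and the duality condition $\tfrac{a+3}{2}=\tfrac{2}{A+3}$ is exactly the requirement that the exponent $A+1+(A+3)(a+1)/2$ vanish. Finally, the transformation \eqref{wTpower} of the eigenfunction is immediate from $f(\rho)=\sqrt{g'(\rho)}=\sqrt{p}\,\rho^{(A+1)/4}$, up to the irrelevant overall constant $\sqrt{p}$ absorbed in normalization.
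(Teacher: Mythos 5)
Your proposal is correct, but it takes a genuinely different route from the paper's proof. The paper proves the theorem by pure verification: it substitutes the stated replacements (\ref{aA3D}), (\ref{coup}), (\ref{en}) directly into the radial equation (\ref{radailequr}) for $U(r)$, then regroups the resulting equation in terms of the new variable $\rho=r^{2/(A+3)}$ and reads off that it coincides with the radial equation of $V(\rho)=\eta\rho^{A+1}$, with the angular-momentum, coordinate, and eigenfunction relations emerging from the regrouping. You instead run the general transformation machinery of section \ref{general} forward: you fix $g(\rho)=\rho^{(A+3)/2}$, $f=\sqrt{g'}$, compute the Schwarzian-like combination $\tfrac{g'''}{2g'}-\tfrac{3g''^{2}}{4g'^{2}}=-\tfrac{p^{2}-1}{4\rho^{2}}$, and then \emph{derive} the replacements by matching powers of $\rho$ against the target radial equation. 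This is essentially the route the paper relegates to its informal ``how the result is reached'' discussion following the proof (its Eqs. (\ref{gd2g2})--(\ref{Eeta})), promoted to a proof proper. Your calculations check out: the combined $1/\rho^{2}$ coefficient $-[p^{2}(l+\tfrac12)^{2}-\tfrac14]$ matched against $-[(\ell+\tfrac12)^{2}-\tfrac14]$ gives exactly (\ref{ltrans}); the exponent $A+1+(A+3)(a+1)/2$ vanishes precisely under (\ref{aA3D}); and $p^{2}E=-\eta$, $\mathcal{E}=-p^{2}\xi$ reproduce (\ref{coup}) and (\ref{en}). Your derivational approach buys an explanation of \emph{why} the replacements take the form they do and why the half-integer shift appears in (\ref{ltrans}); the paper's substitution approach is shorter but presupposes the answer.
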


\begin{proof}
The radial equation of the potential $U\left(  r\right)  =\xi r^{a+1}$ reads%
\begin{equation}
\frac{d^{2}u\left(  r\right)  }{dr^{2}}+\left[  E-\frac{l\left(  l+1\right)
}{r^{2}}-\xi r^{a+1}\right]  u\left(  r\right)  =0, \label{radailequr}%
\end{equation}
where $u\left(  r\right)  =R\left(  r\right)  /r$ with $R\left(  r\right)  $
the radial wave function. Substituting the duality relations (\ref{aA3D}),
(\ref{coup}), and (\ref{en}) into the radial equation (\ref{radailequr}) gives%
\begin{equation}
\frac{d^{2}u\left(  r\right)  }{dr^{2}}+\left\{  \left[  -\eta\left(  \frac
{2}{A+3}\right)  ^{2}\right]  -\frac{l\left(  l+1\right)  }{r^{2}}-\left[
-\mathcal{E}\left(  \frac{2}{A+3}\right)  ^{2}\right]  r^{4/\left(
A+3\right)  -2}\right\}  u\left(  r\right)  =0.
\end{equation}
Rearrange the equation as%
\begin{equation}
\frac{d^{2}u\left(  r\right)  }{dr^{2}}+\left(  \frac{2}{A+3}\right)
^{2}\left[  \mathcal{E}-\frac{\left(  \frac{A+3}{2}\right)  ^{2}l\left(
l+1\right)  }{\left(  r^{\frac{2}{A+3}}\right)  ^{2}}-\eta\left(  r^{\frac
{2}{A+3}}\right)  ^{A+1}\right]  \frac{1}{\left(  r^{\frac{2}{A+3}}\right)
^{A+1}}u\left(  r\right)  =0. \label{ur}%
\end{equation}
It can be seen that this is just the radial equation of the potential
$V\left(  \rho\right)  =\eta\rho^{A+1}$,%
\begin{equation}
\frac{d^{2}v\left(  \rho\right)  }{d\rho^{2}}+\left[  \mathcal{E}-\frac
{\ell\left(  \ell+1\right)  }{\rho^{2}}-\eta\rho^{A+1}\right]  v\left(
\rho\right)  =0,
\end{equation}
with the duality relations (\ref{ltrans}), (\ref{coorTpower}), and
(\ref{wTpower}).
\end{proof}

The duality relation between the eigenvalues of two dual potentials can be
obtained directly.

The positive power attractive potential $U\left(  r\right)  =\xi r^{a+1}$,
i.e., $a+1>0$ and\ $\xi>0$, has only bound states with positive eigenvalues.
The duality of $U\left(  r\right)  $ is the potential $V\left(  \rho\right)
=\eta\rho^{A+1}$. The potential $V\left(  \rho\right)  $ is a negative power
attractive potential, i.e., $A+1<0$ and $\eta<0$, with negative bound-state eigenvalues.

\begin{corollary}
\label{Tev3D} If the eigenvalue of the power potential $U\left(  r\right)
=\xi r^{a+1}$ is%
\begin{equation}
E=f\left(  n_{r},l,\xi\right)  , \label{En}%
\end{equation}
where $n_{r}$ is the radial quantum number and $l$ is the angular quantum
number, then the eigenvalue of its Newton duality, the negative power
potential $V\left(  \rho\right)  =\eta\rho^{A+1}$, is%
\begin{equation}
\mathcal{E}=-\left(  \frac{A+3}{2}\right)  ^{2}f^{-1}\left(  n_{r},\frac
{2}{A+3}\left(  \ell+\frac{1}{2}\right)  -\frac{1}{2},-\left(  \frac{2}%
{A+3}\right)  ^{2}\eta\right)  , \label{specf3}%
\end{equation}
where $f^{-1}$ denotes the inverse function of $f$ and $\ell$ is the angular
quantum number of the system of $V\left(  \rho\right)  $.
\end{corollary}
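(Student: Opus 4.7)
The plan is to derive Corollary \ref{Tev3D} as an immediate algebraic consequence of Theorem \ref{Twf3D}; no new dynamical input is required. First I would regard the relation $E=f(n_{r},l,\xi)$ in (\ref{En}) as implicitly defining the spectrum of $U(r)=\xi r^{a+1}$, with $f^{-1}$ interpreted as the inverse of $f$ with respect to its third slot (the coupling constant, for fixed radial and angular quantum numbers). This interpretation is the only thing one needs to settle before the argument becomes mechanical.

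Next I would apply the three duality replacements furnished by Theorem \ref{Twf3D}: the energy/coupling exchange (\ref{coup})--(\ref{en}) together with the angular-momentum transformation (\ref{ltrans}), which I rewrite as $l\mapsto \tfrac{2}{A+3}\bigl(\ell+\tfrac12\bigr)-\tfrac12$. Substituting these into $E=f(n_{r},l,\xi)$ yields the implicit equation
\begin{equation}
-\eta\left(\tfrac{2}{A+3}\right)^{2} \;=\; f\!\left(n_{r},\;\tfrac{2}{A+3}\bigl(\ell+\tfrac12\bigr)-\tfrac12,\;-\mathcal{E}\left(\tfrac{2}{A+3}\right)^{2}\right).
\end{equation}
Inverting $f$ in its third argument and then solving for $\mathcal{E}$ produces exactly (\ref{specf3}), the overall factor $-\bigl((A+3)/2\bigr)^{2}$ arising from dividing through by $(2/(A+3))^{2}$.

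I do not anticipate any real obstacle beyond careful bookkeeping: one must track which variable is being inverted (the coupling $\xi$, not $n_{r}$ or $l$) and keep the signs straight on each side of the exchanges $E\leftrightarrow -\eta$ and $\xi\leftrightarrow -\mathcal{E}$ modulo the factor $(2/(A+3))^{2}$. A secondary point worth a one-line remark is that $f$ must actually be invertible in its coupling slot; this is natural because for an attractive power potential the bound-state levels shift monotonically with the strength of the coupling, so on the physically relevant domain $f^{-1}$ is well defined. The content of the corollary is then nothing but the spectral shadow of the eigenfunction duality already established in Theorem \ref{Twf3D}.
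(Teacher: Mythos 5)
Your proposal is correct and follows essentially the same route as the paper: substitute the duality replacements (\ref{coup}), (\ref{en}), and (\ref{ltrans}) into $E=f(n_{r},l,\xi)$ to obtain the implicit relation $-\eta\left(\frac{2}{A+3}\right)^{2}=f\left(n_{r},\frac{2}{A+3}\left(\ell+\frac{1}{2}\right)-\frac{1}{2},-\mathcal{E}\left(\frac{2}{A+3}\right)^{2}\right)$, and then invert in the coupling slot to solve for $\mathcal{E}$. Your added remarks on which argument is inverted and on monotonicity of the spectrum in the coupling are reasonable clarifications but do not change the argument.
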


\begin{proof}
Substituting the relation between the two dual potentials, Eqs. (\ref{coup}),
(\ref{en}), and (\ref{ltrans}), into the expression of the eigenvalue of the
potential $U\left(  r\right)  $, Eq. (\ref{En}), gives%
\begin{equation}
-\eta\left(  \frac{2}{A+3}\right)  ^{2}=f\left(  n_{r},\frac{2}{A+3}\left(
\ell+\frac{1}{2}\right)  -\frac{1}{2},-\mathcal{E}\left(  \frac{2}%
{A+3}\right)  ^{2}\right)  . \label{specf}%
\end{equation}

The eigenvalue $\mathcal{E}$ of the potential $V\left(  \rho\right)  $ given
by Eq. (\ref{specf3}) then can be solved directly.
\end{proof}

In the following, we show how the result given by Theorem \ref{Twf3D} is reached.

The Newton duality between power potentials is indeed obtained by choosing
the\ term $g^{\prime}\left(  \rho\right)  ^{2}/g^{2}\left(  \rho\right)  $ in
Eq. (\ref{3}) to be proportional to $1/\rho^{2}$, which will finally become
the centrifugal-potential term or a part of the centrifugal-potential term.

Concretely, first, according to Condition \ref{cd1}, we need to choose a term
in the coefficient of $v\left(  \rho\right)  $ being proportional to
$1/\rho^{2}$ and serving as a centrifugal potential or a part of a centrifugal
potential. For the original Newton duality between power potentials, choose%
\begin{equation}
\frac{g^{\prime}\left(  \rho\right)  ^{2}}{g^{2}\left(  \rho\right)  }%
=\frac{\gamma^{2}}{\rho^{2}} \label{gd2g2}%
\end{equation}
in Eq. (\ref{3}), where $\gamma$ is a constant. Solving Eq. (\ref{gd2g2})
gives%
\begin{equation}
g\left(  \rho\right)  =\rho^{\gamma}. \label{grho1}%
\end{equation}
By Eq. (\ref{grho1}), the radial equation Eq. (\ref{3}) becomes
\begin{equation}
\frac{d^{2}v\left(  \rho\right)  }{d\rho^{2}}+\left[  -\frac{\gamma^{2}}%
{\rho^{2\left(  1-\gamma\right)  }}U\left(  \rho\right)  -\frac{l\left(
l+1\right)  \gamma^{2}-\left(  1-\gamma^{2}\right)  /4}{\rho^{2}}%
+\frac{E\gamma^{2}}{\rho^{2\left(  1-\gamma\right)  }}\right]  v\left(
\rho\right)  =0. \label{vrho2}%
\end{equation}
Clearly, the term $\left[  l\left(  l+1\right)  \gamma^{2}-\frac{1-\gamma^{2}%
}{4}\right]  /\rho^{2}$ is the centrifugal-potential term.

Second, we need to consider Condition \ref{cd2}, which requires us to choose a
term in the coefficient of $v\left(  \rho\right)  $ as a constant term and
serve as the eigenvalue $\mathcal{E}$. Now there is only one possible choice:%
\begin{equation}
-\frac{\gamma^{2}}{\rho^{2\left(  1-\gamma\right)  }}U\left(  \rho\right)
=\mathcal{E},
\end{equation}
which gives%
\begin{equation}
U\left(  \rho\right)  =-\frac{\mathcal{E}}{\gamma^{2}}\rho^{2\left(
1-\gamma\right)  }. \label{Urho1}%
\end{equation}
By Eqs. (\ref{t1})\ and (\ref{grho1}), we can see that the transformation of
coordinates between two dual potentials is%
\begin{equation}
\rho^{\gamma}\rightarrow r. \label{grho11}%
\end{equation}
By this transformation of coordinates, we arrive at the potential%
\begin{equation}
U\left(  r\right)  =-\frac{\mathcal{E}}{\gamma^{2}}r^{2\left(  1-\gamma
\right)  /\gamma}. \label{urpower}%
\end{equation}
Now by Eqs. (\ref{vrho2}) and (\ref{Urho1}), Eq. (\ref{3}) becomes%
\begin{equation}
\frac{d^{2}v\left(  \rho\right)  }{d\rho^{2}}+\left[  \mathcal{E}%
-\frac{l\left(  l+1\right)  \gamma^{2}-\left(  1-\gamma^{2}\right)  /4}%
{\rho^{2}}+\frac{E\gamma^{2}}{\rho^{2\left(  1-\gamma\right)  }}\right]
v\left(  \rho\right)  =0. \label{vrho1}%
\end{equation}
This is just the radial equation of the potential%
\begin{equation}
V\left(  \rho\right)  =-\frac{E\gamma^{2}}{\rho^{2\left(  1-\gamma\right)  }}.
\label{vrhopower}%
\end{equation}
The potential $V\left(  \rho\right)  $ given by Eq. (\ref{vrhopower}) is the
Newton duality of the potential $U\left(  r\right)  $ given by Eq.
(\ref{urpower}).

Rewriting these two potentials as%
\begin{equation}
U\left(  r\right)  =\xi r^{a+1}%
\end{equation}
with%
\begin{align}
\gamma &  =\frac{2}{a+3},\label{gammaa1}\\
\mathcal{E}  &  =\mathcal{-}\left(  \frac{2}{a+3}\right)  ^{2}\xi\label{eipxi}%
\end{align}
and%
\begin{equation}
V\left(  \rho\right)  =\eta\rho^{A+1}%
\end{equation}
with%
\begin{align}
\gamma &  =\frac{A+3}{2},\label{gammaA1}\\
E  &  =-\left(  \frac{2}{A+3}\right)  ^{2}\eta, \label{Eeta}%
\end{align}
we then arrive at the duality relations (\ref{aA3D}), (\ref{coup}), and
(\ref{en}).

The term which is proportional to $1/\rho^{2}$ play a role of the centrifugal
potential, so we have%
\begin{align}
\ell\left(  \ell+1\right)   &  =l\left(  l+1\right)  \gamma^{2}-\frac{1}%
{4}\left(  1-\gamma^{2}\right) \nonumber\\
&  =l\left(  l+1\right)  \left(  \frac{A+3}{2}\right)  ^{2}-\frac{1}{4}\left[
1-\left(  \frac{A+3}{2}\right)  ^{2}\right]  .
\end{align}
Then we achieve the replacement of the angular momentum given by Eq.
(\ref{ltrans}).

The transformation of coordinates, Eq. (\ref{coorTpower}), can be obtained by
substituting Eq. (\ref{gammaA1}) into Eq. (\ref{grho1}). Then$\ $the
transformation of the eigenfunction, by using (\ref{fg}), is%
\begin{equation}
u\left(  r\right)  \rightarrow\left(  \frac{A+3}{2}\right)  ^{1/2}%
\rho^{\left(  A+1\right)  /4}v\left(  \rho\right)  .
\end{equation}
Dropping the constant\ $\left(  \frac{A+3}{2}\right)  ^{1/2}$ which does not
influence the result of the wave function, we arrive at the transformation
(\ref{wTpower}).

By the Newton duality relation, we can obtain the condition of the existence
of bound states. The power potential with a positive power, $r^{a+1}$
($a>-1$), has only bound states. The Newton duality of $r^{a+1}$, by Eq.
(\ref{aA3D}), is $r^{A+1}$ with $A=\frac{4}{a+3}-3$. The upper bound of $A$ is
$A^{upper}\overset{a=-1}{=}-1$ and the lower bound of $A$ is $A^{lower}%
\overset{a\rightarrow\infty}{=}-3$. Therefore, the condition of the existence
of bound states of the potential $r^{A+1}$ is $-3<A<-1$, or, equivalently, the
condition of the existence of bound states of the potential $r^{\beta}$ is
$-2<\beta<0$. This agrees with the usual result \cite{brau2004sufficient}.

It can be seen that if two potentials are classical Newton duality, they are
also quantum Newton duality. In classical mechanics, the orbits of two dual
potentials can be simply transformed from one to the other. Similarly, in
quantum mechanics, the wave functions of two dual potentials can be simply
transformed from one to the other.

\subsection{Arbitrary-dimensional power potentials \label{arbitraryDIM}}

In section \ref{3D}, we consider the quantum Newton duality between two
three-dimensional power potentials. In this section, we consider the quantum
Newton duality between two power potentials in different dimensions:
$n$-dimensional potential $U\left(  r\right)  =\xi r^{a+1}$ and $m$%
-dimensional potential $V\left(  \rho\right)  =\eta\rho^{A+1}$.

\begin{theorem}
\label{TwfnmD} Two power potentials in different dimensions,%
\begin{align}
U\left(  r\right)   &  =\xi r^{a+1}\text{, \ \ \ }n\text{-dimensional, }\\
V\left(  \rho\right)   &  =\eta\rho^{A+1}\text{, \ \ \ }m\text{-dimensional,}%
\end{align}
are quantum Newtonianly dual to each other, if%
\begin{equation}
\frac{a+3}{2}=\frac{2}{A+3}. \label{Aandanm}%
\end{equation}
The bound-state eigenfunction of\ the energy $\mathcal{E}$ and the angular
quantum number $\ell$ of the $m$-dimensional potential $V\left(  \rho\right)
$ can be obtained by performing the replacements%
\begin{align}
E  &  \rightarrow-\eta\left(  \frac{2}{A+3}\right)  ^{2},\label{nm-Eh}\\
\xi &  \rightarrow-\mathcal{E}\left(  \frac{2}{A+3}\right)  ^{2} \label{nm-XE}%
\end{align}
to the bound-state eigenfunction of the energy $E$ and the angular quantum
number $l$ of the $n$-dimensional potential $U\left(  r\right)  $. As a
result, the duality relation of the angular momentum between these two dual
systems is%
\begin{equation}
l+\frac{n}{2}-1=\frac{2}{A+3}\left(  \ell+\frac{m}{2}-1\right)  , \label{lell}%
\end{equation}
the transformation of coordinates is%
\begin{equation}
r\rightarrow\rho^{\left(  A+3\right)  /2}\text{ \ or \ }r^{\left(  a+3\right)
/2}\rightarrow\rho, \label{nmDcoorTpower}%
\end{equation}
and the transformation of the eigenfunction is%
\begin{equation}
u\left(  r\right)  \rightarrow\rho^{\left(  A+1\right)  /4}v\left(
\rho\right)  \text{ \ or \ }r^{\left(  a+1\right)  /4}u\left(  r\right)
\rightarrow v\left(  \rho\right)  . \label{wftrans}%
\end{equation}
\newline
\end{theorem}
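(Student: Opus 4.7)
The plan is to adapt the proof of Theorem \ref{Twf3D} to arbitrary spatial dimensions, the only genuinely new ingredient being the way the $n$-- and $m$--dimensional centrifugal terms relate to each other under the duality. I would begin by recalling the standard reduction of the $n$-dimensional radial Schr\"odinger equation: after writing the wave function as $u(r)=r^{(n-1)/2}R(r)$, the radial equation for $U(r)=\xi r^{a+1}$ takes the canonical form
\begin{equation}
\frac{d^{2}u(r)}{dr^{2}}+\left[E-\frac{L(L+1)}{r^{2}}-\xi r^{a+1}\right]u(r)=0,\qquad L=l+\tfrac{n-3}{2},
\end{equation}
and analogously the $m$-dimensional radial equation for $V(\rho)=\eta\rho^{A+1}$ has effective angular momentum $\tilde L=\ell+\tfrac{m-3}{2}$. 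In both cases $L(L+1)=(L+\tfrac12)^{2}-\tfrac14$, and $L+\tfrac12=l+\tfrac{n}{2}-1$, $\tilde L+\tfrac12=\ell+\tfrac{m}{2}-1$; this is the combination that will appear in the duality.

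Next I would substitute (\ref{Aandanm}), (\ref{nm-Eh}), (\ref{nm-XE}) into the $n$-dimensional radial equation and perform the coordinate change $r^{(a+3)/2}\to\rho$ together with the wave-function change $u(r)\to\rho^{(A+1)/4}v(\rho)$, i.e.\ $g(\rho)=\rho^{(A+3)/2}=\rho^{\gamma}$ with $\gamma=(A+3)/2$ and $f(\rho)=\sqrt{g'(\rho)}$ as forced by (\ref{fg}). This is exactly the specialization made in section \ref{3D}, so the generic transformed equation (\ref{3}) applies verbatim; the novelty lies only in how its various pieces are interpreted.

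The main obstacle is checking that the centrifugal piece on the $\rho$-side assembles into the clean $m$-dimensional form $\tilde L(\tilde L+1)/\rho^{2}$. Collecting the $1/\rho^{2}$ contributions from $-L(L+1)g'(\rho)^{2}/g(\rho)^{2}$ and the Schwarzian-type terms $g'''/(2g')-3g''^{2}/(4g'^{2})$ produced by (\ref{3}) for $g(\rho)=\rho^{\gamma}$ yields
\begin{equation}
\frac{1}{\rho^{2}}\!\left[\,(1-\gamma^{2})/4-L(L+1)\gamma^{2}\,\right],
\end{equation}
and the identity $L(L+1)\gamma^{2}-(1-\gamma^{2})/4=\gamma^{2}(L+\tfrac12)^{2}-\tfrac14$ shows that this equals $-\tilde L(\tilde L+1)/\rho^{2}$ precisely when $\tilde L+\tfrac12=\gamma(L+\tfrac12)$, i.e.\ (\ref{lell}). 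The $E$-term and the $U$-term become, after the same substitutions, $\mathcal{E}$ and $-\eta\rho^{A+1}$ up to the overall $\gamma^{2}$ factor that divides out, yielding the $m$-dimensional radial equation for $V(\rho)$.

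Finally I would record the coordinate and wave-function transformations (\ref{nmDcoorTpower}) and (\ref{wftrans}) as read off from $g(\rho)=\rho^{(A+3)/2}$ and $f(\rho)=\sqrt{g'(\rho)}=\sqrt{(A+3)/2}\,\rho^{(A+1)/4}$, dropping the irrelevant overall constant $\sqrt{(A+3)/2}$ exactly as in the three-dimensional proof. The symmetry of (\ref{Aandanm}) under $a\leftrightarrow A$ ensures that the duality is reciprocal, so no separate verification of the inverse transformation is needed.
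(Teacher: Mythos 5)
Your proposal is correct and follows essentially the same route as the paper: the paper's proof substitutes the replacements (\ref{Aandanm}), (\ref{nm-Eh}), (\ref{nm-XE}) directly into the $n$-dimensional radial equation (whose centrifugal factor $(l-\tfrac{3}{2}+\tfrac{n}{2})(l-\tfrac{1}{2}+\tfrac{n}{2})$ is exactly your $L(L+1)$ with $L=l+\tfrac{n-3}{2}$) and rearranges in the variable $r^{2/(A+3)}$ to recognize the $m$-dimensional equation, which is the verification form of the derivation you carry out through Eq.~(\ref{3}). Your explicit check that the Schwarzian terms combine with $-L(L+1)\gamma^{2}/\rho^{2}$ via $L(L+1)\gamma^{2}-(1-\gamma^{2})/4=\gamma^{2}(L+\tfrac12)^{2}-\tfrac14$ is the same computation the paper presents after Theorem~\ref{Twf3D} when explaining how the angular-momentum replacement is reached, so no substantive difference or gap remains.
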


\begin{proof}
In $n$ dimensions, the radial equation of the potential $U\left(  r\right)
=\xi r^{a+1}$ reads \cite{graham2009spectral}%
\begin{equation}
\frac{d^{2}u\left(  r\right)  }{dr^{2}}+\left[  E-\frac{\left(  l-\frac{3}%
{2}+\frac{n}{2}\right)  \left(  l-\frac{1}{2}+\frac{n}{2}\right)  }{r^{2}}-\xi
r^{a+1}\right]  u\left(  r\right)  =0, \label{radialeqn}%
\end{equation}
where $u\left(  r\right)  =R\left(  r\right)  /r$ with $R\left(  r\right)  $
the radial wave function. Performing the transformations (\ref{Aandanm}),
(\ref{nm-Eh}), and (\ref{nm-XE}), we arrive at%
\begin{equation}
\frac{d^{2}u\left(  r\right)  }{dr^{2}}+\left\{  \left[  -\eta\left(  \frac
{2}{A+3}\right)  ^{2}\right]  -\frac{\left(  l-\frac{3}{2}+\frac{n}{2}\right)
\left(  l-\frac{1}{2}+\frac{n}{2}\right)  }{r^{2}}-\left[  -\mathcal{E}\left(
\frac{2}{A+3}\right)  ^{2}\right]  r^{\frac{4}{A+3}-2}\right\}  u\left(
r\right)  =0.
\end{equation}
Rearrange the equation as%
\begin{equation}
\frac{d^{2}u\left(  r\right)  }{dr^{2}}+\left(  \frac{2}{A+3}\right)
^{2}\left[  \mathcal{E}-\frac{\left(  \frac{A+3}{2}\right)  ^{2}\left(
l-\frac{3}{2}+\frac{n}{2}\right)  \left(  l-\frac{1}{2}+\frac{n}{2}\right)
}{\left(  r^{\frac{2}{A+3}}\right)  ^{2}}-\eta\left(  r^{\frac{2}{A+3}%
}\right)  ^{A+1}\right]  \frac{1}{\left(  r^{\frac{2}{A+3}}\right)  ^{A+1}%
}u\left(  r\right)  =0. \label{eqmn}%
\end{equation}
This is just the radial equation of the central potential $V\left(
\rho\right)  =\eta\rho^{A+1}$ in $m$ dimensions:%
\begin{equation}
\frac{d^{2}v\left(  \rho\right)  }{d\rho^{2}}+\left[  \mathcal{E}%
-\frac{\left(  \ell-\frac{3}{2}+\frac{m}{2}\right)  \left(  \ell-\frac{1}%
{2}+\frac{m}{2}\right)  }{\rho^{2}}-\eta\rho^{A+1}\right]  v\left(
\rho\right)  =0
\end{equation}
with the replacement of the angular momentum, Eq. (\ref{lell}), the
transformation of coordinates (\ref{nmDcoorTpower}), and the transformation of
eigenfunctions, Eq. (\ref{wftrans}).
\end{proof}

Note that the influence of the value of the spatial dimension appears only in
the centrifugal potential term, because the angular quantum number is
different in different dimensions.

In the following, we present a relation between the eigenvalues of two dual
potentials in different dimensions.

The $n$-dimensional positive-power attractive potential $U\left(  r\right)
=\xi r^{a+1}$ ($a+1>0$ and\ $\xi>0$) has only bound states and positive
eigenvalues, i.e., $E>0$. Its $m$-dimensional Newton duality $V\left(
\rho\right)  =\eta\rho^{A+1}$ is a negative-power attractive potential
($A+1<0$ and $\eta<0$) and has bound-state eigenvalue, i.e., $\mathcal{E}<$
$0$.

\begin{corollary}
\label{TevnmD} If the eigenvalue of the\ $n$-dimensional positive-power
potential $U\left(  r\right)  =\xi r^{a+1}$ is%
\begin{equation}
E=f\left(  n_{r},l,\xi,n\right)  , \label{Ennm}%
\end{equation}
where $n_{r}$ is the radial quantum number and $l$ the angular quantum number,
then the eigenvalue of the its $m$-dimensional Newton duality, the
negative-power potential $V\left(  \rho\right)  =\eta\rho^{A+1}$, is%
\begin{equation}
\mathcal{E}=-\frac{4}{\left(  a+3\right)  ^{2}}f^{-1}\left(  n_{r},\frac
{2}{A+3}\left(  \ell+\frac{m}{2}-1\right)  +1+\frac{n}{2},-\frac{\left(
a+3\right)  ^{2}}{4}\eta,m\right)  , \label{specfnm}%
\end{equation}
where $f^{-1}$ denotes the inverse function of $f$ and $\ell$ is the angular
quantum number.
\end{corollary}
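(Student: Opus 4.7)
The plan is to adapt the argument used in the proof of Corollary \ref{Tev3D} to arbitrary spatial dimensions, treating Theorem \ref{TwfnmD} as a black box that supplies the duality transformations of energy, coupling, and angular momentum.

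First, I would begin from the assumed eigenvalue formula for the $n$-dimensional positive-power potential, $E = f(n_{r}, l, \xi, n)$. Since the dual problem lives in $m$ dimensions, the goal is to trade the variables $E,\ l,\ \xi$ that parametrise this formula for their $m$-dimensional counterparts $\mathcal{E},\ \ell,\ \eta$ by invoking the duality, and then to solve the resulting relation for $\mathcal{E}$.

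Second, I would substitute the three duality transformations from Theorem \ref{TwfnmD} directly into the eigenvalue formula. By (\ref{nm-Eh}) and (\ref{nm-XE}), the replacements $E \to -\eta(2/(A+3))^{2}$ and $\xi \to -\mathcal{E}(2/(A+3))^{2}$ reinterpret the $n$-dimensional eigenvalue equation as a condition on the $m$-dimensional quantities. By (\ref{lell}), the angular quantum number $l$ is expressed in terms of $\ell$ and $m$. The radial quantum number $n_{r}$ is invariant under the duality, because the coordinate transformation (\ref{nmDcoorTpower}) is a smooth monotone bijection of the half-line and therefore preserves the node count of the radial wave function. After these substitutions, the relation $E = f(n_{r}, l, \xi, n)$ becomes an implicit equation in which $\mathcal{E}$ appears only in the third slot of $f$.

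Third, I would invert $f$ with respect to its coupling slot to extract $\mathcal{E}$ explicitly. Using the duality identity (\ref{Aandanm}) in the form $2/(A+3) = (a+3)/2$, so that $(2/(A+3))^{-2} = 4/(a+3)^{2}$, a routine algebraic rearrangement yields the stated formula (\ref{specfnm}).

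The only genuine obstacle is the well-definedness of $f^{-1}$ in the coupling argument. For the positive-power attractive potential considered here ($a+1 > 0$, $\xi > 0$), the bound-state energy depends strictly monotonically on the coupling strength, so $f(n_{r}, l, \cdot, n)$ is one-to-one on its physical domain and the inversion is legitimate. Once this monotonicity is noted, the proof reduces to pure substitution and is structurally parallel to the proof of Corollary \ref{Tev3D}; the only novelty lies in carrying the dimensions $n$ and $m$ through the angular-momentum replacement.
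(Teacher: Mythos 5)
Your proposal is correct and follows essentially the same route as the paper: substitute the duality replacements (\ref{nm-Eh}), (\ref{nm-XE}), and (\ref{lell}) into $E=f\left(n_{r},l,\xi,n\right)$ and then invert $f$ in its coupling slot to isolate $\mathcal{E}$. Your added remarks on the invariance of $n_{r}$ under the monotone coordinate map and on the monotonicity needed for $f^{-1}$ to exist are sensible justifications of steps the paper leaves implicit, but they do not change the argument.
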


\begin{proof}
Substituting the relation between the two dual potentials, Eqs. (\ref{nm-Eh}),
(\ref{nm-XE}), and (\ref{lell}), into the expression of the eigenvalue of
the\ $m$-dimensional potential $U\left(  r\right)  $, Eq. (\ref{Ennm}), gives%
\begin{equation}
-\frac{\left(  a+3\right)  ^{2}}{4}\eta=f\left(  n_{r},\frac{2}{A+3}\left(
\ell+\frac{m}{2}-1\right)  +1+\frac{n}{2},-\mathcal{E}\frac{\left(
a+3\right)  ^{2}}{4},n\right)  . \label{specfnm3}%
\end{equation}

The eigenvalue $\mathcal{E}$ of the potential $V\left(  \rho\right)  $ given
by Eq. (\ref{specfnm}) then can be solved directly.
\end{proof}

\subsection{General polynomial potentials \label{polynomialnterms}}

The original Newton duality is only for power potentials in classical
mechanics. In section \ref{classicalGPP}, we generalize Newton's original
result\ to the case of general polynomial potentials in classical mechanics.
In this section, we provide the quantum version of the generalized Newton
duality for general polynomial potentials.

In the following, we will show that the duality among general polynomial
potentials. It allows us to solve more than one potentials from one known potential.

\begin{theorem}
\label{gpp} Two general polynomial potentials%
\begin{equation}
U\left(  r\right)  =\xi r^{a+1}+\sum_{n}\mu_{n}r^{b_{n}+1}\text{ \ and
\ }V\left(  \rho\right)  =\eta\rho^{A+1}+\sum_{n}\lambda_{n}\rho^{B_{n}+1}
\label{UVgp}%
\end{equation}
are quantum Newtonianly dual to each other, if%
\begin{align}
\frac{a+3}{2}  &  =\frac{2}{A+3},\label{Aanda6}\\
\sqrt{\frac{2}{a+3}}\left(  b_{n}+3\right)   &  =\sqrt{\frac{2}{A+3}}\left(
B_{n}+3\right)  . \label{Bandb6}%
\end{align}
The bound-state eigenfunction of the energy $\mathcal{E}$ and the angular
quantum number $\ell$ of the potential $V\left(  \rho\right)  $\ can be
obtained by performing the replacements%
\begin{align}
E  &  \rightarrow-\eta\left(  \frac{2}{A+3}\right)  ^{2},\label{Eta6}\\
\xi &  \rightarrow-\mathcal{E}\left(  \frac{2}{A+3}\right)  ^{2}%
,\label{xieta6}\\
\mu_{n}  &  \rightarrow\left(  \frac{2}{A+3}\right)  ^{2}\lambda_{n}
\label{mulamb6}%
\end{align}
to the bound-state eigenfunction of the energy $E$ and the angular quantum
number $l$ of the potential $U\left(  r\right)  $. As a result, the duality
relation of the angular momentum between these two dual systems is%
\begin{equation}
l+\frac{1}{2}\rightarrow\frac{2}{A+3}\left(  \ell+\frac{1}{2}\right)  ,
\label{ltrans6}%
\end{equation}
the transformation of coordinates is%
\begin{equation}
r\rightarrow\rho^{\frac{A+3}{2}}\text{\ \ or \ }r^{\frac{a+3}{2}}%
\rightarrow\rho, \label{rtrans6}%
\end{equation}
and the transformation of the eigenfunction is%
\begin{equation}
u\left(  r\right)  \rightarrow\rho^{\left(  A+1\right)  /4}v\left(
\rho\right)  \text{\ \ \ or \ \ }r^{\left(  a+1\right)  /4}u\left(  r\right)
\rightarrow v\left(  \rho\right)  . \label{wftrans6}%
\end{equation}

\end{theorem}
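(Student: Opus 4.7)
The plan is to follow the same strategy used in the proof of Theorem~\ref{Twf3D}, extended by a term-by-term analysis of the polynomial sum. The single new ingredient is condition~(\ref{Bandb6}), which is what makes each pair $(b_n,B_n)$ compatible with the variable change $r\mapsto\rho^{(A+3)/2}$ that already worked for the leading powers $(a,A)$.

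First I would write down the radial equation of $U(r)=\xi r^{a+1}+\sum_n\mu_n r^{b_n+1}$,
\begin{equation}
\frac{d^{2}u(r)}{dr^{2}}+\left[E-\frac{l(l+1)}{r^{2}}-\xi r^{a+1}-\sum_{n}\mu_{n}r^{b_{n}+1}\right]u(r)=0,
\end{equation}
and substitute the replacements (\ref{Aanda6}), (\ref{Eta6}), (\ref{xieta6}), (\ref{mulamb6}) exactly as in the proof of Theorem~\ref{Twf3D}. The $\xi r^{a+1}$ and $E$ terms transform in the same way as before, so I only need to check the new $\mu_n r^{b_n+1}$ terms. Squaring condition (\ref{Bandb6}) and using $\tfrac{2}{a+3}=\tfrac{A+3}{2}$ gives $\tfrac{A+3}{2}(b_n+3)=B_n+3$, hence $b_n+1=\tfrac{2(B_n+3)}{A+3}-2$, which is the key algebraic identity.

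Next I would set $\rho=r^{2/(A+3)}$ and rewrite $r^{b_n+1}=\rho^{(A+3)(b_n+1)/2}=\rho^{B_n-A}$. Factoring out $(2/(A+3))^2$ and $1/(r^{2/(A+3)})^{A+1}=1/\rho^{A+1}$ (exactly as done between equations (\ref{ur}) and the statement of Theorem~\ref{Twf3D}), the equation becomes
\begin{equation}
\frac{d^{2}u(r)}{dr^{2}}+\left(\frac{2}{A+3}\right)^{2}\left[\mathcal{E}-\frac{((A+3)/2)^{2}l(l+1)}{\rho^{2}}-\eta\rho^{A+1}-\sum_{n}\lambda_{n}\rho^{B_{n}+1}\right]\frac{1}{\rho^{A+1}}u(r)=0,
\end{equation}
where the extra factor $\rho^{-A-1}\cdot\rho^{B_n-A}\cdot\rho^{A+1}=\rho^{B_n+1}$ is exactly what condition (\ref{Bandb6}) arranges for each $n$.

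Finally, I would invoke the general discussion of section~\ref{general}: the change of variable $r\to\rho^{(A+3)/2}$ together with $u(r)=\sqrt{g'(\rho)}\,v(\rho)=\bigl(\tfrac{A+3}{2}\bigr)^{1/2}\rho^{(A+1)/4}v(\rho)$ (the constant factor is irrelevant for the wave function) eliminates the first-derivative term and converts $d^2/dr^2$ appropriately, giving precisely
\begin{equation}
\frac{d^{2}v(\rho)}{d\rho^{2}}+\left[\mathcal{E}-\frac{\ell(\ell+1)}{\rho^{2}}-\eta\rho^{A+1}-\sum_{n}\lambda_{n}\rho^{B_{n}+1}\right]v(\rho)=0,
\end{equation}
with the angular-momentum identification (\ref{ltrans6}) inherited unchanged from Theorem~\ref{Twf3D} since the centrifugal term is untouched by the polynomial sum. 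The main obstacle is purely bookkeeping: verifying that $(A+3)(b_n+1)/2+(A+1)=B_n+1$ holds \emph{for each} $n$ under (\ref{Bandb6}). Once that algebraic check is in place, the rest of the argument is identical to the power-potential case, and no new analytical idea is needed.
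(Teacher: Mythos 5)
Your proposal is correct and follows essentially the same route as the paper: substitute the duality replacements into the radial equation of $U(r)$, use (\ref{Bandb6}) to convert each exponent via $b_{n}+3=\tfrac{2}{A+3}(B_{n}+3)$, factor out $\left(\tfrac{2}{A+3}\right)^{2}$ and $\rho^{-(A+1)}$ under $r^{2/(A+3)}\to\rho$, and identify the result as the radial equation of $V(\rho)$ with the angular-momentum and wave-function transformations carried over from Theorem~\ref{Twf3D}. The only blemish is the parenthetical exponent check, which should read $\rho^{B_{n}-A}\cdot\rho^{A+1}=\rho^{B_{n}+1}$ (your extra factor $\rho^{-A-1}$ makes the stated product $\rho^{B_{n}-A}$, not $\rho^{B_{n}+1}$); the surrounding displayed equations are nevertheless correct.
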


\begin{proof}
The radial equation of the general polynomial potential $U\left(  r\right)
=\xi r^{a+1}+\sum\mu_{n}r^{b_{n}+1}$ reads%
\begin{equation}
\frac{d^{2}u\left(  r\right)  }{dr^{2}}+\left[  E-\frac{l\left(  l+1\right)
}{r^{2}}-\xi r^{a+1}-\sum_{n}\mu_{n}r^{b_{n}+1}\right]  u\left(  r\right)  =0.
\label{raeq6}%
\end{equation}
Substituting the duality relations (\ref{Aanda6}), (\ref{Bandb6}),
(\ref{Eta6}), (\ref{xieta6}), and (\ref{mulamb6}) into the radial equation of
$U\left(  r\right)  $, Eq. (\ref{raeq6}), gives%
\begin{equation}
\frac{d^{2}u\left(  r\right)  }{dr^{2}}+\left\{  -\left(  \frac{2}%
{A+3}\right)  ^{2}\eta-\frac{l\left(  l+1\right)  }{r^{2}}-\left[  -\left(
\frac{2}{A+3}\right)  ^{2}\mathcal{E}\right]  r^{\frac{4}{A+3}-2}-\sum\left(
\frac{2}{A+3}\right)  ^{2}\lambda_{n}r^{\frac{2}{A+3}\left(  3+B_{n}\right)
-2}\right\}  u\left(  r\right)  =0.
\end{equation}
Rearranging the equation as%
\begin{equation}
\frac{d^{2}u\left(  r\right)  }{dr^{2}}+\left(  \frac{2}{A+3}\right)
^{2}\left[  \mathcal{E}-\frac{\left(  \frac{A+3}{2}\right)  ^{2}l\left(
l+1\right)  }{\left(  r^{\frac{2}{A+3}}\right)  ^{2}}-\eta\left(  r^{\frac
{2}{A+3}}\right)  ^{A+1}-\sum\lambda_{n}\left(  r^{\frac{2}{A+3}}\right)
^{B_{n}+1}\right]  \frac{1}{\left(  r^{\frac{2}{A+3}}\right)  ^{A+1}}u\left(
r\right)  =0, \label{raeq66}%
\end{equation}
\newline we can see that this is just the radial equation of the potential
$V\left(  \rho\right)  $ with the replacements (\ref{ltrans6}), (\ref{rtrans6}%
), and (\ref{wftrans6}):%
\begin{equation}
\frac{d^{2}v\left(  \rho\right)  }{d\rho^{2}}+\left[  \mathcal{E}-\frac
{\ell\left(  \ell+1\right)  }{\rho^{2}}-\eta\rho^{A+1}-\sum\lambda_{n}%
\rho^{B_{n}+1}\right]  v\left(  \rho\right)  =0.
\end{equation}

\end{proof}

By Theorem \ref{gpp}, we can obtained the eigenvalue of $V\left(  \rho\right)
$.

\begin{corollary}
\label{Tevpoly} If the eigenvalue of the general polynomial potential
$U\left(  r\right)  =\xi r^{a+1}+\sum_{n}\mu_{n}r^{b_{n}+1}$ is%
\begin{equation}
E=f\left(  n_{r},l,\xi,\mu_{n}\right)  , \label{Enpoly}%
\end{equation}
where $n_{r}$ is the radial quantum number and $l$ is the angular quantum
number, then the eigenvalue of its Newtonianly dual potential $V\left(
\rho\right)  =\eta\rho^{A+1}+\sum_{n}\lambda_{n}\rho^{B_{n}+1}$ is%
\begin{equation}
\mathcal{E}=-\left(  \frac{A+3}{2}\right)  ^{2}f^{-1}\left(  n_{r},\frac
{2}{A+3}\left(  \ell+\frac{1}{2}\right)  -\frac{1}{2},-\eta\left(  \frac
{2}{A+3}\right)  ^{2},\left(  \frac{2}{A+3}\right)  ^{2}\lambda_{n}\right)  ,
\label{specfpoly}%
\end{equation}
where $f^{-1}$ denotes the inverse function of $f$ and $\ell$ is the angular
quantum number of the system of $V\left(  \rho\right)  $.
\end{corollary}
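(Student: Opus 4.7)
The plan is to derive Corollary \ref{Tevpoly} as a direct algebraic consequence of Theorem \ref{gpp}, in complete parallel with how Corollary \ref{Tev3D} follows from Theorem \ref{Twf3D} and Corollary \ref{TevnmD} follows from Theorem \ref{TwfnmD}. The starting point is the assumed closed-form eigenvalue expression
\begin{equation*}
E = f(n_r, l, \xi, \mu_n)
\end{equation*}
for the potential $U(r) = \xi r^{a+1} + \sum_n \mu_n r^{b_n+1}$, which is an identity holding for every admissible choice of the parameters $(l,\xi,\mu_n)$ and for each radial quantum number $n_r$. Since the radial quantum number is preserved by the duality transformation (it counts nodes of the radial wave function, and the coordinate change $r \mapsto \rho^{(A+3)/2}$ is monotonic on $(0,\infty)$), we may freely treat $n_r$ as a spectator index throughout.

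Next I would invoke Theorem \ref{gpp} and plug in its replacement rules (\ref{Eta6}), (\ref{xieta6}), (\ref{mulamb6}), and (\ref{ltrans6}) directly into the identity above. The left-hand side $E$ becomes $-\eta(2/(A+3))^2$; on the right-hand side the arguments get reparametrised as
\begin{equation*}
l \mapsto \tfrac{2}{A+3}\bigl(\ell+\tfrac{1}{2}\bigr) - \tfrac{1}{2},\qquad
\xi \mapsto -\mathcal{E}\bigl(\tfrac{2}{A+3}\bigr)^2,\qquad
\mu_n \mapsto \bigl(\tfrac{2}{A+3}\bigr)^2 \lambda_n.
\end{equation*}
This yields the implicit relation
\begin{equation*}
-\eta\bigl(\tfrac{2}{A+3}\bigr)^2 = f\!\left(n_r,\;\tfrac{2}{A+3}\bigl(\ell+\tfrac{1}{2}\bigr)-\tfrac{1}{2},\;-\mathcal{E}\bigl(\tfrac{2}{A+3}\bigr)^2,\;\bigl(\tfrac{2}{A+3}\bigr)^2 \lambda_n\right),
\end{equation*}
which, by the definition of dual potentials, must hold on the spectrum of $V(\rho)$.

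To finish, I would solve this relation for $\mathcal{E}$ by inverting $f$ with respect to its third slot (the former coupling $\xi$); this is precisely the inverse function $f^{-1}$ appearing in (\ref{specfpoly}). Applying $f^{-1}$ to both sides and then isolating $\mathcal{E}$ produces the claimed formula
\begin{equation*}
\mathcal{E} = -\bigl(\tfrac{A+3}{2}\bigr)^2 f^{-1}\!\left(n_r,\;\tfrac{2}{A+3}\bigl(\ell+\tfrac{1}{2}\bigr)-\tfrac{1}{2},\;-\eta\bigl(\tfrac{2}{A+3}\bigr)^2,\;\bigl(\tfrac{2}{A+3}\bigr)^2 \lambda_n\right),
\end{equation*}
after multiplying through by $-(A+3)^2/4$.

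There is no serious obstacle here, only bookkeeping: the only mildly delicate points are (i) being explicit that $f^{-1}$ refers to inversion in the coupling-constant slot (not in $n_r$, $l$, or the $\mu_n$'s), and (ii) verifying that this inversion is well-defined on the range of couplings relevant to bound states, which is guaranteed by the monotonic dependence of the eigenvalue on the coupling for the attractive regime discussed after Theorem \ref{Twf3D}. Everything else is a transcription of Theorem \ref{gpp} into an equation for $\mathcal{E}$.
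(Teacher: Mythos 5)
Your proposal is correct and follows essentially the same route as the paper's proof: substitute the duality replacements (\ref{Eta6}), (\ref{xieta6}), (\ref{mulamb6}), (\ref{ltrans6}) into $E=f(n_r,l,\xi,\mu_n)$ to obtain the implicit relation, then invert $f$ in the coupling slot to isolate $\mathcal{E}$. Your version is in fact slightly more careful than the paper's, since you retain the fourth argument $(2/(A+3))^2\lambda_n$ in the implicit relation and make explicit which slot $f^{-1}$ inverts.
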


\begin{proof}
Substituting the relation between the two dual potentials, Eqs. (\ref{Eta6}),
(\ref{xieta6}), (\ref{mulamb6}) and (\ref{ltrans6}), into the expression of
the eigenvalue of the potential $U\left(  r\right)  $, Eq. (\ref{Enpoly}),
gives%
\begin{equation}
-\eta\left(  \frac{2}{A+3}\right)  ^{2}=f\left(  n_{r},\frac{2}{A+3}\left(
\ell+\frac{1}{2}\right)  -\frac{1}{2},-\mathcal{E}\left(  \frac{2}%
{A+3}\right)  ^{2}\right)  . \label{specfpoly1}%
\end{equation}
The eigenvalue $\mathcal{E}$ of the potential $V\left(  \rho\right)  $ given
by Eq. (\ref{specfpoly}) then can be solved directly.
\end{proof}

In the following, we show how the result given by Theorem \ref{gpp} is reached.

The duality between two general polynomial potentials, according to
Condition\ \ref{cd1}, is obtained by choosing the term $g^{\prime}\left(
\rho\right)  ^{2}/g^{2}\left(  \rho\right)  $ in Eq. (\ref{3}) as the
centrifugal-potential term or a part of the centrifugal-potential term, i.e.,%
\begin{equation}
\frac{g^{\prime}\left(  \rho\right)  ^{2}}{g^{2}\left(  \rho\right)  }%
=\frac{\gamma^{2}}{\rho^{2}} \label{gdggp}%
\end{equation}
with $\gamma$ a constant. Solving Eq. (\ref{gdggp}) gives%
\begin{equation}
g\left(  \rho\right)  =\rho^{\gamma}. \label{grho6}%
\end{equation}
Substituting Eq. (\ref{grho6}) into the radial equation (\ref{3}) gives%
\begin{equation}
\frac{d^{2}v\left(  \rho\right)  }{d\rho^{2}}+\left[  -\frac{\gamma^{2}}%
{\rho^{2\left(  1-\gamma\right)  }}U\left(  \rho\right)  -\frac{l\left(
l+1\right)  \gamma^{2}-\left(  1-\gamma^{2}\right)  /4}{\rho^{2}}%
+\frac{E\gamma^{2}}{\rho^{2\left(  1-\gamma\right)  }}\right]  v\left(
\rho\right)  =0. \label{vrho6}%
\end{equation}
Here the term $\left[  l\left(  l+1\right)  \gamma^{2}-\left(  1-\gamma
^{2}\right)  /4\right]  /\rho^{2}$ is the centrifugal-potential term.

According to Condition \ref{cd2}, we need a constant term to serve as the
eigenvalue $\mathcal{E}$. Choosing
\begin{equation}
-\frac{\gamma^{2}}{\rho^{2\left(  1-\gamma\right)  }}U\left(  \rho\right)
=\mathcal{E}-\sum\kappa_{n}\rho^{n},
\end{equation}
where $\mathcal{E}$ is the eigenvalue, gives%
\begin{equation}
U\left(  \rho\right)  =-\frac{\mathcal{E}}{\gamma^{2}}\rho^{2\left(
1-\gamma\right)  }+\sum\frac{\kappa_{n}}{\gamma^{2}}\rho^{2\left(
1-\gamma\right)  +n}. \label{Urho6}%
\end{equation}

The transformation of coordinates between two dual potentials, by Eqs.
(\ref{t1})\ and (\ref{grho6}), is%
\begin{equation}
r\rightarrow\rho^{\gamma}. \label{grho66}%
\end{equation}
By this transformation of coordinates, we arrive at the potential%
\begin{equation}
U\left(  r\right)  =-\frac{\mathcal{E}}{\gamma^{2}}r^{2\left(  1-\gamma
\right)  /\gamma}+\sum_{n}\frac{\kappa_{n}}{\gamma^{2}}r^{\left[  2\left(
1-\gamma\right)  +n\right]  /\gamma}.
\end{equation}

By Eqs. (\ref{vrho6}) and (\ref{Urho6}),\ Eq. (\ref{3}) becomes%
\begin{equation}
\frac{d^{2}v\left(  \rho\right)  }{dr^{2}}+\left[  \mathcal{E}-\frac{l\left(
l+1\right)  \gamma^{2}-\left(  1-\gamma^{2}\right)  /4}{\rho^{2}}%
+\frac{E\gamma^{2}}{\rho^{2\left(  1-\gamma\right)  }}-\sum\kappa_{n}\rho
^{n}\right]  v\left(  \rho\right)  =0. \label{vrho66}%
\end{equation}
This is just the radial equation of the potential%
\begin{equation}
V\left(  \rho\right)  =-\frac{E\gamma^{2}}{\rho^{2\left(  1-\gamma\right)  }%
}+\sum\kappa_{n}\rho^{n}.
\end{equation}

Rewriting these two potentials as%
\begin{equation}
U\left(  r\right)  =\xi r^{a+1}+\sum\mu_{n}r^{b_{n}+1}%
\end{equation}
with%
\begin{align}
\gamma &  =\frac{2}{a+3},\label{gammaa16}\\
\mathcal{E}  &  =\mathcal{-}\left(  \frac{2}{a+3}\right)  ^{2}\xi
,\label{eipxi6}\\
b_{n}  &  =\frac{\left(  2+n\right)  \left(  a+3\right)  }{2}-3,\label{bbeta6}%
\\
\kappa_{n}  &  =\left(  \frac{2}{a+3}\right)  ^{2}\mu_{n} \label{kappamu6}%
\end{align}
and%
\begin{equation}
V\left(  \rho\right)  =\eta\rho^{A+1}+\sum_{n}\lambda_{n}\rho^{B_{n}+1}%
\end{equation}
with%
\begin{align}
\gamma &  =\frac{A+3}{2},\label{gammaA16}\\
E  &  =-\eta\left(  \frac{2}{A+3}\right)  ^{2},\label{Eeta6}\\
B_{n}  &  =n-1,\label{Bbeta6}\\
\lambda_{n}  &  =\kappa_{n}, \label{lambdakappa6}%
\end{align}
we then arrive at the duality relations (\ref{Aanda6}), (\ref{Bandb6}),
(\ref{Eta6}), (\ref{xieta6}), and (\ref{mulamb6}).

The term which is proportional to $1/\rho^{2}$ plays a role of the centrifugal
potential, so%
\begin{align}
\ell\left(  \ell+1\right)   &  =l\left(  l+1\right)  \gamma^{2}-\frac{1}%
{4}\left(  1-\gamma^{2}\right) \nonumber\\
&  =l\left(  l+1\right)  \left(  \frac{A+3}{2}\right)  ^{2}-\frac{1}{4}\left[
1-\left(  \frac{A+3}{2}\right)  ^{2}\right]  .
\end{align}
Then we arrive at the replacement of the angular momentum given by Eq.
(\ref{ltrans6}).

The transformation of coordinates (\ref{rtrans6}) can be obtained by
substituting Eq. (\ref{gammaA16}) into Eq. (\ref{grho6}). Then$\ $the
transformation of the eigenfunction, by using (\ref{fg}), is%
\begin{equation}
u\left(  r\right)  \rightarrow\left(  \frac{A+3}{2}\right)  ^{1/2}%
\rho^{\left(  A+1\right)  /4}v\left(  \rho\right)  . \label{ur6}%
\end{equation}

The constant factor $\left(  \frac{A+3}{2}\right)  ^{1/2}$ does not change the
wave function, so the transformation (\ref{ur6}) can be written as the
transformation (\ref{wftrans6}).

In the above, we provide a quantum Newton duality between general polynomial
potentials. The general polynomial potential is a very general potential. In
fact, the general polynomial potential contains all potentials that can be
expressed as a series of power potentials with arbitrary real number powers.
Therefore, many potentials can be considered as a special case of general
polynomial potentials and can be treated by use of the Newton duality of
general polynomial potentials.

\subsection{An $N$-term potential and its $N$ dual potentials
\label{polynomialntermsmany}}

Inspecting the result of the quantum Newton duality among general polynomial
potentials, we can see that an $N$-term potential has $N$ dual potentials.

\subsubsection{An $N$-term potential and its $N$ dual potentials}

As a direct result of Theorem \ref{gpp}, we have the following conclusion,
which is useful in solving various potentials.

$N+1$\textit{ }$N$\textit{-term general polynomial potentials}%
\begin{equation}
\left\{  V_{1}\left(  r\right)  ,\cdots,V_{P}\left(  r\right)  ,\cdots
,V_{Q}\left(  r\right)  ,\cdots,V_{N+1}\left(  r\right)  \right\}
\end{equation}
\textit{are quantum Newtonianly dual to each other, if any two potentials of
them, say }%
\begin{align}
V_{P}\left(  r\right)   &  =A_{1}r^{\alpha_{1}+1}+\cdots+A_{i}r^{\alpha_{i}%
+1}+\cdots+A_{N}r^{\alpha_{N}+1},\\
V_{Q}\left(  r\right)   &  =B_{1}r^{\beta_{1}+1}+\cdots+B_{j}r^{\beta_{j}%
+1}+\cdots+B_{N}r^{\beta_{N}+1},
\end{align}
\textit{satisfy the relations}%
\begin{align}
\frac{\alpha_{i}+3}{2}  &  =\frac{2}{\beta_{j}+3},\text{ \ }i,j=1,\cdots,N,\\
\sqrt{\frac{2}{\alpha_{i}+3}}\left(  \alpha_{m}+3\right)   &  =\sqrt{\frac
{2}{\beta_{j}+3}}\left(  \beta_{n}+3\right)  ,\text{ \ }m,n\neq i,j,
\end{align}
\textit{where }$P,Q=1,\cdots,N+1$\textit{ and }$m,n=1,\cdots,N$\textit{. The
bound-state eigenfunction of the energy }$E_{Q}$\textit{ and the angular
quantum number }$\ell$\textit{ of the potential }$V_{Q}\left(  r\right)
$\textit{\ can be obtained by performing the replacements}%
\begin{align}
E_{P}  &  \rightarrow-B_{j}\left(  \frac{2}{\beta_{j}+3}\right)  ^{2},\\
A_{i}  &  \rightarrow-E_{Q}\left(  \frac{2}{\beta_{j}+3}\right)  ^{2},\\
A_{m}  &  \rightarrow B_{n}\left(  \frac{2}{\beta_{j}+3}\right)  ^{2}%
\end{align}
\textit{to the bound-state eigenfunction of the energy }$E_{P}$\textit{ and
the angular quantum number }$l$\textit{ of the potential }$V_{P}\left(
r\right)  $\textit{. As a result, the duality relation of the angular momentum
between the two dual potentials }$V_{P}\left(  r\right)  $\textit{ and }%
$V_{Q}\left(  r\right)  $\textit{ is}%
\begin{equation}
l_{P}+\frac{1}{2}=\frac{2}{\beta_{j}+3}\left(  l_{Q}+\frac{1}{2}\right)  ,
\end{equation}
\textit{the transformation of coordinates is}%
\begin{equation}
r_{P}\rightarrow r_{Q}^{\frac{\beta_{j}+3}{2}}\ \ \text{or }\ r_{P}%
^{\frac{A_{i}+3}{2}}\rightarrow r_{Q},
\end{equation}
\textit{and the transformation of the eigenfunction is}%
\begin{equation}
u_{P}\left(  r_{P}\right)  \rightarrow r_{Q}^{\left(  \beta_{j}+1\right)
/4}u_{Q}\left(  r_{Q}\right)  \ \ \text{or }\ r_{p}^{\left(  A_{i}+1\right)
/4}u_{P}\left(  r_{P}\right)  \rightarrow u_{Q}\left(  r_{Q}\right)  .
\end{equation}

This result is in fact a corollary of Theorem \ref{gpp}.

In the following, we illustrate the dualities among $N+1$ dual potentials.

\subsubsection{A $2$-term potential and its two dual potentials: examples}

Three $2$-term general polynomial potentials $\left\{  V_{1}\left(  r\right)
,V_{2}\left(  r\right)  ,V_{3}\left(  r\right)  \right\}  $,
\begin{align}
V_{1}\left(  r\right)   &  =A_{1}r^{\alpha_{1}+1}+B_{1}r^{\beta_{1}+1},\\
V_{2}\left(  r\right)   &  =A_{2}r^{\alpha_{2}+1}+B_{2}r^{\beta_{2}+1},\\
V_{3}\left(  r\right)   &  =A_{3}r^{\alpha_{3}+1}+B_{3}r^{\beta_{3}+1},
\end{align}
are dual to each other, if they satisfy the relations
\begin{align}
\frac{\alpha_{1}+3}{2}  &  =\frac{2}{\alpha_{2}+3},\text{ }\frac{\beta_{1}%
+3}{2}=\frac{2}{\beta_{3}+3},\text{ }\frac{\beta_{2}+3}{2}=\frac{2}{\alpha
_{3}+3},\nonumber\\
\sqrt{\frac{2}{\alpha_{1}+3}}\left(  \beta_{1}+3\right)   &  =\sqrt{\frac
{2}{\alpha_{2}+3}}\left(  \beta_{2}+3\right)  ,\nonumber\\
\sqrt{\frac{2}{\beta_{1}+3}}\left(  \alpha_{1}+3\right)   &  =\sqrt{\frac
{2}{\beta_{3}+3}}\left(  \alpha_{3}+3\right)  ,\nonumber\\
\sqrt{\frac{2}{\beta_{2}+3}}\left(  \alpha_{2}+3\right)   &  =\sqrt{\frac
{2}{\alpha_{3}+3}}\left(  \beta_{3}+3\right)  . \label{2term3}%
\end{align}
Their eigenvalues of bound states and the coupling constants have the
relation
\begin{align}
E_{1}  &  =-A_{2}\left(  \frac{2}{\alpha_{2}+3}\right)  ^{2},\text{ \ }%
A_{1}=-E_{2}\left(  \frac{2}{\alpha_{2}+3}\right)  ^{2},\text{ \ }B_{1}%
=B_{2}\left(  \frac{2}{\alpha_{2}+3}\right)  ^{2},\nonumber\\
E_{1}  &  =-B_{3}\left(  \frac{2}{\beta_{3}+3}\right)  ^{2},\text{ \ }%
B_{1}=-E_{3}\left(  \frac{2}{\beta_{3}+3}\right)  ^{2},\text{ \ }A_{1}%
=A_{3}\left(  \frac{2}{\beta_{3}+3}\right)  ^{2},\nonumber\\
E_{2}  &  =-A_{3}\left(  \frac{2}{\alpha_{3}+3}\right)  ^{2},\text{ \ }%
B_{2}=-E_{3}\left(  \frac{2}{\alpha_{3}+3}\right)  ^{2},\text{ \ }A_{2}%
=B_{3}\left(  \frac{2}{\alpha_{3}+3}\right)  ^{2}%
\end{align}
and their angular momenta have the relation
\begin{align}
l_{1}+\frac{1}{2}  &  =\frac{2}{\alpha_{2}+3}\left(  l_{2}+\frac{1}{2}\right)
,\nonumber\\
l_{1}+\frac{1}{2}  &  =\frac{2}{\beta_{3}+3}\left(  l_{3}+\frac{1}{2}\right)
,\nonumber\\
l_{2}+\frac{1}{2}  &  =\frac{2}{\alpha_{3}+3}\left(  l_{3}+\frac{1}{2}\right)
.
\end{align}
Their eigenfunctions are related by the transformations
\begin{align}
r_{1}  &  \rightarrow r_{2}^{\frac{\alpha_{2}+3}{2}}\ \ \text{or \ }%
r_{1}^{\frac{\alpha_{1}+3}{2}}\rightarrow r_{2},\nonumber\\
r_{1}  &  \rightarrow r_{3}^{\frac{\beta_{3}+3}{2}}\text{\ \ or \ }%
r_{1}^{\frac{\beta_{1}+3}{2}}\rightarrow r_{3},\nonumber\\
r_{2}  &  \rightarrow r_{3}^{\frac{\alpha_{3}+3}{2}}\text{\ \ or \ }%
r_{2}^{\frac{\beta_{2}+3}{2}}\rightarrow r_{3}%
\end{align}
and%
\begin{align}
u_{1}\left(  r_{1}\right)   &  \rightarrow r_{2}^{\left(  \alpha_{2}+1\right)
/4}u_{2}\left(  r_{2}\right)  \ \ \text{or \ }r_{1}^{\left(  \alpha
_{1}+1\right)  /4}u_{1}\left(  r_{1}\right)  \rightarrow u_{2}\left(
r_{2}\right)  ,\nonumber\\
u_{1}\left(  r_{1}\right)   &  \rightarrow r_{3}^{\left(  \beta_{3}+1\right)
/4}u_{2}\left(  r_{3}\right)  \ \ \text{or \ }r_{1}^{\left(  \beta
_{1}+1\right)  /4}u_{1}\left(  r_{1}\right)  \rightarrow u_{3}\left(
r_{3}\right)  ,\nonumber\\
u_{2}\left(  r_{2}\right)   &  \rightarrow r_{3}^{\left(  \alpha_{3}+1\right)
/4}u_{3}\left(  r_{3}\right)  \ \ \text{or \ }r_{2}^{\left(  \beta
_{2}+1\right)  /4}u_{2}\left(  r_{2}\right)  \rightarrow u_{3}\left(
r_{3}\right)  .
\end{align}

\subsubsection{A $3$-term potential and its three dual potentials}

Four general polynomial potentials $\left\{  V_{1}\left(  r\right)
,V_{2}\left(  r\right)  ,V_{3}\left(  r\right)  ,V_{4}\left(  r\right)
\right\}  $,%
\begin{align}
V_{1}\left(  r\right)   &  =A_{1}r^{\alpha_{1}+1}+B_{1}r^{\beta_{1}+1}%
+C_{1}r^{\gamma_{1}+1},\nonumber\\
V_{2}\left(  r\right)   &  =A_{2}r^{\alpha_{2}+1}+B_{2}r^{\beta_{2}+1}%
+C_{2}r^{\gamma_{2}+1},\nonumber\\
V_{3}\left(  r\right)   &  =A_{3}r^{\alpha_{3}+1}+B_{3}r^{\beta_{3}+1}%
+C_{3}r^{\gamma_{3}+1},\nonumber\\
V_{4}\left(  r\right)   &  =A_{4}r^{\alpha_{4}+1}+B_{4}r^{\beta_{4}+1}%
+C_{4}r^{\gamma_{4}+1},
\end{align}
are dual to each other, if their powers satisfy the relation%
\begin{align}
\frac{\alpha_{1}+3}{2}  &  =\frac{2}{\alpha_{2}+3},\text{ }\frac{\beta_{1}%
+3}{2}=\frac{2}{\beta_{3}+3},\text{ }\frac{\gamma_{1}+3}{2}=\frac{2}%
{\gamma_{4}+3},\nonumber\\
\frac{\beta_{2}+3}{2}  &  =\frac{2}{\alpha_{3}+3},\text{ }\frac{\gamma_{2}%
+3}{2}=\frac{2}{\alpha_{4}+3},\text{ }\frac{\gamma_{3}+3}{2}=\frac{2}%
{\beta_{4}+3},\nonumber\\
\sqrt{\frac{2}{\alpha_{1}+3}}\left(  \beta_{1}+3\right)   &  =\sqrt{\frac
{2}{\alpha_{2}+3}}\left(  \beta_{2}+3\right)  ,\text{ }\sqrt{\frac{2}%
{\alpha_{1}+3}}\left(  \gamma_{1}+3\right)  =\sqrt{\frac{2}{\alpha_{2}+3}%
}\left(  \gamma_{2}+3\right)  ,\nonumber\\
\sqrt{\frac{2}{\beta_{1}+3}}\left(  \alpha_{1}+3\right)   &  =\sqrt{\frac
{2}{\beta_{3}+3}}\left(  \alpha_{3}+3\right)  ,\text{ }\sqrt{\frac{2}%
{\beta_{1}+3}}\left(  \gamma_{1}+3\right)  =\sqrt{\frac{2}{\beta_{3}+3}%
}\left(  \gamma_{3}+3\right)  ,\nonumber\\
\sqrt{\frac{2}{\gamma_{1}+3}}\left(  \alpha_{1}+3\right)   &  =\sqrt{\frac
{2}{\gamma_{4}+3}}\left(  \alpha_{4}+3\right)  ,\text{ }\sqrt{\frac{2}%
{\gamma_{1}+3}}\left(  \beta_{1}+3\right)  =\sqrt{\frac{2}{\gamma_{4}+3}%
}\left(  \beta_{4}+3\right)  ,\nonumber\\
\sqrt{\frac{2}{\beta_{2}+3}}\left(  \alpha_{2}+3\right)   &  =\sqrt{\frac
{2}{\alpha_{3}+3}}\left(  \beta_{3}+3\right)  ,\text{ }\sqrt{\frac{2}%
{\beta_{2}+3}}\left(  \gamma_{2}+3\right)  =\sqrt{\frac{2}{\alpha_{3}+3}%
}\left(  \gamma_{3}+3\right)  ,\nonumber\\
\sqrt{\frac{2}{\gamma_{2}+3}}\left(  \alpha_{2}+3\right)   &  =\sqrt{\frac
{2}{\alpha_{4}+3}}\left(  \gamma_{4}+3\right)  ,\text{ }\sqrt{\frac{2}%
{\gamma_{2}+3}}\left(  \beta_{2}+3\right)  =\sqrt{\frac{2}{\alpha_{4}+3}%
}\left(  \beta_{4}+3\right)  ,\nonumber\\
\sqrt{\frac{2}{\gamma_{3}+3}}\left(  \beta_{3}+3\right)   &  =\sqrt{\frac
{2}{\beta_{4}+3}}\left(  \gamma_{4}+3\right)  ,\text{ }\sqrt{\frac{2}%
{\gamma_{3}+3}}\left(  \alpha_{3}+3\right)  =\sqrt{\frac{2}{\beta_{4}+3}%
}\left(  \alpha_{4}+3\right)  .
\end{align}
Their eigenvalues of bound states and the coupling constants have the relation%
\begin{align}
E_{1}  &  =-A_{2}\left(  \frac{2}{\alpha_{2}+3}\right)  ^{2},\text{ \ }%
A_{1}=-E_{2}\left(  \frac{2}{\alpha_{2}+3}\right)  ^{2},\text{ \ }B_{1}%
=B_{2}\left(  \frac{2}{\alpha_{2}+3}\right)  ^{2},\text{ \ }C_{1}=C_{2}\left(
\frac{2}{\alpha_{2}+3}\right)  ^{2},\nonumber\\
E_{1}  &  =-B_{3}\left(  \frac{2}{\beta_{3}+3}\right)  ^{2},\text{ \ }%
B_{1}=-E_{3}\left(  \frac{2}{\beta_{3}+3}\right)  ^{2},\text{ \ }A_{1}%
=A_{3}\left(  \frac{2}{\beta_{3}+3}\right)  ^{2},\text{ \ }C_{1}=C_{3}\left(
\frac{2}{\beta_{3}+3}\right)  ^{2},\nonumber\\
E_{1}  &  =-C_{3}\left(  \frac{2}{\gamma_{4}+3}\right)  ^{2},\text{ \ }%
C_{1}=-C_{4}\left(  \frac{2}{\gamma_{4}+3}\right)  ^{2},\text{ \ }A_{1}%
=A_{4}\left(  \frac{2}{\gamma_{4}+3}\right)  ^{2},\text{ \ }B_{1}=B_{4}\left(
\frac{2}{\gamma_{4}+3}\right)  ^{2},\nonumber\\
E_{2}  &  =-A_{3}\left(  \frac{2}{\alpha_{3}+3}\right)  ^{2},\text{ \ }%
B_{2}=-E_{3}\left(  \frac{2}{\alpha_{3}+3}\right)  ^{2},\text{ \ }A_{2}%
=B_{3}\left(  \frac{2}{\alpha_{3}+3}\right)  ^{2},\text{ \ }C_{2}=C_{3}\left(
\frac{2}{\alpha_{3}+3}\right)  ^{2},\nonumber\\
E_{2}  &  =-A_{3}\left(  \frac{2}{\alpha_{4}+3}\right)  ^{2},\text{ \ }%
C_{2}=-E_{4}\left(  \frac{2}{\alpha_{4}+3}\right)  ^{2},\text{ \ }A_{2}%
=C_{4}\left(  \frac{2}{\alpha_{3}+3}\right)  ^{2},\text{ \ }B_{2}=B_{4}\left(
\frac{2}{\alpha_{3}+3}\right)  ^{2},\nonumber\\
E_{3}  &  =-B_{4}\left(  \frac{2}{\beta_{4}+3}\right)  ^{2},\text{ \ }%
C_{3}=-E_{4}\left(  \frac{2}{\beta_{4}+3}\right)  ^{2},\text{ \ }B_{3}%
=C_{4}\left(  \frac{2}{\beta_{4}+3}\right)  ^{2},\text{ \ }A_{3}=A_{4}\left(
\frac{2}{\beta_{4}+3}\right)  ^{2}%
\end{align}
and their angular momenta have the relation%
\begin{align}
l_{1}+\frac{1}{2}  &  =\frac{2}{\alpha_{2}+3}\left(  l_{2}+\frac{1}{2}\right)
,\nonumber\\
l_{1}+\frac{1}{2}  &  =\frac{2}{\beta_{3}+3}\left(  l_{3}+\frac{1}{2}\right)
,\nonumber\\
l_{1}+\frac{1}{2}  &  =\frac{2}{\gamma_{4}+3}\left(  l_{4}+\frac{1}{2}\right)
,\nonumber\\
l_{2}+\frac{1}{2}  &  =\frac{2}{\alpha_{3}+3}\left(  l_{3}+\frac{1}{2}\right)
,\nonumber\\
l_{2}+\frac{1}{2}  &  =\frac{2}{\alpha_{4}+3}\left(  l_{4}+\frac{1}{2}\right)
,\nonumber\\
l_{3}+\frac{1}{2}  &  =\frac{2}{\beta_{4}+3}\left(  l_{4}+\frac{1}{2}\right)
.
\end{align}
Their eigenfunctions are related by the transformations%
\begin{align}
r_{1}  &  \rightarrow r_{2}^{\frac{\alpha_{2}+3}{2}}\ \ \text{or \ }%
\ r_{1}^{\frac{\alpha_{1}+3}{2}}\rightarrow r_{2},\nonumber\\
r_{1}  &  \rightarrow r_{3}^{\frac{\beta_{3}+3}{2}}\text{\ \ or \ \ }%
r_{1}^{\frac{\beta_{1}+3}{2}}\rightarrow r_{3},\nonumber\\
r_{1}  &  \rightarrow r_{4}^{\frac{\gamma_{4}+3}{2}}\text{\ \ or \ \ }%
r_{1}^{\frac{\gamma_{1}+3}{2}}\rightarrow r_{4},\nonumber\\
r_{2}  &  \rightarrow r_{3}^{\frac{\alpha_{3}+3}{2}}\text{\ \ or \ \ }%
r_{2}^{\frac{\beta_{2}+3}{2}}\rightarrow r_{3},\nonumber\\
r_{2}  &  \rightarrow r_{4}^{\frac{\alpha_{4}+3}{2}}\text{\ \ or \ \ }%
r_{2}^{\frac{\gamma_{2}+3}{2}}\rightarrow r_{4},\nonumber\\
r_{3}  &  \rightarrow r_{4}^{\frac{\beta_{4}+3}{2}}\text{\ \ or \ \ }%
r_{3}^{\frac{\gamma_{3}+3}{2}}\rightarrow r_{4}%
\end{align}
and%
\begin{align}
u_{1}\left(  r_{1}\right)   &  \rightarrow r_{2}^{\left(  \alpha_{2}+1\right)
/4}u_{2}\left(  r_{2}\right)  \ \ \text{or \ }\ r_{1}^{\left(  \alpha
_{1}+1\right)  /4}u_{1}\left(  r_{1}\right)  \rightarrow u_{2}\left(
r_{2}\right)  ,\nonumber\\
u_{1}\left(  r_{1}\right)   &  \rightarrow r_{3}^{\left(  \beta_{3}+1\right)
/4}u_{2}\left(  r_{3}\right)  \ \ \text{or \ }\ r_{1}^{\left(  \beta
_{1}+1\right)  /4}u_{1}\left(  r_{1}\right)  \rightarrow u_{2}\left(
r_{3}\right)  ,\nonumber\\
u_{1}\left(  r_{1}\right)   &  \rightarrow r_{4}^{\left(  \gamma_{4}+1\right)
/4}u_{4}\left(  r_{4}\right)  \ \ \text{or \ }\ r_{1}^{\left(  \gamma
_{1}+1\right)  /4}u_{1}\left(  r_{1}\right)  \rightarrow u_{4}\left(
r_{4}\right)  ,\nonumber\\
u_{2}\left(  r_{2}\right)   &  \rightarrow r_{3}^{\left(  \alpha_{3}+1\right)
/4}u_{3}\left(  r_{3}\right)  \ \ \text{or \ }\ r_{2}^{\left(  \beta
_{2}+1\right)  /4}u_{2}\left(  r_{2}\right)  \rightarrow u_{3}\left(
r_{3}\right)  ,\nonumber\\
u_{2}\left(  r_{2}\right)   &  \rightarrow r_{4}^{\left(  \alpha_{4}+1\right)
/4}u_{4}\left(  r_{4}\right)  \ \ \text{or \ }\ r_{2}^{\left(  \gamma
_{2}+1\right)  /4}u_{2}\left(  r_{2}\right)  \rightarrow u_{4}\left(
r_{4}\right)  ,\nonumber\\
u_{3}\left(  r_{3}\right)   &  \rightarrow r_{4}^{\left(  \beta_{4}+1\right)
/4}u_{4}\left(  r_{4}\right)  \ \ \text{or \ }\ r_{3}^{\left(  \gamma
_{3}+1\right)  /4}u_{3}\left(  r_{3}\right)  \rightarrow u_{4}\left(
r_{4}\right)  .
\end{align}

\subsection{General polynomial potentials consist of two Newton dual
potentials \label{polynomial SC}}

In section \ref{classical polynomial}, we discuss a special case of the
classical Newton duality, in which each term of a potential is Newtonianly
dual to the corresponding term of its Newton dual potential. In this section,
we consider the quantum version of such a case.

\begin{theorem}
Two Newtonianly dual general polynomial potentials%
\begin{align}
U\left(  r\right)   &  =\xi r^{a+1}+\mu r^{2\left(  \sqrt{\frac{a+3}{2}%
}-1\right)  },\\
V\left(  \rho\right)   &  =\eta\rho^{A+1}+\mu\left(  \frac{A+3}{2}\right)
^{2}\rho^{2\left(  \sqrt{\frac{A+3}{2}}-1\right)  },
\end{align}
in which $\xi r^{a+1}$ is the Newton duality of $\eta\rho^{A+1}$ and $\mu
r^{2\left(  \sqrt{\frac{a+3}{2}}-1\right)  }$ is the Newton duality of
$\mu\left(  \frac{A+3}{2}\right)  ^{2}\rho^{2\left(  \sqrt{\frac{A+3}{2}%
}-1\right)  }$, if%
\begin{equation}
\frac{a+3}{2}=\frac{2}{A+3}.
\end{equation}

\end{theorem}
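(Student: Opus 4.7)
The plan is to reduce this result to a direct application of Theorem \ref{gpp} by verifying that the specific form of the second term is exactly what makes each term individually self-dual under the quantum Newton duality. Concretely, I would identify the first term $\xi r^{a+1}$ in $U$ with the ``principal'' term (power $a$) of Theorem \ref{gpp}, and the second term $\mu r^{2(\sqrt{(a+3)/2}-1)}$ with the single summand in $\sum_n \mu_n r^{b_n+1}$, writing $b+1 = 2\sqrt{(a+3)/2} - 2$, i.e.\ $b+3 = 2\sqrt{(a+3)/2}$. Analogously, for $V(\rho)$ set $B+3 = 2\sqrt{(A+3)/2}$.

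The first step is to check the power relation (\ref{Bandb6}) required by Theorem \ref{gpp}, namely
\begin{equation}
\sqrt{\tfrac{2}{a+3}}\,(b+3) \;=\; \sqrt{\tfrac{2}{A+3}}\,(B+3).
\end{equation}
Substituting $b+3 = 2\sqrt{(a+3)/2}$ on the left gives $2$, and substituting $B+3 = 2\sqrt{(A+3)/2}$ on the right also gives $2$, so the identity holds automatically, independently of the duality condition $(a+3)/2 = 2/(A+3)$. This is the content of the claim ``each term is itself Newtonianly dual to the corresponding term of the dual potential''.

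The second step is to check the coupling-constant replacement (\ref{mulamb6}), $\mu_n \to (2/(A+3))^2 \lambda_n$. Here Theorem \ref{gpp} forces the coefficient of the second term of $V$ to equal $\mu \cdot ((A+3)/2)^2$, which is exactly what appears in the statement. Combined with the energy/coupling replacements (\ref{Eta6})--(\ref{xieta6}) applied to the first term, and with the hypothesis $(a+3)/2 = 2/(A+3)$, all conditions of Theorem \ref{gpp} are met, so $U$ and $V$ are quantum Newtonianly dual and the duality transformations (\ref{ltrans6})--(\ref{wftrans6}) apply verbatim.

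I do not anticipate a serious obstacle: the whole point is that the exponent $2(\sqrt{(a+3)/2}-1)$ is engineered so that the ``auxiliary'' power $b$ coincides with its own Newton-dual power $B$ under the same scaling $\gamma = \sqrt{2/(a+3)}$ used for the principal term. The only thing to be careful about is the bookkeeping of the factor $((A+3)/2)^2$ attached to $\mu$ in $V$, which comes from the uniform rescaling $\mu_n \mapsto (2/(A+3))^2 \lambda_n$ in Theorem \ref{gpp} and is the sole reason the second coefficient in $V$ is not simply $\mu$. Thus the proof is essentially a verification, in parallel with the classical case treated in section \ref{classical polynomial}.
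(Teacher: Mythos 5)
Your proposal is correct and matches the paper's intent exactly: the paper's own ``proof'' is the single remark that the result is ``straightforward by the duality relation given above,'' i.e.\ precisely the reduction to Theorem \ref{gpp} that you carry out, and your identifications $b+3=2\sqrt{(a+3)/2}$, $B+3=2\sqrt{(A+3)/2}$ together with the check that $\sqrt{2/(a+3)}(b+3)=\sqrt{2/(A+3)}(B+3)=2$ and the bookkeeping $\lambda=\left(\frac{A+3}{2}\right)^{2}\mu$ are exactly the verification the paper leaves implicit. The only cosmetic point is that the term-by-term self-duality of the $\mu$ terms (as one-term power potentials) does rely on the hypothesis $\frac{a+3}{2}=\frac{2}{A+3}$, whereas the relation (\ref{Bandb6}) holds identically; these are two slightly different statements, but this does not affect the validity of your argument.
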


The proof is straightforward by the duality relation given above. An example
will be given in Sec. \ref{Solvingeln}.

\subsection{The quantum Newton duality between $U\left(  r\right)  =\xi
e^{\sigma r}$ and $V\left(  \rho\right)  =\frac{\eta}{\left(  \rho\ln
\alpha\rho\right)  ^{2}}$ \label{eln}}

In this section, we give an example of the Newton duality between two
transcendental-function potentials.

\begin{theorem}
\label{expln} Two potentials%
\begin{equation}
U\left(  r\right)  =\xi e^{\sigma r}\text{ \ and }V\left(  \rho\right)
=\frac{\eta}{\left(  \rho\ln\alpha\rho\right)  ^{2}}%
\end{equation}
are quantum Newtonianly dual to each other. The duality relations are%
\begin{align}
E  &  \rightarrow-\left(  \frac{\sigma}{2}\right)  ^{2}\left[  \ell\left(
\ell+1\right)  +\frac{1}{4}\right]  ,\label{Et}\\
\xi &  \rightarrow-\left(  \frac{\sigma}{2}\right)  ^{2}\frac{\mathcal{E}%
}{\alpha^{2}},\label{xit}\\
l\left(  l+1\right)   &  \rightarrow\eta, \label{lt}%
\end{align}
or, equivalently,%
\begin{align}
\mathcal{E}  &  \rightarrow-\left(  \frac{2}{\sigma}\right)  ^{2}\alpha^{2}%
\xi,\label{Et2}\\
\eta &  \rightarrow l\left(  l+1\right)  ,\label{xit2}\\
\ell\left(  \ell+1\right)   &  \rightarrow-\left(  \frac{2}{\sigma}\right)
^{2}E-\frac{1}{4}, \label{lt2}%
\end{align}
\newline Here $E$, $\xi$, and $l$ are the eigenvalues, the coupling constants,
and the angular momentum of the $U\left(  r\right)  $ system; $\mathcal{E}$,
$\eta$, and $\ell$ are those of the $V\left(  \rho\right)  $ system. As a
result, the corresponding transformation of coordinates is%
\begin{equation}
r\rightarrow\frac{2}{\sigma}\ln\alpha\rho\text{\ \ or \ }\frac{1}{\alpha
}e^{\frac{\sigma}{2}r}\rightarrow\rho, \label{rt1}%
\end{equation}
and the transformation of the eigenfunction is%
\begin{equation}
u\left(  r\right)  \rightarrow\rho^{-1/2}v\left(  \rho\right)  \text{\ \ or
\ \ }e^{\frac{\sigma}{4}r}u\left(  r\right)  \rightarrow v\left(  \rho\right)
. \label{wft1}%
\end{equation}

\end{theorem}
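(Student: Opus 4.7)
The plan is to follow the general framework set up in section \ref{general} verbatim: write down the radial equation for $U(r) = \xi e^{\sigma r}$, apply the coordinate substitution $r \to g(\rho)$ with the eigenfunction rescaling $u(r) \to \sqrt{g'(\rho)}\,v(\rho)$ forced by Eq.~(\ref{fg}), and identify the resulting equation as the radial equation of $V(\rho) = \eta/(\rho\ln\alpha\rho)^{2}$. Matching coefficients in Eq.~(\ref{3}) will then produce the full list (\ref{Et})--(\ref{lt}).

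The non-routine step is selecting the appropriate $g(\rho)$. In the power-potential proof of Theorem~\ref{Twf3D} one makes the geometric term $g'^{2}/g^{2}$ proportional to $1/\rho^{2}$ and is led to $g(\rho)=\rho^{\gamma}$. Here the exponential character of $U$ suggests a different ansatz: I would impose instead that the potential-transport term $g'(\rho)^{2}U(g(\rho)) = g'(\rho)^{2}\xi e^{\sigma g(\rho)}$ be a constant, so that (by Condition~\ref{cd2}) it can furnish the new eigenvalue $\mathcal{E}$. This condition reduces to $g'(\rho)^{2} e^{\sigma g(\rho)} = \mathrm{const}$, whose solution is $g(\rho) = (2/\sigma)\ln(\alpha\rho)$ with $\alpha$ an integration constant; this is exactly the coordinate change (\ref{rt1}). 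The associated rescaling factor $f(\rho)=\sqrt{g'(\rho)}$ is then proportional to $\rho^{-1/2}$, reproducing (\ref{wft1}).

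It then remains only to crank through Eq.~(\ref{3}) with $g'(\rho)=2/(\sigma\rho)$, $g''(\rho) = -2/(\sigma\rho^{2})$, $g'''(\rho) = 4/(\sigma\rho^{3})$. The potential-transport term $g'^{2}U(g)$ collapses to the constant $4\xi\alpha^{2}/\sigma^{2}$, which must equal $-\mathcal{E}$ and so produces (\ref{xit}). The geometric term $l(l+1)g'^{2}/g^{2}$ becomes $l(l+1)/[\rho^{2}(\ln\alpha\rho)^{2}]$, so the coefficient $l(l+1)$ plays the role of $\eta$ in $V(\rho)$, yielding (\ref{lt}). The remaining Schwarzian-type pieces $g'''/(2g')$ and $-3g''^{2}/(4g'^{2})$, together with $g'^{2}E$, combine into a single $1/\rho^{2}$ term with coefficient $4E/\sigma^{2} + 1/4$; equating this to the new centrifugal term $-\ell(\ell+1)/\rho^{2}$ gives (\ref{Et}). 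The main obstacle is conceptual rather than computational: recognising that forcing $g'^{2}U(g)$ to be constant is the right analogue of the power-potential ansatz, and that this is precisely what converts an exponential into an inverse-log-squared potential. Once this substitution is in hand, the verification is entirely mechanical.
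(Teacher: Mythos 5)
Your proposal is correct: the computation of the Schwarzian-type terms ($g'''/2g' - 3g''^2/4g'^2 = 1/(4\rho^2)$), the collapse of $g'^2 U(g)$ to the constant $4\xi\alpha^2/\sigma^2$, and the identification $l(l+1)\to\eta$ all check out and reproduce (\ref{Et})--(\ref{lt}) exactly. The route differs from the paper's in two respects worth noting. First, the paper's formal proof is a pure verification in both directions: it takes the transformations (\ref{rt1}) and (\ref{wft1}) as given, substitutes them into the radial equation of $U(r)=\xi e^{\sigma r}$, rearranges into Eq.~(\ref{vr1}), and reads off the relations; it then repeats the exercise starting from the inverse-log-squared potential to establish the reverse map. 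Your argument instead derives the transformation from the general framework of Eq.~(\ref{3}), which is closer in spirit to the paper's subsequent ``how the result is reached'' discussion than to its proof proper. Second, even relative to that discussion your ansatz is inverted: the paper first imposes $g'(\rho)^2=\gamma^2/\rho^2$ so that the term $Eg'^2$ becomes part of the centrifugal potential (Condition~\ref{cd1}), and only then deduces which $U$ makes $g'^2U(g)$ constant; you start from the given exponential $U$ and solve $g'^2 U(g)=\mathrm{const}$ directly for $g$ (Condition~\ref{cd2} first). Both orderings land on $g(\rho)=\tfrac{2}{\sigma}\ln\alpha\rho$, but yours is the more natural one when the potential pair is specified in advance, whereas the paper's ordering is the one that lets it \emph{discover} the dual pair. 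The only thing you omit is the explicit reverse verification from $V$ to $U$; since the relations (\ref{Et2})--(\ref{lt2}) are the literal algebraic inverses of (\ref{Et})--(\ref{lt}) and your coordinate map is invertible, this is not a gap, merely a difference in thoroughness from the paper's two-part proof.
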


\begin{proof}
First, we prove the dual transformation from $U\left(  r\right)  $ to
$V\left(  \rho\right)  $. \newline The radial equation of the central
potential $U\left(  r\right)  =\xi e^{\sigma r}$ reads%
\begin{equation}
\frac{d^{2}u\left(  r\right)  }{dr^{2}}+\left[  E-\frac{l\left(  l+1\right)
}{r^{2}}-\xi e^{\sigma r}\right]  u\left(  r\right)  =0. \label{raeq1}%
\end{equation}
Substituting the replacements (\ref{rt1}) and (\ref{wft1}) into Eq.
(\ref{raeq1}) gives%
\begin{equation}
\frac{d}{\frac{2}{\sigma\rho}d\rho}\frac{d\left(  \rho^{-1/2}v\left(
\rho\right)  \right)  }{\frac{2}{\sigma\rho}d\rho}+\left[  E-\frac{l\left(
l+1\right)  }{\left(  \frac{2}{\sigma}\ln\alpha\rho\right)  ^{2}}-\xi
\alpha^{2}\rho^{2}\right]  \rho^{-1/2}v\left(  \rho\right)  =0.
\end{equation}
Rearrange the equation as%
\begin{equation}
\frac{d^{2}v\left(  \rho\right)  }{d\rho^{2}}+\left[  -\left(  \frac{2}%
{\sigma}\right)  ^{2}\alpha^{2}\xi-\frac{-\left(  \frac{2}{\sigma}\right)
^{2}E-1/4}{\rho^{2}}-\frac{l\left(  l+1\right)  }{\left(  \rho\ln\alpha
\rho\right)  ^{2}}\right]  v\left(  \rho\right)  =0. \label{vr1}%
\end{equation}
Substituting the replacements (\ref{Et}), (\ref{xit}), and (\ref{lt}) into Eq.
(\ref{vr1}) gives the radial equation of the potential $V\left(  \rho\right)
$,
\begin{equation}
\frac{d^{2}v\left(  \rho\right)  }{d\rho^{2}}+\left[  \mathcal{E}-\frac
{\ell\left(  \ell+1\right)  }{\rho^{2}}-\frac{\eta}{\left(  \rho\ln\alpha
\rho\right)  ^{2}}\right]  v\left(  \rho\right)  =0.
\end{equation}
\newline

Second, we prove the dual transformation from $V\left(  \rho\right)  $ to
$U\left(  r\right)  $.\newline The radial equation of the central potential
$U\left(  r\right)  =\frac{\xi}{\left(  r\ln\beta r\right)  ^{2}}\ $reads%
\begin{equation}
\frac{d^{2}u\left(  r\right)  }{dr^{2}}+\left[  E-\frac{l\left(  l+1\right)
}{r^{2}}-\frac{\xi}{\left(  r\ln\beta r\right)  ^{2}}\right]  u\left(
r\right)  =0. \label{rqeq2}%
\end{equation}
Substituting the replacements (\ref{rt1}) and (\ref{wft1}) into (\ref{rqeq2})
gives%
\begin{equation}
\frac{d}{\frac{\sigma}{2\beta}e^{\frac{\sigma}{2}\rho}d\rho}\frac
{de^{\frac{\sigma}{4}\rho}v\left(  \rho\right)  }{\frac{\sigma}{2\beta
}e^{\frac{\sigma}{2}\rho}d\rho}+\left[  E-\frac{l\left(  l+1\right)  }{\left(
\frac{1}{\beta}e^{\frac{\sigma}{2}\rho}\right)  ^{2}}-\frac{\xi}{\left(
\frac{1}{\beta}e^{\frac{\sigma}{2}\rho}\frac{\sigma}{2}\rho\right)  ^{2}%
}\right]  e^{\frac{\sigma}{4}\rho}v\left(  \rho\right)  =0.
\end{equation}
Rearrange the equation as%
\begin{equation}
\frac{d^{2}v\left(  \rho\right)  }{d\rho^{2}}+\left[  -\left(  \frac{\sigma
}{2}\right)  ^{2}l\left(  l+1\right)  -\left(  \frac{\sigma}{2}\right)
^{2}\frac{1}{4}-\frac{\xi}{\rho^{2}}+\left(  \frac{\sigma}{2}\right)
^{2}\frac{E}{\beta^{2}}e^{\sigma\rho}\right]  v\left(  \rho\right)  =0.
\label{vr2}%
\end{equation}
Substituting the replacements (\ref{Et2}), (\ref{xit2}), and (\ref{lt2}) into
Eq. (\ref{vr2}) gives the radial equation of the potential $V\left(
\rho\right)  $,%
\begin{equation}
\frac{d^{2}v\left(  \rho\right)  }{d\rho^{2}}+\left[  \mathcal{E}-\frac
{\ell\left(  \ell+1\right)  }{\rho^{2}}-\eta e^{\sigma\rho}\right]  v\left(
\rho\right)  =0.
\end{equation}

\end{proof}

In the following, we show how the result given by Theorem \ref{expln} is reached.

\textit{The duality transformation from }$U\left(  r\right)  $\textit{ to
}$V\left(  \rho\right)  $\textit{.} The duality transformation from $U\left(
r\right)  $ to $V\left(  \rho\right)  $ is obtained by choosing the
term$\ Eg^{\prime}\left(  \rho\right)  ^{2}$ in Eq. (\ref{3}) to be a part of
the centrifugal potential term in Condition\ \ref{cd1}, i.e.,%
\begin{equation}
g^{\prime}\left(  \rho\right)  ^{2}=\frac{\gamma^{2}}{\rho^{2}}, \label{g3}%
\end{equation}
so%
\begin{equation}
g\left(  \rho\right)  =\gamma\ln\alpha\rho\label{grho2}%
\end{equation}
with $\alpha$ an arbitrary constant.

Substituting Eq. (\ref{grho2}) into Eq. (\ref{3}) gives%
\begin{equation}
\frac{d^{2}v\left(  \rho\right)  }{d\rho^{2}}+\left[  -\frac{\gamma^{2}}%
{\rho^{2}}U\left(  \rho\right)  +\frac{E\gamma^{2}+1/4}{\rho^{2}}%
-\frac{l\left(  l+1\right)  }{\left(  \rho\ln\alpha\rho\right)  ^{2}}\right]
v\left(  \rho\right)  =0. \label{vrho3}%
\end{equation}
Here the term $\left(  E\gamma^{2}+1/4\right)  /\rho^{2}$ is the centrifugal
potential term.

Now we need a term serving as the term of the eigenvalue $\mathcal{E}$
according to Condition\ \ref{cd2}. Choosing%
\begin{equation}
-\frac{\gamma^{2}}{\rho^{2}}U\left(  \rho\right)  =\mathcal{E}%
\end{equation}
gives%
\begin{equation}
U\left(  \rho\right)  =-\frac{\rho^{2}}{\gamma^{2}}\mathcal{E}. \label{Urho2}%
\end{equation}

By\ Eq. (\ref{grho2}), we obtain the transformation of coordinates between two
dual potentials:%
\begin{equation}
r\rightarrow\gamma\ln\alpha\rho. \label{grho22}%
\end{equation}
Then we have%
\begin{equation}
U\left(  r\right)  =-\frac{\mathcal{E}}{\alpha^{2}\gamma^{2}}e^{2r/\gamma}.
\end{equation}

By Eqs. (\ref{vrho3}) and (\ref{Urho2}), Eq. (\ref{3}) becomes%
\begin{equation}
\frac{d^{2}v\left(  \rho\right)  }{d\rho^{2}}+\left[  \mathcal{E}%
-\frac{-E\gamma^{2}-1/4}{\rho^{2}}-\frac{l\left(  l+1\right)  }{\left(
\rho\ln\alpha\rho\right)  ^{2}}\right]  v\left(  \rho\right)  =0.
\label{vrho22}%
\end{equation}
This is just the radial equation of the potential%
\begin{equation}
V\left(  \rho\right)  =\frac{l\left(  l+1\right)  }{\left(  \rho\ln\alpha
\rho\right)  ^{2}}.
\end{equation}

Rewrite these two potentials $U\left(  r\right)  $ and $V\left(  \rho\right)
$ as%
\begin{equation}
U\left(  r\right)  =\xi e^{\sigma r}%
\end{equation}
with%
\begin{align}
\gamma &  =\frac{2}{\sigma},\label{gammasigam}\\
\xi &  =-\left(  \frac{\sigma}{2}\right)  ^{2}\frac{\mathcal{E}}{\alpha^{2}}
\label{xieps}%
\end{align}
and%
\begin{equation}
V\left(  \rho\right)  =\frac{\eta}{\left(  \rho\ln\alpha\rho\right)  ^{2}}%
\end{equation}
with%
\begin{equation}
\eta=l\left(  l+1\right)  . \label{etal}%
\end{equation}
The term being proportional to $1/\rho^{2}$ in Eq. (\ref{vrho22}) is the
centrifugal potential, so%
\begin{equation}
\ell\left(  \ell+1\right)  =-E\gamma^{2}-\frac{1}{4}. \label{ellE}%
\end{equation}
By Eqs. (\ref{gammasigam}), (\ref{xieps}), (\ref{etal}), and (\ref{ellE}), we
obtain the duality relations (\ref{Et}), (\ref{xit}), and (\ref{lt}). By Eqs.
(\ref{grho22}) and (\ref{gammasigam}), we obtain the transformation of
coordinates (\ref{rt1}). Then by Eq. (\ref{g3}), we obtain the transformation
of the eigenfunction,%
\begin{equation}
u\left(  r\right)  \rightarrow\sqrt{\frac{2}{\rho\sigma}}v\left(  \rho\right)
.
\end{equation}
Dropping the constant factor, we arrive at the replacement (\ref{wft1}).

\textit{The duality transformation from }$V\left(  \rho\right)  $\textit{ to
}$U\left(  r\right)  $\textit{.} First, the duality transformation from
$V\left(  \rho\right)  $ to $U\left(  r\right)  $ is obtained by choosing%
\begin{equation}
g^{\prime}\left(  \rho\right)  ^{2}U\left(  \rho\right)  =\frac{\gamma^{2}%
}{\rho^{2}}%
\end{equation}
in Eq. (\ref{3}) to be a part of the centrifugal potential term in Condition
\ref{cd1}. This gives%
\begin{equation}
g\left(  \rho\right)  =\int\frac{\gamma}{\rho\sqrt{U\left(  \rho\right)  }%
}d\rho. \label{grho3}%
\end{equation}
Substituting Eq. (\ref{grho3}) into Eq. (\ref{3}), we have%
\begin{equation}
\frac{d^{2}v\left(  \rho\right)  }{d\rho^{2}}+\left[  \frac{\gamma^{2}E}%
{\rho^{2}U\left(  \rho\right)  }+\frac{1/4-\gamma^{2}}{\rho^{2}}+\frac{3}%
{16}\frac{U^{\prime}\left(  \rho\right)  ^{2}}{U^{2}\left(  \rho\right)
}-\frac{1}{4\rho}\frac{U^{\prime}\left(  \rho\right)  }{U\left(  \rho\right)
}-\frac{U^{\prime\prime}\left(  \rho\right)  }{4U\left(  \rho\right)  }%
-\frac{l\left(  l+1\right)  }{\rho^{2}U\left(  \rho\right)  \left(  \int%
\frac{1}{\rho\sqrt{U\left(  \rho\right)  }}d\rho\right)  ^{2}}\right]
v\left(  \rho\right)  =0.
\end{equation}
Here the term $\left(  1/4-\gamma^{2}\right)  /\rho^{2}$ is a part of the
centrifugal-potential term.

Second, let us choose a term serving as the term of the eigenvalue
$\mathcal{E}$ according to Condition\ \ref{cd2}. Choosing%
\begin{equation}
\frac{1}{\rho^{2}U\left(  \rho\right)  \left(  \int\frac{1}{\rho\sqrt{U\left(
\rho\right)  }}d\rho\right)  ^{2}}=\alpha^{2}%
\end{equation}
gives%
\begin{equation}
U\left(  \rho\right)  =\frac{e^{-2\alpha\rho}}{\rho^{2}}. \label{Urho3}%
\end{equation}
Then by Eq. (\ref{grho3}) we obtain%
\begin{equation}
g\left(  \rho\right)  =\frac{\gamma}{\alpha}e^{\alpha\rho}, \label{grho4}%
\end{equation}
so the replacement of coordinates between two dual potentials is%
\begin{equation}
r\rightarrow\frac{\gamma}{\alpha}e^{\alpha\rho}. \label{grho44}%
\end{equation}
Then we have%
\begin{equation}
U\left(  r\right)  =\frac{\gamma^{2}}{\left(  r\ln\frac{\alpha}{\gamma
}r\right)  ^{2}}.
\end{equation}
By Eqs. (\ref{grho3}) and (\ref{Urho3}), Eq. (\ref{3}) becomes%
\begin{equation}
\frac{d^{2}v\left(  \rho\right)  }{d\rho^{2}}+\left\{  -\left[  l\left(
l+1\right)  +\frac{1}{4}\right]  \alpha^{2}-\frac{\gamma^{2}}{\rho^{2}}%
+\gamma^{2}Ee^{2\alpha\rho}\right\}  v\left(  \rho\right)  =0
\end{equation}
This is just the radial equation of the potential%
\begin{equation}
V\left(  \rho\right)  =-\gamma^{2}Ee^{2\alpha\rho}.
\end{equation}

Rewrite these two potentials $U\left(  r\right)  $ and $V\left(  \rho\right)
$ as%
\begin{equation}
U\left(  r\right)  =\frac{\xi}{\left(  r\ln\beta r\right)  ^{2}}%
\end{equation}
with%
\begin{equation}
\gamma=\frac{\alpha}{\beta} \label{gammaab}%
\end{equation}
and%
\begin{equation}
V\left(  \rho\right)  =\eta e^{\sigma\rho}%
\end{equation}
with%
\begin{align}
\alpha &  =\frac{\sigma}{2},\label{alphasigma}\\
\gamma^{2}  &  =-\frac{\eta}{E}. \label{gammaetaE}%
\end{align}
The term being proportional to $1/\rho^{2}$ in Eq. (\ref{vrho22}) is the
centrifugal potential, so%
\begin{equation}
\ell\left(  \ell+1\right)  =\gamma^{2}. \label{ellgamma}%
\end{equation}

By Eqs. (\ref{gammaab}), (\ref{alphasigma}), (\ref{gammaetaE}) and
(\ref{ellgamma}), we obtain the duality relations (\ref{Et2}), (\ref{xit2}),
and (\ref{lt2}). By Eqs. (\ref{gammaab}) and (\ref{alphasigma}), we obtain the
transformation of coordinates (\ref{rt1}). Then by Eq. (\ref{grho4}), we
obtain the replacement of the eigenfunction,%
\begin{equation}
u\left(  r\right)  \rightarrow\xi^{1/4}e^{\frac{\sigma}{4}\rho}v\left(
\rho\right)  .
\end{equation}
Dropping the constant factor, we arrive at the replacement (\ref{wft1}).

\section{The quantum Newton duality as a method for solving eigenproblems}

\subsection{General discussions \label{SolvingGD}}

The Newton duality reveals an internal connection among various potentials.
Once a solution of a potential is known, one can immediately obtain the
solution of its dual potentials by the duality relation. That is to say, the
Newton duality can serve as a method for solving eigenproblems.

Concretely, as shown in section \ref{polynomialntermsmany}, $N+1$ $N$-term
general polynomial potentials can form a dual-potential set
\begin{equation}
\left\{  V_{1}\left(  r\right)  ,V_{2}\left(  r\right)  ,\cdots,V_{N+1}\left(
r\right)  \right\}  ,
\end{equation}
in which any two potentials are dual to each other. That is to say, once a
solution of a potential in a dual-potential set is known, the solution of the
other $N$ potentials in the set can be immediately obtained by the duality relation.

For one-term potentials, the duality set is ($ar^{a+1}$, $Ar^{\frac{4}{a+3}%
-2}$), for two-term potentials, the duality set is ($ar^{\alpha}+br^{\beta}$,
$A_{1}r^{\frac{4}{\alpha+2}-2}+B_{1}r^{\frac{2}{\alpha+2}\left(
\beta+2\right)  -2}$, $A_{2}r^{\frac{2}{\beta+2}\left(  \alpha+2\right)
-2}+B_{2}r^{\frac{4}{\beta+2}-2}$), for three-term potentials, the duality set
is ($ar^{\alpha}+br^{\beta}+cr^{\gamma}$, $A_{1}r^{\frac{4}{\alpha+2}-2}%
+B_{1}r^{\frac{2}{\alpha+2}\left(  \beta+2\right)  -2}+C_{1}r^{\frac{2}%
{\alpha+2}\left(  \gamma+2\right)  -2}$, $A_{2}r^{\frac{2}{\beta+2}\left(
\alpha+2\right)  -2}+B_{2}r^{\frac{4}{\beta+2}-2}+C_{2}r^{\frac{2}{\beta
+2}\left(  \gamma+2\right)  -2}$ $A_{2}r^{\frac{2}{\beta+2}\left(
\alpha+2\right)  -2}+B_{2}r^{\frac{4}{\beta+2}-2}$, $A_{3}r^{\frac{2}%
{\gamma+2}\left(  \alpha+2\right)  -2}+B_{3}r^{\frac{2}{\gamma+2}\left(
\beta+2\right)  -2}+C_{3}r^{\frac{4}{\gamma+2}-2}$), and so on.

In the following, we show how this works.

\subsection{Power potentials \label{SolvingPP}}

First consider some power potentials, including three-dimensional and
arbitrary-dimensional harmonic-oscillator potentials\ and Coulomb potentials,
$r^{2/3}$-potential and $1/\sqrt{r}$-potential, and $1/r^{3/2}$-potential and
$r^{6}$-potential.

\subsubsection{The harmonic-oscillator potential\ and the Coulomb potential:
three dimensions and arbitrary dimensions \label{h-C}}

To show how to solve a potential from its Newton duality, we first use the
harmonic-oscillator potential\ and the Coulomb potential in three dimensions
and in arbitrary dimensions as examples. The harmonic-oscillator
potential\ and the Coulomb potential are Newtonianly dual to each other
\cite{arnold1990huygens}, so we can solve one from another.

\paragraph{The harmonic-oscillator potential and the Coulomb potential: three
dimensions}

The harmonic-oscillator potential and the Coulomb potential form a duality set%
\begin{equation}
(\xi r^{2},\frac{\eta}{\rho}),
\end{equation}
i.e., the harmonic-oscillator potential
\begin{equation}
U\left(  r\right)  =\xi r^{2}%
\end{equation}
and the Coulomb potential%
\begin{equation}
V\left(  \rho\right)  =\frac{\eta}{\rho}%
\end{equation}
are Newtonianly dual to each other.

In the following, we solve the Coulomb potential from the harmonic-oscillator potential.

\textit{The harmonic-oscillator potential.} For the harmonic-oscillator
potential $U\left(  r\right)  =\xi r^{2}$, i.e., the power potential $U\left(
r\right)  =\xi r^{a+1}$ with $a=1$, the radial equation reads%
\begin{equation}
\frac{d^{2}u\left(  r\right)  }{dr^{2}}+\left[  E-\frac{l\left(  l+1\right)
}{r^{2}}-\xi r^{2}\right]  u\left(  r\right)  =0.
\end{equation}
The radial eigenfunction is
\cite{ronveaux1995heun,slavyanov2000special,li2016exact} (see Appendix
\ref{Harmonic})%
\begin{equation}
u_{l}\left(  r\right)  =A_{l}e^{-\frac{\sqrt{\xi}}{2}r^{2}}\xi^{\left(
l+1\right)  /4}r^{l+1}N\left(  2l+1,0,\frac{E}{\xi^{1/2}},0,\xi^{1/4}r\right)
, \label{radwfheun}%
\end{equation}
where $N\left(  \alpha,\beta,\gamma,\delta,z\right)  $ is the Heun biconfluent
function \cite{ronveaux1995heun,slavyanov2000special}. To be in accordance
with the form of the solution of other potentials discussed later, we first
express the solution of the harmonic-oscillator potential by the Heun
function. Usually, the eigenfunction of the harmonic-oscillator potential is
represented by the hypergeometric function. By the relation between the Heun
function and the hypergeometric function
\cite{ronveaux1995heun,slavyanov2000special},%
\begin{equation}
N\left(  \alpha,0,\gamma,0,z\right)  =\text{ }_{1}F_{1}\left(  \frac{1}%
{2}+\frac{\alpha}{4}-\frac{\gamma}{4},{1+}\frac{\alpha}{2},{{z}^{2}}\right)
{,} \label{Heun1F1}%
\end{equation}
the eigenfunction (\ref{radwfheun}) reduces to%
\begin{equation}
u_{l}\left(  r\right)  =A_{l}e^{-\frac{\sqrt{\xi}r^{2}}{2}}\xi^{\left(
l+1\right)  /4}r^{l+1}\text{ }_{1}F_{1}\left(  \frac{l}{2}+\frac{3}{4}%
-\frac{E}{4\sqrt{\xi}},\frac{3}{2}+l,\sqrt{\xi}r^{2}\right)  .
\end{equation}
In this approach, the eigenvalue is the zero of the function $K_{2}\left(
2l+1,0,\frac{\mathcal{E}}{\xi^{1/2}},0\right)  $, i.e., \cite{li2016exact}%
\begin{equation}
K_{2}\left(  2l+1,0,\frac{E}{\xi^{1/2}},0\right)  =0,
\end{equation}
where%
\begin{equation}
K_{2}\left(  \alpha,\beta,\gamma,\delta\right)  =\frac{\Gamma\left(
1+\alpha\right)  }{\Gamma\left(  \left(  \alpha-\gamma\right)  /2\right)
\Gamma\left(  1+\left(  \alpha+\gamma\right)  /2\right)  }J_{1+\left(
\alpha+\gamma\right)  /2}\left(  \frac{\alpha+\gamma}{2},\beta,\frac{3}%
{2}\alpha-\frac{1}{2}\gamma,\delta-\frac{\alpha-\gamma}{2}\beta\right)
\label{K2}%
\end{equation}
with%
\begin{equation}
J_{\lambda}\left(  \alpha,\beta,\gamma,\delta\right)  =\int_{0}^{\infty
}x^{\lambda-1}e^{-\beta x-x^{2}}N\left(  \alpha,\beta,\gamma,\delta,x\right)
dx.
\end{equation}

In the case of harmonic-oscillator potentials, $K_{2}\left(  2l+1,0,\frac
{E}{\xi^{1/2}},0\right)  $ reduces to \cite{batola1982quelques}%
\begin{equation}
K_{2}\left(  2l+1,0,\frac{E}{\xi^{1/2}},0\right)  =\frac{\Gamma\left(
l+\frac{3}{2}\right)  }{\Gamma\left(  \frac{3}{4}+\frac{l}{2}-\frac{E}%
{4\xi^{1/2}}\right)  }.
\end{equation}
Therefore, the eigenvalue is the singularities of $\Gamma\left(  \frac{3}%
{4}+\frac{l}{2}-\frac{E}{4\xi^{1/2}}\right)  $:%
\begin{equation}
\frac{3}{4}+\frac{l}{2}-\frac{E}{4\xi^{1/2}}=-n_{r},\text{ \ }n_{r}%
=0,1,2\ldots.
\end{equation}
Then%
\begin{equation}
E=2\sqrt{\xi}\left(  2n_{r}+l+\frac{3}{2}\right)  ,\text{ \ }n_{r}%
=0,1,2,\ldots, \label{spech}%
\end{equation}
where $n_{r}$ is the radial quantum number.

The reason why, instead of the Hermite polynomial, we use the Heun function to
express the eigenfunction of the harmonic-oscillator potential is that the
eigenfunction of the Coulomb potential can also be expressed as a Heun function.

\textit{The Coulomb potential. }The Coulomb potential is a dual potential of
the harmonic-oscillator potential. Then the solution of the Coulomb potential
can be obtained from the solution of the harmonic-oscillator potential by the
duality relation given by Theorems (\ref{Twf3D}) and (\ref{Tev3D}).

The radial equation of the Coulomb potential is
\begin{equation}
\frac{d^{2}v\left(  \rho\right)  }{d\rho^{2}}+\left[  \mathcal{E}-\frac
{\ell\left(  \ell+1\right)  }{\rho^{2}}-\frac{\eta}{\rho}\right]  v\left(
\rho\right)  =0.
\end{equation}

By the duality relations (\ref{aA3D}), (\ref{coup}), (\ref{en}), and
(\ref{ltrans}) given by Theorem (\ref{Twf3D}), we have the following
replacements:%
\begin{align}
\frac{a+3}{2}  &  \rightarrow\frac{2}{A+3}\text{ with }a=1,\label{aAa2}\\
E  &  \rightarrow-4\eta,\label{eeta}\\
\xi &  \rightarrow-4\mathcal{E},\label{xie}\\
l+\frac{1}{2}  &  \rightarrow2\left(  \ell+\frac{1}{2}\right)  .
\label{lellhc}%
\end{align}
By these duality relations, we can obtain the eigenvalue and eigenfunction of
the Coulomb potential from the eigenvalue and eigenfunction of the
harmonic-oscillator potential, Eqs (\ref{radwfheun}) and (\ref{spech}).

The duality transformation of coordinates, Eq. (\ref{coorTpower}), and the
duality relation of radial eigenfunctions, Eq. (\ref{wTpower}), in this case
become%
\begin{align}
r  &  \rightarrow\rho^{1/2}\text{ \ \ or \ }r^{2}\rightarrow\rho
,\label{coorhacou}\\
u\left(  r\right)   &  \rightarrow\rho^{-1/4}v\left(  \rho\right)  .
\label{wfhacou}%
\end{align}

\textit{Eigenvalue.} Substituting the duality relation of eigenvalues, Eq.
(\ref{specf}), into the eigenvalue of the harmonic-oscillator potential, Eq.
(\ref{spech}), we have%
\begin{equation}
-4\eta=2\sqrt{-4\mathcal{E}}\left[  2n_{r}+2\left(  \ell+\frac{1}{2}\right)
-\frac{1}{2}+\frac{3}{2}\right]  ,\text{ \ }n_{r}=0,1,2,\ldots.
\end{equation}
Solving $\mathcal{E}$ gives%
\begin{equation}
\mathcal{E}=-\frac{\eta^{2}}{4}\frac{1}{\left(  n_{r}+\ell+1\right)  ^{2}%
},\text{ \ }n_{r}=0,1,2,\ldots.
\end{equation}
This is just the eigenvalue of the Coulomb potential
\cite{flugge1994practical}.

\textit{Eigenfunction}. Starting from the eigenfunction of the
harmonic-oscillator potential (\ref{radwfheun}), by the duality relations,
including the replacements (\ref{eeta}), (\ref{xie}), (\ref{lellhc}), the
transformation of coordinates (\ref{coorhacou}), and the transformation of
eigenfunctions (\ref{wfhacou}), we arrive at%
\begin{equation}
\frac{1}{\rho^{1/4}}v\left(  \rho\right)  =A_{\ell}e^{-\frac{\sqrt
{-4\mathcal{E}}}{2}\sqrt{\rho}^{2}}\left[  \left(  -4\mathcal{E}\right)
^{1/4}\sqrt{\rho}\right]  ^{2\left(  \ell+1/2\right)  +1/2}N\left(  2\left[
2\left(  \ell+\frac{1}{2}\right)  \right]  ,0,\frac{-4\eta}{\left(
-4\mathcal{E}\right)  ^{1/2}},0,\left(  -4\mathcal{E}\right)  ^{1/4}\sqrt
{\rho}\right)  .
\end{equation}
Then we obtain the eigenfunction of the Coulomb potential%
\begin{equation}
v\left(  \rho\right)  =A_{\ell}e^{-\sqrt{-\mathcal{E}}\rho}\left(
2\sqrt{-\mathcal{E}}\rho\right)  ^{\ell+1}N\left(  4\ell+2,0,-\frac{2\eta
}{\sqrt{-\mathcal{E}}},0,\left(  2\rho\right)  ^{1/2}\left(  -\mathcal{E}%
\right)  ^{1/4}\right)  . \label{nurhoCoulomb}%
\end{equation}

By the relation between the hypergeometric function and the Heun function, Eq.
(\ref{Heun1F1}), the eigenfunction of the Coulomb potential, Eq.
(\ref{nurhoCoulomb}), becomes%
\begin{equation}
v\left(  \rho\right)  =A_{\ell}e^{-\sqrt{-\mathcal{E}}\rho}\left(
2\sqrt{-\mathcal{E}}\right)  ^{\ell+1}\rho^{\ell+1}\text{ }_{1}F_{1}\left(
\ell+1+\frac{\eta}{2\sqrt{-\mathcal{E}}},2\left(  \ell+1\right)
,2\sqrt{-\mathcal{E}}\rho\right)  .
\end{equation}
This is just the familiar form of the eigenfunction of the Coulomb potential
\cite{flugge1994practical}.

It should be noted here that the same procedure can also be applied to solve
the harmonic-oscillator potential\ from the Coulomb potential.

\paragraph{$n$-dimensional harmonic-oscillator potentials and $m$-dimensional
Coulomb potentials}

In this section, by the Newton duality relation, we calculate the eigenproblem
of the $m$-dimensional Coulomb potential from the solution of the
$n$-dimensional harmonic-oscillator potential.

\textit{The} $n$\textit{-dimensional harmonic-oscillator potential.} In $n$
dimensions, the radial equation for the harmonic-oscillator potential
$U\left(  r\right)  =\xi r^{2}$, i.e., the power potential $U\left(  r\right)
=\xi r^{a+1}$ with $a=1$, reads%
\begin{equation}
\frac{d^{2}u\left(  r\right)  }{dr^{2}}+\left[  E-\frac{\left(
l-3/2+n/2\right)  \left(  l-1/2+n/2\right)  }{r^{2}}-\xi r^{2}\right]
u\left(  r\right)  =0.
\end{equation}
The $n$-dimensional harmonic-oscillator radial eigenfunction is
\cite{ronveaux1995heun,slavyanov2000special,li2016exact}%
\begin{equation}
u\left(  r\right)  =A_{l}\xi^{l/4+\left(  n-1\right)  /8}{r}^{l+n/2-1/2}%
e^{-\sqrt{\xi}{r}^{2}/2}N\left(  2l+n-2,0,{\frac{E}{\xi^{1/2}}},0,\xi
^{1/4}r\right)  . \label{radwfhn}%
\end{equation}
By the relation between the hypergeometric function and the Heun function, Eq.
(\ref{Heun1F1}), $n$-dimensional radial harmonic-oscillator eigenfunction, Eq.
(\ref{radwfhn}), becomes%
\begin{equation}
u_{l}\left(  r\right)  =A_{l}\xi^{l/4+\left(  n-1\right)  /8}{r}%
^{l+n/2-1/2}e^{-\sqrt{\xi}r^{2}/2}\text{ }_{1}F_{1}\left(  \frac{l}{2}%
+\frac{n}{4}-\frac{E}{4\sqrt{\xi}};\,l+\frac{n}{2};\sqrt{\xi}r^{2}\right)  .
\end{equation}

The eigenvalue is the zero of the function $K_{2}\left(  2l+n-2,0,{\frac
{E}{\xi^{1/2}}},0\right)  $, i.e., \cite{li2016exact}%
\begin{equation}
K_{2}\left(  2l+n-2,0,{\frac{E}{\xi^{1/2}}},0\right)  =0,
\end{equation}
In this case, $K_{2}\left(  2l+n-2,0,{\frac{E}{\xi^{1/2}}},0\right)  $ reduces
to%
\begin{equation}
K_{2}\left(  2l+n-2,0,{\frac{E}{\xi^{1/2}}},0\right)  =\frac{\Gamma\left(
l+\frac{n}{2}-3\right)  }{\Gamma\left(  \frac{l}{2}+\frac{n}{4}-\frac{E}%
{4\xi^{1/2}}\right)  }.
\end{equation}
Then the eigenvalue of the bound state is the singularities of $\Gamma\left(
\frac{3}{4}+\frac{l}{2}-\frac{E}{4\xi^{1/2}}\right)  $:%
\begin{equation}
E=2\sqrt{\xi}\left(  2n_{r}+l+\frac{n}{2}\right)  ,\text{ }\ \text{\ }%
n_{r}=0,1,2,\ldots, \label{spechn}%
\end{equation}
where $n_{r}$ is the radial quantum number.

\textit{The }$m$\textit{-dimensional Coulomb potential. }The $m$-dimensional
Coulomb potential is a dual potential of the $n$-dimensional
harmonic-oscillator potential. In the following, we solve the solution of the
$m$-dimensional Coulomb potential from the solution of the $n$-dimensional
harmonic-oscillator potential.

The radial equation of the $m$-dimensional Coulomb potential is
\begin{equation}
\frac{d^{2}v\left(  \rho\right)  }{d\rho^{2}}+\left[  \mathcal{E}%
-\frac{\left(  \ell-3/2+m/2\right)  \left(  \ell-1/2+m/2\right)  }{\rho^{2}%
}-\frac{\eta}{\rho}\right]  v\left(  \rho\right)  =0.
\end{equation}

By the duality relation between potentials in different dimensions, Eqs.
(\ref{nm-Eh}), (\ref{nm-XE}), and (\ref{lell}), we have the following
replacements:%
\begin{align}
E  &  \rightarrow-4\eta,\label{coupnm}\\
\xi &  \rightarrow-4\mathcal{E},\label{spectransnm}\\
l+\frac{n}{2}-1  &  \rightarrow2\left(  \ell+\frac{m}{2}-1\right)  .
\label{lnm}%
\end{align}

Substituting the duality relation, Eq. (\ref{specfnm3}), into the bound-state
eigenvalue of the $n$-dimensional harmonic-oscillator potential, Eq.
(\ref{spechn}), we have%
\begin{equation}
-4\eta=2\sqrt{-4\mathcal{E}}\left(  2n_{r}+2\left(  \ell+\frac{m}{2}-1\right)
-\frac{n}{2}+1+\frac{n}{2}\right)  ,\text{ \ }n_{r}=0,1,2,\ldots.
\end{equation}
Then the eigenvalue of the $m$-dimensional Coulomb potential reads%
\begin{equation}
\mathcal{E}=-\frac{\eta^{2}/4}{n_{r}+\ell+m/2-1/2},\text{ \ }n_{r}%
=0,1,2,\ldots.
\end{equation}

The duality transformation of coordinates, Eq. (\ref{nmDcoorTpower}), and the
duality relation of radial eigenfunctions, Eq. (\ref{wftrans}), in this case
become%
\begin{align}
r  &  \rightarrow\rho^{1/2}\text{ \ \ or \ }r^{2}\rightarrow\rho
,\label{nmcoorhamcou}\\
u\left(  r\right)   &  \rightarrow\rho^{1/4}v\left(  \rho\right)  .
\label{nmwfhamcou}%
\end{align}

Substituting the duality relations (\ref{coupnm}), (\ref{spectransnm}),
(\ref{lnm}), (\ref{nmcoorhamcou}), and (\ref{nmwfhamcou}) into the
eigenfunction of the $n$-dimensional harmonic-oscillator, we have%
\begin{align}
\rho^{-1/4}v\left(  \rho\right)   &  =A_{\ell}\left[  \left(  -4\mathcal{E}%
\right)  ^{1/4}\right]  ^{2\left(  \ell+m/2-1\right)  +1/2}\sqrt{\rho
}^{2\left(  \ell+m/2-1\right)  +1/2}e^{-\frac{\sqrt{-4\mathcal{E}}}{2}%
\sqrt{\rho}^{2}}\nonumber\\
&  \times N\left(  2\left[  2\left(  \ell+\frac{m}{2}-1\right)  \right]
,0,{\frac{-4\eta}{\left(  -4\mathcal{E}\right)  ^{1/2}}},0,\left(
-4\mathcal{E}\right)  ^{1/4}\sqrt{\rho}\right)  .
\end{align}
Then the eigenfunction of the $m$-dimensional Coulomb potential reads%
\begin{equation}
v\left(  \rho\right)  =A_{\ell}\left(  2\sqrt{-\mathcal{E}}\right)
^{\ell+m/2-1/2}\rho^{\ell+m/2-1/2}e^{-\sqrt{-\mathcal{E}}\rho}N\left(
4\ell+2m-4,0,-{\frac{2\eta}{\sqrt{-\mathcal{E}}}},0,\left(  -\mathcal{E}%
\right)  ^{1/4}\left(  2\rho\right)  ^{1/2}\right)  . \label{mdCoulomb}%
\end{equation}
By the relation between the hypergeometric function and the Heun function, Eq.
(\ref{Heun1F1}), the eigenfunction of the $m$-dimensional Coulomb potential,
Eq. (\ref{mdCoulomb}), becomes%
\begin{equation}
v\left(  \rho\right)  =A_{\ell}\left(  2\sqrt{-\mathcal{E}}\right)
^{\ell-1/2+m/2}\rho^{\ell-1/2+m/2}e{^{-\sqrt{-\mathcal{E}}\rho}}\text{ }%
_{1}F_{1}\left(  \ell+\frac{m}{2}-\frac{1}{2}+{{\frac{\eta}{2\sqrt
{-\mathcal{E}}};\,2}}\ell+{{{m}}}-{{1;\,2\sqrt{-\mathcal{E}}\rho}}\right)  .
\end{equation}
This agrees with the result obtained by directly solving the eigenequation of
the $m$-dimensional Coulomb potential \cite{dong2011wave}.

\subsubsection{$r^{2/3}$-potential and $1/\sqrt{r}$-potential \label{2312}}

In this section, we solve the eigenproblem of the $r^{2/3}$-potential from the
solution of its duality $1/\sqrt{r}$-potential whose eigenproblem has been
exactly solved \cite{li2016exact}.

The duality of the $r^{2/3}$-potential
\begin{equation}
V\left(  \rho\right)  =\eta\rho^{2/3} \label{Vr23}%
\end{equation}
according to the duality relation given above, is the inverse-square-root
potential
\begin{equation}
U\left(  r\right)  =\frac{\xi}{\sqrt{r}}. \label{Ur12}%
\end{equation}

$1/\sqrt{r}$\textit{-potential.} The radial equation for the
inverse-square-root potential (\ref{Ur12})\ reads%
\begin{equation}
\frac{d^{2}u\left(  r\right)  }{dr^{2}}+\left[  E-\frac{l\left(  l+1\right)
}{r^{2}}-\frac{\xi}{\sqrt{r}}\right]  u\left(  r\right)  =0.
\end{equation}
The eigenproblem is solved exactly in Ref. \cite{li2016exact}. The
eigenfunction is%
\begin{equation}
u\left(  r\right)  =A_{l}\exp\left(  -\sqrt{-E}r+\frac{\xi}{\sqrt{-E}}\sqrt
{r}\right)  \left[  2\left(  -E\right)  ^{1/2}r\right]  ^{l+1}N\left(
4l+2,-\frac{\sqrt{2}\xi}{\left(  -E\right)  ^{3/4}},\frac{\xi^{2}}{2\left(
-E\right)  ^{3/2}},0,\sqrt{2\left(  -E\right)  ^{1/2}r}\right)  .
\label{radwfsq}%
\end{equation}
The eigenvalue is the zero of the function $K_{2}\left(  4l+2,-\frac{\sqrt
{2}\xi}{\left(  -E\right)  ^{3/4}},\frac{\xi^{2}}{2\left(  -E\right)  ^{3/2}%
},0\right)  $, i.e., \cite{li2016exact}
\begin{equation}
K_{2}\left(  4l+2,-\frac{\sqrt{2}\xi}{\left(  -E\right)  ^{3/4}},\frac{\xi
^{2}}{2\left(  -E\right)  ^{3/2}},0\right)  =0, \label{specsq}%
\end{equation}
where \cite{ronveaux1995heun,li2016exact}%
\begin{align}
&  K_{2}\left(  4l+2,-\frac{\sqrt{2}\xi}{\left(  -E\right)  ^{3/4}},\frac
{\xi^{2}}{2\left(  -E\right)  ^{3/2}},0\right)  =\frac{\Gamma\left(
4l+3\right)  }{\Gamma\left(  2l+1-\frac{\xi^{2}}{4\left(  -E\right)  ^{3/2}%
}\right)  \Gamma\left(  2l+2+\frac{\xi^{2}}{4\left(  -E\right)  ^{3/2}%
}\right)  }\nonumber\\
&  \times J_{2l+2+\frac{\xi^{2}}{4\left(  -E\right)  ^{3/2}}}\left(
2l+1+\frac{\xi^{2}}{4\left(  -E\right)  ^{3/2}},-\frac{\sqrt{2}\xi}{\left(
-E\right)  ^{3/4}},6l+3-\frac{\xi^{2}}{4\left(  -E\right)  ^{3/2}}%
,-\frac{\sqrt{2}\xi}{2\left(  -E\right)  ^{3/4}}\left[  \frac{\xi^{2}%
}{2\left(  -E\right)  ^{3/2}}-4l-2\right]  \right)
\end{align}
and \cite{ronveaux1995heun}%
\begin{align}
&  J_{2l+2+\frac{\xi^{2}}{4\left(  -E\right)  ^{3/2}}}\left(  2l+1+\frac
{\xi^{2}}{4\left(  -E\right)  ^{3/2}},-\frac{\sqrt{2}\xi}{\left(  -E\right)
^{3/4}},6l+3-\frac{\xi^{2}}{4\left(  -E\right)  ^{3/2}},-\frac{\sqrt{2}\xi
}{2\left(  -E\right)  ^{3/4}}\left[  \frac{\xi^{2}}{2\left(  -E\right)
^{3/2}}-4l-2\right]  \right) \nonumber\\
&  =\int_{0}^{\infty}dxx^{2l+1+\frac{\xi^{2}}{4\left(  -E\right)  ^{3/2}}%
}e^{\frac{\sqrt{2}\xi}{\left(  -E\right)  ^{3/4}}x-x^{2}}\nonumber\\
&  \times N\left(  2l+1+\frac{\xi^{2}}{4\left(  -E\right)  ^{3/2}}%
,-\frac{\sqrt{2}\xi}{\left(  -E\right)  ^{3/4}},6l+3-\frac{\xi^{2}}{4\left(
-E\right)  ^{3/2}},-\frac{\sqrt{2}\xi}{2\left(  -E\right)  ^{3/4}}\left[
\frac{\xi^{2}}{2\left(  -E\right)  ^{3/2}}-4l-2\right]  ,x\right)  .
\end{align}

$r^{2/3}$\textit{-potential.} The $r^{2/3}$-potential $V\left(  \rho\right)
=\eta\rho^{2/3}$ is the dual potential of the $1/\sqrt{r}$-potential $U\left(
r\right)  =\frac{\xi}{\sqrt{r}}$\textit{.}

The radial equation of $r^{2/3}$-potential\textit{ }is
\begin{equation}
\frac{d^{2}v\left(  \rho\right)  }{d\rho^{2}}+\left[  \mathcal{E}-\frac
{\ell\left(  \ell+1\right)  }{\rho^{2}}-\eta\rho^{2/3}\right]  v\left(
\rho\right)  =0.
\end{equation}
The duality relation between these two dual potentials,\ by Eqs. (\ref{coup}),
(\ref{en}), and (\ref{ltrans}), are%
\begin{align}
E  &  \rightarrow-\frac{9}{16}\eta,\label{eeta2}\\
\xi &  \rightarrow-\frac{9}{16}\mathcal{E},\label{xie2}\\
l+\frac{1}{2}  &  \rightarrow\frac{3}{4}\left(  \ell+\frac{1}{2}\right)  .
\label{lell2}%
\end{align}

Substituting the duality relation between the eigenvalues into the implicit
expression of the eigenvalue of the $r^{2/3}$-potential, Eq. (\ref{Vr23}),
gives%
\begin{equation}
K_{2}\left(  4\left[  \frac{3}{4}\left(  \ell+\frac{1}{2}\right)  -\frac{1}%
{2}\right]  +2,-\frac{\sqrt{2}\left(  -\frac{9}{16}\mathcal{E}\right)
}{\left(  \frac{9}{16}\eta\right)  ^{3/4}},\frac{\left(  -\frac{9}%
{16}\mathcal{E}\right)  ^{2}}{2\left(  \frac{9}{16}\eta\right)  ^{3/2}%
},0\right)  =0.
\end{equation}
Then we arrive at an implicit expression of the eigenvalue of the $r^{2/3}%
$-potential (\ref{Ur12}):%
\begin{equation}
K_{2}\left(  3\ell+\frac{3}{2},\frac{\sqrt{6}\mathcal{E}}{2\eta^{3/4}}%
,\frac{3\mathcal{E}^{2}}{8\eta^{3/2}},0\right)  =0.
\end{equation}

The duality transformation of coordinates, Eq. (\ref{coorTpower}), and the
duality relation of radial eigenfunctions, Eq. (\ref{wTpower}), in this case
become%
\begin{align}
r^{3/4}  &  \rightarrow\rho\text{\ \ or \ }r\rightarrow\rho^{4/3}%
,\label{rtrsq}\\
u\left(  r\right)   &  \rightarrow\rho^{1/6}v\left(  \rho\right)  .
\label{radeqtranssq}%
\end{align}

Substituting the duality relations (\ref{eeta2}), (\ref{xie2}), (\ref{lell2}),
(\ref{rtrsq}), and (\ref{radeqtranssq}) into the radial eigenfunction of the
$1/\sqrt{r}$-potential, Eq. (\ref{radwfsq}), gives%
\begin{align}
\rho^{1/6}v\left(  \rho\right)   &  =A_{l}\exp\left(  -\sqrt{\frac{9}{16}\eta
}\rho^{4/3}+\frac{-\frac{9}{16}\mathcal{E}}{\sqrt{\frac{9}{16}\eta}}\sqrt
{\rho^{4/3}}\right)  \left[  2\left(  \frac{9}{16}\eta\right)  ^{1/2}%
\rho^{4/3}\right]  ^{\frac{3}{4}\left(  \ell+1/2\right)  -1/2+1}\nonumber\\
&  \times N\left(  4\left[  \frac{3}{4}\left(  \ell+\frac{1}{2}\right)
-\frac{1}{2}\right]  +2,-\frac{-\frac{9}{16}\mathcal{E}\sqrt{2}}{\left(
\frac{9}{16}\eta\right)  ^{3/4}},\frac{\left(  -\frac{9}{16}\mathcal{E}%
\right)  ^{2}}{2\left(  \frac{9}{16}\eta\right)  ^{3/2}},0,\sqrt{2\left(
\frac{9}{16}\eta\right)  ^{1/2}\rho^{4/3}}\right)  .
\end{align}
Then we arrive at the radial eigenfunction of the $r^{2/3}$-potential:%
\begin{align}
v\left(  \rho\right)   &  =A_{l}\exp\left(  -\frac{3}{4}\eta^{1/2}\rho
^{4/3}-\frac{3\mathcal{E}}{4\eta^{1/2}}\rho^{2/3}\right)  \left(  \frac{3}%
{2}\eta^{1/2}\right)  ^{3\left(  \ell+1/2\right)  /4+1/2}\rho^{\ell
+1}\nonumber\\
&  \times N\left(  3\ell+\frac{3}{2},-\frac{\sqrt{6}\mathcal{E}}{2\eta^{3/4}%
},\frac{3\mathcal{E}^{2}}{8\eta^{3/2}},0,\frac{\sqrt{6}}{2}\eta^{1/4}%
\rho^{2/3}\right)  .
\end{align}

Additionally, in order to verify the solution of\ the $r^{2/3}$-potential
obtained by the duality relation given in this section, in Appendix \ref{V23}
we solve the $r^{2/3}$-potential by solving the radial eigenequation directly.

\subsubsection{$1/r^{3/2}$-potential and $r^{6}$-potential \label{326}}

In this section, we solve the eigenproblem of the $r^{6}$-potential $V\left(
\rho\right)  =\eta\rho^{6}$ from its duality $1/r^{3/2}$-potential $U\left(
r\right)  =\xi/r^{3/2}$. Nevertheless, these two potentials have not been
solved in literature. To show how to solve an eigenproblem by the duality
relation, additionally, we solve the exact solutions of these two potentials
in Appendices \ref{Vm32} and \ref{V6}.

The duality of the $r^{6}$-potential
\begin{equation}
V\left(  \rho\right)  =\eta\rho^{6}, \label{Vr6}%
\end{equation}
according to the duality relation (\ref{aA3D}), is the $1/r^{3/2}$-potential
\begin{equation}
U\left(  r\right)  =\frac{\xi}{r^{3/2}}. \label{Ur23}%
\end{equation}

$1/r^{3/2}$\textit{-potential.} The radial equation of the $1/r^{3/2}%
$-potential (\ref{Ur12})\ reads%

\begin{equation}
\frac{d^{2}u\left(  r\right)  }{dr^{2}}+\left[  E-\frac{l\left(  l+1\right)
}{r^{2}}-\frac{\xi}{r^{3/2}}\right]  u\left(  r\right)  =0.
\end{equation}
From Appendix \ref{Vm32}, the eigenfunction of the bound state of the
$1/r^{3/2}$-potential%
\begin{equation}
u\left(  r\right)  =A_{l}e^{-\sqrt{-E}r}r^{l+1}N\left(  4l+2,0,0,-\frac
{4\sqrt{2}\xi}{\left(  -E\right)  ^{1/4}},\sqrt{2\left(  -E\right)  ^{1/2}%
r}\right)  \label{radwf32}%
\end{equation}
and the eigenvalue of the bound state is given by an implicit expression%
\begin{equation}
K_{2}\left(  2\left(  2l+1\right)  ,0,0,-\frac{4\sqrt{2}\xi}{\left(
-E\right)  ^{1/4}}\right)  =0. \label{spec32}%
\end{equation}

$r^{6}$\textit{-potential.} The solution of the $r^{6}$-potential can be
obtained by its dual potential, the $1/r^{3/2}$-potential.

The radial equation of $r^{6}$-potential (\ref{Ur12})\textit{ }is
\begin{equation}
\frac{d^{2}v\left(  \rho\right)  }{d\rho^{2}}+\left[  \mathcal{E}-\frac
{\ell\left(  \ell+1\right)  }{\rho^{2}}-\eta\rho^{6}\right]  v\left(
\rho\right)  =0.
\end{equation}

The duality relations between these two dual potentials,\ by Eqs.
(\ref{coup}), (\ref{en}), and (\ref{ltrans}), are%

\begin{align}
E  &  \rightarrow-\frac{\eta}{16},\label{eeta3}\\
\xi &  \rightarrow-\frac{\mathcal{E}}{16},\label{xie3}\\
l+\frac{1}{2}  &  =\frac{1}{4}\left(  \ell+\frac{1}{2}\right)  . \label{lell3}%
\end{align}
Substituting the duality relation (\ref{specf}) between the eigenvalues of
these two dual potentials into the implicit expression of the eigenvalue of
the $1/r^{3/2}$-potential, Eq. (\ref{spec32}), gives%

\begin{equation}
K_{2}\left(  4\left[  \frac{1}{4}\left(  \ell+\frac{1}{2}\right)  -\frac{1}%
{2}\right]  +2,0,0,-\frac{4\sqrt{2}\left(  -\mathcal{E}/16\right)  }{\left(
\eta/16\right)  ^{1/4}}\right)  =0.
\end{equation}
An implicit expression of the eigenvalue of the $r^{6}$-potential then reads%
\begin{equation}
K_{2}\left(  \ell+\frac{1}{2},0,0,\frac{\sqrt{2}\mathcal{E}}{2\eta^{1/4}%
}\right)  =0.
\end{equation}

The duality transformation of coordinates, Eq. (\ref{coorTpower}), and the
duality relation of radial eigenfunctions, Eq. (\ref{wTpower}), in this case
become%
\begin{align}
r^{1/4}  &  \rightarrow\rho\text{\ \ or \ }r\rightarrow\rho^{4}%
,\label{coor32and6}\\
u\left(  r\right)   &  \rightarrow\rho^{3/2}v\left(  \rho\right)  .
\label{wft32and6}%
\end{align}

For eigenfunctions, substituting the duality relations (\ref{eeta3}),
(\ref{xie3}), (\ref{lell3}), (\ref{coor32and6}), and (\ref{wft32and6}) into
the radial eigenfunction of the $1/r^{3/2}$-potential, Eq. (\ref{radwf32}),
gives
\begin{equation}
\rho^{3/2}v\left(  \rho\right)  =A_{l}e^{-\sqrt{\frac{\eta}{16}}\rho^{4}}%
\rho^{4\left[  \frac{1}{4}\left(  \ell+1/2\right)  -1/2+1\right]  }N\left(
4\left[  \frac{1}{4}\left(  \ell+\frac{1}{2}\right)  -\frac{1}{2}\right]
+2,0,0,\frac{4\sqrt{2}\frac{\mathcal{E}}{16}}{\left(  \eta/16\right)  ^{1/4}%
},\sqrt{2\left(  \frac{\eta}{16}\right)  ^{1/2}\rho^{4}}\right)  .
\label{vr32}%
\end{equation}
The radial wave function of the $r^{6}$-potential then reads%
\begin{equation}
v\left(  \rho\right)  =A_{l}e^{-\frac{\sqrt{\eta}}{4}\rho^{4}}\rho^{\ell
+1}N\left(  \ell+\frac{1}{2},0,0,\frac{\sqrt{2}\mathcal{E}}{2\eta^{1/4}}%
,\frac{\sqrt{2}}{2}\eta^{1/4}\rho^{2}\right)  .
\end{equation}

The result agrees with the exact solution of the $r^{6}$-potential given in
Appendix \ref{V6}.

\subsection{Two-term general polynomial potentials \label{Solving2T}}

A two-term general polynomial potential has two dual potentials. That is to
say, one can solve two potentials from their dual potential. In this section,
we show how to solve two potentials from their two-term dual potential.

\subsubsection{Solving $V\left(  \rho\right)  =\frac{\eta}{\rho}+\frac
{\lambda}{\rho^{3/2}}$ and $V\left(  \rho\right)  =\eta\rho^{2}+\lambda
\rho^{6}$ from their duality $U\left(  r\right)  =\xi r^{2}+\frac{\mu}{r}$}

In this section, taking the\ dual set of potentials
\begin{equation}
\left(  \xi r^{2}+\frac{\mu}{r},\frac{\eta}{r}+\frac{\lambda}{r^{3/2}}%
,\eta\rho^{2}+\lambda\rho^{6}\right)
\end{equation}
as an example, we show how to solve not only one potential from a solved
potential. To show how to solve an eigenproblem by the duality relation,
additionally, we solve the exact solutions of these three potentials in
Appendices \ref{V2m1}, \ref{Vm1m32}, and \ref{V62}.

The two-term general polynomial potential
\begin{equation}
U\left(  r\right)  =\xi r^{2}+\frac{\mu}{r} \label{Urtwoterms}%
\end{equation}
has two dual potentials:
\begin{align}
V\left(  \rho\right)   &  =\frac{\eta}{\rho}+\frac{\lambda}{\rho^{3/2}%
},\label{Vrho2m1}\\
V\left(  \rho\right)   &  =\eta\rho^{2}+\lambda\rho^{6}. \label{Vrho26}%
\end{align}

The potential $U\left(  r\right)  =\xi r^{2}+\frac{\mu}{r}$ can be exactly
solved (see Appendix \ref{V2m1}).

The radial equation of $U\left(  r\right)  $\ is%
\begin{equation}
\frac{d^{2}u\left(  r\right)  }{dr^{2}}+\left[  E-\frac{l\left(  l+1\right)
}{r^{2}}-\xi r^{2}-\frac{\mu}{r}\right]  u\left(  r\right)  =0.
\end{equation}
The radial eigenfunction is
\begin{equation}
u\left(  r\right)  =A_{l}{r}^{l+1}e{^{\sqrt{\xi}{r}^{2}/2}}N\left(
2l+1,0,-{\frac{E}{\sqrt{\xi}}},-{\frac{2i\mu}{\xi^{1/4}}},i\xi^{1/4}r\right)
\label{radwf1}%
\end{equation}
and the eigenvalue of bound states can be expressed by an implicit expression:%
\begin{equation}
K_{2}\left(  2l+1,0,-{\frac{E}{\sqrt{\xi}}},-{\frac{2i\mu}{\xi^{1/4}}}\right)
=0, \label{specsq1}%
\end{equation}
i.e., the eigenvalue is the zero of $K_{2}\left(  2l+1,0,-{\frac{E}{\sqrt{\xi
}}},-{\frac{2i\mu}{\xi^{1/4}}}\right)  $.

\paragraph{Solving $V\left(  \rho\right)  =\frac{\eta}{\rho}+\frac{\lambda
}{\rho^{3/2}}$ from $U\left(  r\right)  =\xi r^{2}+\frac{\mu}{r}$}

The potential (\ref{Vrho2m1}) is one of the two dual potentials of the
potential (\ref{Urtwoterms}) with $a=1$, $b=-2$ and $A=-2$, $B=-5/2$ in the
duality relations (\ref{Aanda6}) and (\ref{Bandb6}). We can solve it from the
solution of the potential $U\left(  r\right)  $.

The radial equation of $V\left(  \rho\right)  =\frac{\eta}{\rho}+\frac
{\lambda}{\rho^{3/2}}$ is
\begin{equation}
\frac{d^{2}v\left(  \rho\right)  }{d\rho^{2}}+\left[  \mathcal{E}-\frac
{\ell\left(  \ell+1\right)  }{\rho^{2}}-\frac{\eta}{\rho}-\frac{\lambda}%
{\rho^{3/2}}\right]  v\left(  \rho\right)  =0.
\end{equation}

The dual relations (\ref{Eta6}), (\ref{xieta6}), (\ref{mulamb6}),
(\ref{ltrans6}), (\ref{rtrans6}), and (\ref{wftrans6}) with $a=1$ and $b=-2$
read%
\begin{align}
E  &  \rightarrow-4\eta,\\
\xi &  \rightarrow-4\mathcal{E},\\
\mu &  \rightarrow4\lambda,\\
l+\frac{1}{2}  &  \rightarrow2\left(  \ell+\frac{1}{2}\right)  ,
\end{align}
and%
\begin{align}
r^{2}  &  \rightarrow\rho\text{\ \ or \ }r\rightarrow\sqrt{\rho},\\
u\left(  r\right)   &  \rightarrow\rho^{-1/4}v\left(  \rho\right)  .
\end{align}
Substituting the above dual relations into the eigenfunction of the potential
(\ref{Urtwoterms}), Eq. (\ref{radwf1}), gives%
\begin{equation}
\rho^{-1/4}v\left(  \rho\right)  =A_{\ell}\left(  \sqrt{\rho}\right)
^{2\left(  \ell+1/2\right)  +1/2}e{^{\frac{\sqrt{-4\mathcal{E}}}{2}\left(
\sqrt{\rho}\right)  ^{2}}}N\left(  4\left(  \ell+\frac{1}{2}\right)
,0,-{\frac{-4\eta}{\sqrt{-4\mathcal{E}}}},-{\frac{2i\left(  4\lambda\right)
}{\left(  -4\mathcal{E}\right)  ^{1/4}}},i\left(  -4\mathcal{E}\right)
^{1/4}\sqrt{\rho}\right)  . \label{nurhotwoterms1}%
\end{equation}
The radial eigenfunction of the potential (\ref{Vrho2m1}) can be obtained by
rearranging Eq. (\ref{nurhotwoterms1}):%
\begin{equation}
v\left(  \rho\right)  =A_{\ell}\rho^{\ell+1}e{^{\sqrt{-\mathcal{E}}\rho}%
}N\left(  4\ell+2,0,{\frac{2\eta}{\sqrt{-\mathcal{E}}}},-{\frac{4\sqrt
{2}i\lambda}{\left(  -\mathcal{E}\right)  ^{1/4}}},i\left(  -\mathcal{E}%
\right)  ^{1/4}\sqrt{2\rho}\right)  . \label{vwf1}%
\end{equation}
Substituting the above replacements into Eq. (\ref{specsq1}) gives an implicit
expression of the eigenvalue of the potential (\ref{Vrho2m1}):%
\begin{equation}
K_{2}\left(  4\left(  \ell+\frac{1}{2}\right)  ,0,-{\frac{-4\eta}%
{\sqrt{-4\mathcal{E}}}},-{\frac{2i\left(  4\lambda\right)  }{\left(
-4\mathcal{E}\right)  ^{1/4}}}\right)  =0,
\end{equation}
i.e.,%
\begin{equation}
K_{2}\left(  4\ell+2,0,{\frac{2\eta}{\sqrt{-\mathcal{E}}}},-{\frac{4\sqrt
{2}i\lambda}{\left(  -\mathcal{E}\right)  ^{1/4}}}\right)  =0. \label{vsp1}%
\end{equation}

In Appendix \ref{Vm1m32}, as a verification, we solve the potential
(\ref{Vrho2m1}) by directly solving the radial equation. The results obtained
by the duality relation agrees with the result given in Appendix \ref{Vm1m32}.

\paragraph{Solving $V\left(  \rho\right)  =\eta\rho^{2}+\lambda\rho^{6}$ from
$U\left(  r\right)  =\xi r^{2}+\frac{\mu}{r}$ \label{V62sol}}

The potential (\ref{Vrho26}) is another dual potential of the potential
(\ref{Urtwoterms}) with $a=-2$, $b=1$ and $A=1$, $B=5$ in\ the duality
relations (\ref{Aanda6}) and (\ref{Bandb6}). We can also solve it from the
solution of the potential (\ref{Urtwoterms}).

The radial equation of $V\left(  \rho\right)  =\eta\rho^{2}+\lambda\rho^{6}$
is
\begin{equation}
\frac{d^{2}v\left(  \rho\right)  }{d\rho^{2}}+\left[  \mathcal{E}-\frac
{\ell\left(  \ell+1\right)  }{\rho^{2}}-\eta\rho^{2}-\lambda\rho^{6}\right]
v\left(  \rho\right)  =0.
\end{equation}

The dual relations (\ref{Eta6}), (\ref{xieta6}), (\ref{mulamb6}),
(\ref{ltrans6}), (\ref{rtrans6}), and (\ref{wftrans6}) with $a=-2$ and $b=1$
read%
\begin{align}
E  &  \rightarrow-\frac{\eta}{4},\\
\mu &  \rightarrow-\frac{\mathcal{E}}{4},\\
\xi &  \rightarrow\frac{\lambda}{4},\\
l+\frac{1}{2}  &  \rightarrow\frac{1}{2}\left(  \ell+\frac{1}{2}\right)  ,
\end{align}
and%
\begin{align}
r^{1/2}  &  \rightarrow\rho\text{\ \ or \ }r\rightarrow\rho^{2},\\
u\left(  r\right)   &  \rightarrow\rho^{1/2}v\left(  \rho\right)  .
\end{align}
Substituting the dual relations into the eigenfunction of the potential
(\ref{Urtwoterms}), Eq. (\ref{radwf1}), gives%
\begin{equation}
\rho^{1/2}v\left(  \rho\right)  =A_{\ell}\left(  \rho^{2}\right)  ^{\left(
1/2\right)  \left(  \ell+1/2\right)  +1/2}e{^{\frac{\sqrt{\frac{\lambda}{4}}%
}{2}\left(  \rho^{2}\right)  ^{2}}}N\left(  2\left[  \frac{1}{2}\left(
\ell+\frac{1}{2}\right)  \right]  ,0,-{\frac{\left(  -\frac{\eta}{4}\right)
}{\sqrt{\frac{\lambda}{4}}}},-{\frac{2i\left(  -\frac{\mathcal{E}}{4}\right)
}{\left(  \frac{\lambda}{4}\right)  ^{1/4}}},i\left(  \frac{\lambda}%
{4}\right)  ^{1/4}r\right)  .
\end{equation}
Then the radial eigenfunction of the potential (\ref{Vrho26}) reads%
\begin{equation}
v\left(  \rho\right)  =A_{\ell}{\rho}^{\ell+1}e{^{\frac{\sqrt{\lambda}}%
{4}{\rho}^{4}}}N\left(  \ell+\frac{1}{2},0,{\frac{\eta}{2\sqrt{\lambda}}%
},{\frac{i\mathcal{E}}{\sqrt{2}\lambda^{1/4}}},\frac{i}{\sqrt{2}}\lambda
^{1/4}\rho^{2}\right)  . \label{vwf11}%
\end{equation}
Substituting the replacements into Eq. (\ref{specsq1}) gives an implicit
expression of the eigenvalue of the potential (\ref{Vrho26}):%
\begin{equation}
K_{2}\left(  2\left[  \frac{1}{2}\left(  \ell+\frac{1}{2}\right)  \right]
,0,-{\frac{-\eta/4}{\sqrt{\lambda/4}}},-{\frac{2i\left(  -\mathcal{E}%
/4\right)  }{\left(  \lambda/4\right)  ^{1/4}}}\right)  =0,
\end{equation}
i.e.,
\begin{equation}
K_{2}\left(  \ell+\frac{1}{2},0,{\frac{\eta}{2\sqrt{\lambda}}},{\frac
{i\mathcal{E}}{\sqrt{2}\lambda^{1/4}}}\right)  =0. \label{vsp11}%
\end{equation}

This result agrees with the solution given in Appendix \ref{V62}, which is
obtained by solving the radial equation directly.

\subsubsection{Solving $V\left(  \rho\right)  =\frac{\eta}{\rho}+\frac
{\lambda}{\sqrt{\rho}}$ and $V\left(  \rho\right)  =\frac{\eta}{\rho^{2/3}%
}+\lambda\rho^{2/3}$ from their duality $U\left(  r\right)  =\xi r^{2}+\mu r$}

In this section, we consider a\ dual set of\ two-term potentials%
\[
\left(  \xi r^{2}+\mu r,\frac{\eta}{\rho}+\frac{\lambda}{\sqrt{\rho}}%
,\frac{\eta}{\rho^{2/3}}+\lambda\rho^{2/3}\right)  ,
\]
in which the solution of $U\left(  r\right)  =\xi r^{2}+\mu r$ is already
known. To show how to solve an eigenproblem by the duality relation,
additionally, we solve the exact solutions of these three potentials in
Appendices \ref{V21}, \ref{Vm1m12}, and \ref{Vm2323}.

The two-term general polynomial potential
\begin{equation}
U\left(  r\right)  =\xi r^{2}+\mu r \label{Ur21}%
\end{equation}
has two dual potentials:
\begin{align}
V\left(  \rho\right)   &  =\frac{\eta}{\rho}+\frac{\lambda}{\sqrt{\rho}%
},\label{Vrhom1m12}\\
V\left(  \rho\right)   &  =\frac{\eta}{\rho^{2/3}}+\lambda\rho^{2/3}.
\label{Vrhom5323}%
\end{align}

The radial equation of the potential (\ref{Ur21}) is%
\begin{equation}
\frac{d^{2}u\left(  r\right)  }{dr^{2}}+\left[  E-\frac{l\left(  l+1\right)
}{r^{2}}-\xi r^{2}-\mu r\right]  u\left(  r\right)  =0.
\end{equation}

The radial eigenfunction is (see Appendix \ref{V21})%
\begin{equation}
u\left(  r\right)  =A_{l}{r}^{l+1}\exp\left(  \frac{\sqrt{\xi}}{2}r^{2}%
+\frac{\mu}{2\sqrt{\xi}}r\right)  N\left(  2l+1,{\frac{i\mu}{{\xi}^{3/4}}%
},-{\frac{E}{\sqrt{{\xi}}}}-{\frac{{\mu}^{2}}{4{\xi}^{3/2}}},0,i\xi
^{1/4}r\right)  . \label{radwf2}%
\end{equation}

The eigenvalue of the bound state can be expressed by the implicit expression%
\begin{equation}
K_{2}\left(  2l+1,{\frac{i\mu}{{\xi}^{3/4}}},-{\frac{E}{\sqrt{{\xi}}}}%
-{\frac{{\mu}^{2}}{4{\xi}^{3/2}}},0\right)  =0, \label{specsq2}%
\end{equation}
i.e., the eigenvalue of $U\left(  r\right)  $ is the zero of $K_{2}\left(
2l+1,{\frac{i\mu}{{\xi}^{3/4}}},-{\frac{E}{\sqrt{{\xi}}}}-{\frac{{\mu}^{2}%
}{4{\xi}^{3/2}}},0\right)  $.

\paragraph{Solving $V\left(  \rho\right)  =\frac{\eta}{\rho}+\frac{\lambda
}{\sqrt{\rho}}$ from $U\left(  r\right)  =\xi r^{2}+\mu r$}

The potential (\ref{Vrhom1m12}) is a dual potential of the potential
(\ref{Ur21}) with $a=1$, $b=0$ and $A=-2$, $B=-3/2$ in the duality relations
(\ref{Aanda6}) and (\ref{Bandb6}).

The radial equation of $V\left(  \rho\right)  =\frac{\eta}{\rho}+\frac
{\lambda}{\sqrt{\rho}}$ is
\begin{equation}
\frac{d^{2}v\left(  \rho\right)  }{d\rho^{2}}+\left[  \mathcal{E}-\frac
{\ell\left(  \ell+1\right)  }{\rho^{2}}-\frac{\eta}{\rho}-\frac{\lambda}%
{\sqrt{\rho}}\right]  v\left(  \rho\right)  =0.
\end{equation}

The dual relations (\ref{Eta6}), (\ref{xieta6}), (\ref{mulamb6}),
(\ref{ltrans6}), (\ref{rtrans6}), and (\ref{wftrans6}) with $a=1$ and $b=0$
read%
\begin{align}
E  &  \rightarrow-4\eta,\\
\xi &  \rightarrow-4\mathcal{E},\\
\mu &  \rightarrow4\lambda,\\
l+\frac{1}{2}  &  \rightarrow2\left(  \ell+\frac{1}{2}\right)  ,
\end{align}
and%
\begin{align}
r^{2}  &  \rightarrow\rho\text{ \ \ or \ }r\rightarrow\sqrt{\rho},\\
u\left(  r\right)   &  \rightarrow\rho^{-1/4}v\left(  \rho\right)  .
\end{align}
Substituting the dual relations into the eigenfunction of the potential
(\ref{Ur21}), Eq. (\ref{radwf1}), gives
\begin{align}
\rho^{-1/4}v\left(  \rho\right)   &  =A_{\ell}\sqrt{\rho}^{2\left(
\ell+1/2\right)  +1/2}\exp\left(  \frac{\sqrt{-4\mathcal{E}}}{2}\sqrt{\rho
}^{2}+\frac{4\lambda}{2\sqrt{-4\mathcal{E}}}\sqrt{\rho}\right) \nonumber\\
&  \times N\left(  4\left(  \ell+\frac{1}{2}\right)  ,{\frac{i\left(
4\lambda\right)  }{{\left(  -4\mathcal{E}\right)  }^{3/4}}},-{\frac{-4\eta
}{\sqrt{-4\mathcal{E}}}}-{\frac{\left(  4\lambda\right)  ^{2}}{4{\left(
-4\mathcal{E}\right)  }^{3/2}}},0,i\left(  -4\mathcal{E}\right)  ^{1/4}%
\sqrt{\rho}\right)  .
\end{align}
Then the eigenfunction of the potential (\ref{Vrhom1m12}) reads%
\begin{equation}
v\left(  \rho\right)  =A_{\ell}\rho^{\ell+1}\exp\left(  \sqrt{-\mathcal{E}%
}\rho+\frac{\lambda}{\sqrt{-\mathcal{E}}}\sqrt{\rho}\right)  N\left(
4\ell+2,{\frac{\sqrt{2}i\lambda}{{\left(  -\mathcal{E}\right)  }^{3/4}}%
},{\frac{2\eta}{\sqrt{-\mathcal{E}}}}-{\frac{\lambda^{2}}{2{\left(
-\mathcal{E}\right)  }^{3/2}}},0,i\left(  -\mathcal{E}\right)  ^{1/4}%
\sqrt{2\rho}\right)  .
\end{equation}
Substituting the dual relations into (\ref{specsq2}) gives an implicit
expression of the eigenvalue of the potential (\ref{Vrhom1m12}):
\begin{equation}
K_{2}\left(  4\left(  \ell+\frac{1}{2}\right)  ,{\frac{i\left(  4\lambda
\right)  }{{\left(  -4\mathcal{E}\right)  }^{3/4}}},-{\frac{-4\eta}%
{\sqrt{-4\mathcal{E}}}}-{\frac{\left(  4\lambda\right)  ^{2}}{4{\left(
-4\mathcal{E}\right)  }^{3/2}}},0\right)  =0,
\end{equation}
i.e.,
\begin{equation}
K_{2}\left(  4\ell+2,{\frac{\sqrt{2}i\lambda}{{\left(  -\mathcal{E}\right)
}^{3/4}}},{\frac{2\eta}{\sqrt{-\mathcal{E}}}}-{\frac{\lambda^{2}}{2{\left(
-\mathcal{E}\right)  }^{3/2}}},0\right)  =0. \label{vsp2}%
\end{equation}

This result agrees with the result obtained by directly solving the radial
equation, which is given in Appendix \ref{Vm1m12} for verification.

\paragraph{Solving $V\left(  \rho\right)  =\frac{\eta}{\rho^{2/3}}+\lambda
\rho^{2/3}$ from $U\left(  r\right)  =\xi r^{2}+\mu r$}

The potential (\ref{Vrhom5323}) is another dual potential of the potential
(\ref{Ur21}) with $a=0$, $b=1$ and $A=-5/3$, $B=-1/3$ in\ the duality
relations (\ref{Aanda6}) and (\ref{Bandb6}).

The radial equation of $V\left(  \rho\right)  =\frac{\eta}{\rho^{2/3}}%
+\lambda\rho^{2/3}$ is
\begin{equation}
\frac{d^{2}v\left(  \rho\right)  }{d\rho^{2}}+\left[  \mathcal{E}-\frac
{\ell\left(  \ell+1\right)  }{\rho^{2}}-\frac{\eta}{\rho^{2/3}}-\lambda
\rho^{2/3}\right]  v\left(  \rho\right)  =0.
\end{equation}

The dual relations (\ref{Eta6}), (\ref{xieta6}), (\ref{mulamb6}),
(\ref{ltrans6}), (\ref{rtrans6}), and (\ref{wftrans6}) with $a=0$ and $b=1$
read%
\begin{align}
E  &  \rightarrow-\frac{9}{4}\eta,\\
\mu &  \rightarrow-\frac{9}{4}\mathcal{E},\\
\xi &  \rightarrow\frac{9}{4}\lambda,\\
l+\frac{1}{2}  &  \rightarrow\frac{3}{2}\left(  \ell+\frac{1}{2}\right)  ,
\end{align}
and%
\begin{align}
r^{3/2}  &  \rightarrow\rho\text{\ \ or \ }r\rightarrow\rho^{2/3},\\
u\left(  r\right)   &  \rightarrow\rho^{-1/6}v\left(  \rho\right)  .
\end{align}
Substituting the dual relations into the eigenfunction of the potential
(\ref{Ur21}), Eq. (\ref{radwf2}), gives%
\begin{align}
\rho^{-1/6}v\left(  \rho\right)   &  =A_{\ell}\left(  \rho^{2/3}\right)
^{\left(  3/2\right)  \left(  \ell+1/2\right)  +1/2}\exp\left(  \frac{\left(
\frac{9}{4}\lambda\right)  ^{1/2}}{2}\left(  \rho^{\frac{2}{3}}\right)
^{2}+\frac{-\frac{9}{4}\mathcal{E}}{2\left(  \frac{9}{4}\lambda\right)
^{1/2}}\rho^{\frac{2}{3}}\right) \nonumber\\
&  \times N\left(  2\left[  \frac{3}{2}\left(  \ell+\frac{1}{2}\right)
\right]  ,{\frac{i\left(  -\frac{9}{4}\mathcal{E}\right)  }{\left(  \frac
{9}{4}\lambda\right)  ^{3/4}}},-{\frac{-\frac{9}{4}\eta}{\left(  \frac{9}%
{4}\lambda\right)  ^{1/2}}}-{\frac{\left(  -\frac{9}{4}\mathcal{E}\right)
^{2}}{4\left(  \frac{9}{4}\lambda\right)  ^{3/2}}},0,i\left(  \frac{9}%
{4}\lambda\right)  ^{1/4}\rho^{\frac{2}{3}}\right)  .
\end{align}
Then the radial eigenfunction of the potential (\ref{Vrhom5323}) reads
\begin{equation}
v\left(  \rho\right)  =A_{\ell}\rho^{\ell+1}\exp\left(  \frac{\left(  \frac
{9}{4}\lambda\right)  ^{1/2}}{2}\rho^{4/3}+\frac{-\frac{9}{4}\mathcal{E}%
}{2\left(  \frac{9}{4}\lambda\right)  ^{1/2}}\rho^{2/3}\right)  N\left(
3\ell+\frac{3}{2},-{\frac{i\sqrt{6}\mathcal{E}}{2\lambda^{3/4}}},{\frac{3\eta
}{2\sqrt{\lambda}}}-{\frac{3\mathcal{E}^{2}}{8\lambda^{3/2}}},0,i\frac
{\sqrt{6}}{2}\lambda^{1/4}\rho^{2/3}\right)  . \label{vwf22}%
\end{equation}
Substituting the dual relations into Eq. (\ref{specsq2}) gives an implicit
expression of the eigenvalue of the potential (\ref{Vrhom5323}):%
\begin{equation}
K_{2}\left(  2\left[  \frac{3}{2}\left(  \ell+\frac{1}{2}\right)  \right]
,{\frac{i\left(  -\frac{9}{4}\mathcal{E}\right)  }{\left(  \frac{9}{4}%
\lambda\right)  ^{3/4}}},-{\frac{-\frac{9}{4}\eta}{\left(  \frac{9}{4}%
\lambda\right)  ^{1/2}}}-{\frac{\left(  -\frac{9}{4}\mathcal{E}\right)  ^{2}%
}{4\left(  \frac{9}{4}\lambda\right)  ^{3/2}}},0\right)  =0,
\end{equation}
i.e.,
\begin{equation}
K_{2}\left(  3\ell+\frac{3}{2},-{\frac{i\sqrt{6}\mathcal{E}}{2\lambda^{3/4}}%
},{\frac{3\eta}{2\sqrt{\lambda}}}-{\frac{3\mathcal{E}^{2}}{8\lambda^{3/2}}%
},0\right)  =0. \label{vsp22}%
\end{equation}

This result agrees with the result obtained by directly solving the radial
equation, which is given in Appendix \ref{Vm2323} for verification.

\subsubsection{Solving $V\left(  \rho\right)  =\eta\rho^{2/3}+\frac{\lambda
}{\rho^{4/3}}$ and $V\left(  \rho\right)  =\eta\rho^{6}+\lambda\rho^{4}$ from
their duality $U\left(  r\right)  =\frac{\xi}{\sqrt{r}}+\frac{\mu}{r^{3/2}}$}

In this section, we consider the\ dual set of\ two-term potentials
\begin{equation}
\left(  \frac{\xi}{\sqrt{r}}+\frac{\mu}{r^{3/2}},\eta\rho^{2/3}+\frac{\lambda
}{\rho^{4/3}},\eta\rho^{6}+\lambda\rho^{4}\right)  ,
\end{equation}
in which the solution of $U\left(  r\right)  =\frac{\xi}{\sqrt{r}}+\frac{\mu
}{r^{3/2}}$ is already known. To show how to solve an eigenproblem by the
duality relation, additionally, we solve exact solutions of these three
potentials in Appendices \ref{Vm12m32}, \ref{V23m43}, and \ref{V64}.

The two-term general polynomial potential
\begin{equation}
U\left(  r\right)  =\frac{\xi}{\sqrt{r}}+\frac{\mu}{r^{3/2}} \label{Urm12m32}%
\end{equation}
has two dual potentials:
\begin{align}
V\left(  \rho\right)   &  =\eta\rho^{2/3}+\frac{\lambda}{\rho^{4/3}%
},\label{Vrho23m43}\\
V\left(  \rho\right)   &  =\eta\rho^{6}+\lambda\rho^{4}. \label{Vrho64}%
\end{align}

The radial equation of the potential (\ref{Urm12m32}) is%
\begin{equation}
\frac{d^{2}u\left(  r\right)  }{dr^{2}}+\left[  E-\frac{l\left(  l+1\right)
}{r^{2}}-\frac{\xi}{\sqrt{r}}-\frac{\mu}{r^{3/2}}\right]  u\left(  r\right)
=0
\end{equation}
The radial eigenfunction is (see Appendix \ref{Vm12m32})%
\begin{align}
u\left(  r\right)   &  =A_{l}{r}^{l+1}\exp\left(  {\sqrt{-E}r+}\frac{\xi
}{\sqrt{-E}}\sqrt{r}\right) \nonumber\\
&  \times N\left(  4l+2,{\frac{i\xi\sqrt{2}}{\left(  -E\right)  ^{3/4}}%
},-{\frac{{\xi}^{2}}{2\left(  -E\right)  ^{3/2}}},{\frac{-4\sqrt{2}i\mu
}{\left(  -E\right)  ^{1/4}}},i\sqrt{2}\left(  -E\right)  ^{1/4}\sqrt
{r}\right)  . \label{radwf3}%
\end{align}
The eigenvalue of the bound state can be expressed by the implicit expression
\cite{li2016scattering}
\begin{equation}
K_{2}\left(  4l+2,{\frac{i\xi\sqrt{2}}{\left(  -E\right)  ^{3/4}}}%
,-{\frac{{\xi}^{2}}{2\left(  -E\right)  ^{3/2}}},{\frac{-4\sqrt{2}i\mu
}{\left(  -E\right)  ^{1/4}}}\right)  =0, \label{specsq3}%
\end{equation}
i.e., the eigenvalue is the zero of $K_{2}\left(  4l+2,{\frac{i\xi\sqrt{2}%
}{\left(  -E\right)  ^{3/4}}},-{\frac{{\xi}^{2}}{2\left(  -E\right)  ^{3/2}}%
},{\frac{-4\sqrt{2}i\mu}{\left(  -E\right)  ^{1/4}}}\right)  $.

\paragraph{Solving $V\left(  \rho\right)  =\eta\rho^{2/3}+\frac{\lambda}%
{\rho^{4/3}}$ form $U\left(  r\right)  =\frac{\xi}{\sqrt{r}}+\frac{\mu
}{r^{3/2}}$}

The potential (\ref{Vrho23m43}) is a dual potential of the potential
(\ref{Urm12m32}) with $a=-3/2$, $b=-5/2$ and $A=-1/3$, $B=-7/3$ in the duality
relations (\ref{Aanda6}) and (\ref{Bandb6}).

The radial equation of the potential (\ref{Vrho23m43}) is
\begin{equation}
\frac{d^{2}v\left(  \rho\right)  }{d\rho^{2}}+\left[  \mathcal{E}-\frac
{\ell\left(  \ell+1\right)  }{\rho^{2}}-\eta\rho^{2/3}-\frac{\lambda}%
{\rho^{4/3}}\right]  v\left(  \rho\right)  =0.
\end{equation}

The dual relations (\ref{Eta6}), (\ref{xieta6}), (\ref{mulamb6}),
(\ref{ltrans6}), (\ref{rtrans6}), and (\ref{wftrans6}) with $a=-3/2$ and
$b=-5/2$ read%
\begin{align}
E  &  \rightarrow-\frac{9}{16}\eta,\\
\xi &  \rightarrow-\frac{9}{16}\mathcal{E},\\
\mu &  \rightarrow\frac{9}{16}\lambda,\\
l+\frac{1}{2}  &  \rightarrow\frac{3}{4}\left(  \ell+\frac{1}{2}\right)  ,
\end{align}
and%
\begin{align}
r^{3/4}  &  \rightarrow\rho\text{ \ \ or \ \ }r\rightarrow\rho^{4/3},\\
u\left(  r\right)   &  \rightarrow\rho^{1/6}v\left(  \rho\right)  .
\end{align}
Substituting the dual relations into the eigenfunction of the potential
(\ref{Urm12m32}), Eq. (\ref{radwf3}), gives the eigenfunction of the potential
(\ref{Vrho23m43}):%
\begin{align}
&  \rho^{1/6}v\left(  \rho\right)  =A_{\ell}\left(  \rho^{4/3}\right)
^{\left(  3/4\right)  \left(  \ell+1/2\right)  +1/2}\exp\left(  {\sqrt
{-\left(  -\frac{9}{16}\eta\right)  }\rho^{4/3}+}\frac{-\frac{9}%
{16}\mathcal{E}}{\sqrt{-\left(  -\frac{9}{16}\eta\right)  }}\sqrt{\rho^{4/3}%
}\right) \nonumber\\
&  \times N\left(  4\left[  \frac{3}{4}\left(  \ell+\frac{1}{2}\right)
\right]  ,{\frac{i\left(  -\frac{9}{16}\mathcal{E}\right)  \sqrt{2}}{\left[
-\left(  -\frac{9}{16}\eta\right)  \right]  ^{3/4}}},-{\frac{\left(  -\frac
{9}{16}\mathcal{E}\right)  ^{2}}{2\left[  -\left(  -\frac{9}{16}\eta\right)
\right]  ^{3/2}}},{\frac{-4\sqrt{2}i\left(  \frac{9}{16}\lambda\right)
}{\left[  -\left(  -\frac{9}{16}\eta\right)  \right]  ^{1/4}}},i\sqrt
{2}\left[  -\left(  -\frac{9}{16}\eta\right)  \right]  ^{1/4}\sqrt{\rho^{4/3}%
}\right)  ,
\end{align}
Then
\begin{equation}
v\left(  \rho\right)  =A_{\ell}\rho^{\ell+1}\exp\left(  \frac{3\sqrt{\eta}}%
{4}{\rho^{4/3}-}\frac{3\mathcal{E}}{4\sqrt{\eta}}\rho^{2/3}\right)  N\left(
3\ell+\frac{3}{2},-{\frac{i\sqrt{6}\mathcal{E}}{2\eta^{3/4}}},-{\frac
{3\mathcal{E}^{2}}{8\eta^{3/2}}},-{\frac{i3\sqrt{6}\lambda}{2\eta^{1/4}}%
},i\frac{\sqrt{6}}{2}\eta^{1/4}\rho^{2/3}\right)  . \label{vwf3}%
\end{equation}
Substituting the dual relations into Eq. (\ref{specsq3}) gives an implicit
expression of the eigenvalue of the potential (\ref{Vrho23m43}):%
\begin{equation}
K_{2}\left(  4\left[  \frac{3}{4}\left(  \ell+\frac{1}{2}\right)  \right]
,{\frac{i\left(  -\frac{9}{16}\mathcal{E}\right)  \sqrt{2}}{\left[  -\left(
-\frac{9}{16}\eta\right)  \right]  ^{3/4}}},-{\frac{\left(  -\frac{9}%
{16}\mathcal{E}\right)  ^{2}}{2\left[  -\left(  -\frac{9}{16}\eta\right)
\right]  ^{3/2}}},{\frac{-4\sqrt{2}i\left(  \frac{9}{16}\lambda\right)
}{\left[  -\left(  -\frac{9}{16}\eta\right)  \right]  ^{1/4}}}\right)  =0,
\end{equation}
i.e.,
\begin{equation}
K_{2}\left(  3\ell+\frac{3}{2},-{\frac{i\sqrt{6}\mathcal{E}}{2\eta^{3/4}}%
},-{\frac{3\mathcal{E}^{2}}{8\eta^{3/2}}},-{\frac{i3\sqrt{6}\lambda}%
{2\eta^{1/4}}}\right)  =0. \label{vsp3}%
\end{equation}
This result agrees with the result obtained by directly solving the radial
equation, which is given in Appendix \ref{V23m43} for verification.

\paragraph{Solving $V\left(  \rho\right)  =\eta\rho^{6}+\lambda\rho^{4}$ form
$U\left(  r\right)  =\frac{\xi}{\sqrt{r}}+\frac{\mu}{r^{3/2}}$}

The potential (\ref{Vrho64})
\[
V\left(  \rho\right)  =\eta\rho^{6}+\lambda\rho^{4}%
\]
is a dual potential of the potential (\ref{Urm12m32}) with $a=-5/2$, $b=-3/2$
and $A=5$, $B=3$ in the duality relations (\ref{Aanda6}) and (\ref{Bandb6}).

The radial equation of the potential (\ref{Vrho64}) is
\begin{equation}
\frac{d^{2}v\left(  \rho\right)  }{d\rho^{2}}+\left[  \mathcal{E}-\frac
{\ell\left(  \ell+1\right)  }{\rho^{2}}-\eta\rho^{6}-\lambda\rho^{4}\right]
v\left(  \rho\right)  =0.
\end{equation}

The dual relations (\ref{Eta6}), (\ref{xieta6}), (\ref{mulamb6}),
(\ref{ltrans6}), (\ref{rtrans6}), and (\ref{wftrans6}) with $a=-5/2$ and
$b=-3/2$ read%
\begin{align}
E  &  \rightarrow-\frac{1}{16}\eta,\\
\mu &  \rightarrow-\frac{1}{16}\mathcal{E},\\
\xi &  \rightarrow\frac{1}{16}\lambda,\\
l+\frac{1}{2}  &  \rightarrow\frac{1}{4}\left(  \ell+\frac{1}{2}\right)  ,
\end{align}
and%
\begin{align}
r^{1/4}  &  \rightarrow\rho\text{ \ \ or \ \ }r\rightarrow\rho^{4},\\
u\left(  r\right)   &  \rightarrow\rho^{3/2}v\left(  \rho\right)  .
\end{align}
Substituting the dual relations into the eigenfunction of the potential
(\ref{Urm12m32}), Eq. (\ref{radwf3}), gives%
\begin{align}
&  \rho^{3/2}v\left(  \rho\right)  =A_{l}\left(  \rho^{4}\right)  ^{\left(
1/4\right)  \left(  \ell+1/2\right)  +1/2}\exp\left(  \left[  -\left(
-\frac{1}{16}\eta\right)  \right]  ^{1/2}{\rho^{4}+}\frac{\frac{1}{16}\lambda
}{\left[  -\left(  -\frac{1}{16}\eta\right)  \right]  ^{1/2}}\sqrt{\rho^{4}%
}\right) \nonumber\\
&  \times N\left(  4\left[  \frac{1}{4}\left(  \ell+\frac{1}{2}\right)
\right]  ,{\frac{i\left(  \frac{1}{16}\lambda\right)  \sqrt{2}}{\left[
-\left(  -\frac{1}{16}\eta\right)  \right]  ^{3/4}}},-{\frac{\left(  \frac
{1}{16}\lambda\right)  ^{2}}{2\left[  -\left(  -\frac{1}{16}\eta\right)
\right]  ^{3/2}}},{\frac{-4\sqrt{2}i\left(  -\frac{1}{16}\mathcal{E}\right)
}{\left[  -\left(  -\frac{1}{16}\eta\right)  \right]  ^{1/4}}},i\sqrt
{2}\left[  -\left(  -\frac{1}{16}\eta\right)  \right]  ^{1/4}\sqrt{\rho^{4}%
}\right)  .
\end{align}
Then we arrive at the eigenfunction of the potential (\ref{Vrho64}):%
\begin{equation}
v\left(  \rho\right)  =A_{l}{\rho}^{\ell+1}\exp\left(  \frac{\sqrt{\eta}}%
{4}{\rho^{4}+}\frac{\lambda}{4\sqrt{\eta}}\rho^{2}\right)  N\left(  \ell
+\frac{1}{2},{\frac{i\lambda\sqrt{2}}{2\eta^{3/4}}},-{\frac{\lambda^{2}}%
{8\eta^{3/2}}},{\frac{i\sqrt{2}\mathcal{E}}{2\eta^{1/4}}},i\frac{\sqrt{2}}%
{2}\eta^{1/4}\rho^{2}\right)  . \label{vwf33}%
\end{equation}
Substituting the dual relations into (\ref{specsq3}) gives%
\begin{equation}
K_{2}\left(  4\left[  \frac{1}{4}\left(  \ell+\frac{1}{2}\right)  \right]
,{\frac{i\left(  \frac{1}{16}\lambda\right)  \sqrt{2}}{\left[  -\left(
-\frac{1}{16}\eta\right)  \right]  ^{3/4}}},-{\frac{\left(  \frac{1}%
{16}\lambda\right)  ^{2}}{2\left[  -\left(  -\frac{1}{16}\eta\right)  \right]
^{3/2}}},{\frac{-4\sqrt{2}i\left(  -\frac{1}{16}\mathcal{E}\right)  }{\left[
-\left(  -\frac{1}{16}\eta\right)  \right]  ^{1/4}}}\right)  =0.
\end{equation}
Then we arrive at an implicit expression of the eigenvalue of the potential
(\ref{Vrho64}):%
\begin{equation}
K_{2}\left(  \ell+\frac{1}{2},{\frac{i\lambda\sqrt{2}}{2\eta^{3/4}}}%
,-{\frac{\lambda^{2}}{8\eta^{3/2}}},{\frac{i\sqrt{2}\mathcal{E}}{2\eta^{1/4}}%
}\right)  =0. \label{vsp33}%
\end{equation}

This result agrees with the result obtained by directly solving the radial
equation, which is given in Appendix \ref{V64} for verification.

\subsubsection{Solving $V\left(  \rho\right)  =\frac{\eta}{\rho^{3/2}}%
+\frac{\mu}{\rho}$ from $U\left(  r\right)  =\xi r^{6}+\lambda r^{2}$
\label{Solvingeln}}

This is an example of the Newton duality discussed in section
\ref{polynomial SC}. In this case, each term of one potential is Newtonianly
dual to the corresponding term of its Newton dual potential. That is,
($\frac{\eta}{\rho^{3/2}}+\frac{\mu}{\rho}$, $\xi r^{6}+\lambda r^{2}$),
($\frac{\eta}{\rho^{3/2}}$, $\xi r^{6}$), and ($\frac{\mu}{\rho}$,$\lambda
r^{2}$) are all dual sets. In this case, though the potential is a two-term
potential, it has only one dual potential, i.e., the dual set contains only
two two-term potentials: ($\frac{\eta}{\rho^{3/2}}+\frac{\mu}{\rho}$, $\xi
r^{6}+\lambda r^{2}$).

The $r^{6}$-potential is the Newton duality of $r^{-2/3}$-potential and the
$r^{2}$-potential is the Newton duality of $r^{-1}$-potential, while
the\ linear combination of $r^{6}$-potential and $r^{2}$-potential is also the
Newton duality of the linear combination of $r^{-2/3}$-potential and $r^{-1}%
$-potential, i.e., the potential
\begin{equation}
U\left(  r\right)  =\xi r^{6}+\lambda r^{2} \label{Ur62}%
\end{equation}
is the Newton duality of%
\begin{equation}
V\left(  \rho\right)  =\frac{\eta}{\rho^{3/2}}+\frac{\mu}{\rho}.
\label{Vrho32m1}%
\end{equation}

\textit{The potential }$U\left(  r\right)  =\xi r^{6}+\lambda r^{2}$\textit{.}
The potential (\ref{Ur62}) can be exactly solved (see Appendix \ref{V62}).

The radial equation of $U\left(  r\right)  $\ is%
\begin{equation}
\frac{d^{2}u\left(  r\right)  }{dr^{2}}+\left[  E-\frac{l\left(  l+1\right)
}{r^{2}}-\xi r^{6}-\lambda r^{2}\right]  u\left(  r\right)  =0. \label{req1}%
\end{equation}
The radial eigenfunction is (see Appendix \ref{V62})%
\begin{equation}
u\left(  r\right)  =e^{\sqrt{\xi}{r}^{4}/4}N\left(  l+\frac{1}{2}%
,0,{\frac{\lambda}{2\sqrt{\xi}}},i{\frac{\sqrt{2}E}{2\xi^{1/4}}},i\frac
{\sqrt{2}}{2}\xi^{1/4}{r}^{2}\right)  r^{l+1}. \label{wf1}%
\end{equation}
The eigenvalue of the bound state can be expressed by the implicit expression%
\begin{equation}
K_{2}\left(  l+\frac{1}{2},0,{\frac{\lambda}{2\sqrt{\xi}}},i{\frac{\sqrt{2}%
E}{2\xi^{1/4}}}\right)  =0, \label{spctrans}%
\end{equation}
i.e., the eigenvalue is the zero of $K_{2}\left(  l+\frac{1}{2},0,{\frac
{\lambda}{2\sqrt{\xi}}},i{\frac{\sqrt{2}E}{2\xi^{1/4}}}\right)  $.

\textit{The potential }$V\left(  \rho\right)  =\frac{\eta}{\rho^{3/2}}%
+\frac{\mu}{\rho}$\textit{. }The radial equation of the potential
(\ref{Vrho32m1}) is
\begin{equation}
\frac{d^{2}v\left(  \rho\right)  }{d\rho^{2}}+\left[  \mathcal{E}-\frac
{\ell\left(  \ell+1\right)  }{\rho^{2}}-\frac{\eta}{\rho^{3/2}}-\frac{\mu
}{\rho}\right]  v\left(  \rho\right)  =0.
\end{equation}

The dual relations (\ref{Eta6}), (\ref{xieta6}), (\ref{mulamb6}),
(\ref{ltrans6}), (\ref{rtrans6}), and (\ref{wftrans6}) with $a=$ $5$ and $b=1$
read%
\begin{align}
E  &  \rightarrow-16\eta,\\
\xi &  \rightarrow-16\mathcal{E},\\
\lambda &  \rightarrow16\mu,\\
l+\frac{1}{2}  &  \rightarrow4\left(  \ell+\frac{1}{2}\right)  ,
\end{align}
and%
\begin{align}
r^{4}  &  \rightarrow\rho\text{ \ \ or \ \ }r\rightarrow\rho^{1/4}\\
u\left(  r\right)   &  \rightarrow\rho^{-3/8}v\left(  \rho\right)  .
\end{align}
Substituting the dual relations into the eigenfunction of the potential
(\ref{Ur62}), Eq. (\ref{wf1}), gives%
\begin{align}
\rho^{-3/8}v\left(  \rho\right)   &  =e^{\frac{\sqrt{-16\mathcal{E}}}%
{4}\left(  \rho^{1/4}\right)  ^{4}}\left(  \rho^{1/4}\right)  ^{4\left(
\ell+1/2\right)  +1/2}\nonumber\\
&  \times N\left(  4\left(  \ell+\frac{1}{2}\right)  ,0,{\frac{16\mu}%
{2\sqrt{-16\mathcal{E}}}},i{\frac{\sqrt{2}\left(  -16\eta\right)  }{2\left(
-16\mathcal{E}\right)  ^{1/4}}},i\frac{\sqrt{2}}{2}\left(  -16\mathcal{E}%
\right)  ^{1/4}\left(  \rho^{1/4}\right)  ^{2}\right)  .
\end{align}
Then the eigenfunction
\begin{equation}
v\left(  \rho\right)  =e^{\sqrt{-\mathcal{E}}\rho}N\left(  4\ell
+2,0,{\frac{2\mu}{\sqrt{-\mathcal{E}}}},-i{\frac{4\sqrt{2}\eta}{\left(
-\mathcal{E}\right)  ^{1/4}}},i\sqrt{2}\left(  -\mathcal{E}\right)
^{1/4}\sqrt{\rho}\right)  \rho{r}^{\ell+1}.
\end{equation}
Substituting the dual relations into the eigenvalue of the potential
(\ref{Ur62}), Eq. (\ref{spctrans}), gives%
\begin{equation}
K_{2}\left(  4\ell+2,0,{\frac{2\mu}{\sqrt{-\mathcal{E}}}},-i{\frac{4\sqrt
{2}\eta}{\left(  -\mathcal{E}\right)  ^{1/4}}}\right)  =0.
\end{equation}
This result agrees with the result obtained by directly solving the radial
equation, which is given in Appendix \ref{Vm1m32} for verification.

\subsection{Three-term general polynomial potentials \label{Solving3T}}

As mentioned above, a three-term general polynomial potentials has three dual
potentials. Once the solution of a three-term general polynomial potentials is
solved, the solutions of its three dual potentials can be immediately obtained
by the duality relation given in Section \ref{polynomialnterms}.

In this section, we consider a dual set of three-term potentials
\[
\left(  \xi r^{2}+\frac{\mu}{r}+\kappa r,\frac{\eta}{r}+\frac{\nu}{r^{3/2}%
}+\frac{\lambda}{r^{1/2}},\eta r^{2}+\nu r^{6}+\lambda r^{4},\frac{\eta
}{r^{2/3}}+\nu r^{2/3}+\frac{\lambda}{r^{4/3}}\right)  .
\]

The potential
\begin{equation}
U\left(  r\right)  =\xi r^{2}+\frac{\mu}{r}+\kappa r \label{Ur2m11}%
\end{equation}
has three dual potentials:
\begin{align}
V\left(  \rho\right)   &  =\frac{\eta}{\rho}+\frac{\nu}{\rho^{3/2}}%
+\frac{\lambda}{\rho^{1/2}},\label{Vrhom1m32m12}\\
V\left(  \rho\right)   &  =\eta\rho^{2}+\nu\rho^{6}+\lambda\rho^{4}%
,\label{Vrho264}\\
V\left(  \rho\right)   &  =\frac{\eta}{\rho^{2/3}}+\nu\rho^{2/3}+\frac
{\lambda}{\rho^{4/3}}. \label{Vrhom232343}%
\end{align}

The radial equation of the potential (\ref{Ur2m11}) is%
\begin{equation}
\frac{d^{2}u\left(  r\right)  }{dr^{2}}+\left[  E-\frac{l\left(  l+1\right)
}{r^{2}}-\xi r^{2}-\frac{\mu}{r}-\kappa r\right]  u\left(  r\right)  =0.
\end{equation}
The radial eigenfunction is (see Appendix \ref{V2m11})%
\begin{equation}
u\left(  r\right)  =A_{l}r^{l+1}\exp\left(  \frac{\sqrt{\xi}}{2}{r}^{2}%
+\frac{\kappa}{2\sqrt{\xi}}r\right)  N\left(  2l+1,{\frac{i\kappa}{{\xi}%
^{3/4}}},-{\frac{\kappa^{2}}{4{\xi}^{3/2}}}-{\frac{E}{{\xi}^{1/2}}}%
,-i{\frac{2\mu}{\xi^{1/4}}},i\xi^{1/4}r\right)  . \label{radwf61}%
\end{equation}
The eigenvalue of the bound states can be expressed by the implicit expression
\cite{li2016scattering}%
\begin{equation}
K_{2}\left(  2l+1,{\frac{i\kappa}{{\xi}^{3/4}}},-{\frac{\kappa^{2}}{4{\xi
}^{3/2}}}-{\frac{E}{{\xi}^{1/2}}},-i{\frac{2\mu}{\xi^{1/4}}}\right)  =0.
\label{specsq61}%
\end{equation}

\subsubsection{Solving $V\left(  \rho\right)  =\frac{\eta}{\rho}+\frac{\nu
}{\rho^{3/2}}+\frac{\lambda}{\rho^{1/2}}$ from $U\left(  r\right)  =\xi
r^{2}+\frac{\mu}{r}+\kappa r$}

The potential (\ref{Vrhom1m32m12}) is a dual potential of the potential
(\ref{Ur2m11}) with $a=1$, $b_{1}=-2$, $b_{2}=0$ and $A=-2$, $B_{1}%
=-5/2$,\ $B_{2}=-3/2$ in the duality relations (\ref{Aanda6}) and
(\ref{Bandb6}).

The radial equation is
\begin{equation}
\frac{d^{2}v\left(  \rho\right)  }{d\rho^{2}}+\left[  \mathcal{E}-\frac
{\ell\left(  \ell+1\right)  }{\rho^{2}}-\frac{\eta}{\rho}-\frac{\nu}%
{\rho^{3/2}}-\frac{\lambda}{\rho^{1/2}}\right]  v\left(  \rho\right)  =0.
\end{equation}

The dual relations (\ref{Eta6}), (\ref{xieta6}), (\ref{mulamb6}),
(\ref{ltrans6}), (\ref{rtrans6}), and (\ref{wftrans6}) with $a=1$, $b_{1}=-2$,
and $b_{2}=0$ read%
\begin{align}
E  &  \rightarrow-4\eta,\\
\xi &  \rightarrow-4\mathcal{E},\\
\mu &  \rightarrow4\nu,\\
\kappa &  \rightarrow4\lambda,\\
l+\frac{1}{2}  &  \rightarrow2\left(  \ell+\frac{1}{2}\right)  ,
\end{align}
and%
\begin{align}
r^{2}  &  \rightarrow\rho\text{ \ \ or \ \ }r\rightarrow\sqrt{\rho},\\
u\left(  r\right)   &  \rightarrow\rho^{-1/4}v\left(  \rho\right)  .
\end{align}
Substituting the dual relations into the eigenfunction of the potential
(\ref{Ur2m11}), Eq. (\ref{radwf61}), gives%
\begin{align}
\rho^{-1/4}v\left(  \rho\right)   &  =A_{\ell}\left(  \sqrt{\rho}\right)
^{2\left(  \ell+1/2\right)  +1/2}\exp\left(  \frac{\sqrt{-4\mathcal{E}}}%
{2}\sqrt{\rho}^{2}+\frac{4\lambda}{2\sqrt{-4\mathcal{E}}}\sqrt{\rho}\right)
\nonumber\\
&  \times N\left(  2\left[  2\left(  \ell+\frac{1}{2}\right)  \right]
,{\frac{i\left(  4\lambda\right)  }{\left(  -4\mathcal{E}\right)  ^{3/4}}%
},-{\frac{\left(  4\lambda\right)  ^{2}}{4\left(  -4\mathcal{E}\right)
^{3/2}}}-{\frac{-4\eta}{\left(  -4\mathcal{E}\right)  ^{1/2}}},-i{\frac
{2\left(  4\nu\right)  }{\left(  -4\mathcal{E}\right)  ^{1/4}}},i\left(
-4\mathcal{E}\right)  ^{1/4}\sqrt{\rho}\right)  .
\end{align}
Then the eigenfunction%
\begin{align}
v\left(  \rho\right)   &  =A_{\ell}\rho^{\ell+1}\exp\left(  \sqrt
{-\mathcal{E}}\rho+\frac{\lambda}{\sqrt{-\mathcal{E}}}\sqrt{\rho}\right)
\nonumber\\
&  \times N\left(  4\ell+2,{\frac{i\sqrt{2}\lambda}{\left(  -\mathcal{E}%
\right)  ^{3/4}}},-{\frac{\lambda^{2}}{2\left(  -\mathcal{E}\right)  ^{3/2}}%
}+{\frac{2\eta}{\left(  -\mathcal{E}\right)  ^{1/2}}},-i{\frac{4\sqrt{2}\nu
}{\left(  -\mathcal{E}\right)  ^{1/4}}},i\left(  -\mathcal{E}\right)
^{1/4}\sqrt{2\rho}\right)  . \label{vwf61}%
\end{align}
Substituting the dual relations into (\ref{specsq61}) gives an implicit
expression of the eigenvalue of the potential (\ref{Vrhom1m32m12}):%
\begin{equation}
K_{2}\left(  2\left[  2\left(  \ell+\frac{1}{2}\right)  \right]
,{\frac{i\left(  4\lambda\right)  }{\left(  -4\mathcal{E}\right)  ^{3/4}}%
},-{\frac{\left(  4\lambda\right)  ^{2}}{4\left(  -4\mathcal{E}\right)
^{3/2}}}-{\frac{-4\eta}{\left(  -4\mathcal{E}\right)  ^{1/2}}},-i{\frac
{2\left(  4\nu\right)  }{\left(  -4\mathcal{E}\right)  ^{1/4}}}\right)  =0,
\end{equation}
i.e.,
\begin{equation}
K_{2}\left(  4\ell+2,{\frac{i\sqrt{2}\lambda}{\left(  -\mathcal{E}\right)
^{3/4}}},-{\frac{\lambda^{2}}{2\left(  -\mathcal{E}\right)  ^{3/2}}}%
+{\frac{2\eta}{\left(  -\mathcal{E}\right)  ^{1/2}}},-i{\frac{4\sqrt{2}\nu
}{\left(  -\mathcal{E}\right)  ^{1/4}}}\right)  =0. \label{vsp61}%
\end{equation}
This result agrees with the result obtained by directly solving the radial
equation, which is given in Appendix \ref{Vm1m32m12} for verification.

\subsubsection{Solving $V\left(  \rho\right)  =\eta\rho^{2}+\nu\rho
^{6}+\lambda\rho^{4}$ from $U\left(  r\right)  =\xi r^{2}+\frac{\mu}{r}+\kappa
r$}

The potential (\ref{Vrho264}) is a dual potential of the potential
(\ref{Ur2m11}) with $a=-2$, $b_{1}=1$, and $b_{2}=0$ and $A=1$, $B_{1}=5$,
$B_{2}=3$ in the duality relations (\ref{Aanda6}) and (\ref{Bandb6}).

The radial equation of the potential (\ref{Vrho264}) is
\begin{equation}
\frac{d^{2}v\left(  \rho\right)  }{d\rho^{2}}+\left[  \mathcal{E}-\frac
{\ell\left(  \ell+1\right)  }{\rho^{2}}-\eta\rho^{2}-\nu\rho^{6}-\lambda
\rho^{4}\right]  v\left(  \rho\right)  =0.
\end{equation}

The dual relations (\ref{Eta6}), (\ref{xieta6}), (\ref{mulamb6}),
(\ref{ltrans6}), (\ref{rtrans6}), and (\ref{wftrans6}) with $a=-2$, $b_{1}=1$,
and $b_{2}=0$ read%
\begin{align}
E  &  \rightarrow-\frac{\eta}{4},\\
\mu &  \rightarrow-\frac{\mathcal{E}}{4},\\
\xi &  \rightarrow\frac{\nu}{4},\\
\kappa &  \rightarrow\frac{\lambda}{4},\\
l+\frac{1}{2}  &  \rightarrow\frac{1}{2}\left(  \ell+\frac{1}{2}\right)  ,
\end{align}
and%
\begin{align}
r  &  \rightarrow\rho^{2}\text{ \ \ or \ \ }r^{1/2}\rightarrow\rho,\\
u\left(  r\right)   &  \rightarrow\rho^{1/2}v\left(  \rho\right)  .
\end{align}
Substituting the dual relations into the eigenfunction of the potential
(\ref{Ur2m11}), Eq. (\ref{radwf61}), gives%
\begin{align}
\rho^{1/2}v\left(  \rho\right)   &  =A_{\ell}\left(  \rho^{2}\right)
^{\left(  1/2\right)  \left(  \ell+1/2\right)  +1/2}\exp\left(  \frac{\left(
\frac{\nu}{4}\right)  ^{1/2}}{2}\left(  \rho^{2}\right)  ^{2}+\frac
{\frac{\lambda}{4}}{2\left(  \frac{\nu}{4}\right)  ^{1/2}}\rho^{2}\right)
\nonumber\\
&  \times N\left(  2\left[  \frac{1}{2}\left(  \ell+\frac{1}{2}\right)
\right]  ,{\frac{i\frac{\lambda}{4}}{\left(  \frac{\nu}{4}\right)  ^{3/4}}%
},-{\frac{\left(  -\frac{\eta}{4}\right)  }{\left(  \frac{\nu}{4}\right)
^{1/2}}}-{\frac{\left(  \frac{\lambda}{4}\right)  ^{2}}{4\left(  \frac{\nu}%
{4}\right)  ^{3/2}}},-i{\frac{2\left(  -\frac{\mathcal{E}}{4}\right)
}{\left(  \frac{\nu}{4}\right)  ^{1/4}}},i\left(  \frac{\nu}{4}\right)
^{1/4}\rho^{2}\right)  .
\end{align}
Then we obtain the eigenfunction of the potential (\ref{Vrho264}):%
\begin{equation}
v\left(  \rho\right)  =A_{\ell}\rho^{\ell+1}\exp\left(  \frac{\sqrt{\nu}}%
{4}\rho^{4}+\frac{\lambda}{4\sqrt{\nu}}\rho^{2}\right)  N\left(  \ell+\frac
{1}{2},{\frac{i\sqrt{2}\lambda}{2\nu^{3/4}}},{\frac{\eta}{2\nu^{1/2}}}%
-{\frac{\lambda^{2}}{8\nu^{3/2}}},i{\frac{\sqrt{2}\mathcal{E}}{2\nu^{1/4}}%
},i\frac{\sqrt{2}}{2}\nu^{1/4}\rho^{2}\right)  . \label{vwf62}%
\end{equation}
Substituting the dual relations into (\ref{specsq61}) gives an implicit
expression of the eigenvalue of the potential (\ref{Vrho264}):%
\begin{equation}
K_{2}\left(  2\left[  \frac{1}{2}\left(  \ell+\frac{1}{2}\right)  \right]
,{\frac{i\frac{\lambda}{4}}{\left(  \frac{\nu}{4}\right)  ^{3/4}}}%
,-{\frac{\left(  -\frac{\eta}{4}\right)  }{\left(  \frac{\nu}{4}\right)
^{1/2}}}-{\frac{\left(  \frac{\lambda}{4}\right)  ^{2}}{4\left(  \frac{\nu}%
{4}\right)  ^{3/2}}},-i{\frac{2\left(  -\frac{\mathcal{E}}{4}\right)
}{\left(  \frac{\nu}{4}\right)  ^{1/4}}}\right)  =0,
\end{equation}
i.e.,
\begin{equation}
K_{2}\left(  \ell+\frac{1}{2},{\frac{i\sqrt{2}\lambda}{2\nu^{3/4}}}%
,{\frac{\eta}{2\nu^{1/2}}}-{\frac{\lambda^{2}}{8\nu^{3/2}}},i{\frac{\sqrt
{2}\mathcal{E}}{2\nu^{1/4}}}\right)  =0. \label{vsp62}%
\end{equation}
This result agrees with the result obtained by directly solving the radial
equation, which is given in Appendix \ref{V264} for verification.

\subsubsection{Solving $V\left(  \rho\right)  =\frac{\eta}{\rho^{2/3}}+\nu
\rho^{2/3}+\frac{\lambda}{\rho^{4/3}}$ from $U\left(  r\right)  =\xi
r^{2}+\frac{\mu}{r}+\kappa r$}

The potential (\ref{Vrhom232343}) is a dual potential of the potential
(\ref{Ur2m11}) with $a=0$, $b_{1}=1$, $b_{2}=-2$ and $A=-5/3$, $B_{1}=-1/3$,
$B_{2}=-7/3$ in the duality relations (\ref{Aanda6}) and (\ref{Bandb6}).

The radial equation of the potential (\ref{Vrhom232343}) is
\begin{equation}
\frac{d^{2}v\left(  \rho\right)  }{d\rho^{2}}+\left[  \mathcal{E}-\frac
{\ell\left(  \ell+1\right)  }{\rho^{2}}-\frac{\eta}{\rho^{2/3}}-\nu\rho
^{2/3}-\frac{\lambda}{\rho^{4/3}}\right]  v\left(  \rho\right)  =0.
\end{equation}

The dual relations (\ref{Eta6}), (\ref{xieta6}), (\ref{mulamb6}),
(\ref{ltrans6}), (\ref{rtrans6}), and (\ref{wftrans6}) with $a=0$, $b_{1}=1$,
and $b_{2}=-2$ read%
\begin{align}
E  &  \rightarrow-\frac{9}{4}\eta,\\
\kappa &  \rightarrow-\frac{9}{4}\mathcal{E},\\
\xi &  \rightarrow\frac{9}{4}\nu,\\
\mu &  \rightarrow\frac{9}{4}\lambda,\\
l+\frac{1}{2}  &  \rightarrow\frac{3}{2}\left(  \ell+\frac{1}{2}\right)  ,
\end{align}
and%
\begin{align}
r  &  \rightarrow\rho^{2/3}\text{ \ or \ }r^{3/2}\rightarrow\rho,\\
u\left(  r\right)   &  \rightarrow\rho^{-1/6}v\left(  \rho\right)  .
\end{align}
Substituting the dual relations into the eigenfunction of the potential
(\ref{Ur2m11}), Eq. (\ref{radwf61}), gives%
\begin{align}
\rho^{-1/6}v\left(  \rho\right)   &  =A_{\ell}\left(  \rho^{2/3}\right)
^{\left(  3/2\right)  \left(  \ell+1/2\right)  +1/2}\exp\left(  \frac{\left(
\frac{9}{4}\nu\right)  ^{1/2}}{2}\left(  \rho^{2/3}\right)  ^{2}+\frac
{-\frac{9}{4}\mathcal{E}}{2\sqrt{\frac{9}{4}\nu}}\rho^{2/3}\right) \nonumber\\
&  \times N\left(  2\left(  \frac{3}{2}\left(  \ell+\frac{1}{2}\right)
\right)  ,{\frac{i\left(  -\frac{9}{4}\mathcal{E}\right)  }{{\left(  \frac
{9}{4}\nu\right)  }^{3/4}}},-{\frac{\left(  -\frac{9}{4}\mathcal{E}\right)
^{2}}{4\left(  \frac{9}{4}\nu\right)  ^{3/2}}}-{\frac{-\frac{9}{4}\eta
}{\left(  \frac{9}{4}\nu\right)  ^{1/2}}},-{\frac{2i\left(  \frac{9}{4}%
\lambda\right)  }{\left(  \frac{9}{4}\nu\right)  ^{1/4}}},i\left(  \frac{9}%
{4}\nu\right)  ^{1/4}\rho^{2/3}\right)  .
\end{align}
Then we obtain the eigenfunction of the potential (\ref{Vrhom232343}):%
\begin{align}
v\left(  \rho\right)   &  =A_{\ell}\rho^{\ell+1}\exp\left(  \frac{3\nu^{1/2}%
}{4}\rho^{4/3}-\frac{3\mathcal{E}}{4\nu^{1/2}}\rho^{2/3}\right) \nonumber\\
&  \times N\left(  3\ell+\frac{3}{2},-{\frac{i\sqrt{6}\mathcal{E}}{2\nu^{3/4}%
}},{\frac{3\eta}{2\nu^{1/2}}}-{\frac{3\mathcal{E}^{2}}{8\nu^{3/2}}}%
,-i{\frac{3\sqrt{6}\lambda}{2\nu^{1/4}}},i\frac{\sqrt{6}}{2}\nu^{1/4}%
\rho^{2/3}\right)  . \label{vwf63}%
\end{align}
Substituting the dual relations into (\ref{specsq61}) gives an implicit
expression of the eigenvalue of the potential (\ref{Vrhom232343}):%
\begin{equation}
K_{2}\left(  2\left[  \frac{3}{2}\left(  \ell+\frac{1}{2}\right)  \right]
,{\frac{i\left(  -\frac{9}{4}\mathcal{E}\right)  }{{\left(  \frac{9}{4}%
\nu\right)  }^{3/4}}},-{\frac{\left(  -\frac{9}{4}\mathcal{E}\right)  ^{2}%
}{4\left(  \frac{9}{4}\nu\right)  ^{3/2}}}-{\frac{-\frac{9}{4}\eta}{\left(
\frac{9}{4}\nu\right)  ^{1/2}}},-{\frac{2i\left(  \frac{9}{4}\lambda\right)
}{\left(  \frac{9}{4}\nu\right)  ^{1/4}}}\right)  =0,
\end{equation}
i.e.,
\begin{equation}
K_{2}\left(  3\ell+\frac{3}{2},-{\frac{i\sqrt{6}\mathcal{E}}{2\nu^{3/4}}%
},{\frac{3\eta}{2\nu^{1/2}}}-{\frac{3\mathcal{E}^{2}}{8\nu^{3/2}}}%
,-i{\frac{3\sqrt{6}\lambda}{2\nu^{1/4}}}\right)  =0. \label{vsp63}%
\end{equation}
This result agrees with the result obtained by directly solving the radial
equation, which is given in Appendix \ref{Vm2323m43} for verification.

\section{Conclusions and Outlook \label{Conclusion}}

The Newton duality reveals the underlying nature of mechanics. In this paper,
we provide three results.

(1) In classical mechanics, we generalize Newton's original duality which
involves only power potentials to more general kinds of potentials --- general
polynomial potentials.

(2) In quantum mechanics, we provide a quantum Newton duality for power
potentials, general polynomial potentials, transcendental-function potentials,
and power potentials in different spatial dimensions.

(3) Based on the quantum Newton duality, we develop a method for solving
eigenproblems of a quantum mechanical system and solve some potentials with
this method.

Moreover, in appendices, as preparations and verifications, we solve the
potentials which is solved by the method of the Newton duality in the main
body of the paper by directly solving the eigenequation. The potentials solved
in appendices are $\alpha r^{2/3}$, $\alpha r^{6}$, $\frac{\alpha}{r^{3/2}}$,
$\alpha r^{2}+\frac{\beta}{r}$, $\frac{\alpha}{r}+\frac{\beta}{r^{3/2}}$,
$\alpha r^{2}+\beta r^{6}$, $\alpha r^{2}+\beta r$,$\frac{\alpha}{r}%
+\frac{\beta}{\sqrt{r}}$, $\frac{\alpha}{r^{2/3}}+\beta r^{2/3}$, $\alpha
r^{2}+\beta r$, $\frac{\alpha}{r}+\frac{\beta}{\sqrt{r}}$, $\frac{\alpha
}{r^{2/3}}+\beta r^{2/3}$, $\frac{\alpha}{\sqrt{r}}+\frac{\beta}{r^{3/2}}$,
$\alpha r^{2/3}+\frac{\beta}{r^{4/3}}$, $\alpha r^{6}+\beta r^{4}$, $\alpha
r^{6}+\beta r^{2}$, $\frac{\alpha}{r^{3/2}}+\frac{\beta}{r}$, $\alpha
r^{2}+\frac{\beta}{r}+\sigma r$, $\frac{\alpha}{r}+\frac{\beta}{r^{3/2}}%
+\frac{\sigma}{r^{1/2}}$, $\alpha r^{2}+\beta r^{6}+\sigma r^{4}$, and
$\frac{\alpha}{r^{2/3}}+\beta r^{2/3}+\frac{\sigma}{r^{4/3}}$.

The original Newton duality includes only power potentials. We show that the
Newton duality is in fact a more general duality in dynamics: such a duality
exists in more general potentials, including general polynomial potentials and
transcendental-function potentials. This inspires us to make a conjecture that
the Newton duality exists in all kinds of potentials and every potential has
Newton dualities.

These potentials are all long-range potentials. The exact solution is
important for understanding long-range potentials. The long-range potential is
the most important model in quantum mechanics, e.g., the Coulomb potential and
the harmonic oscillator potential. Beyond the Coulomb potential and the
harmonic oscillator potential, recently, some other long-range power
potentials are exactly solved, e.g., the one-dimensional inverse square root
potential \cite{ishkhanyan2015exact} and the three-dimensional inverse square
root potential \cite{li2016exact}. The long-range-potential scattering is
often a difficult problem. One reason is that different long-range potentials
often have different scattering boundary conditions which are determined by
the large-distance asymptotic solution. The long-range-potential scattering
has been studied for many years
\cite{enss1979asymptotic,levy1963low,hinckelmann1971low,barford2003renormalization,hod2013scattering,yafaev1998scattering,romo1998study}%
. Besides scattering of potentials in quantum mechanics, there are also other
kinds of long-range scattering, e.g., the scattering on black holes
\cite{stadnik2013resonant,flambaum2012dense}. In a scattering problem, a key
task is to seek scattering phase shifts \cite{pang2012relation,li2015heat}.
Many solutions mentioned in the paper are the Heun function, there are also
researches on the Heun function
\cite{karayer2015extension,birkandan2007examples,ciftci2010physical,hortacsu2011heun}%
.

In this paper, we discuss the Newton duality in classical mechanics and in
quantum mechanics. The quantum version of the Newton duality is essentially
the duality for the Schr\"{o}dinger equation. In future work, we will consider
the Newton duality for other dynamical equations, such as the Klein--Gordon
equation and the Dirac equation, etc. The Einstein equation of gravity is also
a field equation. By considering the Newton duality between Einstein
equations, we can reveal the duality relation between spacetime manifolds.

In this paper, we also suggest a method for solving eigenequations. We here
only focus on how to obtain an exact solution, in future works, we will try to
construct an approximate method based on the Newton duality for
eigenequations. Furthermore, we can also consider the Newton duality between
field equations, and then consider the Newton duality in quantum field theory.

\acknowledgments

We are very indebted to Dr G. Zeitrauman for his encouragement. This work is supported in part by NSF of China
under Grant No. 11575125 and No. 11675119.

\appendix
\section{The exact solution of $U\left(  r\right)  =-\frac{\alpha}{r^{3/2}}$
\label{Vm32}}

The inverse fractional power potential $U\left(  r\right)  =-\alpha/r^{3/2}$
is a long-range potential, which has both bound states and scattering states.
In this Appendix, we present the exact solutions of both bound and scattering
states for the $1/r^{3/2}$-potential.

In particular, the study of exact solutions is important for the long-range
potential scattering. The scattering boundary conditions for different
long-range potentials are different, while in short-range potential
scattering, the scattering boundary conditions is the same for all short-range
potentials \cite{liu2014scattering,li2016scattering}. To solve a scattering
problem, one first needs to determine the scattering boundary condition which
is determined by the long-distance asymptotic behavior of the solution. The
$1/r^{3/2}$-potential is a long-range potential since the inverse-power
potential $1/r^{s}$ with\ $0<s<2$ is identified as a long-range potential
\cite{burke2011r}. Though $1/r^{s}$ with\ $0<s\leq1$ and with\ $1<s<2$ are
both long-range potentials, they are quite different in scattering: the
scattering boundary for $1/r^{s}$-potential with\ $0<s\leq1$ differs from one
potential to another, but the scattering boundary for $1/r^{s}$-potential
with\ $1<s\leq2$ is the same as short-range potentials.

\subsection{The radial equation \label{RE}}

The radial wave function $R_{l}\left(  r\right)  =u_{l}\left(  r\right)  /r$
of the potential%
\begin{equation}
U\left(  r\right)  =-\frac{\alpha}{r^{3/2}}%
\end{equation}
is determined by the radial equation%
\begin{equation}
\frac{d^{2}u_{l}\left(  r\right)  }{dr^{2}}+\left[  k^{2}-\frac{l\left(
l+1\right)  }{r^{2}}+\frac{\alpha}{r^{3/2}}\right]  u_{l}\left(  r\right)  =0
\label{radialeq32}%
\end{equation}
with the boundary conditions at $r=0$ and at $r\rightarrow\infty$.

At $r=0$, both for scattering states and bound states, the boundary condition
is $u_{l}\left(  0\right)  =0$, or, more stronger, $\lim_{r\rightarrow0}%
u_{l}\left(  r\right)  /r^{l+1}=1$ \cite{romo1998study}, since the radial wave
function $R_{l}\left(  r\right)  $ at $r=0$ must be finite.

At $r\rightarrow\infty$, the boundary condition of scattering states is that
the wave function equals the asymptotic solution at $r\rightarrow\infty$,
i.e., $u_{l}\left(  r\right)  \overset{r\rightarrow\infty}{\sim}u_{l}^{\infty
}\left(  r\right)  $; the boundary condition of bound states is that the wave
function equals zero, i.e., $u_{l}\left(  r\right)  \overset{r\rightarrow
\infty}{=}0$.

In order to solve the radial equation (\ref{radialeq32}), we introduce%
\begin{equation}
u_{l}\left(  z\right)  =A_{l}e^{-z^{2}/2}z^{2\left(  l+1\right)  }f_{l}\left(
z\right)  \label{uf32}%
\end{equation}
with%
\begin{equation}
z=-\sqrt{-2ikr}, \label{z}%
\end{equation}
where $A_{l}$ is a constant. Substituting Eq. (\ref{uf32}) into the radial
equation (\ref{radialeq32}) gives an equation of $f_{l}\left(  z\right)  $:%
\begin{equation}
zf_{l}^{\prime\prime}\left(  z\right)  +\left(  -2{z}^{2}+4l+3\right)
f_{l}^{\prime}\left(  z\right)  +\left[  -4\left(  l+1\right)  z-4\lambda
\right]  f_{l}\left(  z\right)  =0, \label{eqfscat32}%
\end{equation}
where $\lambda=\alpha/\sqrt{-2ik}$. This equation is just the Biconfluent Heun
equation \cite{ronveaux1995heun}.

Next, we solve Eq. (\ref{eqfscat32}) with boundary conditions.

\subsection{The regular solution \label{RegularSolution}}

The solution satisfying the boundary condition at $r=0$ is called the regular solution.

At $r=0$, as mentioned above, the scattering state and the bound state has the
same boundary condition \cite{romo1998study},%
\begin{equation}
\lim_{r\rightarrow0}\frac{u_{l}\left(  r\right)  }{r^{l+1}}=1, \label{bdcond}%
\end{equation}
because the asymptotic solution of the radial equation (\ref{radialeq32}) at
$r=0$ is $u_{l}\left(  r\right)  \sim r^{l+1}$ (there is also another
asymptotic solution $u_{l}\left(  r\right)  \sim r^{-l}$ but it diverges at
$r=0$) \cite{ballentine1998quantum}.

The equation of $f_{l}\left(  z\right)  $, Eq. (\ref{eqfscat32}), has two
linearly independent solutions \cite{ronveaux1995heun},%
\begin{align}
y_{l}^{\left(  1\right)  }\left(  z\right)   &  =N\left(  4l+2,0,0,8\lambda
,z\right)  ,\\
y_{l}^{\left(  2\right)  }\left(  z\right)   &  =cN\left(  4l+2,0,0,8\lambda
,z\right)  \ln z+\sum_{n\geq0}d_{n}z^{n-2\left(  2l+1\right)  },
\end{align}
where $N\left(  \alpha,\beta,\gamma,\delta,z\right)  $ is the Heun Biconfluent
function \cite{ronveaux1995heun,slavyanov2000special}. The constant $c$ is%
\begin{equation}
c=\frac{2}{2l+1}\left(  \lambda d_{4l+1}+ld_{4l}\right)
\end{equation}
and $d_{\nu}$ is determined by the recurrence relation \cite{ronveaux1995heun}%
\begin{align}
&  d_{-1}=0,\text{ }d_{0}=1,\nonumber\\
&  \left(  \nu+2\right)  \left(  \nu-4l\right)  d_{\nu+2}-4\lambda d_{\nu
+1}+\left[  -2\left(  \nu+1\right)  +4l+2\right]  d_{\nu}=0.
\end{align}

To determine the solution with the boundary condition (\ref{bdcond}), we use
the expansion of the Heun Biconfluent function near $z=0$
\cite{ronveaux1995heun},%
\begin{equation}
N\left(  4l+2,0,0,8\lambda,z\right)  =\sum_{n\geq0}\frac{A_{n}}{\left(
4l+3\right)  _{n}}\frac{z^{n}}{n!},
\end{equation}
where $\left(  a\right)  _{n}=\Gamma\left(  a+n\right)  /\Gamma\left(
a\right)  $ is Pochhammer's symbol and the coefficient $A_{n}$ is determined
by the recurrence relation \cite{ronveaux1995heun}:%
\begin{align}
A_{0}  &  =1,\text{ \ }A_{1}=4\lambda,\nonumber\\
A_{n+2}  &  =4\lambda A_{n+1}+2\left(  n+1\right)  \left(  n+4l+3\right)
\left(  2l+2+n\right)  A_{n}.
\end{align}

It is clear that only $y_{l}^{\left(  1\right)  }\left(  z\right)  $ satisfies
the boundary condition at $r=0$, Eq. (\ref{bdcond}). Then, by Eqs.
(\ref{uf32}) and (\ref{z}), we obtain the regular solution:
\begin{equation}
u_{l}\left(  r\right)  =A_{l}e^{ikr}\left(  -2ikr\right)  ^{l+1}N\left(
4l+2,0,0,8\lambda,-\sqrt{-2ikr}\right)  .
\end{equation}

\subsection{The irregular solution \label{IrregularSolution}}

The solution satisfying the boundary condition at $r\rightarrow\infty$ is
called the irregular solution. Scattering states and bound states have
different boundary conditions at $r\rightarrow\infty$.

\subsubsection{The Scattering boundary condition}

The scattering boundary condition requires that the wave function equals the
asymptotic solution of the radial equation (\ref{radialeq32}) at
$r\rightarrow\infty$. The asymptotic behavior of the wave function at
$r\rightarrow\infty$ is dominated by the external potential, $-\alpha/r^{3/2}%
$, rather than the centrifugal potential $l\left(  l+1\right)  /r^{2}$ like
that in the case of short-range potentials. That is to say, the scattering
boundary condition is determined by the external potential $-\alpha/r^{3/2}$.

To determine the scattering boundary condition, we first solve the asymptotic
solution of the radial equation (\ref{radialeq32}).

Letting%
\begin{equation}
u_{l}\left(  r\right)  =e^{h\left(  r\right)  }e^{\pm ikr} \label{exp}%
\end{equation}
and substituting into the radial equation (\ref{radialeq32}) give an equation
of $h\left(  r\right)  $:%
\begin{equation}
h^{\prime\prime}\left(  r\right)  +h^{\prime}\left(  r\right)  ^{2}%
\pm2ikh^{\prime}\left(  r\right)  =\frac{l\left(  l+1\right)  }{r^{2}}%
-\frac{\alpha}{r^{3/2}}.
\end{equation}
Only taking the leading contribution into account for we only concentrate on
the asymptotic behavior, we have%
\begin{equation}
\pm2ikh^{\prime}\left(  r\right)  \sim-\frac{\alpha}{r^{3/2}}.
\end{equation}
Then we have%
\begin{equation}
h\left(  r\right)  \sim\pm\frac{i\alpha}{k\sqrt{r}}.
\end{equation}
Obviously, this contribution vanishes when $r\rightarrow\infty$ and can be
dropped out in the asymptotic solution. Then, the large-distance asymptotic
solution of Eq. (\ref{radialeq32}) reads%
\begin{equation}
u_{l}\left(  r\right)  =e^{\pm ikr}.
\end{equation}
It can be seen that the asymptotic behavior of the potential $U\left(
r\right)  =-\alpha/r^{3/2}$ is the same as that of the short-range potential.

The scattering boundary condition, then, can be written as%
\begin{equation}
\lim_{r\rightarrow\infty}e^{\pm ikr}u_{l}\left(  r\right)  =1.
\label{bdcondir}%
\end{equation}

\subsubsection{The irregular solution}

Now we can determine the irregular solution by the scattering boundary
condition given above.

Eq. (\ref{eqfscat32}) with the scattering boundary condition (\ref{bdcondir})
has two linearly independent solutions \cite{ronveaux1995heun}%
\begin{align}
B_{l}^{+}\left(  4l+2,0,0,8\lambda,z\right)   &  =z^{-2\left(  l+1\right)
}\sum_{n\geq0}\frac{a_{n}}{z^{n}},\label{f132}\\
H^{+}\left(  4l+2,0,0,8\lambda,z\right)   &  =z^{-2\left(  l+1\right)
}e^{z^{2}}\sum_{n\geq0}\frac{e_{n}}{z^{n}}, \label{f232}%
\end{align}
where $B_{l}^{+}\left(  \alpha,\beta,\gamma,\delta,z\right)  $ and $H_{l}%
^{+}\left(  \alpha,\beta,\gamma,\delta,z\right)  $ are two other biconfluent
Heun functions \cite{ronveaux1995heun}, different from the Heun\ functions
mentioned above. The coefficients in Eqs. (\ref{f132}) and (\ref{f232}) are
given by%
\begin{align}
&  a_{0}=1,\text{ \ }a_{1}=2\lambda,\nonumber\\
&  2\left(  n+2\right)  a_{n+2}-2\lambda a_{n+1}+\left[  n\left(  n+2\right)
-\left(  2l+1\right)  ^{2}+1\right]  a_{n}=0,
\end{align}
and%
\begin{align}
&  e_{0}=1,\text{ \ }e_{1}=-2\lambda,\nonumber\\
&  2\left(  n+2\right)  e_{n+2}+2\lambda e_{n+1}-\left[  n\left(  n+2\right)
-\left(  2l+1\right)  ^{2}+1\right]  e_{n}=0.
\end{align}
By Eq. (\ref{uf32}), one can directly check that $B_{l}^{+}\left(
\alpha,\beta,\gamma,\delta,z\right)  $ and $H_{l}^{+}\left(  \alpha
,\beta,\gamma,\delta,z\right)  $ satisfy the scattering boundary condition
(\ref{bdcondir}).

\subsection{Bound states and scattering states \label{ExactSolution}}

With the regular solution and the irregular solution, we can construct the
solution of bound states and scattering states.

First express the regular solution as a linear combination of the two
irregular solutions \cite{ronveaux1995heun}:%
\begin{align}
N\left(  4l+2,0,0,8\lambda,z\right)   &  =K_{1}\left(  4l+2,0,0,8\lambda
\right)  B_{l}^{+}\left(  4l+2,0,0,8\lambda,z\right) \nonumber\\
&  +K_{2}\left(  4l+2,0,0,8\lambda\right)  H_{l}^{+}\left(  4l+2,0,0,8\lambda
,z\right)  ,
\end{align}
where $K_{1}\left(  4l+2,0,0,8\lambda\right)  $ and $K_{2}\left(
4l+2,0,0,8\lambda\right)  $ are the coefficients of combination
\cite{ronveaux1995heun}. Using the expansions (\ref{f132}) and (\ref{f232}),
we can rewrite (\ref{uf32}) as%
\begin{equation}
u_{l}\left(  r\right)  =A_{l}K_{1}\left(  4l+2,0,0,8\lambda\right)
e^{ikr}\sum_{n\geq0}\frac{a_{n}}{\left(  -\sqrt{-2ikr}\right)  ^{n}}%
+A_{l}K_{2}\left(  4l+2,0,0,8\lambda\right)  e^{-ikr}\sum_{n\geq0}\frac{e_{n}%
}{\left(  -\sqrt{-2ikr}\right)  ^{n}}. \label{scatu}%
\end{equation}

\subsubsection{The bound state}

By analytical continuation $k$ to whole complex plane, we can consider $k$ on
the positive imaginary axis, i.e.,
\begin{equation}
k=i\kappa,\text{ \ }\kappa>0.
\end{equation}
Notice that on the positive imaginary axis $\lambda=\alpha/\sqrt{2\kappa}$,
Eq. (\ref{scatu}) becomes%
\begin{equation}
u_{l}\left(  r\right)  =A_{l}K_{1}\left(  4l+2,0,0,8\lambda\right)  e^{-\kappa
r}\sum_{n\geq0}\frac{a_{n}}{\left(  2\kappa r\right)  ^{n/2}}+A_{l}%
K_{2}\left(  4l+2,0,0,8\lambda\right)  e^{\kappa r}\sum_{n\geq0}\frac{e_{n}%
}{\left(  2\kappa r\right)  ^{n/2}}.
\end{equation}
For bound states, only the first term contributes because the bound state
boundary condition requires $u_{l}\left(  \infty\right)  \rightarrow0$. Then
we have
\begin{equation}
K_{2}\left(  4l+2,0,0,8\lambda\right)  =0, \label{spectrum}%
\end{equation}
where \cite{ronveaux1995heun}%
\begin{equation}
K_{2}\left(  \alpha,\beta,\gamma,\delta\right)  =\frac{\Gamma\left(
1+\alpha\right)  }{\Gamma\left(  \left(  \alpha-\gamma\right)  /2\right)
\Gamma\left(  1+\left(  \alpha+\gamma\right)  /2\right)  }J_{1+\left(
\alpha+\gamma\right)  /2}\left(  \frac{1}{2}\left(  \alpha+\gamma\right)
,\beta,\frac{1}{2}\left(  3\alpha-\gamma\right)  ,\delta+\frac{1}{2}%
\beta\left(  \gamma-\alpha\right)  \right)  ,
\end{equation}
with%
\begin{equation}
J_{\lambda}\left(  \alpha,\beta,\gamma,\delta\right)  =\int_{0}^{\infty
}x^{\lambda-1}e^{-x^{2}-\beta x}N\left(  \alpha,\beta,\gamma,\delta,x\right)
dx.
\end{equation}

Eq. (\ref{spectrum}) is just an implicit expression of the bound-state eigenvalue.

The bound-state wave function reads%
\begin{equation}
u_{l}\left(  r\right)  =Ce^{-\kappa r}\sum_{n\geq0}\frac{a_{n}}{\left(
2\kappa r\right)  ^{n/2}},
\end{equation}
where $C$ is the normalization constant.

\subsubsection{The scattering state}

The singularity of the $S$-matrix on the positive imaginary axis corresponds
to the eigenvalues of bound states \cite{joachain1975quantum}, so the zero of
$K_{2}\left(  4l+2,0,0,8\lambda\right)  $ on the positive imaginary is just
the singularity of the $S$-matrix. Considering that the $S$-matrix is unitary,
i.e.,%
\begin{equation}
S_{l}=e^{2i\delta_{l}}, \label{SM}%
\end{equation}
we have%
\begin{equation}
S_{l}\left(  k\right)  =\frac{K_{2}^{\ast}\left(  4l+2,0,0,8\lambda\right)
}{K_{2}\left(  4l+2,0,0,8\lambda\right)  }=\frac{K_{2}\left(
4l+2,0,0,8i\lambda\right)  }{K_{2}\left(  4l+2,0,0,8\lambda\right)  },
\label{SM1}%
\end{equation}
where the relation $K_{2}^{\ast}\left(  4l+2,0,0,8\alpha/\sqrt{-2ik}\right)
=K_{2}^{\ast}\left(  4l+2,0,0,8i\alpha/\sqrt{-2ik}\right)  $ is used.

The scattering wave function can be constructed with the help of the
$S$-matrix. The scattering wave function can be expressed as a linear
combination of the radially ingoing wave $u_{in}$ and the radially outgoing
wave $u_{out}$, which are conjugate to each other, i.e.,
\cite{joachain1975quantum}%
\begin{equation}
u_{l}\left(  r\right)  =A_{l}\left[  \left(  -1\right)  ^{l+1}u_{in}\left(
r\right)  +S_{l}\left(  k\right)  u_{out}\left(  r\right)  \right]  .
\label{inout}%
\end{equation}
From Eq. (\ref{scatu}), we have%
\begin{align}
u_{in}  &  =e^{-ikr}\sum_{n\geq0}\frac{e_{n}}{\left(  -\sqrt{-2ikr}\right)
^{n}},\nonumber\\
u_{out}  &  =e^{ikr}\sum_{n\geq0}\frac{e_{n}^{\ast}}{\left(  -\sqrt
{2ikr}\right)  ^{n}}.
\end{align}
Then by Eq. (\ref{SM1}) we achieve the scattering wave function,%
\begin{equation}
u_{l}\left(  r\right)  =A_{l}\left[  \left(  -1\right)  ^{l+1}e^{-ikr}%
\sum_{n\geq0}\frac{e_{n}}{\left(  -\sqrt{-2ikr}\right)  ^{n}}+\frac
{K_{2}\left(  4l+2,0,0,8i\lambda\right)  }{K_{2}\left(  4l+2,0,0,8\lambda
\right)  }e^{ikr}\sum_{n\geq0}\frac{e_{n}^{\ast}}{\left(  -\sqrt{2ikr}\right)
^{n}}\right]  .
\end{equation}
Taking $r\rightarrow\infty$, we have
\begin{align}
&  u_{l}\left(  r\right)  \overset{r\rightarrow\infty}{\sim}A_{l}\left[
\left(  -1\right)  ^{l+1}e^{-ikr}+\frac{K_{2}\left(  4l+2,0,0,8i\lambda
\right)  }{K_{2}\left(  4l+2,0,0,8\lambda\right)  }e^{ikr}\right] \nonumber\\
&  =A_{l}e^{i\delta_{l}}\sin\left(  kr+\delta_{l}-\frac{l\pi}{2}\right)  .
\end{align}
By Eqs. (\ref{SM}) and (\ref{SM1}), we obtain the scattering phase shift%
\begin{equation}
\delta_{l}=-\arg K_{2}\left(  4l+2,0,0,8\lambda\right)  .
\end{equation}

\section{The exact solution of $U\left(  r\right)  =\eta r^{2/3}$ \label{V23}}

In this appendix, we provide an exact solution of the eigenproblem of the
$r^{2/3}$-potential by solving the radial equation directly. The $r^{2/3}%
$-potential has only bound states.

The radial equation of the $r^{2/3}$-potential
\begin{equation}
U\left(  r\right)  =\eta r^{2/3}%
\end{equation}
reads%
\begin{equation}
\frac{d^{2}u_{l}\left(  r\right)  }{dr^{2}}+\left[  E-\frac{l\left(
l+1\right)  }{r^{2}}-\eta r^{2/3}\right]  u_{l}\left(  r\right)  =0.
\label{req23}%
\end{equation}
By the variable substitution%
\begin{equation}
z=\sqrt{\frac{3}{2}}\eta^{1/4}r^{2/3} \label{zr23}%
\end{equation}
and introducing $f_{l}\left(  z\right)  $ by%
\begin{equation}
u_{l}\left(  z\right)  =A_{l}\exp\left(  -\frac{z^{2}}{2}+\frac{\sqrt{6}%
E}{4\eta^{3/4}}z\right)  z^{3\left(  l+1\right)  /2}f_{l}\left(  z\right)  ,
\end{equation}
we convert the radial equation (\ref{radwf32}) into an equation of
$f_{l}\left(  z\right)  $:%
\begin{equation}
zf_{l}^{\prime\prime}\left(  z\right)  +\left(  -2{z}^{2}+\frac{\sqrt{6}%
E}{2\eta^{3/4}}z+3l+\frac{5}{2}\right)  f_{l}^{\prime}\left(  z\right)
+\left\{  \left[  \left(  \frac{\sqrt{6}E}{4\eta^{3/4}}\right)  ^{2}%
-3l-\frac{7}{2}\right]  z+\frac{\sqrt{6}E}{4\eta^{3/4}}\left(  3l+\frac{5}%
{2}\right)  \right\}  f_{l}\left(  z\right)  =0, \label{eqf23}%
\end{equation}
where $A_{l}$ is a constant. This is a Biconfluent Heun equation
\cite{ronveaux1995heun}.

The choice of the boundary condition has been discussed in Ref.
\cite{li2016exact}.

\subsection{The regular solution}

The regular solution is a solution satisfying the boundary condition at $r=0$
\cite{li2016exact}. The regular solution at $r=0$\ should satisfy the boundary
condition $\lim_{r\rightarrow0}u_{l}\left(  r\right)  /r^{l+1}=1$. In this
section, we provide the regular solution of Eq. (\ref{eqf23}).

The Biconfluent Heun equation (\ref{eqf23}) has two linearly independent
solutions \cite{ronveaux1995heun}%
\begin{align}
y_{l}^{\left(  1\right)  }\left(  z\right)   &  =N\left(  3l+\frac{3}%
{2},-\frac{\sqrt{6}E}{2\eta^{3/4}},\frac{3E^{2}}{8\eta^{3/2}},0,z\right)
,\label{yl1z23}\\
y_{l}^{\left(  2\right)  }\left(  z\right)   &  =cN\left(  3l+\frac{3}%
{2},-\frac{\sqrt{6}E}{2\eta^{3/4}},\frac{3E^{2}}{8\eta^{3/2}},0,z\right)  \ln
z+\sum_{n\geq0}d_{n}z^{n-3l-3/2}, \label{yl2z23}%
\end{align}
where%
\begin{equation}
c=\frac{1}{3l+3/2}\left[  -d_{3l+1/2}\frac{\sqrt{6}E}{4\eta^{3/4}}\left(
3l+\frac{1}{2}\right)  -d_{3l-1/2}\left(  \frac{3E^{2}}{8\eta^{3/2}}+\frac
{1}{2}-3l\right)  \right]
\end{equation}
is a constant with the coefficient $d_{\nu}$ given by the following recurrence
relation,%
\begin{align}
&  d_{-1}=0,\text{ \ }d_{0}=1,\nonumber\\
&  \left(  \nu+2\right)  \left(  \nu+\frac{1}{2}-3l\right)  d_{\nu+2}%
+\frac{\sqrt{6}E}{4\eta^{3/4}}\left(  2\nu+\frac{3}{2}-3l\right)  d_{\nu
+1}+\left(  \frac{3E^{2}}{8\eta^{3/2}}-2v+3l-\frac{1}{2}\right)  d_{\nu}=0
\end{align}
and $N(\alpha,\beta,\gamma,\delta,z)$ is the biconfluent Heun function
\cite{ronveaux1995heun,slavyanov2000special,li2016exact}.

The biconfluent Heun function $N\left(  3l+\frac{3}{2},-\frac{\sqrt{6}E}%
{2\eta^{3/4}},\frac{3E^{2}}{8\eta^{3/2}},0,z\right)  $ has an expansion at
$z=0$ \cite{ronveaux1995heun}:%
\begin{equation}
N\left(  3l+\frac{3}{2},-\frac{\sqrt{6}E}{2\eta^{3/4}},\frac{3E^{2}}%
{8\eta^{3/2}},0,z\right)  =\sum_{n\geq0}\frac{A_{n}}{\left(  3l+5/2\right)
_{n}}\frac{z^{n}}{n!},
\end{equation}
where the expansion coefficients is determined by the recurrence relation,%
\begin{align}
A_{0}  &  =1,\text{ \ }A_{1}=-\frac{\sqrt{6}E}{4\eta^{3/4}}\left(  3l+\frac
{5}{2}\right)  ,\nonumber\\
A_{n+2}  &  =\left[  -\frac{\sqrt{6}E}{2\eta^{3/4}}\left(  n+1\right)
-\frac{\sqrt{6}E}{4\eta^{3/4}}\left(  3l+\frac{5}{2}\right)  \right]
A_{n+1}-\left(  n+1\right)  \left(  n+3l+\frac{5}{2}\right)  \left(
\frac{3E^{2}}{8\eta^{3/2}}-3l-\frac{7}{2}-2n\right)  A_{n},
\end{align}
and $\left(  a\right)  _{n}=\Gamma\left(  a+n\right)  /\Gamma\left(  a\right)
$ is Pochhammer's symbol.

Only $y_{l}^{\left(  1\right)  }\left(  z\right)  $ satisfies the boundary
condition of the regular solution at $r=0$, so the radial eigenfunction is%
\begin{align}
u_{l}\left(  z\right)   &  =A_{l}\exp\left(  -\frac{z^{2}}{2}+\frac{\sqrt{6}%
E}{4\eta^{3/4}}z\right)  z^{3\left(  l+1\right)  /2}y_{l}^{\left(  1\right)
}\left(  z\right) \nonumber\\
&  =A_{l}\exp\left(  -\frac{z^{2}}{2}+\frac{\sqrt{6}E}{4\eta^{3/4}}z\right)
z^{3\left(  l+1\right)  /2}N\left(  3l+\frac{3}{2},-\frac{\sqrt{6}E}%
{2\eta^{3/4}},\frac{3E^{2}}{8\eta^{3/2}},0,z\right)  .
\end{align}
By Eq. (\ref{zr23}), we obtain the regular solution,%
\begin{equation}
u_{l}\left(  r\right)  =A_{l}\exp\left(  -\frac{3}{4}\eta^{1/2}r^{4/3}%
+\frac{3E}{4\eta^{1/2}}r^{2/3}\right)  \left(  \frac{\sqrt{6}}{2}\eta
^{1/4}\right)  ^{3\left(  l+1\right)  /2}r^{l+1}N\left(  3l+\frac{3}{2}%
,-\frac{\sqrt{6}E}{2\eta^{3/4}},\frac{3E^{2}}{8\eta^{3/2}},0,\frac{\sqrt{6}%
}{2}\eta^{1/4}r^{2/3}\right)  . \label{regular23}%
\end{equation}

\subsection{The irregular solution}

The irregular solution is a solution satisfying the boundary condition at
$r\rightarrow\infty$ \cite{li2016exact}. The boundary conditions for bound
states and scattering states at $r\rightarrow\infty$ are different.

The Biconfluent Heun equation (\ref{eqf23}) has two linearly independent
irregular solutions \cite{ronveaux1995heun}:%
\begin{align}
B_{l}^{+}\left(  3l+\frac{3}{2},-\frac{\sqrt{6}E}{2\eta^{3/4}},\frac{3E^{2}%
}{8\eta^{3/2}},0,z\right)   &  =z^{\frac{1}{2}\left(  \frac{3E^{2}}%
{8\eta^{3/2}}-3l-\frac{7}{2}\right)  }\sum_{n\geq0}\frac{a_{n}}{z^{n}%
},\label{f123}\\
H_{l}^{+}\left(  3l+\frac{3}{2},-\frac{\sqrt{6}E}{2\eta^{3/4}},\frac{3E^{2}%
}{8\eta^{3/2}},0,z\right)   &  =z^{-\frac{1}{2}\left(  \frac{3E^{2}}%
{8\eta^{3/2}}+3l+\frac{7}{2}\right)  }\exp\left(  -\frac{\sqrt{6}E}%
{2\eta^{3/4}}z+z^{2}\right)  \sum_{n\geq0}\frac{e_{n}}{z^{n}} \label{f223}%
\end{align}
with the expansion coefficients given by the recurrence relation
\[
a_{0}=1,\text{ \ }a_{1}=-\frac{\sqrt{6}E}{8\eta^{3/4}}\left(  \frac{3E^{2}%
}{8\eta^{3/2}}-1\right)  ,
\]%
\begin{align}
&  2\left(  n+2\right)  a_{n+2}-\frac{\sqrt{6}E}{4\eta^{3/4}}\left(
3-\frac{3E^{2}}{8\eta^{3/2}}+2n\right)  a_{n+1}\nonumber\\
&  +\left\{  \frac{1}{4}\left[  \left(  \frac{3E^{2}}{8\eta^{3/2}}\right)
^{2}-\left(  3l+\frac{3}{2}\right)  ^{2}+4\right]  -\frac{3E^{2}}{8\eta^{3/2}%
}+n\left(  n+2-\frac{3E^{2}}{8\eta^{3/2}}\right)  \right\}  a_{n}=0
\end{align}
\qquad and%
\[
e_{0}=1,\text{ \ }e_{1}=\frac{\sqrt{6}E}{8\eta^{3/4}}\left(  \frac{3E^{2}%
}{8\eta^{3/2}}+1\right)  ,
\]%
\begin{align}
&  2\left(  n+2\right)  e_{n+2}-\frac{\sqrt{6}E}{4\eta^{3/4}}\left(
3+\frac{3E^{2}}{8\eta^{3/2}}+2n\right)  e_{n+1}\nonumber\\
&  -\left\{  \frac{1}{4}\left[  \left(  \frac{3E^{2}}{8\eta^{3/2}}\right)
^{2}-\left(  3l+\frac{3}{2}\right)  ^{2}+4\right]  +\frac{3E^{2}}{8\eta^{3/2}%
}+n\left(  n+2+\frac{3E^{2}}{8\eta^{3/2}}\right)  \right\}  e_{n}=0.
\end{align}

\subsection{Eigenfunctions and eigenvalues}

To construct the solution, we first express the regular solution
(\ref{regular23}) as a linear combination of the two irregular solutions
(\ref{f123}) and (\ref{f223}).

The regular solution (\ref{regular23}), with the relation
\cite{ronveaux1995heun,li2016exact}%

\begin{align}
N\left(  3l+\frac{3}{2},-\frac{\sqrt{6}E}{2\eta^{3/4}},\frac{3E^{2}}%
{8\eta^{3/2}},0,z\right)   &  =K_{1}\left(  3l+\frac{3}{2},-\frac{\sqrt{6}%
E}{2\eta^{3/4}},\frac{3E^{2}}{8\eta^{3/2}},0\right)  B_{l}^{+}\left(
3l+\frac{3}{2},-\frac{\sqrt{6}E}{2\eta^{3/4}},\frac{3E^{2}}{8\eta^{3/2}%
},0,z\right)  \nonumber\\
&  +K_{2}\left(  3l+\frac{3}{2},-\frac{\sqrt{6}E}{2\eta^{3/4}},\frac{3E^{2}%
}{8\eta^{3/2}},0\right)  H_{l}^{+}\left(  3l+\frac{3}{2},-\frac{\sqrt{6}%
E}{2\eta^{3/4}},\frac{3E^{2}}{8\eta^{3/2}},0,z\right)
\end{align}
and the expansions (\ref{f123}) and£¨\ref{f223}), becomes%
\begin{align}
u_{l}\left(  r\right)   &  =A_{l}K_{1}\left(  3l+\frac{3}{2},-\frac{\sqrt{6}%
E}{2\eta^{3/4}},\frac{3E^{2}}{8\eta^{3/2}},0\right)  \exp\left(  -\frac{3}%
{4}\eta^{1/2}r^{4/3}+\frac{3E}{4\eta^{1/2}}r^{2/3}\right)  \nonumber\\
&  \times r^{\frac{E^{2}}{8\eta^{3/2}}-\frac{1}{6}}\sum_{n\geq0}\frac{a_{n}%
}{\left(  \frac{\sqrt{6}}{2}\eta^{1/4}r^{2/3}\right)  ^{n}}\nonumber\\
&  +A_{l}K_{2}\left(  3l+\frac{3}{2},-\frac{\sqrt{6}E}{2\eta^{3/4}}%
,\frac{3E^{2}}{8\eta^{3/2}},0\right)  \exp\left(  \frac{3}{4}\eta^{1/2}%
r^{4/3}-\frac{3E}{4\eta^{1/2}}r^{2/3}\right)  \nonumber\\
&  \times r^{-\frac{E^{2}}{8\eta^{3/2}}-\frac{1}{6}}\sum_{n\geq0}\frac{e_{n}%
}{\left(  \frac{\sqrt{6}}{2}\eta^{1/4}r^{2/3}\right)  ^{n}},\label{uexpzf}%
\end{align}
where $K_{1}\left(  3l+\frac{3}{2},-\frac{\sqrt{6}E}{2\eta^{3/4}},\frac
{3E^{2}}{8\eta^{3/2}},0\right)  $ and $K_{2}\left(  3l+\frac{3}{2}%
,-\frac{\sqrt{6}E}{2\eta^{3/4}},\frac{3E^{2}}{8\eta^{3/2}},0\right)  $ are
combination coefficients\ and $z=\frac{\sqrt{6}}{2}\eta^{1/4}r^{2/3}$.

The boundary condition of bound states, $\left.  u\left(  r\right)
\right\vert _{r\rightarrow\infty}\rightarrow0$, requires that the coefficient
of the second term must vanish since this term diverges when $r\rightarrow
\infty$, i.e.,
\begin{equation}
K_{2}\left(  3l+\frac{3}{2},-\frac{\sqrt{6}E}{2\eta^{3/4}},\frac{3E^{2}}%
{8\eta^{3/2}},0\right)  =0, \label{eigenvalue23}%
\end{equation}
where%
\begin{align}
&  K_{2}\left(  3l+\frac{3}{2},-\frac{\sqrt{6}E}{2\eta^{3/4}},\frac{3E^{2}%
}{8\eta^{3/2}},0\right) \nonumber\\
&  =\frac{\Gamma\left(  3l+\frac{5}{2}\right)  }{\Gamma\left(  \frac{3}%
{2}l+\frac{3}{4}-\frac{3E^{2}}{16\eta^{3/2}}\right)  \Gamma\left(  \frac{3}%
{2}l+\frac{7}{4}+\frac{3E^{2}}{16\eta^{3/2}}\right)  }\nonumber\\
&  \times J_{\frac{3}{2}l+\frac{7}{4}+\frac{3E^{2}}{16\eta^{3/2}}}\left(
\frac{3}{2}l+\frac{3}{4}+\frac{3E^{2}}{16\eta^{3/2}},-\frac{\sqrt{6}E}%
{2\eta^{3/4}},\frac{9}{2}\left(  l+\frac{1}{2}\right)  -\frac{3E^{2}}%
{16\eta^{3/2}},\frac{\sqrt{6}E}{4\eta^{3/4}}\left(  \frac{3E^{2}}{8\eta^{3/2}%
}-3l-\frac{3}{2}\right)  \right)
\end{align}
with%
\begin{align}
&  J_{\frac{3}{2}l+\frac{7}{4}+\frac{3E^{2}}{16\eta^{3/2}}}\left(  \frac{3}%
{2}l+\frac{3}{4}+\frac{3E^{2}}{16\eta^{3/2}},-\frac{\sqrt{6}E}{2\eta^{3/4}%
},\frac{9}{2}\left(  l+\frac{1}{2}\right)  -\frac{3E^{2}}{16\eta^{3/2}}%
,\frac{\sqrt{6}E}{4\eta^{3/4}}\left(  \frac{3E^{2}}{8\eta^{3/2}}-3l-\frac
{3}{2}\right)  \right) \nonumber\\
&  =\int_{0}^{\infty}x^{\frac{3}{2}l+\frac{3}{4}+\frac{3E^{2}}{16\eta^{3/2}}%
}e^{-x^{2}+\frac{\sqrt{6}E}{2\eta^{3/4}}x}\nonumber\\
&  \times N\left(  \frac{3}{2}\left(  l+\frac{1}{2}\right)  +\frac{3E^{2}%
}{16\eta^{3/2}},-\frac{\sqrt{6}E}{2\eta^{3/4}},\frac{9}{2}\left(  l+\frac
{1}{2}\right)  -\frac{3E^{2}}{16\eta^{3/2}},\frac{\sqrt{6}E}{4\eta^{3/4}%
}\left[  \frac{3E^{2}}{8\eta^{3/2}}-3\left(  l+\frac{1}{2}\right)  \right]
,x\right)  .
\end{align}

Eq. (\ref{eigenvalue23}) is an implicit expression of the eigenvalue.

The eigenfunction, by Eqs. (\ref{uexpzf}) and (\ref{eigenvalue23}), reads%
\begin{equation}
u_{l}\left(  r\right)  =A_{l}K_{1}\left(  3l+\frac{3}{2},-\frac{\sqrt{6}%
E}{2\eta^{3/4}},\frac{3E^{2}}{8\eta^{3/2}},0\right)  \exp\left(  -\frac{3}%
{4}\eta^{1/2}r^{4/3}+\frac{3E}{4\eta^{1/2}}r^{2/3}\right)  r^{\frac{E^{2}%
}{8\eta^{3/2}}-\frac{1}{6}}\sum_{n\geq0}\frac{a_{n}}{\left(  \frac{\sqrt{6}%
}{2}\eta^{1/4}r^{2/3}\right)  ^{n}}.
\end{equation}

\section{The exact solution of $U\left(  r\right)  =\eta r^{6}$ \label{V6}}

In this appendix, we solve an exact solution of the $r^{6}$-potential through
solving the radial equation directly. The $r^{6}$-potential has only bound states.

The radial equation of the $r^{6}$-potential,
\begin{equation}
U\left(  r\right)  =\eta r^{6},
\end{equation}
reads%
\begin{equation}
\frac{d^{2}u_{l}\left(  r\right)  }{dr^{2}}+\left[  E-\frac{l\left(
l+1\right)  }{r^{2}}-\eta r^{6}\right]  u_{l}\left(  r\right)  =0.
\end{equation}
Introducing $f_{l}\left(  z\right)  $ by
\[
u_{l}\left(  z\right)  =A_{l}e^{-z^{2}/2}z^{\left(  l+1\right)  /2}%
f_{l}\left(  z\right)
\]
with%
\begin{equation}
z=\frac{1}{\sqrt{2}}\eta^{1/4}r^{2}, \label{zr6}%
\end{equation}
where $A_{l}$ is a constant, we arrive at an equation of $f_{l}\left(
z\right)  $
\begin{equation}
zf_{l}^{\prime\prime}\left(  z\right)  +\left(  -2{z}^{2}+l+\frac{3}%
{2}\right)  f_{l}^{\prime}\left(  z\right)  +\left[  -\left(  l+\frac{5}%
{2}\right)  z-\frac{E}{2\sqrt{2}\eta^{1/4}}\right]  f_{l}\left(  z\right)  =0.
\label{eqf6}%
\end{equation}
This is a Biconfluent Heun equation \cite{ronveaux1995heun}.

The choice of the boundary condition has been discussed in Ref.
\cite{li2016exact}.

\subsection{The regular solution}

First solve the regular solution which satisfies the boundary condition at
$r=0$ \cite{li2016exact}.

The Biconfluent Heun equation (\ref{eqf6}) has two linearly independent
solutions \cite{ronveaux1995heun}%

\begin{align}
y_{l}^{\left(  1\right)  }\left(  z\right)   &  =N\left(  l+\frac{1}%
{2},0,0,\frac{E}{\sqrt{2}\eta^{1/4}},z\right)  ,\\
y_{l}^{\left(  2\right)  }\left(  z\right)   &  =cN\left(  l+\frac{1}%
{2},0,0,\frac{E}{\sqrt{2}\eta^{1/4}},z\right)  \ln z+\sum_{n\geq0}%
d_{n}z^{n-l-1/2},
\end{align}
where%
\begin{equation}
c=\frac{1}{l+1/2}\left[  \frac{E}{2\sqrt{2}\eta^{1/4}}d_{l-1/2}-d_{l-3/2}%
\left(  \frac{3}{2}-l\right)  \right]
\end{equation}
is a constant and the expansion coefficient $d_{\nu}$ is determined by the
following recurrence relation,
\begin{align}
&  d_{-1}=0,\text{ \ }d_{0}=1,\nonumber\\
&  \left(  \nu+2\right)  \left(  \nu+\frac{3}{2}-l\right)  d_{\nu+2}-\frac
{E}{2\sqrt{2}\eta^{1/4}}d_{\nu+1}+\left(  -2v-\frac{3}{2}+l\right)  d_{\nu}=0,
\end{align}
where $N(\alpha,\beta,\gamma,\delta,z)$ is the biconfluent Heun function
\cite{ronveaux1995heun,slavyanov2000special,li2016exact}. The biconfluent Heun
function $N\left(  l+\frac{1}{2},0,0,\frac{E}{\sqrt{2}\eta^{1/4}},z\right)  $
has an expansion at $z=0$:
\begin{equation}
N\left(  l+\frac{1}{2},0,0,\frac{E}{\sqrt{2}\eta^{1/4}},z\right)  =\sum
_{n\geq0}\frac{A_{n}}{\left(  l+3/2\right)  _{n}}\frac{z^{n}}{n!},
\end{equation}
where the expansion coefficient is determined by the following recurrence
relation,%
\begin{align}
A_{0}  &  =1,\text{ \ }A_{1}=\frac{E}{2\sqrt{2}\eta^{1/4}},\nonumber\\
A_{n+2}  &  =\frac{E}{2\sqrt{2}\eta^{1/4}}A_{n+1}+\left(  n+1\right)  \left(
n+l+\frac{3}{2}\right)  \left(  l+\frac{5}{2}+2n\right)  A_{n},
\end{align}
and $\left(  a\right)  _{n}=\Gamma\left(  a+n\right)  /\Gamma\left(  a\right)
$ is Pochhammer's symbol.

Only $y_{l}^{\left(  1\right)  }\left(  z\right)  $ satisfies the boundary
condition of the regular solution at $r=0$, so the radial eigenfunction is%
\begin{align}
u_{l}\left(  r\right)   &  =A_{l}e^{-z^{2}/2}z^{\left(  l+1\right)  /2}%
y_{l}^{\left(  1\right)  }\left(  z\right) \nonumber\\
&  =A_{l}e^{-z^{2}/2}z^{\left(  l+1\right)  /2}N\left(  l+\frac{1}%
{2},0,0,\frac{E}{\sqrt{2}\eta^{1/4}},z\right)  .
\end{align}
By Eq. (\ref{zr6}), we obtain the regular solution,%
\begin{equation}
u_{l}\left(  r\right)  =A_{l}\left(  \frac{\eta^{1/2}}{2}\right)  ^{\left(
l+1\right)  /4}\exp\left(  -\frac{1}{4}\eta^{1/2}r^{4}\right)  r^{l+1}N\left(
l+\frac{1}{2},0,0,\frac{E}{\sqrt{2}\eta^{1/4}},\frac{1}{\sqrt{2}}\eta
^{1/4}r^{2}\right)  , \label{regular6}%
\end{equation}

\subsection{The irregular solution}

Now we solve the irregular solution which satisfies the boundary condition at
$r\rightarrow\infty$ \cite{li2016exact}.

The Biconfluent Heun equation (\ref{eqf6}) has two linearly independent
irregular solutions \cite{ronveaux1995heun}:%

\begin{align}
B_{l}^{+}\left(  l+\frac{1}{2},0,0,\frac{E}{\sqrt{2}\eta^{1/4}},z\right)   &
=z^{-\frac{1}{2}\left(  l+5/2\right)  }\sum_{n\geq0}\frac{a_{n}}{z^{n}%
},\label{f16}\\
H_{l}^{+}\left(  l+\frac{1}{2},0,0,\frac{E}{\sqrt{2}\eta^{1/4}},z\right)   &
=z^{-\frac{1}{2}\left(  l+5/2\right)  }e^{z^{2}}\sum_{n\geq0}\frac{e_{n}%
}{z^{n}} \label{f26}%
\end{align}
with the expansion coefficients given by the recurrence relation%
\[
a_{0}=1,\text{ \ }a_{1}=\frac{1}{2}\left(  q-\frac{1}{2}s\right)  ,
\]%
\begin{equation}
2\left(  n+2\right)  a_{n+2}-\frac{E}{2\sqrt{2}\eta^{1/4}}a_{n+1}+\left[
1-\left(  \frac{l+1/2}{2}\right)  ^{2}+n\left(  n+2\right)  \right]  a_{n}=0,
\end{equation}
and%
\[
e_{0}=1,\text{ \ }e_{1}=-\frac{1}{2}\left(  q+\frac{1}{2}s\right)  ,
\]%
\begin{equation}
2\left(  n+2\right)  e_{n+2}+\frac{E}{2\sqrt{2}\eta^{1/4}}e_{n+1}-\left[
1-\left(  \frac{l+1/2}{2}\right)  ^{2}+n\left(  n+2\right)  \right]  e_{n}=0.
\end{equation}

\subsection{Eigenfunctions and eigenvalues}

As above, first express the regular solution (\ref{regular6}) as a linear
combination of the two irregular solutions (\ref{f16}) and (\ref{f26}).

The regular solution (\ref{regular6}), with the relation
\cite{ronveaux1995heun}%
\begin{align}
N\left(  l+\frac{1}{2},0,0,\frac{E}{\sqrt{2}\eta^{1/4}},z\right)   &
=K_{1}\left(  l+\frac{1}{2},0,0,\frac{E}{\sqrt{2}\eta^{1/4}}\right)  B_{l}%
^{+}\left(  l+\frac{1}{2},0,0,\frac{E}{\sqrt{2}\eta^{1/4}},z\right)
\nonumber\\
&  +K_{2}\left(  l+\frac{1}{2},0,0,\frac{E}{\sqrt{2}\eta^{1/4}}\right)
H_{l}^{+}\left(  l+\frac{1}{2},0,0,\frac{E}{\sqrt{2}\eta^{1/4}},z\right)
\end{align}
and the expansions (\ref{f16}) and (\ref{f26}), becomes%
\begin{align}
u_{l}\left(  r\right)   &  =A_{l}K_{1}\left(  l+\frac{1}{2},0,0,\frac{E}%
{\sqrt{2}\eta^{1/4}}\right)  \left(  \frac{\sqrt{2}}{\eta^{1/4}}\right)
^{3/4}r^{-3/2}\exp\left(  -\frac{1}{4}\eta^{1/2}r^{4}\right)  \sum_{n\geq
0}\frac{a_{n}}{\left(  \frac{\eta^{1/4}}{\sqrt{2}}r^{2}\right)  ^{n}%
}\nonumber\\
&  +A_{l}K_{2}\left(  l+\frac{1}{2},0,0,\frac{E}{\sqrt{2}\eta^{1/4}}\right)
\left(  \frac{\sqrt{2}}{\eta^{1/4}}\right)  ^{3/4}r^{-3/2}\exp\left(  \frac
{1}{4}\eta^{1/2}r^{4}\right)  \sum_{n\geq0}\frac{e_{n}}{\left(  \frac
{\eta^{1/4}}{\sqrt{2}}r^{2}\right)  ^{n}}, \label{uexpzf6}%
\end{align}
where $K_{1}\left(  l+\frac{1}{2},0,0,\frac{E}{\sqrt{2}\eta^{1/4}}\right)  $
and $K_{2}\left(  l+\frac{1}{2},0,0,\frac{E}{\sqrt{2}\eta^{1/4}}\right)  $ are
combination coefficients\ and $z=\frac{1}{\sqrt{2}}\eta^{1/4}r^{2}$ is used.

Only the first term satisfies the boundary condition of bound states $\left.
u\left(  r\right)  \right\vert _{r\rightarrow\infty}\rightarrow0$, so the
coefficient of the second term must vanish, i.e.,%
\begin{equation}
K_{2}\left(  l+\frac{1}{2},0,0,\frac{E}{\sqrt{2}\eta^{1/4}}\right)  =0,
\label{eigenvalue6}%
\end{equation}
where
\begin{equation}
K_{2}\left(  l+\frac{1}{2},0,0,\frac{E}{\sqrt{2}\eta^{1/4}}\right)
=\frac{\Gamma\left(  l+\frac{3}{2}\right)  }{\Gamma\left(  \frac{l}{2}%
+\frac{1}{4}\right)  \Gamma\left(  \frac{l}{2}+\frac{5}{4}\right)  }%
J_{\frac{l}{2}+\frac{5}{4}}\left(  \frac{l}{2}+\frac{1}{4},0,\frac{3}%
{2}\left(  l+\frac{1}{2}\right)  ,\frac{E}{\sqrt{2}\eta^{1/4}}\right)
\end{equation}
with%
\begin{equation}
J_{\frac{l}{2}+\frac{5}{4}}\left(  \frac{l}{2}+\frac{1}{4},0,\frac{3}%
{2}\left(  l+\frac{1}{2}\right)  ,\frac{E}{\sqrt{2}\eta^{1/4}}\right)
=\int_{0}^{\infty}x^{\frac{l}{2}+\frac{1}{4}}e^{-x^{2}}N\left(  \frac{l}%
{2}+\frac{1}{4},0,\frac{3}{2}\left(  l+\frac{1}{2}\right)  ,\frac{E}{\sqrt
{2}\eta^{1/4}},x\right)  .
\end{equation}

The eigenvalue is the zero of $K_{2}\left(  l+\frac{1}{2},0,0,\frac{\sqrt{2}%
E}{2\eta^{1/4}}\right)  $ and Eq. (\ref{eigenvalue6}) is an implicit
expression of the eigenvalue.

The eigenfunction, by Eqs. (\ref{uexpzf}) and (\ref{eigenvalue23}), reads%
\begin{equation}
u_{l}\left(  r\right)  =A_{l}K_{1}\left(  l+\frac{1}{2},0,0,\frac{E}{\sqrt
{2}\eta^{1/4}}\right)  \left(  \frac{\sqrt{2}}{\eta^{1/4}}\right)
^{3/4}r^{-3/2}\exp\left(  -\frac{1}{4}\eta^{1/2}r^{4}\right)  \sum_{n\geq
0}\frac{a_{n}}{\left(  \frac{\eta^{1/4}}{\sqrt{2}}r^{2}\right)  ^{n}}.
\end{equation}

\section{The exact solution of $U\left(  r\right)  =\xi r^{2}+\frac{\mu}{r}$
\label{V2m1}}

In this appendix, we provide an exact solution of the eigenproblem of the
potential
\begin{equation}
U\left(  r\right)  =\xi r^{2}+\frac{\mu}{r} \label{Ur2rm1}%
\end{equation}
by solving the radial equation directly. This potential has only bound states.

The radial equation of the potential (\ref{Ur2rm1}) reads%
\begin{equation}
\frac{d^{2}u_{l}\left(  r\right)  }{dr^{2}}+\left[  E-\frac{l\left(
l+1\right)  }{r^{2}}-\xi r^{2}-\frac{\mu}{r}\right]  u_{l}\left(  r\right)
=0. \label{radialeq2o1}%
\end{equation}
Using the variable substitution%
\begin{equation}
z=i\xi^{1/4}r \label{zr2o1}%
\end{equation}
and introducing $f_{l}\left(  z\right)  $ by%
\begin{equation}
u_{l}\left(  z\right)  =A_{l}\exp\left(  -\frac{z^{2}}{2}\right)  z^{l+1}%
f_{l}\left(  z\right)
\end{equation}
with $A_{l}$ a constant, we convert the radial equation (\ref{radialeq2o1})
into an equation of $f_{l}\left(  z\right)  $:%
\begin{equation}
zf_{l}^{\prime\prime}\left(  z\right)  +\left(  -2{z}^{2}+2l+2\right)
f_{l}^{\prime}\left(  z\right)  +\left[  \left(  \frac{-E}{\sqrt{\xi}%
}-2l-3\right)  z+\frac{i\mu}{\xi^{1/4}}\right]  f_{l}\left(  z\right)  =0.
\label{eqf2o1}%
\end{equation}
This is a Biconfluent Heun equation \cite{ronveaux1995heun}.

The choice of the boundary condition has been discussed in Ref.
\cite{li2016exact}.

\subsection{The regular solution}

The regular solution is a solution satisfying the boundary condition at $r=0$
\cite{li2016exact}. The regular solution at $r=0$\ should satisfy the boundary
condition $\lim_{r\rightarrow0}u_{l}\left(  r\right)  /r^{l+1}=1$. In this
section, we provide the regular solution of Eq. (\ref{eqf2o1}).

The Biconfluent Heun equation (\ref{eqf2o1}) has two linearly independent
solutions \cite{ronveaux1995heun}%
\begin{align}
y_{l}^{\left(  1\right)  }\left(  z\right)   &  =N\left(  2l+1,0,\frac
{-E}{\sqrt{\xi}},\frac{-2i\mu}{\xi^{1/4}},z\right)  ,\label{yl1z2o1}\\
y_{l}^{\left(  2\right)  }\left(  z\right)   &  =cN\left(  2l+1,0,\frac
{-E}{\sqrt{\xi}},\frac{-2i\mu}{\xi^{1/4}},z\right)  \ln z+\sum_{n\geq0}%
d_{n}z^{n-2l-1}, \label{yl2z2o1}%
\end{align}
where%
\begin{equation}
c=\frac{1}{2l+1}\left[  \frac{-i\mu}{\xi^{1/4}}d_{2l}-\left(  \frac{-E}%
{\sqrt{\xi}}-2l-3\right)  d_{2l-1}\right]
\end{equation}
is a constant with the coefficient $d_{\nu}$ given by the following recurrence
relation,%
\begin{align}
&  d_{-1}=0,\text{ \ }d_{0}=1,\nonumber\\
&  \left(  \nu+2\right)  \left(  \nu+1-2l\right)  d_{\nu+2}+\frac{i\mu}%
{\xi^{1/4}}d_{\nu+1}+\left(  \frac{-E}{\sqrt{\xi}}-2v-1+2l\right)  d_{\nu}=0
\end{align}
and $N(\alpha,\beta,\gamma,\delta,z)$ is the biconfluent Heun function
\cite{ronveaux1995heun,slavyanov2000special,li2016exact}.

The biconfluent Heun function $N\left(  2l+1,0,\frac{-E}{\sqrt{\xi}}%
,\frac{-2i\mu}{\xi^{1/4}},z\right)  $ has an expansion at $z=0$
\cite{ronveaux1995heun}:%
\begin{equation}
N\left(  2l+1,0,\frac{-E}{\sqrt{\xi}},\frac{-2i\mu}{\xi^{1/4}},z\right)
=\sum_{n\geq0}\frac{A_{n}}{\left(  2l+2\right)  _{n}}\frac{z^{n}}{n!},
\end{equation}
where the expansion coefficients is determined by the recurrence relation,%
\begin{align}
A_{0}  &  =1,\text{ \ }A_{1}=-\frac{i\mu}{\xi^{1/4}},\nonumber\\
A_{n+2}  &  =-\frac{i\mu}{\xi^{1/4}}A_{n+1}-\left(  n+1\right)  \left(
n+2l+2\right)  \left(  \frac{-E}{\sqrt{\xi}}-3-2l-2n\right)  A_{n},
\end{align}
and $\left(  a\right)  _{n}=\Gamma\left(  a+n\right)  /\Gamma\left(  a\right)
$ is Pochhammer's symbol.

Only $y_{l}^{\left(  1\right)  }\left(  z\right)  $ satisfies the boundary
condition for the regular solution at $r=0$, so the radial eigenfunction reads%
\begin{align}
u_{l}\left(  z\right)   &  =A_{l}\exp\left(  -\frac{z^{2}}{2}\right)
z^{l+1}y_{l}^{\left(  1\right)  }\left(  z\right) \nonumber\\
&  =A_{l}\exp\left(  -\frac{z^{2}}{2}\right)  z^{l+1}N\left(  2l+1,0,\frac
{-E}{\sqrt{\xi}},\frac{-2i\mu}{\xi^{1/4}},z\right)  .
\end{align}
By Eq. (\ref{zr2o1}), we obtain the regular solution,%
\begin{equation}
u_{l}\left(  r\right)  =A_{l}\exp\left(  \frac{\xi^{1/2}r^{2}}{2}\right)
r^{l+1}N\left(  2l+1,0,\frac{-E}{\sqrt{\xi}},\frac{-2i\mu}{\xi^{1/4}}%
,i\xi^{1/4}r\right)  . \label{regular2o1}%
\end{equation}

\subsection{The irregular solution}

The irregular solution is a solution satisfying the boundary condition at
$r\rightarrow\infty$ \cite{li2016exact}.

The Biconfluent Heun equation (\ref{eqf2o1}) has two linearly independent
irregular solutions \cite{ronveaux1995heun}:%
\begin{align}
B_{l}^{+}\left(  2l+1,0,\frac{-E}{\sqrt{\xi}},\frac{-2i\mu}{\xi^{1/4}%
},z\right)   &  =e^{z^{2}}B_{l}^{+}\left(  2l+1,0,\frac{E}{\sqrt{\xi}}%
,\frac{2\mu}{\xi^{1/4}},-iz\right)  =e^{z^{2}}\left(  -iz\right)  ^{\frac
{1}{2}\left(  \frac{E}{\sqrt{\xi}}-2l-3\right)  }\sum_{n\geq0}\frac{a_{n}%
}{\left(  -iz\right)  ^{n}},\label{f12o1}\\
H_{l}^{+}\left(  2l+1,0,\frac{-E}{\sqrt{\xi}},\frac{-2i\mu}{\xi^{1/4}%
},z\right)   &  =e^{z^{2}}H_{l}^{+}\left(  2l+1,0,\frac{E}{\sqrt{\xi}}%
,\frac{2\mu}{\xi^{1/4}},-iz\right)  =\left(  -iz\right)  ^{-\frac{1}{2}\left(
\frac{E}{\sqrt{\xi}}+2l+3\right)  }\sum_{n\geq0}\frac{e_{n}}{\left(
-iz\right)  ^{n}} \label{f22o1}%
\end{align}
with the expansion coefficients given by the recurrence relation
\[
a_{0}=1,\text{ \ }a_{1}=\frac{\mu}{2\xi^{1/4}},
\]%
\begin{equation}
2\left(  n+2\right)  a_{n+2}-\frac{\mu}{\xi^{1/4}}a_{n+1}+\left[  \frac{E^{2}%
}{4\xi}-\frac{\left(  2l+1\right)  ^{2}}{4}+1-\frac{E}{\sqrt{\xi}}+n\left(
n+2-\frac{E}{\sqrt{\xi}}\right)  \right]  a_{n}=0
\end{equation}
and%
\[
e_{0}=1,\text{ \ }e_{1}=-\frac{\mu}{2\xi^{1/4}},
\]%
\begin{equation}
2\left(  n+2\right)  e_{n+2}+\frac{\mu}{\xi^{1/4}}e_{n+1}-\left[  \frac{E^{2}%
}{4\xi}-\frac{\left(  2l+1\right)  ^{2}}{4}+1+\frac{E}{\sqrt{\xi}}+n\left(
n+2+\frac{E}{\sqrt{\xi}}\right)  \right]  e_{n}=0.
\end{equation}

\subsection{Eigenfunctions and eigenvalues}

To construct the solution, we first express the regular solution
(\ref{regular2o1}) as a linear combination of the two irregular solutions
(\ref{f12o1}) and (\ref{f22o1}).

The regular solution (\ref{regular2o1}), with the relation
\cite{ronveaux1995heun,li2016exact}%

\begin{align}
N\left(  2l+1,0,\frac{-E}{\sqrt{\xi}},-\frac{2i\mu}{\xi^{1/4}},z\right)   &
=K_{1}\left(  2l+1,0,\frac{-E}{\sqrt{\xi}},-\frac{2i\mu}{\xi^{1/4}}\right)
B_{l}^{+}\left(  2l+1,0,\frac{-E}{\sqrt{\xi}},\frac{2i\mu}{\xi^{1/4}%
},z\right)  \nonumber\\
&  +K_{2}\left(  2l+1,0,\frac{-E}{\sqrt{\xi}},-\frac{2i\mu}{\xi^{1/4}}\right)
H_{l}^{+}\left(  2l+1,0,\frac{-E}{\sqrt{\xi}},\frac{2i\mu}{\xi^{1/4}%
},z\right)
\end{align}
and the expansions (\ref{f12o1}) and (\ref{f22o1}), becomes%
\begin{align}
u_{l}\left(  r\right)   &  =A_{l}K_{1}\left(  2l+1,0,\frac{-E}{\sqrt{\xi}%
},-\frac{2i\mu}{\xi^{1/4}}\right)  \exp\left(  -\frac{\xi^{1/2}r^{2}}%
{2}\right)  \exp\left(  \left(  \frac{E}{2\sqrt{\xi}}-\frac{1}{2}\right)
\ln\left(  \xi^{1/4}r\right)  \right)  \sum_{n\geq0}\frac{a_{n}}{\left(
\xi^{1/4}r\right)  ^{n}}\nonumber\\
&  +A_{l}K_{2}\left(  2l+1,0,\frac{-E}{\sqrt{\xi}},-\frac{2i\mu}{\xi^{1/4}%
}\right)  \exp\left(  \frac{\xi^{1/2}r^{2}}{2}\right)  \exp\left(  \left(
-\frac{E}{2\sqrt{\xi}}-\frac{1}{2}\right)  \ln\left(  \xi^{1/4}r\right)
\right)  \sum_{n\geq0}\frac{e_{n}}{\left(  \xi^{1/4}r\right)  ^{n}%
},\label{uexpzf2o1}%
\end{align}
where $K_{1}\left(  2l+1,0,\frac{-E}{\sqrt{\xi}},-\frac{2i\mu}{\xi^{1/4}%
}\right)  $ and $K_{2}\left(  2l+1,0,\frac{-E}{\sqrt{\xi}},-\frac{2i\mu}%
{\xi^{1/4}}\right)  $ are combination coefficients\ and $z=i\xi^{1/4}r$.

The boundary condition of bound states, $\left.  u\left(  r\right)
\right\vert _{r\rightarrow\infty}\rightarrow0$, requires that the coefficient
of the second term must vanish since this term diverges when $r\rightarrow
\infty$, i.e.,
\begin{equation}
K_{2}\left(  2l+1,0,\frac{-E}{\sqrt{\xi}},-\frac{2i\mu}{\xi^{1/4}}\right)  =0,
\label{eigenvalue2o1}%
\end{equation}
where%
\begin{align}
K_{2}\left(  2l+1,0,\frac{-E}{\sqrt{\xi}},-\frac{2i\mu}{\xi^{1/4}}\right)   &
=\frac{\Gamma\left(  2l+2\right)  }{\Gamma\left(  l+\frac{1}{2}+\frac
{E}{2\sqrt{\xi}}\right)  \Gamma\left(  l+\frac{3}{2}-\frac{E}{2\sqrt{\xi}%
}\right)  }\nonumber\\
&  \times J_{l+\frac{3}{2}-\frac{E}{2\sqrt{\xi}}}\left(  l+\frac{1}{2}%
-\frac{E}{2\sqrt{\xi}},0,3l+\frac{3}{2}+\frac{E}{2\sqrt{\xi}},-\frac{2i\mu
}{\xi^{1/4}}\right)
\end{align}
with%
\begin{align}
&  J_{l+\frac{3}{2}-\frac{E}{2\sqrt{\xi}}}\left(  l+\frac{1}{2}-\frac
{E}{2\sqrt{\xi}},0,3l+\frac{3}{2}+\frac{E}{2\sqrt{\xi}},-\frac{2i\mu}%
{\xi^{1/4}}\right) \nonumber\\
&  =\int_{0}^{\infty}x^{\frac{1}{2}\left(  2l+1-\frac{E}{\sqrt{\xi}}\right)
}e^{-x^{2}}N\left(  l+\frac{1}{2}-\frac{E}{2\sqrt{\xi}},0,3l+\frac{3}{2}%
+\frac{E}{2\sqrt{\xi}},-\frac{2i\mu}{\xi^{1/4}},x\right)  dx.
\end{align}

Eq. (\ref{eigenvalue2o1}) is an implicit expression of the eigenvalue.

The eigenfunction, by Eqs. (\ref{uexpzf2o1}) and (\ref{eigenvalue2o1}), reads%
\begin{equation}
u_{l}\left(  r\right)  =A_{l}K_{1}\left(  2l+1,0,\frac{-E}{\sqrt{\xi}}%
,-\frac{2i\mu}{\xi^{1/4}}\right)  \exp\left(  -\frac{\xi^{1/2}r^{2}}%
{2}\right)  \exp\left(  \left(  \frac{E}{2\sqrt{\xi}}-\frac{1}{2}\right)
\ln\left(  \xi^{1/4}r\right)  \right)  \sum_{n\geq0}\frac{a_{n}}{\left(
\xi^{1/4}r\right)  ^{n}}.
\end{equation}

\section{The exact solution of $U\left(  r\right)  =\frac{\xi}{r}+\frac{\mu
}{r^{3/2}}$ \label{Vm1m32}}

In this appendix, we provide an exact solution of the eigenproblem of the
potential
\begin{equation}
U\left(  r\right)  =\frac{\xi}{r}+\frac{\mu}{r^{3/2}}%
\end{equation}
by solving the radial equation directly. This potential has both bound states
and scattering states.

The radial equation reads%
\begin{equation}
\frac{d^{2}u_{l}\left(  r\right)  }{dr^{2}}+\left[  E-\frac{l\left(
l+1\right)  }{r^{2}}-\frac{\xi}{r}-\frac{\mu}{r^{3/2}}\right]  u_{l}\left(
r\right)  =0. \label{radialeqo132}%
\end{equation}
Using the variable substitution%
\begin{equation}
z=\left(  -E\right)  ^{1/4}\left(  -2r\right)  ^{1/2} \label{zro132}%
\end{equation}
and introducing $f_{l}\left(  z\right)  $ by%
\begin{equation}
u_{l}\left(  z\right)  =A_{l}\exp\left(  -\frac{z^{2}}{2}\right)
z^{2l+2}f_{l}\left(  z\right)
\end{equation}
with $A_{l}$ a constant, we convert the radial equation (\ref{radialeqo132})
into an equation of $f_{l}\left(  z\right)  $:%
\begin{equation}
zf_{l}^{\prime\prime}\left(  z\right)  +\left(  -2z^{2}+4l+3\right)
f_{l}^{\prime}\left(  z\right)  +\left[  \left(  \frac{2\xi}{\sqrt{-E}%
}-4l-4\right)  z-\frac{i2\sqrt{2}\mu}{\left(  -E\right)  ^{1/4}}\right]
f_{l}\left(  z\right)  =0. \label{eqfo132}%
\end{equation}
This is a Biconfluent Heun equation \cite{ronveaux1995heun}.

The choice of the boundary condition has been discussed in Ref.
\cite{li2016exact}.

\subsection{The regular solution}

The regular solution is a solution satisfying the boundary condition at $r=0$
\cite{li2016exact}. The regular solution at $r=0$\ should satisfy the boundary
condition $\lim_{r\rightarrow0}u_{l}\left(  r\right)  /r^{l+1}=1$ for both
bound states and scattering states. In this section, we provide the regular
solution of the radial equation, Eq. (\ref{eqfo132}).

The Biconfluent Heun equation (\ref{eqfo132}) has two linearly independent
solutions \cite{ronveaux1995heun}%
\begin{align}
y_{l}^{\left(  1\right)  }\left(  z\right)   &  =N\left(  4l+2,0,\frac{2\xi
}{\sqrt{-E}},\frac{-i4\sqrt{2}\mu}{\left(  -E\right)  ^{1/4}},z\right)  ,\\
y_{l}^{\left(  2\right)  }\left(  z\right)   &  =cN\left(  4l+2,0,\frac{2\xi
}{\sqrt{-E}},\frac{-i4\sqrt{2}\mu}{\left(  -E\right)  ^{1/4}},z\right)  \ln
z+\sum_{n\geq0}d_{n}z^{n-4l-2},
\end{align}
where%
\begin{equation}
c=\frac{1}{4l+2}\left[  \frac{-i2\sqrt{2}\mu}{\left(  -E\right)  ^{1/4}%
}d_{4l+1}-d_{4l}\left(  \frac{2\xi}{\sqrt{-E}}-4l\right)  \right]
\end{equation}
is a constant with the coefficient $d_{\nu}$ given by the following recurrence
relation,%
\begin{align}
&  d_{-1}=0,\text{ \ }d_{0}=1,\nonumber\\
&  \left(  v+2\right)  \left(  v-4l\right)  d_{v+2}+\frac{i2\sqrt{2}\mu
}{\left(  -E\right)  ^{1/4}}d_{v+1}+\left(  \frac{2\xi}{\sqrt{-E}%
}-2v+4l\right)  d_{v}=0
\end{align}
and $N(\alpha,\beta,\gamma,\delta,z)$ is the biconfluent Heun function
\cite{ronveaux1995heun,slavyanov2000special,li2016exact}.

The biconfluent Heun function $N\left(  4l+2,0,\frac{2\xi}{\sqrt{-E}}%
,\frac{-i4\sqrt{2}\mu}{\left(  -E\right)  ^{1/4}},z\right)  $ has an expansion
at $z=0$ \cite{ronveaux1995heun}:%
\begin{equation}
N\left(  4l+2,0,\frac{2\xi}{\sqrt{-E}},\frac{-i4\sqrt{2}\mu}{\left(
-E\right)  ^{1/4}},z\right)  =\sum_{n\geq0}\frac{A_{n}}{\left(  4l+3\right)
_{n}}\frac{z^{n}}{n!},
\end{equation}
where the expansion coefficients is determined by the recurrence relation,%
\begin{align}
A_{0}  &  =1,\text{ \ }A_{1}=-\frac{i2\sqrt{2}\mu}{\left(  -E\right)  ^{1/4}%
},\nonumber\\
A_{n+2}  &  =-\frac{i2\sqrt{2}\mu}{\left(  -E\right)  ^{1/4}}A_{n+1}-\left(
n+1\right)  \left(  n+4l+3\right)  \left(  \frac{2\xi}{\sqrt{-E}%
}-4l-4-2n\right)  A_{n},
\end{align}
and $\left(  a\right)  _{n}=\Gamma\left(  a+n\right)  /\Gamma\left(  a\right)
$ is Pochhammer's symbol.

Only $y_{l}^{\left(  1\right)  }\left(  z\right)  $ satisfies the boundary
condition for the regular solution at $r=0$, so the radial eigenfunction reads%
\begin{align}
u_{l}\left(  z\right)   &  =A_{l}\exp\left(  -\frac{z^{2}}{2}\right)
z^{2l+2}y_{l}^{\left(  1\right)  }\left(  z\right) \nonumber\\
&  =A_{l}\exp\left(  -\frac{z^{2}}{2}\right)  z^{2l+2}N\left(  4l+2,0,\frac
{2\xi}{\sqrt{-E}},\frac{-i4\sqrt{2}\mu}{\left(  -E\right)  ^{1/4}},z\right)  .
\end{align}
By Eq. (\ref{zro132}), we obtain the regular solution,%
\begin{equation}
u_{l}\left(  r\right)  =A_{l}\exp\left(  \left(  -E\right)  ^{1/2}r\right)
r^{l+1}N\left(  4l+2,0,\frac{2\xi}{\sqrt{-E}},\frac{-i4\sqrt{2}\mu}{\left(
-E\right)  ^{1/4}},\left(  -E\right)  ^{1/4}\left(  -2r\right)  ^{1/2}\right)
. \label{regularo132}%
\end{equation}

\subsection{The irregular solution}

The irregular solution is a solution satisfying the boundary condition at
$r\rightarrow\infty$ \cite{li2016exact}. The boundary conditions for bound
states and scattering states at $r\rightarrow\infty$ are different.

The Biconfluent Heun equation (\ref{eqfo132}) has two linearly independent
irregular solutions \cite{ronveaux1995heun}:%
\begin{align}
B_{l}^{+}\left(  4l+2,0,\frac{2\xi}{\sqrt{-E}},\frac{-i4\sqrt{2}\mu}{\left(
-E\right)  ^{1/4}},z\right)   &  =e^{z^{2}}B_{l}^{+}\left(  4l+2,0,-\frac
{2\xi}{\sqrt{-E}},\frac{4\sqrt{2}\mu}{\left(  -E\right)  ^{1/4}},-iz\right)
\nonumber\\
&  =e^{z^{2}}\left(  -iz\right)  ^{-\frac{\xi}{\sqrt{-E}}-2l-2}\sum_{n\geq
0}\frac{a_{n}}{\left(  -iz\right)  ^{n}}, \label{f1o132}%
\end{align}%
\begin{align}
H_{l}^{+}\left(  4l+2,0,\frac{2\xi}{\sqrt{-E}},\frac{-i4\sqrt{2}\mu}{\left(
-E\right)  ^{1/4}},z\right)   &  =e^{z^{2}}H_{l}^{+}\left(  4l+2,0,-\frac
{2\xi}{\sqrt{-E}},\frac{4\sqrt{2}\mu}{\left(  -E\right)  ^{1/4}},-iz\right)
\nonumber\\
&  =\left(  -iz\right)  ^{\frac{\xi}{\sqrt{-E}}-2l-2}\sum_{n\geq0}\frac{e_{n}%
}{\left(  -iz\right)  ^{n}} \label{f2o132}%
\end{align}
with the expansion coefficients given by the recurrence relation
\[
a_{0}=1,\text{ \ }a_{1}=\frac{\sqrt{2}\mu}{\left(  -E\right)  ^{1/4}},
\]%
\begin{equation}
2\left(  n+2\right)  a_{n+2}-\frac{2\sqrt{2}\mu}{\left(  -E\right)  ^{1/4}%
}a_{n+1}+\left[  -\frac{\xi^{2}}{E}-\left(  2l+1\right)  ^{2}+1+\frac{2\xi
}{\sqrt{-E}}+n\left(  n+2+\frac{2\xi}{\sqrt{-E}}\right)  \right]  a_{n}=0
\end{equation}
and%
\[
e_{0}=1,\text{ \ }e_{1}=-\frac{\sqrt{2}\mu}{\left(  -E\right)  ^{1/4}},
\]%
\begin{equation}
2\left(  n+2\right)  e_{n+2}+\frac{2\sqrt{2}\mu}{\left(  -E\right)  ^{1/4}%
}e_{n+1}-\left[  -\frac{\xi^{2}}{E}-\left(  2l+1\right)  ^{2}+1-\frac{2\xi
}{\sqrt{-E}}+n\left(  n+2-\frac{2\xi}{\sqrt{-E}}\right)  \right]  e_{n}=0.
\end{equation}

\subsection{Bound states and scattering states}

\subsubsection{The bound state}

To construct the solution, we first express the regular solution
(\ref{regularo132}) as a linear combination of the two irregular solutions
(\ref{f1o132}) and (\ref{f2o132}).

The regular solution (\ref{regularo132}), with the relation
\cite{ronveaux1995heun,li2016exact}%

\begin{align}
&  N\left(  4l+2,0,\frac{2\xi}{\sqrt{-E}},\frac{-i4\sqrt{2}\mu}{\left(
-E\right)  ^{1/4}},z\right) \nonumber\\
&  =K_{1}\left(  4l+2,0,\frac{2\xi}{\sqrt{-E}},\frac{-i4\sqrt{2}\mu}{\left(
-E\right)  ^{1/4}}\right)  B_{l}^{+}\left(  4l+2,0,\frac{2\xi}{\sqrt{-E}%
},\frac{-i4\sqrt{2}\mu}{\left(  -E\right)  ^{1/4}},z\right) \nonumber\\
&  +K_{2}\left(  4l+2,0,\frac{2\xi}{\sqrt{-E}},\frac{-i4\sqrt{2}\mu}{\left(
-E\right)  ^{1/4}}\right)  H_{l}^{+}\left(  4l+2,0,\frac{2\xi}{\sqrt{-E}%
},\frac{-i4\sqrt{2}\mu}{\left(  -E\right)  ^{1/4}},z\right)
\end{align}
and the expansions (\ref{f1o132}) and (\ref{f2o132}), become%
\begin{align}
u_{l}\left(  r\right)   &  =A_{l}K_{1}\left(  4l+2,0,\frac{2\xi}{\sqrt{-E}%
},\frac{-i4\sqrt{2}\mu}{\left(  -E\right)  ^{1/4}}\right)  \exp\left(
-\left(  -E\right)  ^{1/2}r-\frac{\xi}{2\sqrt{-E}}\ln\left(  2\left(
-E\right)  ^{1/2}r\right)  \right) \nonumber\\
&  \times\sum_{n\geq0}\frac{a_{n}}{\left[  \left(  -E\right)  ^{1/4}\left(
2r\right)  ^{1/2}\right]  ^{n}}\nonumber\\
&  +A_{l}K_{2}\left(  4l+2,0,\frac{2\xi}{\sqrt{-E}},\frac{-i4\sqrt{2}\mu
}{\left(  -E\right)  ^{1/4}}\right)  \exp\left(  \left(  -E\right)
^{1/2}r+\frac{\xi}{2\sqrt{-E}}\ln\left(  2\left(  -E\right)  ^{1/2}r\right)
\right) \nonumber\\
&  \times\sum_{n\geq0}\frac{e_{n}}{\left[  \left(  -E\right)  ^{1/4}\left(
2r\right)  ^{1/2}\right]  ^{n}}, \label{uexpzfo132}%
\end{align}
where $K_{1}\left(  4l+2,0,\frac{2\xi}{\sqrt{-E}},\frac{-i4\sqrt{2}\mu
}{\left(  -E\right)  ^{1/4}}\right)  $ and $K_{2}\left(  4l+2,0,\frac{2\xi
}{\sqrt{-E}},\frac{-i4\sqrt{2}\mu}{\left(  -E\right)  ^{1/4}}\right)  $ are
combination coefficients\ and $z=\left(  -E\right)  ^{1/4}\left(  -2r\right)
^{1/2}$.

The boundary condition of bound states, $\left.  u\left(  r\right)
\right\vert _{r\rightarrow\infty}\rightarrow0$, requires that the coefficient
of the second term must vanish since this term diverges when $r\rightarrow
\infty$, i.e.,
\begin{equation}
K_{2}\left(  4l+2,0,\frac{2\xi}{\sqrt{-E}},\frac{-i4\sqrt{2}\mu}{\left(
-E\right)  ^{1/4}}\right)  =0, \label{eigenvalueo132}%
\end{equation}
where%
\begin{align}
K_{2}\left(  4l+2,0,\frac{2\xi}{\sqrt{-E}},\frac{-i4\sqrt{2}\mu}{\left(
-E\right)  ^{1/4}}\right)   &  =\frac{\Gamma\left(  4l+3\right)  }%
{\Gamma\left(  2l+1-\frac{\xi}{\sqrt{-E}}\right)  \Gamma\left(  2l+2+\frac
{\xi}{\sqrt{-E}}\right)  }\nonumber\\
&  \times J_{2l+2+\xi/\sqrt{-E}}\left(  2l+1+\frac{\xi}{\sqrt{-E}%
},0,6l+3-\frac{\xi}{\sqrt{-E}},\frac{-i4\sqrt{2}\mu}{\left(  -E\right)
^{1/4}}\right)
\end{align}
with%
\begin{align}
&  J_{2l+2+\xi/\sqrt{-E}}\left(  2l+1+\frac{\xi}{\sqrt{-E}},0,6l+3-\frac{\xi
}{\sqrt{-E}},\frac{-i4\sqrt{2}\mu}{\left(  -E\right)  ^{1/4}}\right)
\nonumber\\
&  =\int_{0}^{\infty}x^{2l+1+\frac{\xi}{\sqrt{-E}}}e^{-x^{2}}N\left(
2l+1+\frac{\xi}{\sqrt{-E}},0,6l+3-\frac{\xi}{\sqrt{-E}},\frac{-i4\sqrt{2}\mu
}{\left(  -E\right)  ^{1/4}},x\right)  dx.
\end{align}

Eq. (\ref{eigenvalueo132}) is an implicit expression of the eigenvalue.

The eigenfunction, by Eqs. (\ref{uexpzfo132}) and (\ref{eigenvalueo132}),
reads%
\begin{align}
u_{l}\left(  r\right)   &  =A_{l}K_{1}\left(  4l+2,0,\frac{2\xi}{\sqrt{-E}%
},\frac{-i4\sqrt{2}\mu}{\left(  -E\right)  ^{1/4}}\right) \nonumber\\
&  \times\exp\left(  -\left(  -E\right)  ^{1/2}r-\frac{\xi}{2\sqrt{-E}}%
\ln\left(  2\left(  -E\right)  ^{1/2}r\right)  \right)  \sum_{n\geq0}%
\frac{a_{n}}{\left[  \left(  -E\right)  ^{1/4}\left(  2r\right)
^{1/2}\right]  ^{n}}.
\end{align}

\subsubsection{The scattering state}

For scattering states, $E>0$, introduce%
\begin{equation}
E=k^{2}.
\end{equation}

The singularity of the $S$-matrix on the positive imaginary axis corresponds
to the eigenvalues of bound states \cite{joachain1975quantum}, so the zero of
$K_{2}\left(  4l+2,0,i\frac{2\xi}{k},\frac{4\sqrt{-2i}\mu}{\sqrt{k}}\right)  $
on the positive imaginary is just the singularity of the $S$-matrix.
Considering that the $S$-matrix is unitary, i.e.,%
\begin{equation}
S_{l}=e^{2i\delta_{l}}, \label{SMo132}%
\end{equation}
we have%
\begin{equation}
S_{l}\left(  k\right)  =\frac{K_{2}^{\ast}\left(  4l+2,0,i\frac{2\xi}{k}%
,\frac{4\sqrt{-2i}\mu}{\sqrt{k}}\right)  }{K_{2}\left(  4l+2,0,i\frac{2\xi}%
{k},\frac{4\sqrt{-2i}\mu}{\sqrt{k}}\right)  }=\frac{K_{2}\left(
4l+2,0,-i\frac{2\xi}{k},\frac{4\sqrt{2i}\mu}{\sqrt{k}}\right)  }{K_{2}\left(
4l+2,0,i\frac{2\xi}{k},\frac{4\sqrt{-2i}\mu}{\sqrt{k}}\right)  }.
\label{SM1o132}%
\end{equation}

The scattering wave function can be constructed with the help of the
$S$-matrix. The scattering wave function can be expressed as a linear
combination of the radially ingoing wave $u_{in}\left(  r\right)  $ and the
radially outgoing wave $u_{out}\left(  r\right)  $, which are conjugate to
each other, i.e., \cite{joachain1975quantum}%
\begin{equation}
u_{l}\left(  r\right)  =A_{l}\left[  \left(  -1\right)  ^{l+1}u_{in}\left(
r\right)  +S_{l}\left(  k\right)  u_{out}\left(  r\right)  \right]  .
\label{inouto132}%
\end{equation}
From Eq. (\ref{uexpzfo132}), we have%
\begin{align}
u_{in}\left(  r\right)   &  =\exp\left(  -ikr+i\frac{\xi}{2k}\ln2kr\right)
\sum_{n\geq0}\frac{e_{n}}{\left(  -2ikr\right)  ^{n/2}},\nonumber\\
u_{out}\left(  r\right)   &  =\exp\left(  ikr-i\frac{\xi}{2k}\ln2kr\right)
\sum_{n\geq0}\frac{e_{n}^{\ast}}{\left(  2ikr\right)  ^{n/2}}.
\end{align}
Then by Eq. (\ref{SM1o132}), we obtain the scattering wave function,%
\begin{align}
u_{l}\left(  r\right)   &  =A_{l}\left[  \left(  -1\right)  ^{l+1}\exp\left(
-ikr+i\frac{\xi}{2k}\ln2kr\right)  \sum_{n\geq0}\frac{e_{n}}{\left(
-2ikr\right)  ^{n/2}}\right. \nonumber\\
&  +\left.  \frac{K_{2}\left(  4l+2,0,-i\frac{2\xi}{k},\frac{4\sqrt{2i}\mu
}{\sqrt{k}}\right)  }{K_{2}\left(  4l+2,0,i\frac{2\xi}{k},\frac{4\sqrt{-2i}%
\mu}{\sqrt{k}}\right)  }\exp\left(  ikr-i\frac{\xi}{2k}\ln2kr\right)
\sum_{n\geq0}\frac{e_{n}^{\ast}}{\left(  2ikr\right)  ^{n/2}}\right]  .
\end{align}
Taking $r\rightarrow\infty$, we have
\begin{align}
&  u_{l}\left(  r\right)  \overset{r\rightarrow\infty}{\sim}A_{l}\left[
\left(  -1\right)  ^{l+1}\exp\left(  -ikr+i\frac{\xi}{2k}\ln2kr\right)
+\frac{K_{2}\left(  4l+2,0,-i\frac{2\xi}{k},\frac{4\sqrt{2i}\mu}{\sqrt{k}%
}\right)  }{K_{2}\left(  4l+2,0,i\frac{2\xi}{k},\frac{4\sqrt{-2i}\mu}{\sqrt
{k}}\right)  }\exp\left(  ikr-i\frac{\xi}{2k}\ln2kr\right)  \right]
\nonumber\\
&  =A_{l}e^{i\delta_{l}}\sin\left(  kr-\frac{\xi}{2k}\ln2kr+\delta_{l}%
-\frac{l\pi}{2}\right)  .
\end{align}
By Eqs. (\ref{SMo132}) and (\ref{SM1o132}), we obtain the scattering phase
shift%
\begin{equation}
\delta_{l}=-\arg K_{2}\left(  4l+2,0,i\frac{2\xi}{k},\frac{4\sqrt{-2i}\mu
}{\sqrt{k}}\right)  .
\end{equation}

\section{The exact solution of $U\left(  r\right)  =\frac{\xi}{r^{2/3}}+\mu
r^{2/3}$\label{Vm2323}}

In this appendix, we provide an exact solution of the eigenproblem of the
potential
\begin{equation}
U\left(  r\right)  =\frac{\xi}{r^{2/3}}+\mu r^{2/3}%
\end{equation}
by solving the radial equation directly. This potential has only bound states.

The radial equation reads%
\begin{equation}
\frac{d^{2}u_{l}\left(  r\right)  }{dr^{2}}+\left[  E-\frac{l\left(
l+1\right)  }{r^{2}}-\frac{\xi}{r^{2/3}}-\mu r^{2/3}\right]  u_{l}\left(
r\right)  =0. \label{radialeq23o23}%
\end{equation}
Using the variable substitution%
\begin{equation}
z=i\frac{\sqrt{6}}{2}\mu^{1/4}r^{2/3} \label{zr23o23}%
\end{equation}
and introducing $f_{l}\left(  z\right)  $ by%
\begin{equation}
u_{l}\left(  z\right)  =A_{l}\exp\left(  -\frac{z^{2}}{2}-\frac{\beta}%
{2}z\right)  z^{3\left(  l+1\right)  /2}f_{l}\left(  z\right)
\end{equation}
with $A_{l}$ a constant, we convert the radial equation (\ref{radialeq23o23})
into an equation of $f_{l}\left(  z\right)  $:%
\begin{equation}
zf_{l}^{\prime\prime}\left(  z\right)  +\left(  3l+\frac{5}{2}+\frac{i\sqrt
{6}\varepsilon}{2\mu^{3/4}}z-2z^{2}\right)  f_{l}^{\prime}\left(  z\right)
+\left[  \left(  {\frac{3\xi}{2{\mu}^{1/2}}}-{\frac{3E^{2}}{8{\mu}^{3/2}}%
}-3l-\frac{7}{2}\right)  z+\frac{i\sqrt{6}E}{4{\mu}^{3/4}}\left(  3l+\frac
{5}{2}\right)  \right]  f_{l}\left(  z\right)  =0. \label{eqf23o23}%
\end{equation}
This is a Biconfluent Heun equation \cite{ronveaux1995heun}.

The choice of the boundary condition has been discussed in Ref.
\cite{li2016exact}.

\subsection{The regular solution}

The regular solution is a solution satisfying the boundary condition at $r=0$
\cite{li2016exact}. The regular solution at $r=0$\ should satisfy the boundary
condition $\lim_{r\rightarrow0}u_{l}\left(  r\right)  /r^{l+1}=1$. In this
section, we provide the regular solution of Eq. (\ref{eqf23o23}).

The Biconfluent Heun equation (\ref{eqf23o23}) has two linearly independent
solutions \cite{ronveaux1995heun}%
\begin{align}
y_{l}^{\left(  1\right)  }\left(  z\right)   &  =N\left(  3l+\frac{3}%
{2},{\frac{-i\sqrt{6}E}{2{\mu}^{3/4}}},{\frac{3\xi}{2{\mu}^{1/2}}}%
-{\frac{3E^{2}}{8{\mu}^{3/2}}},0,z\right)  ,\\
y_{l}^{\left(  2\right)  }\left(  z\right)   &  =cN\left(  3l+\frac{3}%
{2},{\frac{-i\sqrt{6}E}{2{\mu}^{3/4}}},{\frac{3\xi}{2{\mu}^{1/2}}}%
-{\frac{3E^{2}}{8{\mu}^{3/2}}},0,z\right)  \ln z+\sum_{n\geq0}d_{n}%
z^{n-3l-\frac{3}{2}},
\end{align}
where%
\begin{equation}
c=\frac{1}{\alpha}\left[  \frac{-i\sqrt{6}E}{4{\mu}^{3/4}}\left(  3l+\frac
{1}{2}\right)  d_{3l+1/2}-d_{3l-1/2}\left(  {\frac{3\xi}{2{\mu}^{1/2}}}%
-{\frac{3E^{2}}{8{\mu}^{3/2}}}+\frac{1}{2}-3l\right)  \right]
\end{equation}
is a constant with the coefficient $d_{\nu}$ given by the following recurrence
relation,%
\begin{align}
&  d_{-1}=0,\text{ \ }d_{0}=1,\nonumber\\
&  \left(  v+2\right)  \left(  v+\frac{1}{2}-3l\right)  d_{v+2}+{\frac
{i\sqrt{6}E}{4{\mu}^{3/4}}}\left(  2v+\frac{3}{2}-3l\right)  d_{v+1}+\left(
{\frac{3\xi}{2{\mu}^{1/2}}}-{\frac{3E^{2}}{8{\mu}^{3/2}}}-2v+3l-\frac{3}%
{2}\right)  d_{v}=0
\end{align}
and $N(\alpha,\beta,\gamma,\delta,z)$ is the biconfluent Heun function
\cite{ronveaux1995heun,slavyanov2000special,li2016exact}.

The biconfluent Heun function $N\left(  3l+\frac{3}{2},{\frac{-i\sqrt{6}%
E}{2{\mu}^{3/4}}},{\frac{3\xi}{2{\mu}^{1/2}}}-{\frac{3E^{2}}{8{\mu}^{3/2}}%
},0,z\right)  $ has an expansion at $z=0$ \cite{ronveaux1995heun}:%
\begin{equation}
N\left(  3l+\frac{3}{2},{\frac{-i\sqrt{6}E}{2{\mu}^{3/4}}},{\frac{3\xi}{2{\mu
}^{1/2}}}-{\frac{3E^{2}}{8{\mu}^{3/2}}},0,z\right)  =\sum_{n\geq0}\frac{A_{n}%
}{\left(  3l+5/2\right)  _{n}}\frac{z^{n}}{n!},
\end{equation}
where the expansion coefficients is determined by the recurrence relation,%
\begin{align}
A_{0}  &  =1,\text{ \ }A_{1}=\frac{-i\sqrt{6}E}{4{\mu}^{3/4}}\left(
3l+\frac{5}{2}\right)  ,\nonumber\\
A_{n+2}  &  =\left[  \frac{-i\sqrt{6}E}{2{\mu}^{3/4}}\left(  n+1\right)
+\frac{-i\sqrt{6}E}{4{\mu}^{3/4}}\left(  3l+\frac{5}{2}\right)  \right]
A_{n+1}\nonumber\\
&  -\left(  n+1\right)  \left(  n+3l+\frac{5}{2}\right)  \left(  {\frac{3\xi
}{2{\mu}^{1/2}}}-{\frac{3E^{2}}{8{\mu}^{3/2}}}-3l-\frac{7}{2}-2n\right)
A_{n},
\end{align}
and $\left(  a\right)  _{n}=\Gamma\left(  a+n\right)  /\Gamma\left(  a\right)
$ is Pochhammer's symbol.

Only $y_{l}^{\left(  1\right)  }\left(  z\right)  $ satisfies the boundary
condition for the regular solution at $r=0$, so the radial eigenfunction reads%
\begin{align}
u_{l}\left(  z\right)   &  =A_{l}\exp\left(  -\frac{z^{2}}{2}-\frac{\beta}%
{2}z\right)  z^{2\left(  l+1\right)  /3}y_{l}^{\left(  1\right)  }\left(
z\right) \nonumber\\
&  =A_{l}\exp\left(  -\frac{z^{2}}{2}-\frac{\beta}{2}z\right)  z^{2\left(
l+1\right)  /3}N\left(  3l+\frac{3}{2},{\frac{-i\sqrt{6}E}{2{\mu}^{3/4}}%
},{\frac{3\xi}{2{\mu}^{1/2}}}-{\frac{3E^{2}}{8{\mu}^{3/2}}},0,z\right)  .
\end{align}
By Eq. (\ref{zr23o23}), we obtain the regular solution,%
\begin{equation}
u_{l}\left(  r\right)  =A_{l}\exp\left(  \frac{3}{4}\mu^{1/2}r^{4/3}-\frac
{3E}{4}{\mu}^{-1/2}r^{2/3}\right)  r^{l+1}N\left(  3l+\frac{3}{2}%
,{\frac{-i\sqrt{6}E}{2{\mu}^{3/4}}},{\frac{3\xi}{2{\mu}^{1/2}}}-{\frac{3E^{2}%
}{8{\mu}^{3/2}}},0,i\frac{\sqrt{6}}{2}\mu^{1/4}{r}^{2/3}\right)  .
\label{regular23o23}%
\end{equation}

\subsection{The irregular solution}

The irregular solution is a solution satisfying the boundary condition at
$r\rightarrow\infty$ \cite{li2016exact}.

The Biconfluent Heun equation (\ref{eqf23o23}) has two linearly independent
irregular solutions \cite{ronveaux1995heun}:%
\begin{align}
&  B_{l}^{+}\left(  3l+\frac{3}{2},{\frac{-i\sqrt{6}E}{2{\mu}^{3/4}}}%
,{\frac{3\xi}{2{\mu}^{1/2}}}-{\frac{3E^{2}}{8{\mu}^{3/2}}},0,z\right)
\nonumber\\
&  =\exp\left(  {\frac{-i\sqrt{6}E}{2{\mu}^{3/4}}}z+z^{2}\right)  B_{l}%
^{+}\left(  3l+\frac{3}{2},-{\frac{\sqrt{6}E}{2{\mu}^{3/4}}},\frac{3E^{2}%
}{8{\mu}^{3/2}}-{\frac{3\xi}{2{\mu}^{1/2}}},0,-iz\right) \nonumber\\
&  =\exp\left(  {\frac{-i\sqrt{6}E}{2{\mu}^{3/4}}}z+z^{2}\right)  \left(
-iz\right)  ^{\frac{1}{2}\left(  \frac{3E^{2}}{8{\mu}^{3/2}}-{\frac{3\xi
}{2{\mu}^{1/2}}-}3l-\frac{7}{2}\right)  }\sum_{n\geq0}\frac{a_{n}}{\left(
-iz\right)  ^{n}}, \label{f123o23}%
\end{align}%
\begin{align}
&  H_{l}^{+}\left(  3l+\frac{3}{2},{\frac{-i\sqrt{6}E}{2{\mu}^{3/4}}}%
,{\frac{3\xi}{2{\mu}^{1/2}}}-{\frac{3E^{2}}{8{\mu}^{3/2}}},0,z\right)
\nonumber\\
&  =\exp\left(  {\frac{-i\sqrt{6}E}{2{\mu}^{3/4}}}z+z^{2}\right)  H_{l}%
^{+}\left(  3l+\frac{3}{2},-{\frac{\sqrt{6}E}{2{\mu}^{3/4}}},\frac{3E^{2}%
}{8{\mu}^{3/2}}-{\frac{3\xi}{2{\mu}^{1/2}}},0,-iz\right) \nonumber\\
&  =\left(  -iz\right)  ^{-\frac{1}{2}\left(  \frac{3E^{2}}{8{\mu}^{3/2}%
}-{\frac{3\xi}{2{\mu}^{1/2}}}+3l+\frac{7}{2}\right)  }\sum_{n\geq0}\frac
{e_{n}}{\left(  -iz\right)  ^{n}} \label{f223o23}%
\end{align}
with the expansion coefficients given by the recurrence relation
\[
a_{0}=1,\text{ \ }a_{1}=-{\frac{\sqrt{6}E}{8{\mu}^{3/4}}}\left(  \frac{3E^{2}%
}{8{\mu}^{3/2}}-{\frac{3\xi}{2{\mu}^{1/2}}}-1\right)  ,
\]%
\begin{align}
&  2\left(  n+2\right)  a_{n+2}-{\frac{\sqrt{6}E}{2{\mu}^{3/4}}}\left(
\frac{3}{2}-\frac{3E^{2}}{16{\mu}^{3/2}}+{\frac{3\xi}{4{\mu}^{1/2}}}+n\right)
a_{n+1}\nonumber\\
&  +\left[  \left(  \frac{3E^{2}}{16{\mu}^{3/2}}-{\frac{3\xi}{4{\mu}^{1/2}}%
}\right)  ^{2}-\frac{1}{4}\left(  3l+\frac{3}{2}\right)  ^{2}+1-\frac{3E^{2}%
}{8{\mu}^{3/2}}+{\frac{3\xi}{2{\mu}^{1/2}}}+n\left(  n+2-\frac{3E^{2}}{8{\mu
}^{3/2}}+{\frac{3\xi}{2{\mu}^{1/2}}}\right)  \right]  a_{n}=0
\end{align}
and%
\[
e_{0}=1,\text{ \ }e_{1}={\frac{\sqrt{6}E}{8{\mu}^{3/4}}}\left(  \frac{3E^{2}%
}{8{\mu}^{3/2}}-{\frac{3\xi}{2{\mu}^{1/2}}}+1\right)  ,
\]%
\begin{align}
&  2\left(  n+2\right)  e_{n+2}+\beta\left(  \frac{3}{2}+\frac{3E^{2}}{16{\mu
}^{3/2}}-\frac{3\xi}{4{\mu}^{1/2}}+n\right)  e_{n+1}\nonumber\\
&  -\left[  \left(  \frac{3E^{2}}{16{\mu}^{3/2}}-{\frac{3\xi}{4{\mu}^{1/2}}%
}\right)  ^{2}-\frac{1}{4}\left(  3l+\frac{3}{2}\right)  ^{2}+1+\frac{3E^{2}%
}{8{\mu}^{3/2}}-{\frac{3\xi}{2{\mu}^{1/2}}}+n\left(  n+2+\frac{3E^{2}}{8{\mu
}^{3/2}}-{\frac{3\xi}{2{\mu}^{1/2}}}\right)  \right]  e_{n}=0.
\end{align}

\subsection{Eigenfunctions and eigenvalues}

To construct the solution, we first express the regular solution
(\ref{regularo132}) as a linear combination of the two irregular solutions
(\ref{f123o23}) and (\ref{f223o23}).

The regular solution (\ref{regular23o23}), with the relation
\cite{ronveaux1995heun,li2016exact}%
\begin{align}
&  N\left(  3l+\frac{3}{2},{\frac{-i\sqrt{6}E}{2{\mu}^{3/4}}},{\frac{3\xi
}{2{\mu}^{1/2}}}-{\frac{3E^{2}}{8{\mu}^{3/2}}},0,z\right)  \nonumber\\
&  =K_{1}\left(  3l+\frac{3}{2},{\frac{-i\sqrt{6}E}{2{\mu}^{3/4}}},{\frac
{3\xi}{2{\mu}^{1/2}}}-{\frac{3E^{2}}{8{\mu}^{3/2}}},0\right)  B_{l}^{+}\left(
3l+\frac{3}{2},{\frac{-i\sqrt{6}E}{2{\mu}^{3/4}}},{\frac{3\xi}{2{\mu}^{1/2}}%
}-{\frac{3E^{2}}{8{\mu}^{3/2}}},0,z\right)  \nonumber\\
&  +K_{2}\left(  3l+\frac{3}{2},{\frac{-i\sqrt{6}E}{2{\mu}^{3/4}}},{\frac
{3\xi}{2{\mu}^{1/2}}}-{\frac{3E^{2}}{8{\mu}^{3/2}}},0\right)  H_{l}^{+}\left(
3l+\frac{3}{2},{\frac{-i\sqrt{6}E}{2{\mu}^{3/4}}},{\frac{3\xi}{2{\mu}^{1/2}}%
}-{\frac{3E^{2}}{8{\mu}^{3/2}}},0,z\right)
\end{align}
and the expansions (\ref{f123o23}) and (\ref{f223o23}), becomes%
\begin{align}
u_{l}\left(  r\right)   &  =A_{l}K_{1}\left(  3l+\frac{3}{2},{\frac{-i\sqrt
{6}E}{2{\mu}^{3/4}}},{\frac{3\xi}{2{\mu}^{1/2}}}-{\frac{3E^{2}}{8{\mu}^{3/2}}%
},0\right)  \nonumber\\
&  \times\exp\left(  -\frac{3\mu^{1/2}}{4}r^{4/3}+\frac{3E}{4{\mu}^{1/2}%
}r^{2/3}\right)  r^{\frac{E^{2}}{8{\mu}^{3/2}}-{\frac{\xi}{2{\mu}^{1/2}}%
}-\frac{1}{6}}\sum_{n\geq0}\frac{a_{n}}{\left(  \frac{\sqrt{6}}{2}\mu
^{1/4}r^{2/3}\right)  ^{n}}\nonumber\\
&  +A_{l}K_{2}\left(  3l+\frac{3}{2},{\frac{-i\sqrt{6}E}{2{\mu}^{3/4}}}%
,{\frac{3\xi}{2{\mu}^{1/2}}}-{\frac{3E^{2}}{8{\mu}^{3/2}}},0\right)
\nonumber\\
&  \times\exp\left(  \frac{3\mu^{1/2}}{4}r^{4/3}-\frac{3E}{4{\mu}^{3/4}}%
\mu^{1/4}r^{2/3}\right)  r^{-\left(  \frac{E^{2}}{8{\mu}^{3/2}}-{\frac{\xi
}{2{\mu}^{1/2}}}+\frac{1}{6}\right)  }\sum_{n\geq0}\frac{e_{n}}{\left(
\frac{\sqrt{6}}{2}\mu^{1/4}r^{2/3}\right)  ^{n}}\label{uexpzf23o23}%
\end{align}
where $K_{1}\left(  3l+\frac{3}{2},{\frac{-i\sqrt{6}E}{2{\mu}^{3/4}}}%
,{\frac{3\xi}{2{\mu}^{1/2}}}-{\frac{3E^{2}}{8{\mu}^{3/2}}},0\right)  $ and
$K_{2}\left(  3l+\frac{3}{2},{\frac{-i\sqrt{6}E}{2{\mu}^{3/4}}},{\frac{3\xi
}{2{\mu}^{1/2}}}-{\frac{3E^{2}}{8{\mu}^{3/2}}},0\right)  $ are combination
coefficients\ and $z=i\frac{\sqrt{6}}{2}\mu^{1/4}r^{2/3}$.

The boundary condition of bound states, $\left.  u\left(  r\right)
\right\vert _{r\rightarrow\infty}\rightarrow0$, requires that the coefficient
of the second term must vanish since this term diverges when $r\rightarrow
\infty$, i.e.,
\begin{equation}
K_{2}\left(  3l+\frac{3}{2},{\frac{-i\sqrt{6}E}{2{\mu}^{3/4}}},{\frac{3\xi
}{2{\mu}^{1/2}}}-{\frac{3E^{2}}{8{\mu}^{3/2}}},0\right)  =0,
\label{eigenvalue23023}%
\end{equation}
where%
\begin{align}
&  K_{2}\left(  3l+\frac{3}{2},{\frac{-i\sqrt{6}E}{2{\mu}^{3/4}}},{\frac{3\xi
}{2{\mu}^{1/2}}}-{\frac{3E^{2}}{8{\mu}^{3/2}}},0\right) \nonumber\\
&  =\frac{\Gamma\left(  3l+\frac{5}{2}\right)  }{\Gamma\left(  \frac{3l}%
{2}+\frac{3}{4}-{\frac{3\xi}{4{\mu}^{1/2}}}+{\frac{3E^{2}}{16{\mu}^{3/2}}%
}\right)  \Gamma\left(  \frac{3l}{2}+\frac{7}{4}+{\frac{3\xi}{4{\mu}^{1/2}}%
}-{\frac{3E^{2}}{16{\mu}^{3/2}}}\right)  }\nonumber\\
&  \times J_{\frac{3l}{2}+\frac{7}{4}+{\frac{3\xi}{4{\mu}^{1/2}}}%
-{\frac{3E^{2}}{16{\mu}^{3/2}}}}\left(  \frac{3l}{2}+\frac{3}{4}+{\frac{3\xi
}{4{\mu}^{1/2}}}-{\frac{3E^{2}}{16{\mu}^{3/2}}},{\frac{-i\sqrt{6}E}{2{\mu
}^{3/4}}},\frac{9l}{2}+\frac{9}{4}\right. \nonumber\\
&  -\left.  {\frac{3\xi}{4{\mu}^{1/2}}}+{\frac{3E^{2}}{16{\mu}^{3/2}}}%
,{\frac{-i\sqrt{6}E}{4{\mu}^{3/4}}}\left(  {\frac{3\xi}{2{\mu}^{1/2}}}%
-{\frac{3E^{2}}{8{\mu}^{3/2}}}-3l-\frac{3}{2}\right)  \right)
\end{align}
with%
\begin{align}
&  J_{\frac{3l}{2}+\frac{7}{4}+{\frac{3\xi}{4{\mu}^{1/2}}}-{\frac{3E^{2}%
}{16{\mu}^{3/2}}}}\left(  \frac{3l}{2}+\frac{3}{4}+{\frac{3\xi}{4{\mu}^{1/2}}%
}-{\frac{3E^{2}}{16{\mu}^{3/2}}},{\frac{-i\sqrt{6}E}{2{\mu}^{3/4}}},\frac
{9l}{2}+\frac{9}{4}\right. \nonumber\\
&  -\left.  {\frac{3\xi}{4{\mu}^{1/2}}}+{\frac{3E^{2}}{16{\mu}^{3/2}}}%
,{\frac{-i\sqrt{6}E}{4{\mu}^{3/4}}}\left(  {\frac{3\xi}{2{\mu}^{1/2}}}%
-{\frac{3E^{2}}{8{\mu}^{3/2}}}-3l-\frac{3}{2}\right)  \right) \nonumber\\
&  =\int_{0}^{\infty}x^{\frac{3l}{2}+\frac{3}{4}+{\frac{3\xi}{4{\mu}^{1/2}}%
}-{\frac{3E^{2}}{16{\mu}^{3/2}}}}e^{-x^{2}}N\left(  \frac{3l}{2}+\frac{3}%
{4}+{\frac{3\xi}{4{\mu}^{1/2}}}-{\frac{3E^{2}}{16{\mu}^{3/2}}},{\frac
{-i\sqrt{6}E}{2{\mu}^{3/4}}},\frac{9l}{2}+\frac{9}{4}\right. \nonumber\\
&  -\left.  {\frac{3\xi}{4{\mu}^{1/2}}}+{\frac{3E^{2}}{16{\mu}^{3/2}}}%
,{\frac{-i\sqrt{6}E}{4{\mu}^{3/4}}}\left(  {\frac{3\xi}{2{\mu}^{1/2}}}%
-{\frac{3E^{2}}{8{\mu}^{3/2}}}-3l-\frac{3}{2}\right)  ,x\right)  dx.
\end{align}

Eq. (\ref{eigenvalue23023}) is an implicit expression of the eigenvalue.

The eigenfunction, by Eqs. (\ref{uexpzf23o23}) and (\ref{eigenvalue23023}),
reads%
\begin{align}
u_{l}\left(  r\right)   &  =A_{l}K_{1}\left(  3l+\frac{3}{2},{\frac{-i\sqrt
{6}E}{2{\mu}^{3/4}}},{\frac{3\xi}{2{\mu}^{1/2}}}-{\frac{3E^{2}}{8{\mu}^{3/2}}%
},0\right)  \nonumber\\
&  \times\exp\left(  -\frac{3\mu^{1/2}}{4}r^{4/3}+\frac{3E}{4{\mu}^{1/2}%
}r^{2/3}\right)  r^{\frac{E^{2}}{8{\mu}^{3/2}}-{\frac{\xi}{2{\mu}^{1/2}}%
}-\frac{1}{6}}\sum_{n\geq0}\frac{a_{n}}{\left(  \frac{\sqrt{6}}{2}\mu
^{1/4}r^{2/3}\right)  ^{n}}.
\end{align}

\section{The exact solution of $U\left(  r\right)  =\frac{\xi}{\sqrt{r}}%
+\frac{\mu}{r^{3/2}}$\label{Vm12m32}}

In this appendix, we provide an exact solution of the eigenproblem of the
potential
\begin{equation}
U\left(  r\right)  =\frac{\xi}{\sqrt{r}}+\frac{\mu}{r^{3/2}}%
\end{equation}
by solving the radial equation directly. This potential has both bound states
and scattering states.

The radial equation reads%
\begin{equation}
\frac{d^{2}}{dr^{2}}u_{l}\left(  r\right)  +\left[  E-\frac{l\left(
l+1\right)  }{r^{2}}-\frac{\xi}{\sqrt{r}}-\frac{\mu}{r^{3/2}}\right]
u_{l}\left(  r\right)  =0. \label{radialeqo1232}%
\end{equation}
Using the variable substitution%
\begin{equation}
z=i\sqrt{2r}\left(  -E\right)  ^{1/4}%
\end{equation}
and introducing $f_{l}\left(  z\right)  $ by%
\begin{equation}
u_{l}\left(  z\right)  =A_{l}\exp\left(  -\frac{z^{2}}{2}-\frac{\beta}%
{2}z\right)  z^{2\left(  l+1\right)  }f_{l}\left(  z\right)
\end{equation}
with $A_{l}$ a constant, we convert the radial equation (\ref{radialeqo1232})
into an equation of $f_{l}\left(  z\right)  $:%
\begin{align}
&  zf_{l}^{\prime\prime}\left(  z\right)  +\left(  4l+3-\frac{i\sqrt{2}\xi
}{\left(  -E\right)  ^{3/4}}z-2z^{2}\right)  f_{l}^{\prime}\left(  z\right)
\nonumber\\
&  +\left[  \left(  -{\frac{{\xi}^{2}}{2\left(  -E\right)  ^{3/2}}%
}-4l-4\right)  z-\frac{1}{2}\left(  \frac{-i4\sqrt{2}\mu}{\left(  -E\right)
^{1/4}}+\frac{i\sqrt{2}\xi}{\left(  -E\right)  ^{3/4}}\left(  4l+3\right)
\right)  \right]  f_{l}\left(  z\right)  =0. \label{eqfo1232}%
\end{align}
This is a Biconfluent Heun equation \cite{ronveaux1995heun}.

The choice of the boundary condition has been discussed in Ref.
\cite{li2016exact}.

\subsection{The regular solution}

The regular solution is a solution satisfying the boundary condition at $r=0$
\cite{li2016exact}. The regular solution at $r=0$\ should satisfy the boundary
condition $\lim_{r\rightarrow0}u_{l}\left(  r\right)  /r^{l+1}=1$ for both
bound states and scattering states. In this section, we provide the regular
solution of Eq. (\ref{eqfo1232}).

The Biconfluent Heun equation (\ref{eqfo1232}) has two linearly independent
solutions \cite{ronveaux1995heun}%
\begin{align}
y_{l}^{\left(  1\right)  }\left(  z\right)   &  =N\left(  4l+2,{\frac
{i\sqrt{2}\xi}{\left(  -E\right)  ^{3/4}}},-{\frac{{\xi}^{2}}{2\left(
-E\right)  ^{3/2}}},{\frac{-i4\sqrt{2}\mu}{\left(  -E\right)  ^{1/4}}%
},z\right)  ,\\
y_{l}^{\left(  2\right)  }\left(  z\right)   &  =cN\left(  4l+2,{\frac
{i\sqrt{2}\xi}{\left(  -E\right)  ^{3/4}}},-{\frac{{\xi}^{2}}{2\left(
-E\right)  ^{3/2}}},{\frac{-i4\sqrt{2}\mu}{\left(  -E\right)  ^{1/4}}%
},z\right)  \ln z+\sum_{n\geq0}d_{n}z^{n-4l-2},
\end{align}
where%
\begin{equation}
c=\frac{1}{4l+2}\left[  d_{4l+1}\frac{1}{2}\left(  \frac{-i4\sqrt{2}\mu
}{\left(  -E\right)  ^{1/4}}+\frac{i\sqrt{2}\xi}{\left(  -E\right)  ^{3/4}%
}\left(  4l+1\right)  \right)  -d_{4l}\left(  -{\frac{{\xi}^{2}}{2\left(
-E\right)  ^{3/2}}}-4l\right)  \right]
\end{equation}
is a constant with the coefficient $d_{\nu}$ given by the following recurrence
relation,%
\begin{align}
&  d_{-1}=0,\text{ \ }d_{0}=1,\nonumber\\
&  \left(  v+2\right)  \left(  v-4l\right)  d_{v+2}-\frac{1}{2}\left[
\frac{-i4\sqrt{2}\mu}{\left(  -E\right)  ^{1/4}}+\frac{i\sqrt{2}\xi}{\left(
-E\right)  ^{3/4}}\left(  2v+1-4l\right)  \right]  d_{v+1}+\left[
-{\frac{{\xi}^{2}}{2\left(  -E\right)  ^{3/2}}}-2v+4l\right]  d_{v}=0
\end{align}
and $N(\alpha,\beta,\gamma,\delta,z)$ is the biconfluent Heun function
\cite{ronveaux1995heun,slavyanov2000special,li2016exact}.

The biconfluent Heun function $N\left(  4l+2,{\frac{i\sqrt{2}\xi}{\left(
-E\right)  ^{3/4}}},-{\frac{{\xi}^{2}}{2\left(  -E\right)  ^{3/2}}}%
,{\frac{-i4\sqrt{2}\mu}{\left(  -E\right)  ^{1/4}}},z\right)  $ has an
expansion at $z=0$ \cite{ronveaux1995heun}:%
\begin{equation}
N\left(  4l+2,{\frac{i\sqrt{2}\xi}{\left(  -E\right)  ^{3/4}}},-{\frac{{\xi
}^{2}}{2\left(  -E\right)  ^{3/2}}},{\frac{-i4\sqrt{2}\mu}{\left(  -E\right)
^{1/4}}},z\right)  =\sum_{n\geq0}\frac{A_{n}}{\left(  4l+3\right)  _{n}}%
\frac{z^{n}}{n!},
\end{equation}
where the expansion coefficients is determined by the recurrence relation,%
\begin{align}
A_{0}  &  =1,\text{ \ }A_{1}=\frac{1}{2}\left[  {\frac{-i4\sqrt{2}\mu}{\left(
-E\right)  ^{1/4}}}+\frac{i\sqrt{2}\xi}{\left(  -E\right)  ^{3/4}}\left(
4l+3\right)  \right]  ,\nonumber\\
A_{n+2}  &  =\left\{  \frac{i\sqrt{2}\xi}{\left(  -E\right)  ^{3/4}}\left(
n+1\right)  +\frac{1}{2}\left[  \frac{-i4\sqrt{2}\mu}{\left(  -E\right)
^{1/4}}+{\frac{i\sqrt{2}\xi}{\left(  -E\right)  ^{3/4}}}\left(  4l+3\right)
\right]  \right\}  A_{n+1}\nonumber\\
&  -\left(  n+1\right)  \left(  n+4l+3\right)  \left[  -{\frac{{\xi}^{2}%
}{2\left(  -E\right)  ^{3/2}}}-4-4l-2n\right]  A_{n},
\end{align}
and $\left(  a\right)  _{n}=\Gamma\left(  a+n\right)  /\Gamma\left(  a\right)
$ is Pochhammer's symbol.

Only $y_{l}^{\left(  1\right)  }\left(  z\right)  $ satisfies the boundary
condition for the regular solution at $r=0$, so the radial eigenfunction reads%
\begin{align}
u_{l}\left(  z\right)   &  =A_{l}\left(  -\frac{z^{2}}{2}-\frac{\beta}%
{2}z\right)  z^{2\left(  l+1\right)  }y_{l}^{\left(  1\right)  }\left(
z\right) \nonumber\\
&  =A_{l}\left(  -\frac{z^{2}}{2}-\frac{\beta}{2}z\right)  z^{2\left(
l+1\right)  }N\left(  4l+2,{\frac{i\sqrt{2}\xi}{\left(  -E\right)  ^{3/4}}%
},-{\frac{{\xi}^{2}}{2\left(  -E\right)  ^{3/2}}},{\frac{-i4\sqrt{2}\mu
}{\left(  -E\right)  ^{1/4}}},z\right)  .
\end{align}
By Eq. (\ref{zr2o1}), we obtain the regular solution,%
\begin{equation}
u_{l}\left(  r\right)  =A_{l}\left[  \left(  -E\right)  ^{1/2}r+\frac{\xi
}{\left(  -E\right)  ^{1/2}}\sqrt{r}\right]  r^{l+1}N\left(  4l+2,{\frac
{i\sqrt{2}\xi}{\left(  -E\right)  ^{3/4}}},-{\frac{{\xi}^{2}}{2\left(
-E\right)  ^{3/2}}},{\frac{-i4\sqrt{2}\mu}{\left(  -E\right)  ^{1/4}}}%
,i\sqrt{2r}\left(  -E\right)  ^{1/4}\right)  . \label{regularo1232}%
\end{equation}

\subsection{Irregular solution}

The irregular solution is a solution satisfying the boundary condition at
$r\rightarrow\infty$ \cite{li2016exact}. The boundary conditions for bound
states and scattering states at $r\rightarrow\infty$ are different.

The Biconfluent Heun equation (\ref{eqfo1232}) has two linearly independent
irregular solutions \cite{ronveaux1995heun}:%
\begin{align}
&  B_{l}^{+}\left(  4l+2,{\frac{i\sqrt{2}\xi}{\left(  -E\right)  ^{3/4}}%
},-{\frac{{\xi}^{2}}{2\left(  -E\right)  ^{3/2}}},{\frac{-i4\sqrt{2}\mu
}{\left(  -E\right)  ^{1/4}}},z\right) \nonumber\\
&  =\exp\left(  \frac{i\sqrt{2}\xi}{\left(  -E\right)  ^{3/4}}z+z^{2}\right)
B_{l}^{+}\left(  4l+2,{\frac{\sqrt{2}\xi}{\left(  -E\right)  ^{3/4}}}%
,{\frac{{\xi}^{2}}{2\left(  -E\right)  ^{3/2}}},{\frac{4\sqrt{2}\mu}{\left(
-E\right)  ^{1/4}}},-iz\right) \nonumber\\
&  =\exp\left(  \frac{i\sqrt{2}\xi}{\left(  -E\right)  ^{3/4}}z+z^{2}\right)
\left(  -iz\right)  ^{\frac{1}{2}\left(  {\frac{{\xi}^{2}}{2\left(  -E\right)
^{3/2}}}-4l-4\right)  }\sum_{n\geq0}\frac{a_{n}}{\left(  -iz\right)  ^{n}},
\label{f1o1232}%
\end{align}%
\begin{align}
&  H_{l}^{+}\left(  4l+2,{\frac{i\sqrt{2}\xi}{\left(  -E\right)  ^{3/4}}%
},-{\frac{{\xi}^{2}}{2\left(  -E\right)  ^{3/2}}},{\frac{-i4\sqrt{2}\mu
}{\left(  -E\right)  ^{1/4}}},z\right) \nonumber\\
&  =\exp\left(  \frac{i\sqrt{2}\xi}{\left(  -E\right)  ^{3/4}}z+z^{2}\right)
H_{l}^{+}\left(  4l+2,{\frac{\sqrt{2}\xi}{\left(  -E\right)  ^{3/4}}}%
,{\frac{{\xi}^{2}}{2\left(  -E\right)  ^{3/2}}},{\frac{4\sqrt{2}\mu}{\left(
-E\right)  ^{1/4}}},-iz\right) \nonumber\\
&  =\left(  -iz\right)  ^{-\frac{1}{2}\left(  \frac{{\xi}^{2}}{2\left(
-E\right)  ^{3/2}}+4l+4\right)  }\sum_{n\geq0}\frac{e_{n}}{\left(  -iz\right)
^{n}} \label{f2o1232}%
\end{align}
with the expansion coefficients given by the recurrence relation
\[
a_{0}=1,\text{ \ }a_{1}=\frac{1}{4}\left[  {\frac{4\sqrt{2}\mu}{\left(
-E\right)  ^{1/4}}}+\frac{\sqrt{2}\xi}{\left(  -E\right)  ^{3/4}}\left(
\frac{{\xi}^{2}}{2\left(  -E\right)  ^{3/2}}-1\right)  \right]  ,
\]%
\begin{align}
&  2\left(  n+2\right)  a_{n+2}+\left[  \frac{\sqrt{2}\xi}{\left(  -E\right)
^{3/4}}\left(  \frac{3}{2}-\frac{{\xi}^{2}}{4\left(  -E\right)  ^{3/2}%
}+n\right)  -{\frac{2\sqrt{2}\mu}{\left(  -E\right)  ^{1/4}}}\right]
a_{n+1}\nonumber\\
&  +\left[  \frac{{\xi}^{4}}{16\left(  -E\right)  ^{3}}-\frac{\left(
4l+2\right)  ^{2}}{4}+1-\frac{{\xi}^{2}}{2\left(  -E\right)  ^{3/2}}+n\left(
n+2-\frac{{\xi}^{2}}{2\left(  -E\right)  ^{3/2}}\right)  \right]  a_{n}=0
\end{align}
and%
\[
e_{0}=1,\text{ \ }e_{1}=-\frac{1}{4}\left[  {\frac{4\sqrt{2}\mu}{\left(
-E\right)  ^{1/4}}}+\frac{\sqrt{2}\xi}{\left(  -E\right)  ^{3/4}}\left(
\frac{{\xi}^{2}}{2\left(  -E\right)  ^{3/2}}+1\right)  \right]  ,
\]%
\begin{align}
&  2\left(  n+2\right)  e_{n+2}+\left[  \beta\left(  \frac{3}{2}+\frac{{\xi
}^{2}}{4\left(  -E\right)  ^{3/2}}+n\right)  +\frac{\delta}{2}\right]
e_{n+1}\nonumber\\
&  -\left[  \frac{{\xi}^{4}}{16\left(  -E\right)  ^{3}}-\frac{\left(
4l+2\right)  ^{2}}{4}+1+\frac{{\xi}^{2}}{2\left(  -E\right)  ^{3/2}}+n\left(
n+2+\frac{{\xi}^{2}}{2\left(  -E\right)  ^{3/2}}\right)  \right]  e_{n}=0.
\end{align}

\subsection{Bound states and scattering states}

\subsubsection{The bound state}

To construct the solution, we first express the regular solution
(\ref{regularo1232}) as a linear combination of the two irregular solutions
(\ref{f1o1232}) and (\ref{f2o1232}).

The regular solution (\ref{regularo1232}), with the relation
\cite{ronveaux1995heun,li2016exact}%

\begin{align}
&  N\left(  4l+2,{\frac{i\sqrt{2}\xi}{\left(  -E\right)  ^{3/4}}},-{\frac
{{\xi}^{2}}{2\left(  -E\right)  ^{3/2}}},{\frac{-i4\sqrt{2}\mu}{\left(
-E\right)  ^{1/4}}},z\right) \nonumber\\
&  =K_{1}\left(  4l+2,{\frac{i\sqrt{2}\xi}{\left(  -E\right)  ^{3/4}}}%
,-{\frac{{\xi}^{2}}{2\left(  -E\right)  ^{3/2}}},{\frac{-i4\sqrt{2}\mu
}{\left(  -E\right)  ^{1/4}}}\right)  B_{l}^{+}\left(  4l+2,{\frac{i\sqrt
{2}\xi}{\left(  -E\right)  ^{3/4}}},-{\frac{{\xi}^{2}}{2\left(  -E\right)
^{3/2}}},{\frac{-i4\sqrt{2}\mu}{\left(  -E\right)  ^{1/4}}},z\right)
\nonumber\\
&  +K_{2}\left(  4l+2,{\frac{i\sqrt{2}\xi}{\left(  -E\right)  ^{3/4}}}%
,-{\frac{{\xi}^{2}}{2\left(  -E\right)  ^{3/2}}},{\frac{-i4\sqrt{2}\mu
}{\left(  -E\right)  ^{1/4}}}\right)  H_{l}^{+}\left(  4l+2,{\frac{i\sqrt
{2}\xi}{\left(  -E\right)  ^{3/4}}},-{\frac{{\xi}^{2}}{2\left(  -E\right)
^{3/2}}},{\frac{-i4\sqrt{2}\mu}{\left(  -E\right)  ^{1/4}}},z\right)
\end{align}
and the expansions (\ref{f1o1232}) and (\ref{f2o1232}), become%
\begin{align}
u_{l}\left(  r\right)   &  =A_{l}K_{1}\left(  4l+2,{\frac{i\sqrt{2}\xi
}{\left(  -E\right)  ^{3/4}}},-{\frac{{\xi}^{2}}{2\left(  -E\right)  ^{3/2}}%
},{\frac{-i4\sqrt{2}\mu}{\left(  -E\right)  ^{1/4}}}\right) \nonumber\\
&  \times\exp\left(  -\left(  -E\right)  ^{1/2}r-\frac{\xi}{\left(  -E\right)
^{1/2}}\sqrt{r}\right)  \left[  i\left(  -E\right)  ^{1/4}\sqrt{2r}\right]
^{{\xi}^{2}/\left[  4\left(  -E\right)  ^{3/2}\right]  }\sum_{n\geq0}%
\frac{a_{n}}{\left[  \sqrt{2r}\left(  -E\right)  ^{1/4}\right]  ^{n}%
}\nonumber\\
&  +A_{l}K_{2}\left(  4l+2,{\frac{i\sqrt{2}\xi}{\left(  -E\right)  ^{3/4}}%
},-{\frac{{\xi}^{2}}{2\left(  -E\right)  ^{3/2}}},{\frac{-i4\sqrt{2}\mu
}{\left(  -E\right)  ^{1/4}}}\right) \nonumber\\
&  \times\exp\left(  \left(  -E\right)  ^{1/2}r+\frac{\xi}{\left(  -E\right)
^{1/2}}\sqrt{r}\right)  \left[  i\left(  -E\right)  ^{1/4}\sqrt{2r}\right]
^{-{\xi}^{2}/\left[  4\left(  -E\right)  ^{3/2}\right]  }\sum_{n\geq0}%
\frac{e_{n}}{\left[  \sqrt{2r}\left(  -E\right)  ^{1/4}\right]  ^{n}},
\label{uexpzfo1232}%
\end{align}
where $K_{1}\left(  4l+2,{\frac{i\sqrt{2}\xi}{\left(  -E\right)  ^{3/4}}%
},-{\frac{{\xi}^{2}}{2\left(  -E\right)  ^{3/2}}},{\frac{-i4\sqrt{2}\mu
}{\left(  -E\right)  ^{1/4}}}\right)  $ and $K_{2}\left(  4l+2,{\frac
{i\sqrt{2}\xi}{\left(  -E\right)  ^{3/4}}},-{\frac{{\xi}^{2}}{2\left(
-E\right)  ^{3/2}}},{\frac{-i4\sqrt{2}\mu}{\left(  -E\right)  ^{1/4}}}\right)
$ are combination coefficients\ and $z=i\sqrt{2r}\left(  -E\right)  ^{1/4}$.

The boundary condition of bound states, $\left.  u\left(  r\right)
\right\vert _{r\rightarrow\infty}\rightarrow0$, requires that the coefficient
of the second term must vanish since this term diverges when $r\rightarrow
\infty$, i.e.,
\begin{equation}
K_{2}\left(  4l+2,{\frac{i\sqrt{2}\xi}{\left(  -E\right)  ^{3/4}}}%
,-{\frac{{\xi}^{2}}{2\left(  -E\right)  ^{3/2}}},{\frac{-i4\sqrt{2}\mu
}{\left(  -E\right)  ^{1/4}}}\right)  =0, \label{eigenvalueo1232}%
\end{equation}
where%
\begin{align}
&  K_{2}\left(  4l+2,{\frac{i\sqrt{2}\xi}{\left(  -E\right)  ^{3/4}}}%
,-{\frac{{\xi}^{2}}{2\left(  -E\right)  ^{3/2}}},{\frac{-i4\sqrt{2}\mu
}{\left(  -E\right)  ^{1/4}}}\right) \nonumber\\
&  =\frac{\Gamma\left(  4l+3\right)  }{\Gamma\left(  2l+1+{\frac{{\xi}^{2}%
}{4\left(  -E\right)  ^{3/2}}}\right)  \Gamma\left(  2l+2-{\frac{{\xi}^{2}%
}{4\left(  -E\right)  ^{3/2}}}\right)  }\nonumber\\
\times J_{2l+2-{\frac{{\xi}^{2}}{4\left(  -E\right)  ^{3/2}}}}  &  \left(
2l+1-{\frac{{\xi}^{2}}{4\left(  -E\right)  ^{3/2}}},\frac{i\sqrt{2}\xi
}{\left(  -E\right)  ^{3/4}},6l+3+{\frac{{\xi}^{2}}{4\left(  -E\right)
^{3/2}}},\frac{-i4\sqrt{2}\mu}{\left(  -E\right)  ^{1/4}}\right. \nonumber\\
&  \left.  -\frac{i\sqrt{2}\xi}{2\left(  -E\right)  ^{3/4}}\left(  {\frac
{{\xi}^{2}}{2\left(  -E\right)  ^{3/2}}}+4l+2\right)  \right)
\end{align}
with%
\begin{align}
&  J_{2l+2-{\frac{{\xi}^{2}}{4\left(  -E\right)  ^{3/2}}}}\left(
2l+1-{\frac{{\xi}^{2}}{4\left(  -E\right)  ^{3/2}}}, \frac{i\sqrt{2}\xi
}{\left(  -E\right)  ^{3/4}},6l+3+{\frac{{\xi}^{2}}{4\left(  -E\right)
^{3/2}}},\frac{-i4\sqrt{2}\mu}{\left(  -E\right)  ^{1/4}}\right. \nonumber\\
&  \left.  -\frac{i\sqrt{2}\xi}{2\left(  -E\right)  ^{3/4}}\left(  {\frac
{{\xi}^{2}}{2\left(  -E\right)  ^{3/2}}}+4l+2\right)  \right) \\
&  =\int_{0}^{\infty}dxx^{2l+1-{\frac{{\xi}^{2}}{4\left(  -E\right)  ^{3/2}}}%
}e^{-x^{2}}\nonumber\\
&  \times N\left(  2l+1-{\frac{{\xi}^{2}}{4\left(  -E\right)  ^{3/2}}}%
,\frac{i\sqrt{2}\xi}{\left(  -E\right)  ^{3/4}},6l+3+{\frac{{\xi}^{2}%
}{4\left(  -E\right)  ^{3/2}}},\frac{-i4\sqrt{2}\mu}{\left(  -E\right)
^{1/4}}-\frac{i\sqrt{2}\xi}{2\left(  -E\right)  ^{3/4}}\left(  {\frac{{\xi
}^{2}}{2\left(  -E\right)  ^{3/2}}}+4l+2\right)  ,x\right)  .
\end{align}

Eq. (\ref{eigenvalueo1232}) is an implicit expression of the eigenvalue.

The eigenfunction, by Eqs. (\ref{uexpzfo1232}) and (\ref{eigenvalueo1232}),
reads%
\begin{align}
u_{l}\left(  r\right)   &  =A_{l}K_{1}\left(  4l+2,{\frac{i\sqrt{2}\xi
}{\left(  -E\right)  ^{3/4}}},-{\frac{{\xi}^{2}}{2\left(  -E\right)  ^{3/2}}%
},{\frac{-i4\sqrt{2}\mu}{\left(  -E\right)  ^{1/4}}}\right)  \exp\left(
-\left(  -E\right)  ^{1/2}r-\frac{\xi}{\left(  -E\right)  ^{1/2}}\sqrt
{r}\right) \nonumber\\
&  \times\left[  i\left(  -E\right)  ^{1/4}\sqrt{2r}\right]  ^{{\frac{{\xi
}^{2}}{4\left(  -E\right)  ^{3/2}}}}\sum_{n\geq0}\frac{a_{n}}{\left[
\sqrt{2r}\left(  -E\right)  ^{1/4}\right]  ^{n}}.
\end{align}

\subsubsection{The scattering state}

For scattering states, $E>0$, introduce%
\begin{equation}
E=k^{2}.
\end{equation}

The singularity of the $S$-matrix on the positive imaginary axis corresponds
to the eigenvalues of bound states \cite{joachain1975quantum}, so the zero of
$K_{2}\left(  4l+2,-{\frac{\left(  1+i\right)  \xi}{k^{3/2}}},{\frac{i{\xi
}^{2}}{2k^{3}}},{\frac{4\sqrt{-2i}\mu}{\sqrt{k}}}\right)  $ on the positive
imaginary is just the singularity of the $S$-matrix. Considering that the
$S$-matrix is unitary, i.e.,%
\begin{equation}
S_{l}=e^{2i\delta_{l}}, \label{SMo1232}%
\end{equation}
we have%
\begin{equation}
S_{l}\left(  k\right)  =\frac{K_{2}^{\ast}\left(  4l+2,-{\frac{\left(
1+i\right)  \xi}{k^{3/2}}},{\frac{i{\xi}^{2}}{2k^{3}}},{\frac{4\sqrt{-2i}\mu
}{\sqrt{k}}}\right)  }{K_{2}\left(  4l+2,-{\frac{\left(  1+i\right)  \xi
}{k^{3/2}}},{\frac{i{\xi}^{2}}{2k^{3}}},{\frac{4\sqrt{-2i}\mu}{\sqrt{k}}%
}\right)  }=\frac{K_{2}\left(  4l+2,-{\frac{\left(  1-i\right)  \xi}{k^{3/2}}%
},-{\frac{i{\xi}^{2}}{2k^{3}}},{\frac{4\sqrt{2i}\mu}{\sqrt{k}}}\right)
}{K_{2}\left(  4l+2,-{\frac{\left(  1+i\right)  \xi}{k^{3/2}}},{\frac{i{\xi
}^{2}}{2k^{3}}},{\frac{4\sqrt{-2i}\mu}{\sqrt{k}}}\right)  }. \label{SM1o1232}%
\end{equation}

The scattering wave function can be constructed with the help of the
$S$-matrix. The scattering wave function can be expressed as a linear
combination of the radially ingoing wave $u_{in}\left(  r\right)  $ and the
radially outgoing wave $u_{out}\left(  r\right)  $, which are conjugate to
each other, i.e., \cite{joachain1975quantum}%
\begin{equation}
u_{l}\left(  r\right)  =A_{l}\left[  \left(  -1\right)  ^{l+1}u_{in}\left(
r\right)  +S_{l}\left(  k\right)  u_{out}\left(  r\right)  \right]  .
\label{inouto1232}%
\end{equation}
From Eq. (\ref{uexpzfo1232}), we have%
\begin{align}
u_{in}\left(  r\right)   &  =\exp\left(  -ikr+i\frac{\xi}{k}\sqrt{r}\right)
\left(  2ikr\right)  ^{-i{\xi}^{2}/\left(  8k^{3}\right)  }\sum_{n\geq0}%
\frac{e_{n}}{\left(  -2ikr\right)  ^{n/2}},\nonumber\\
u_{out}\left(  r\right)   &  =\exp\left(  ikr-i\frac{\xi}{k}\sqrt{r}\right)
\left(  -2ikr\right)  ^{i{\xi}^{2}/\left(  8k^{3}\right)  }\sum_{n\geq0}%
\frac{e_{n}^{\ast}}{\left(  2ikr\right)  ^{n/2}}.
\end{align}
Then by Eq. (\ref{SM1o1232}), we obtain the scattering wave function,%
\begin{align}
u_{l}\left(  r\right)   &  =A_{l}\left[  \left(  -1\right)  ^{l+1}\exp\left(
-ikr+i\frac{\xi}{k}\sqrt{r}\right)  \left(  2ikr\right)  ^{-i{\xi}^{2}/\left(
8k^{3}\right)  }\sum_{n\geq0}\frac{e_{n}}{\left(  -2ikr\right)  ^{n/2}}\right.
\nonumber\\
&  +\left.  \frac{K_{2}\left(  4l+2,-{\frac{\left(  1-i\right)  \xi}{k^{3/2}}%
},-{\frac{i{\xi}^{2}}{2k^{3}}},{\frac{4\sqrt{2i}\mu}{\sqrt{k}}}\right)
}{K_{2}\left(  4l+2,-{\frac{\left(  1+i\right)  \xi}{k^{3/2}}},{\frac{i{\xi
}^{2}}{2k^{3}}},{\frac{4\sqrt{-2i}\mu}{\sqrt{k}}}\right)  }\exp\left(
ikr-i\frac{\xi}{k}\sqrt{r}\right)  \left(  -2ikr\right)  ^{i{\xi}^{2}/\left(
8k^{3}\right)  }\sum_{n\geq0}\frac{e_{n}^{\ast}}{\left(  2ikr\right)  ^{n/2}%
}\right]  .
\end{align}
Taking $r\rightarrow\infty$, we have
\begin{align}
&  u_{l}\left(  r\right)  \overset{r\rightarrow\infty}{\sim}A_{l}\left[
\left(  -1\right)  ^{l+1}\exp\left(  -ikr+i\frac{\xi}{k}\sqrt{r}\right)
\left(  2kr\right)  ^{-i{\xi}^{2}/\left(  8k^{3}\right)  }\right. \nonumber\\
&  +\left.  \frac{K_{2}\left(  4l+2,-{\frac{\left(  1-i\right)  \xi}{k^{3/2}}%
},-{\frac{i{\xi}^{2}}{2k^{3}}},{\frac{4\sqrt{2i}\mu}{\sqrt{k}}}\right)
}{K_{2}\left(  4l+2,-{\frac{\left(  1+i\right)  \xi}{k^{3/2}}},{\frac{i{\xi
}^{2}}{2k^{3}}},{\frac{4\sqrt{-2i}\mu}{\sqrt{k}}}\right)  }\exp\left(
ikr-i\frac{\xi}{k}\sqrt{r}\right)  \left(  2kr\right)  ^{i{\xi}^{2}/\left(
8k^{3}\right)  }\right] \nonumber\\
&  =A_{l}e^{i\delta_{l}}\sin\left(  kr-\frac{\xi}{k}\sqrt{r}+\frac{{\xi}^{2}%
}{8k^{3}}\ln2kr+\delta_{l}-\frac{l\pi}{2}\right)  .
\end{align}
By Eqs. (\ref{SMo1232}) and (\ref{SM1o1232}), we obtain the scattering phase
shift%
\begin{equation}
\delta_{l}=-\arg K_{2}\left(  4l+2,-{\frac{\left(  1+i\right)  \xi}{k^{3/2}}%
},{\frac{i{\xi}^{2}}{2k^{3}}},{\frac{4\sqrt{-2i}\mu}{\sqrt{k}}}\right)  .
\end{equation}

\section{The exact solution of $U\left(  r\right)  =\xi r^{2/3}+\frac{\mu
}{r^{4/3}}$\label{V23m43}}

In this appendix, we provide an exact solution of the eigenproblem of the
potential
\begin{equation}
U\left(  r\right)  =\xi r^{2/3}+\frac{\mu}{r^{4/3}}%
\end{equation}
by solving the radial equation directly. This potential has only bound states.

The radial equation reads%
\begin{equation}
\frac{d^{2}u_{l}\left(  r\right)  }{dr^{2}}+\left[  E-\frac{l\left(
l+1\right)  }{r^{2}}-\xi r^{2/3}-\frac{\mu}{r^{4/3}}\right]  u_{l}\left(
r\right)  =0. \label{radialeq23o43}%
\end{equation}
Using the variable substitution%
\begin{equation}
z=i\frac{\sqrt{6}}{2}\xi^{1/4}{r}^{2/3} \label{zr23o43}%
\end{equation}
and introducing $f_{l}\left(  z\right)  $ by%
\begin{equation}
u_{l}\left(  z\right)  =A_{l}\exp\left(  -\frac{z^{2}}{2}-\frac{\beta}%
{2}z\right)  z^{3\left(  l+1\right)  /2}f_{l}\left(  z\right)
\end{equation}
with $A_{l}$ a constant, we convert the radial equation (\ref{radialeq23o43})
into an equation of $f_{l}\left(  z\right)  $:%
\begin{align}
&  zf_{l}^{\prime\prime}\left(  z\right)  +\left(  3l+\frac{5}{2}%
+{\frac{i\sqrt{6}E}{2{\xi}^{3/4}}}z-2z^{2}\right)  f_{l}^{\prime}\left(
z\right) \nonumber\\
&  +\left\{  \left(  -{\frac{3E^{2}}{8{\xi}^{3/2}}}-3l-\frac{7}{2}\right)
z-\frac{1}{2}\left[  -{\frac{i3\sqrt{6}\mu}{2\xi^{1/4}}}-{\frac{i\sqrt{6}%
E}{2{\xi}^{3/4}}}\left(  3l+\frac{5}{2}\right)  \right]  \right\}
f_{l}\left(  z\right)  =0. \label{eqf23o43}%
\end{align}
This is a Biconfluent Heun equation \cite{ronveaux1995heun}.

The choice of the boundary condition has been discussed in Ref.
\cite{li2016exact}.

\subsection{The regular solution}

The regular solution is a solution satisfying the boundary condition at $r=0$
\cite{li2016exact}. The regular solution at $r=0$\ should satisfy the boundary
condition $\lim_{r\rightarrow0}u_{l}\left(  r\right)  /r^{l+1}=1$. In this
section, we provide the regular solution of Eq. (\ref{eqf23o43}).

The Biconfluent Heun equation (\ref{eqf23o43}) has two linearly independent
solutions \cite{ronveaux1995heun}%
\begin{align}
y_{l}^{\left(  1\right)  }\left(  z\right)   &  =N\left(  3l+\frac{3}%
{2},-{\frac{i\sqrt{6}E}{2{\xi}^{3/4}}},-{\frac{3E^{2}}{8{\xi}^{3/2}}}%
,-{\frac{i3\sqrt{6}\mu}{2\xi^{1/4}}},z\right)  ,\\
y_{l}^{\left(  2\right)  }\left(  z\right)   &  =cN\left(  3l+\frac{3}%
{2},-{\frac{i\sqrt{6}E}{2{\xi}^{3/4}}},-{\frac{3E^{2}}{8{\xi}^{3/2}}}%
,-{\frac{i3\sqrt{6}\mu}{2\xi^{1/4}}},z\right)  \ln z+\sum_{n\geq0}%
d_{n}z^{n-3l-3/2},
\end{align}
where%
\begin{equation}
c=\frac{1}{3l+3/2}\left\{  \frac{1}{2}\left[  -{\frac{i3\sqrt{6}\mu}%
{2\xi^{1/4}}}-{\frac{i\sqrt{6}E}{2{\xi}^{3/4}}}\left(  3l+\frac{1}{2}\right)
\right]  d_{3l+1/2}-\left(  -{\frac{3E^{2}}{8{\xi}^{3/2}}}+\frac{1}%
{2}-3l\right)  d_{3l-1/2}\right\}
\end{equation}
is a constant with the coefficient $d_{\nu}$ given by the following recurrence
relation,%
\begin{align}
&  d_{-1}=0,\text{ \ }d_{0}=1,\nonumber\\
&  \left(  v+2\right)  \left(  v+\frac{1}{2}-3l\right)  d_{v+2}-\frac{1}%
{2}\left[  -{\frac{i3\sqrt{6}\mu}{2\xi^{1/4}}}-{\frac{i\sqrt{6}E}{2{\xi}%
^{3/4}}}\left(  2v+\frac{3}{2}-3l\right)  \right]  d_{v+1}+\left(
-{\frac{3E^{2}}{8{\xi}^{3/2}}}-2v-\frac{1}{2}+3l\right)  d_{v}=0
\end{align}
and $N(\alpha,\beta,\gamma,\delta,z)$ is the biconfluent Heun function
\cite{ronveaux1995heun,slavyanov2000special,li2016exact}.

The biconfluent Heun function $N\left(  3l+\frac{3}{2},-{\frac{i\sqrt{6}%
E}{2{\xi}^{3/4}}},-{\frac{3E^{2}}{8{\xi}^{3/2}}},-{\frac{i3\sqrt{6}\mu}%
{2\xi^{1/4}}},z\right)  $ has an expansion at $z=0$ \cite{ronveaux1995heun}:%
\begin{equation}
N\left(  3l+\frac{3}{2},-{\frac{i\sqrt{6}E}{2{\xi}^{3/4}}},-{\frac{3E^{2}%
}{8{\xi}^{3/2}}},-{\frac{i3\sqrt{6}\mu}{2\xi^{1/4}}},z\right)  =\sum_{n\geq
0}\frac{A_{n}}{\left(  3l+5/2\right)  _{n}}\frac{z^{n}}{n!},
\end{equation}
where the expansion coefficients is determined by the recurrence relation,%
\begin{align}
A_{0}  &  =1,\text{ \ }A_{1}=\frac{1}{2}\left[  -{\frac{i3\sqrt{6}\mu}%
{2\xi^{1/4}}}-{\frac{i\sqrt{6}E}{2{\xi}^{3/4}}}\left(  3l+\frac{5}{2}\right)
\right]  ,\nonumber\\
A_{n+2}  &  =\left\{  -\left(  n+1\right)  {\frac{i\sqrt{6}E}{2{\xi}^{3/4}}%
}+\frac{1}{2}\left[  -{\frac{i3\sqrt{6}\mu}{2\xi^{1/4}}}-{\frac{i\sqrt{6}%
E}{2{\xi}^{3/4}}}\left(  3l+\frac{5}{2}\right)  \right]  \right\}
A_{n+1}\nonumber\\
&  -\left(  n+1\right)  \left(  n+3l+\frac{5}{2}\right)  \left(
-{\frac{3E^{2}}{8{\xi}^{3/2}}}-3l-\frac{7}{2}-2n\right)  A_{n},
\end{align}
and $\left(  a\right)  _{n}=\Gamma\left(  a+n\right)  /\Gamma\left(  a\right)
$ is Pochhammer's symbol.

Only $y_{l}^{\left(  1\right)  }\left(  z\right)  $ satisfies the boundary
condition for the regular solution at $r=0$, so the radial eigenfunction reads%
\begin{align}
u_{l}\left(  z\right)   &  =A_{l}\exp\left(  -\frac{z^{2}}{2}-\frac{\beta}%
{2}z\right)  z^{3\left(  l+1\right)  /2}y_{l}^{\left(  1\right)  }\left(
z\right) \nonumber\\
&  =A_{l}\exp\left(  -\frac{z^{2}}{2}-\frac{\beta}{2}z\right)  z^{3\left(
l+1\right)  /2}N\left(  3l+\frac{3}{2},-{\frac{i\sqrt{6}E}{2{\xi}^{3/4}}%
},-{\frac{3E^{2}}{8{\xi}^{3/2}}},-{\frac{i3\sqrt{6}\mu}{2\xi^{1/4}}},z\right)
.
\end{align}
By Eq. (\ref{zr23o43}), we obtain the regular solution,%
\begin{equation}
u_{l}\left(  r\right)  =A_{l}\exp\left(  \frac{3}{4}\xi^{1/2}{r}^{4/3}%
-{\frac{3E}{4{\xi}^{1/2}}}r^{2/3}\right)  r^{l+1}N\left(  3l+\frac{3}%
{2},-{\frac{i\sqrt{6}E}{2{\xi}^{3/4}}},-{\frac{3E^{2}}{8{\xi}^{3/2}}}%
,-{\frac{i3\sqrt{6}\mu}{2\xi^{1/4}}},i\frac{\sqrt{6}}{2}\xi^{1/4}{r}%
^{2/3}\right)  . \label{regular23o43}%
\end{equation}

\subsection{The irregular solution}

The irregular solution is a solution satisfying the boundary condition at
$r\rightarrow\infty$ \cite{li2016exact}. The boundary conditions for bound
states and scattering states at $r\rightarrow\infty$ are different.

The Biconfluent Heun equation (\ref{eqf23o43}) has two linearly independent
irregular solutions \cite{ronveaux1995heun}:%
\begin{align}
&  B_{l}^{+}\left(  3l+\frac{3}{2},-{\frac{i\sqrt{6}E}{2{\xi}^{3/4}}}%
,-{\frac{3E^{2}}{8{\xi}^{3/2}}},-{\frac{i3\sqrt{6}\mu}{2\xi^{1/4}}},z\right)
\nonumber\\
&  =\exp\left(  -{\frac{i\sqrt{6}E}{2{\xi}^{3/4}}}z+z^{2}\right)  B_{l}%
^{+}\left(  3l+\frac{3}{2},-{\frac{\sqrt{6}E}{2{\xi}^{3/4}}},{\frac{3E^{2}%
}{8{\xi}^{3/2}}},{\frac{3\sqrt{6}\mu}{2\xi^{1/4}}},-iz\right) \nonumber\\
&  =\exp\left(  -{\frac{i\sqrt{6}E}{2{\xi}^{3/4}}}z+z^{2}\right)  \left(
-iz\right)  ^{\frac{1}{2}\left(  \frac{3E^{2}}{8{\xi}^{3/2}}-3l-\frac{7}%
{2}\right)  }\sum_{n\geq0}\frac{a_{n}}{\left(  -iz\right)  ^{n}},
\label{f123o43}%
\end{align}%
\begin{align}
&  H_{l}^{+}\left(  3l+\frac{3}{2},-{\frac{i\sqrt{6}E}{2{\xi}^{3/4}}}%
,-{\frac{3E^{2}}{8{\xi}^{3/2}}},-{\frac{i3\sqrt{6}\mu}{2\xi^{1/4}}},z\right)
\nonumber\\
&  =\exp\left(  -{\frac{i\sqrt{6}E}{2{\xi}^{3/4}}}z+z^{2}\right)  H_{l}%
^{+}\left(  3l+\frac{3}{2},-{\frac{\sqrt{6}E}{2{\xi}^{3/4}}},{\frac{3E^{2}%
}{8{\xi}^{3/2}}},{\frac{3\sqrt{6}\mu}{2\xi^{1/4}}},-iz\right) \nonumber\\
&  =\left(  -iz\right)  ^{-\frac{1}{2}\left(  \frac{3E^{2}}{8{\xi}^{3/2}%
}+3l+\frac{7}{2}\right)  }\sum_{n\geq0}\frac{e_{n}}{\left(  -iz\right)  ^{n}}
\label{f223o43}%
\end{align}
with the expansion coefficients given by the recurrence relation
\[
a_{0}=1,\text{ \ }a_{1}=\frac{1}{4}\left[  \frac{3\sqrt{6}\mu}{2\xi^{1/4}%
}-{\frac{\sqrt{6}E}{2{\xi}^{3/4}}}\left(  \frac{3E^{2}}{8{\xi}^{3/2}%
}-1\right)  \right]  ,
\]%
\begin{align}
&  2\left(  n+2\right)  a_{n+2}+\left[  \beta\left(  \frac{3}{2}-\frac{3E^{2}%
}{16{\xi}^{3/2}}+n\right)  -\frac{3\sqrt{6}\mu}{4\xi^{1/4}}\right]
a_{n+1}\nonumber\\
&  +\left[  \frac{1}{4}\left(  \frac{3E^{2}}{8{\xi}^{3/2}}\right)  ^{2}%
-\frac{1}{4}\left(  3l+\frac{3}{2}\right)  ^{2}+1-{\frac{3E^{2}}{8{\xi}^{3/2}%
}}+n\left(  n+2-{\frac{3E^{2}}{8{\xi}^{3/2}}}\right)  \right]  a_{n}=0
\end{align}
and%
\[
e_{0}=1,\text{ \ }e_{1}=-\frac{1}{4}\left[  \frac{3\sqrt{6}\mu}{2\xi^{1/4}%
}-{\frac{\sqrt{6}E}{2{\xi}^{3/4}}}\left(  {\frac{3E^{2}}{8{\xi}^{3/2}}%
}+1\right)  \right]  ,
\]%
\begin{align}
&  2\left(  n+2\right)  e_{n+2}+\left[  \beta\left(  \frac{3}{2}+\frac{3E^{2}%
}{16{\xi}^{3/2}}+n\right)  +\frac{3\sqrt{6}\mu}{4\xi^{1/4}}\right]
e_{n+1}\nonumber\\
&  -\left[  \frac{1}{4}\left(  \frac{3E^{2}}{8{\xi}^{3/2}}\right)  ^{2}%
-\frac{1}{4}\left(  3l+\frac{3}{2}\right)  ^{2}+1+\frac{3E^{2}}{8{\xi}^{3/2}%
}+n\left(  n+2+\frac{3E^{2}}{8{\xi}^{3/2}}\right)  \right]  e_{n}=0.
\end{align}

\subsection{Eigenfunctions and eigenvalues}

To construct the solution, we first express the regular solution
(\ref{regular23o43}) as a linear combination of the two irregular solutions
(\ref{f123o43}) and (\ref{f223o43}).

The regular solution (\ref{regular23o43}), with the relation
\cite{ronveaux1995heun,li2016exact}%

\begin{align}
&  N\left(  3l+\frac{3}{2},-{\frac{i\sqrt{6}E}{2{\xi}^{3/4}}},-{\frac{3E^{2}%
}{8{\xi}^{3/2}}},-{\frac{i3\sqrt{6}\mu}{2\xi^{1/4}}},z\right)  \nonumber\\
&  =K_{1}\left(  3l+\frac{3}{2},-{\frac{i\sqrt{6}E}{2{\xi}^{3/4}}}%
,-{\frac{3E^{2}}{8{\xi}^{3/2}}},-{\frac{i3\sqrt{6}\mu}{2\xi^{1/4}}}\right)
B_{l}^{+}\left(  3l+\frac{3}{2},-{\frac{i\sqrt{6}E}{2{\xi}^{3/4}}}%
,-{\frac{3E^{2}}{8{\xi}^{3/2}}},-{\frac{i3\sqrt{6}\mu}{2\xi^{1/4}}},z\right)
\nonumber\\
&  +K_{2}\left(  3l+\frac{3}{2},-{\frac{i\sqrt{6}E}{2{\xi}^{3/4}}}%
,-{\frac{3E^{2}}{8{\xi}^{3/2}}},-{\frac{i3\sqrt{6}\mu}{2\xi^{1/4}}}\right)
H_{l}^{+}\left(  3l+\frac{3}{2},-{\frac{i\sqrt{6}E}{2{\xi}^{3/4}}}%
,-{\frac{3E^{2}}{8{\xi}^{3/2}}},-{\frac{i3\sqrt{6}\mu}{2\xi^{1/4}}},z\right)
\end{align}
and the expansions (\ref{f123o43}) and (\ref{f223o43}), become%
\begin{align}
u_{l}\left(  r\right)   &  =A_{l}K_{1}\left(  3l+\frac{3}{2},-{\frac{i\sqrt
{6}E}{2{\xi}^{3/4}}},-{\frac{3E^{2}}{8{\xi}^{3/2}}},-{\frac{i3\sqrt{6}\mu
}{2\xi^{1/4}}}\right)  \nonumber\\
&  \times\exp\left(  -\frac{3\xi^{1/2}}{4}{r}^{4/3}+{\frac{3E}{4{\xi}^{1/2}}%
}r^{2/3}\right)  {r}^{\frac{E^{2}}{8{\xi}^{3/2}}-\frac{1}{6}}\sum_{n\geq
0}\frac{a_{n}}{\left(  \frac{\sqrt{6}}{2}\xi^{1/4}{r}^{2/3}\right)  ^{n}%
}\nonumber\\
&  +A_{l}K_{2}\left(  3l+\frac{3}{2},-{\frac{i\sqrt{6}E}{2{\xi}^{3/4}}%
},-{\frac{3E^{2}}{8{\xi}^{3/2}}},-{\frac{i3\sqrt{6}\mu}{2\xi^{1/4}}}\right)
\nonumber\\
&  \times\exp\left(  \frac{3\xi^{1/2}}{4}{r}^{4/3}-{\frac{3E}{4{\xi}^{1/2}}%
}r^{2/3}\right)  r^{-\frac{E^{2}}{8{\xi}^{3/2}}-\frac{1}{6}}\sum_{n\geq0}%
\frac{e_{n}}{\left(  \frac{\sqrt{6}}{2}\xi^{1/4}{r}^{2/3}\right)  ^{n}%
},\label{uexpzf23o43}%
\end{align}
where $K_{1}\left(  3l+\frac{3}{2},-{\frac{i\sqrt{6}E}{2{\xi}^{3/4}}}%
,-{\frac{3E^{2}}{8{\xi}^{3/2}}},-{\frac{i3\sqrt{6}\mu}{2\xi^{1/4}}}\right)  $
and $K_{2}\left(  3l+\frac{3}{2},-{\frac{i\sqrt{6}E}{2{\xi}^{3/4}}}%
,-{\frac{3E^{2}}{8{\xi}^{3/2}}},-{\frac{i3\sqrt{6}\mu}{2\xi^{1/4}}}\right)  $
are combination coefficients\ and $z=i\frac{\sqrt{6}}{2}\xi^{1/4}{r}^{2/3}$.

The boundary condition of bound states, $\left.  u\left(  r\right)
\right\vert _{r\rightarrow\infty}\rightarrow0$, requires that the coefficient
of the second term must vanish since this term diverges when $r\rightarrow
\infty$, i.e.,
\begin{equation}
K_{2}\left(  3l+\frac{3}{2},-{\frac{i\sqrt{6}E}{2{\xi}^{3/4}}},-{\frac{3E^{2}%
}{8{\xi}^{3/2}}},-{\frac{i3\sqrt{6}\mu}{2\xi^{1/4}}}\right)  =0,
\label{eigenvalue23o43}%
\end{equation}
where%
\begin{align}
&  K_{2}\left(  3l+\frac{3}{2},-{\frac{i\sqrt{6}E}{2{\xi}^{3/4}}}%
,-{\frac{3E^{2}}{8{\xi}^{3/2}}},-{\frac{i3\sqrt{6}\mu}{2\xi^{1/4}}}\right)
\nonumber\\
&  =\frac{\Gamma\left(  3l+\frac{5}{2}\right)  }{\Gamma\left(  \frac{3}%
{2}l+\frac{3}{4}+{\frac{3E^{2}}{16{\xi}^{3/2}}}\right)  \Gamma\left(  \frac
{3}{2}l+\frac{7}{4}-{\frac{3E^{2}}{16{\xi}^{3/2}}}\right)  }\nonumber\\
&  \times J_{\frac{3}{2}l+\frac{7}{4}-{\frac{3E^{2}}{16{\xi}^{3/2}}}}\left(
\frac{3}{2}l+\frac{3}{4}-{\frac{3E^{2}}{16{\xi}^{3/2}}},-{\frac{i\sqrt{6}%
E}{2{\xi}^{3/4}}},\frac{9}{2}l+\frac{9}{4}+{\frac{3E^{2}}{16{\xi}^{3/2}}%
},-{\frac{i3\sqrt{6}\mu}{2\xi^{1/4}}}-{\frac{i\sqrt{6}E}{4{\xi}^{3/4}}}\left(
-{\frac{3E^{2}}{8{\xi}^{3/2}}}-3l-\frac{3}{2}\right)  \right)
\end{align}
with%
\begin{align}
&  J_{\frac{3}{2}l+\frac{7}{4}-{\frac{3E^{2}}{16{\xi}^{3/2}}}}\left(  \frac
{3}{2}l+\frac{3}{4}-{\frac{3E^{2}}{16{\xi}^{3/2}}},-{\frac{i\sqrt{6}E}{2{\xi
}^{3/4}}},\frac{9}{2}l+\frac{9}{4}+{\frac{3E^{2}}{16{\xi}^{3/2}}}%
,-{\frac{i3\sqrt{6}\mu}{2\xi^{1/4}}}-{\frac{i\sqrt{6}E}{4{\xi}^{3/4}}}\left(
-{\frac{3E^{2}}{8{\xi}^{3/2}}}-3l-\frac{3}{2}\right)  \right) \nonumber\\
&  =\int_{0}^{\infty}dxx^{\frac{3}{2}l+\frac{3}{4}-{\frac{3E^{2}}{16{\xi
}^{3/2}}}}e^{-x^{2}}\\
&  \times N\left(  \frac{3}{2}l+\frac{3}{4}-{\frac{3E^{2}}{16{\xi}^{3/2}}%
},-{\frac{i\sqrt{6}E}{2{\xi}^{3/4}}},\frac{9}{2}l+\frac{9}{4}+{\frac{3E^{2}%
}{16{\xi}^{3/2}}},-{\frac{i3\sqrt{6}\mu}{2\xi^{1/4}}}-{\frac{i\sqrt{6}E}%
{4{\xi}^{3/4}}}\left(  -{\frac{3E^{2}}{8{\xi}^{3/2}}}-3l-\frac{3}{2}\right)
,x\right)  .
\end{align}

Eq. (\ref{eigenvalue23o43}) is an implicit expression of the eigenvalue.

The eigenfunction, by Eqs. (\ref{uexpzf23o43}) and (\ref{eigenvalue23o43}),
reads%
\begin{align}
u_{l}\left(  r\right)   &  =A_{l}K_{1}\left(  3l+\frac{3}{2},-{\frac{i\sqrt
{6}E}{2{\xi}^{3/4}}},-{\frac{3E^{2}}{8{\xi}^{3/2}}},-{\frac{i3\sqrt{6}\mu
}{2\xi^{1/4}}}\right)  \nonumber\\
&  \times\exp\left(  -\frac{3\xi^{1/2}}{4}{r}^{4/3}+{\frac{3E}{4{\xi}^{1/2}}%
}r^{2/3}\right)  {r}^{\frac{E^{2}}{8{\xi}^{3/2}}-\frac{1}{6}}\sum_{n\geq
0}\frac{a_{n}}{\left(  \frac{\sqrt{6}}{2}\xi^{1/4}{r}^{2/3}\right)  ^{n}}.
\end{align}

\section{The exact solution of $U\left(  r\right)  =\xi r^{6}+\mu r^{4}$
\label{V64}}

In this appendix, we provide an exact solution of the eigenproblem of the
potential $U\left(  r\right)  =\xi r^{6}+\mu r^{4}$ by solving the radial
equation directly. This potential has only bound states.

The radial equation of the potential $U\left(  r\right)  =\xi r^{6}+\mu r^{4}$
reads%
\begin{equation}
\frac{d^{2}u_{l}\left(  r\right)  }{dr^{2}}+\left[  E-\frac{l\left(
l+1\right)  }{r^{2}}-\xi r^{6}-\mu r^{4}\right]  u_{l}\left(  r\right)  =0.
\label{radialeq64}%
\end{equation}
Using the variable substitution%
\begin{equation}
z=i\frac{\sqrt{2}}{2}\xi^{1/4}r^{2} \label{zr64}%
\end{equation}
and introducing $f_{l}\left(  z\right)  $ by%
\begin{equation}
u_{l}\left(  z\right)  =A_{l}\exp\left(  -\frac{z^{2}}{2}-\frac{\beta}%
{2}z\right)  z^{\left(  l+1\right)  /2}f_{l}\left(  z\right)
\end{equation}
with $A_{l}$ a constant, we convert the radial equation (\ref{radialeq23o43})
into an equation of $f_{l}\left(  z\right)  $:%
\begin{align}
&  zf_{l}^{\prime\prime}\left(  z\right)  +\left(  l+\frac{3}{2}-\frac
{i\sqrt{2}\mu}{2\xi^{3/4}}z-2z^{2}\right)  f_{l}^{\prime}\left(  z\right)
\nonumber\\
&  +\left\{  \left(  -{\frac{\mu^{2}}{8\xi^{3/2}}}-l-\frac{5}{2}\right)
z-\frac{1}{2}\left[  \frac{i\sqrt{2}E}{2\xi^{1/4}}+\left(  l+\frac{3}%
{2}\right)  {\frac{i\sqrt{2}\mu}{2\xi^{3/4}}}\right]  \right\}  f_{l}\left(
z\right)  =0. \label{eqf64}%
\end{align}
This is a Biconfluent Heun equation \cite{ronveaux1995heun}.

The choice of the boundary condition has been discussed in Ref.
\cite{li2016exact}.

\subsection{The regular solution}

The regular solution is a solution satisfying the boundary condition at $r=0$
\cite{li2016exact}. The regular solution at $r=0$\ should satisfy the boundary
condition $\lim_{r\rightarrow0}u_{l}\left(  r\right)  /r^{l+1}=1$. In this
section, we provide the regular solution of Eq. (\ref{eqf64}).

The Biconfluent Heun equation (\ref{eqf64}) has two linearly independent
solutions \cite{ronveaux1995heun}%
\begin{align}
y_{l}^{\left(  1\right)  }\left(  z\right)   &  =N\left(  l+\frac{1}{2}%
,{\frac{i\sqrt{2}\mu}{2\xi^{3/4}}},-{\frac{\mu^{2}}{8\xi^{3/2}}},{\frac
{i\sqrt{2}E}{2\xi^{1/4}}},z\right)  ,\\
y_{l}^{\left(  2\right)  }\left(  z\right)   &  =cN\left(  l+\frac{1}%
{2},{\frac{i\sqrt{2}\mu}{2\xi^{3/4}}},-{\frac{\mu^{2}}{8\xi^{3/2}}}%
,{\frac{i\sqrt{2}E}{2\xi^{1/4}}},z\right)  \ln z+\sum_{n\geq0}d_{n}%
z^{n-l-1/2},
\end{align}
where%
\begin{equation}
c=\frac{1}{l+1/2}\left\{  d_{l-1/2}\left[  \frac{i\sqrt{2}E}{4\xi^{1/4}%
}+{\frac{i\sqrt{2}\mu}{4\xi^{3/4}}}\left(  l-\frac{1}{2}\right)  \right]
-d_{l-3/2}\left(  -{\frac{\mu^{2}}{8\xi^{3/2}}}+\frac{3}{2}-l\right)
\right\}
\end{equation}
is a constant with the coefficient $d_{\nu}$ given by the following recurrence
relation,%
\begin{align}
&  d_{-1}=0,\text{ \ }d_{0}=1,\nonumber\\
&  \left(  v+2\right)  \left(  v+\frac{3}{2}-l\right)  d_{v+2}-\frac{1}%
{2}\left(  {\frac{i\sqrt{2}E}{2\xi^{1/4}}}+{\frac{i\sqrt{2}\mu}{2\xi^{3/4}}%
}\left(  2v+\frac{5}{2}-l\right)  \right)  d_{v+1}+\left(  -{\frac{\mu^{2}%
}{8\xi^{3/2}}}-2v-\frac{3}{2}+l\right)  d_{v}=0
\end{align}
and $N(\alpha,\beta,\gamma,\delta,z)$ is the biconfluent Heun function
\cite{ronveaux1995heun,slavyanov2000special,li2016exact}.

The biconfluent Heun function $N\left(  l+\frac{1}{2},{\frac{i\sqrt{2}\mu
}{2\xi^{3/4}}},-{\frac{\mu^{2}}{8\xi^{3/2}}},{\frac{i\sqrt{2}E}{2\xi^{1/4}}%
},z\right)  $ has an expansion at $z=0$ \cite{ronveaux1995heun}:%
\begin{equation}
N\left(  l+\frac{1}{2},{\frac{i\sqrt{2}\mu}{2\xi^{3/4}}},-{\frac{\mu^{2}}%
{8\xi^{3/2}}},{\frac{i\sqrt{2}E}{2\xi^{1/4}}},z\right)  =\sum_{n\geq0}%
\frac{A_{n}}{\left(  l+\frac{3}{2}\right)  _{n}}\frac{z^{n}}{n!},
\end{equation}
where the expansion coefficients is determined by the recurrence relation,%
\begin{align}
A_{0}  &  =1,\text{ \ }A_{1}=\frac{i\sqrt{2}E}{4\xi^{1/4}}+{\frac{i\sqrt{2}%
\mu}{4\xi^{3/4}}}\left(  l+\frac{3}{2}\right)  ,\nonumber\\
A_{n+2}  &  =\left[  \frac{i\sqrt{2}\mu}{2\xi^{3/4}}\left(  n+1\right)
+\frac{i\sqrt{2}E}{4\xi^{1/4}}+{\frac{i\sqrt{2}\mu}{4\xi^{3/4}}}\left(
l+\frac{3}{2}\right)  \right]  A_{n+1}\nonumber\\
&  -\left(  n+1\right)  \left(  n+l+\frac{3}{2}\right)  \left(  -{\frac
{\mu^{2}}{8\xi^{3/2}}}-l-\frac{5}{2}-2n\right)  A_{n},
\end{align}
and $\left(  a\right)  _{n}=\Gamma\left(  a+n\right)  /\Gamma\left(  a\right)
$ is Pochhammer's symbol.

Only $y_{l}^{\left(  1\right)  }\left(  z\right)  $ satisfies the boundary
condition for the regular solution at $r=0$, so the radial eigenfunction reads%
\begin{align}
u_{l}\left(  z\right)   &  =A_{l}\exp\left(  -\frac{z^{2}}{2}-\frac{\beta}%
{2}z\right)  z^{\left(  l+1\right)  /2}y_{l}^{\left(  1\right)  }\left(
z\right)  \nonumber\\
&  =A_{l}\exp\left(  -\frac{z^{2}}{2}-\frac{\beta}{2}z\right)  z^{\left(
l+1\right)  /2}N\left(  l+\frac{1}{2},{\frac{i\sqrt{2}\mu}{2\xi^{3/4}}%
},-{\frac{\mu^{2}}{8\xi^{3/2}}},{\frac{i\sqrt{2}E}{2\xi^{1/4}}},z\right)  .
\end{align}
By Eq. (\ref{zr64}), we obtain the regular solution,%
\begin{equation}
u_{l}\left(  r\right)  =A_{l}\exp\left(  \frac{\xi^{1/2}}{4}r^{4}+\frac{\mu
}{4\xi^{1/2}}r^{2}\right)  r^{l+1}N\left(  l+\frac{1}{2},{\frac{i\sqrt{2}\mu
}{2\xi^{3/4}}},-{\frac{\mu^{2}}{8\xi^{3/2}}},{\frac{i\sqrt{2}E}{2\xi^{1/4}}%
},i\frac{\sqrt{2}}{2}\xi^{1/4}r^{2}\right)  .\label{regular64}%
\end{equation}

\subsection{The irregular solution}

The irregular solution is a solution satisfying the boundary condition at
$r\rightarrow\infty$ \cite{li2016exact}.

The Biconfluent Heun equation (\ref{eqf64}) has two linearly independent
irregular solutions \cite{ronveaux1995heun}:%
\begin{align}
&  B_{l}^{+}\left(  l+\frac{1}{2},{\frac{i\sqrt{2}\mu}{2\xi^{3/4}}}%
,-{\frac{\mu^{2}}{8\xi^{3/2}}},{\frac{i\sqrt{2}E}{2\xi^{1/4}}},z\right)
\nonumber\\
&  =\exp\left(  \frac{i\sqrt{2}\mu}{2\xi^{3/4}}z+z^{2}\right)  B_{l}%
^{+}\left(  l+\frac{1}{2},{\frac{\sqrt{2}\mu}{2\xi^{3/4}}},{\frac{\mu^{2}%
}{8\xi^{3/2}}},-{\frac{\sqrt{2}E}{2\xi^{1/4}}},-iz\right) \nonumber\\
&  =\exp\left(  \frac{i\sqrt{2}\mu}{2\xi^{3/4}}z+z^{2}\right)  \left(
-iz\right)  ^{\frac{1}{2}\left(  \frac{\mu^{2}}{8\xi^{3/2}}-l-\frac{5}%
{2}\right)  }\sum_{n\geq0}\frac{a_{n}}{\left(  -iz\right)  ^{n}}, \label{f164}%
\end{align}%
\begin{align}
&  H_{l}^{+}\left(  l+\frac{1}{2},{\frac{i\sqrt{2}\mu}{2\xi^{3/4}}}%
,-{\frac{\mu^{2}}{8\xi^{3/2}}},{\frac{i\sqrt{2}E}{2\xi^{1/4}}},z\right)
\nonumber\\
&  =\exp\left(  \frac{i\sqrt{2}\mu}{2\xi^{3/4}}z+z^{2}\right)  H_{l}%
^{+}\left(  l+\frac{1}{2},{\frac{\sqrt{2}\mu}{2\xi^{3/4}}},{\frac{\mu^{2}%
}{8\xi^{3/2}}},-{\frac{\sqrt{2}E}{2\xi^{1/4}}},-iz\right) \nonumber\\
&  =\left(  -iz\right)  ^{-\frac{1}{2}\left(  \frac{\mu^{2}}{8\xi^{3/2}%
}+l+\frac{5}{2}\right)  }\sum_{n\geq0}\frac{e_{n}}{\left(  -iz\right)  ^{n}}
\label{f264}%
\end{align}
with the expansion coefficients given by the recurrence relation
\[
a_{0}=1,\text{ \ }a_{1}=-{\frac{\sqrt{2}E}{8\xi^{1/4}}}+\frac{\sqrt{2}\mu
}{8\xi^{3/4}}\left(  \frac{\mu^{2}}{8\xi^{3/2}}-1\right)  ,
\]%
\begin{align}
&  2\left(  n+2\right)  a_{n+2}+\left[  \beta\left(  \frac{3}{2}-\frac{\mu
^{2}}{16\xi^{3/2}}+n\right)  +{\frac{\sqrt{2}E}{4\xi^{1/4}}}\right]
a_{n+1}\nonumber\\
&  +\left[  \frac{1}{4}\left(  \frac{\mu^{2}}{8\xi^{3/2}}\right)  ^{2}%
-\frac{1}{4}\left(  l+\frac{1}{2}\right)  ^{2}+1-\frac{\mu^{2}}{8\xi^{3/2}%
}+n\left(  n+2-\frac{\mu^{2}}{8\xi^{3/2}}\right)  \right]  a_{n}=0
\end{align}
and%
\[
e_{0}=1,\text{ \ }e_{1}={\frac{\sqrt{2}E}{8\xi^{1/4}}}-\frac{\sqrt{2}\mu}%
{8\xi^{3/4}}\left(  {\frac{\mu^{2}}{8\xi^{3/2}}+1}\right)  ,
\]%
\begin{align}
&  2\left(  n+2\right)  e_{n+2}+\left[  \frac{\sqrt{2}\mu}{2\xi^{3/4}}\left(
\frac{3}{2}+\frac{\mu^{2}}{16\xi^{3/2}}+n\right)  -{\frac{\sqrt{2}E}%
{4\xi^{1/4}}}\right]  e_{n+1}\nonumber\\
&  -\left[  \frac{1}{4}\left(  \frac{\mu^{2}}{8\xi^{3/2}}\right)  ^{2}%
-\frac{1}{4}\left(  l+\frac{1}{2}\right)  ^{2}+1+\frac{\mu^{2}}{8\xi^{3/2}%
}+n\left(  n+2+\frac{\mu^{2}}{8\xi^{3/2}}\right)  \right]  e_{n}=0.
\end{align}

\subsection{Eigenfunctions and eigenvalues}

To construct the solution, we first express the regular solution
(\ref{regular23o43}) as a linear combination of the two irregular solutions
(\ref{f164}) and (\ref{f264}).

The regular solution (\ref{regular64}), with the relation
\cite{ronveaux1995heun,li2016exact}%

\begin{align}
&  N\left(  l+\frac{1}{2},{\frac{\sqrt{2}\mu}{2\xi^{3/4}}},{\frac{\mu^{2}%
}{8\xi^{3/2}}},-{\frac{\sqrt{2}E}{2\xi^{1/4}}},z\right)  \nonumber\\
&  =K_{1}\left(  l+\frac{1}{2},{\frac{\sqrt{2}\mu}{2\xi^{3/4}}},{\frac{\mu
^{2}}{8\xi^{3/2}}},-{\frac{\sqrt{2}E}{2\xi^{1/4}}}\right)  B_{l}^{+}\left(
l+\frac{1}{2},{\frac{\sqrt{2}\mu}{2\xi^{3/4}}},{\frac{\mu^{2}}{8\xi^{3/2}}%
},-{\frac{\sqrt{2}E}{2\xi^{1/4}}},z\right)  \nonumber\\
&  +K_{2}\left(  l+\frac{1}{2},{\frac{\sqrt{2}\mu}{2\xi^{3/4}}},{\frac{\mu
^{2}}{8\xi^{3/2}}},-{\frac{\sqrt{2}E}{2\xi^{1/4}}}\right)  H_{l}^{+}\left(
l+\frac{1}{2},{\frac{\sqrt{2}\mu}{2\xi^{3/4}}},{\frac{\mu^{2}}{8\xi^{3/2}}%
},-{\frac{\sqrt{2}E}{2\xi^{1/4}},}z\right)
\end{align}
and the expansions (\ref{f164}) and (\ref{f264}), becomes%
\begin{align}
u_{l}\left(  r\right)   &  =A_{l}K_{1}\left(  l+\frac{1}{2},{\frac{i\sqrt
{2}\mu}{2\xi^{3/4}}},-{\frac{\mu^{2}}{8\xi^{3/2}}},{\frac{i\sqrt{2}E}%
{2\xi^{1/4}}}\right)  \exp\left(  -\frac{\xi^{1/4}}{4}r^{4}-\frac{\mu}%
{4\xi^{1/2}}r^{2}\right)  r^{\frac{\mu^{2}}{8\xi^{3/2}}-\frac{3}{2}}%
\sum_{n\geq0}\frac{a_{n}}{\left(  \frac{\sqrt{2}}{2}\xi^{1/4}r^{2}\right)
^{n}},\nonumber\\
&  +A_{l}K_{2}\left(  l+\frac{1}{2},{\frac{i\sqrt{2}\mu}{2\xi^{3/4}}}%
,-{\frac{\mu^{2}}{8\xi^{3/2}}},{\frac{i\sqrt{2}E}{2\xi^{1/4}}}\right)
\exp\left(  \frac{\xi^{1/4}}{4}r^{4}+\frac{\mu}{4\xi^{1/2}}r^{2}\right)
r^{-\frac{\mu^{2}}{8\xi^{3/2}}-\frac{3}{2}}\sum_{n\geq0}\frac{e_{n}}{\left(
\frac{\sqrt{2}}{2}\xi^{1/4}r^{2}\right)  ^{n}},\label{uexpzf64}%
\end{align}
where $K_{1}\left(  l+\frac{1}{2},{\frac{i\sqrt{2}\mu}{2\xi^{3/4}}}%
,-{\frac{\mu^{2}}{8\xi^{3/2}}},{\frac{i\sqrt{2}E}{2\xi^{1/4}}}\right)  $ and
$K_{2}\left(  l+\frac{1}{2},{\frac{i\sqrt{2}\mu}{2\xi^{3/4}}},-{\frac{\mu^{2}%
}{8\xi^{3/2}}},{\frac{i\sqrt{2}E}{2\xi^{1/4}}}\right)  $ are combination
coefficients\ and $z=i\frac{\sqrt{2}}{2}\xi^{1/4}r^{2}$.

The boundary condition of bound states, $\left.  u\left(  r\right)
\right\vert _{r\rightarrow\infty}\rightarrow0$, requires that the coefficient
of the second term must vanish since this term diverges when $r\rightarrow
\infty$, i.e.,
\begin{equation}
K_{2}\left(  l+\frac{1}{2},{\frac{i\sqrt{2}\mu}{2\xi^{3/4}}},-{\frac{\mu^{2}%
}{8\xi^{3/2}}},{\frac{i\sqrt{2}E}{2\xi^{1/4}}}\right)  =0,
\label{eigenvalue64}%
\end{equation}
where%
\begin{align}
&  K_{2}\left(  l+\frac{1}{2},{\frac{i\sqrt{2}\mu}{2\xi^{3/4}}},-{\frac
{\mu^{2}}{8\xi^{3/2}}},{\frac{i\sqrt{2}E}{2\xi^{1/4}}}\right) \nonumber\\
&  =\frac{\Gamma\left(  l+\frac{3}{2}\right)  }{\Gamma\left(  \frac{l}%
{2}+\frac{1}{4}+{\frac{\mu^{2}}{16\xi^{3/2}}}\right)  \Gamma\left(  \frac
{l}{2}+\frac{5}{4}-{\frac{\mu^{2}}{16\xi^{3/2}}}\right)  }\nonumber\\
&  \times J_{\frac{l}{2}+\frac{5}{4}-{\frac{\mu^{2}}{16\xi^{3/2}}}}\left(
\frac{l}{2}+\frac{1}{4}-{\frac{\mu^{2}}{16\xi^{3/2}}},{\frac{i\sqrt{2}\mu
}{2\xi^{3/4}}},\frac{3}{2}\left(  l+\frac{1}{2}\right)  +{\frac{\mu^{2}}%
{16\xi^{3/2}}},\frac{i\sqrt{2}E}{2\xi^{1/4}}+\frac{i\sqrt{2}\mu}{4\xi^{3/4}%
}\left(  -{\frac{\mu^{2}}{8\xi^{3/2}}}-l-\frac{1}{2}\right)  \right)
\end{align}
with%
\begin{align}
&  J_{\frac{l}{2}+\frac{5}{4}-{\frac{\mu^{2}}{16\xi^{3/2}}}}\left(  \frac
{l}{2}+\frac{1}{4}-{\frac{\mu^{2}}{16\xi^{3/2}}},{\frac{i\sqrt{2}\mu}%
{2\xi^{3/4}}},\frac{3}{2}\left(  l+\frac{1}{2}\right)  +{\frac{\mu^{2}}%
{16\xi^{3/2}}},\frac{i\sqrt{2}E}{2\xi^{1/4}}+\frac{i\sqrt{2}\mu}{4\xi^{3/4}%
}\left(  -{\frac{\mu^{2}}{8\xi^{3/2}}}-l-\frac{1}{2}\right)  \right)
\nonumber\\
&  =\int_{0}^{\infty}dxx^{\frac{l}{2}+\frac{1}{4}-{\frac{\mu^{2}}{16\xi^{3/2}%
}}}e^{-x^{2}}\nonumber\\
&  \times N\left(  \frac{l}{2}+\frac{1}{4}-{\frac{\mu^{2}}{16\xi^{3/2}}%
},{\frac{i\sqrt{2}\mu}{2\xi^{3/4}}},\frac{3}{2}\left(  l+\frac{1}{2}\right)
+{\frac{\mu^{2}}{16\xi^{3/2}}},\frac{i\sqrt{2}E}{2\xi^{1/4}}+\frac{i\sqrt
{2}\mu}{4\xi^{3/4}}\left(  -{\frac{\mu^{2}}{8\xi^{3/2}}}-l-\frac{1}{2}\right)
,x\right)  .
\end{align}

Eq. (\ref{eigenvalue64}) is an implicit expression of the eigenvalue.

The eigenfunction, by Eqs. (\ref{uexpzf64}) and (\ref{eigenvalue64}), reads%
\begin{equation}
u_{l}\left(  r\right)  =A_{l}K_{1}\left(  l+\frac{1}{2},{\frac{i\sqrt{2}\mu
}{2\xi^{3/4}}},-{\frac{\mu^{2}}{8\xi^{3/2}}},{\frac{i\sqrt{2}E}{2\xi^{1/4}}%
}\right)  \exp\left(  -\frac{\xi^{1/4}}{4}r^{4}-\frac{\mu}{4\xi^{1/2}}%
r^{2}\right)  r^{\frac{\mu^{2}}{8\xi^{3/2}}-\frac{3}{2}}\sum_{n\geq0}%
\frac{a_{n}}{\left(  \frac{\sqrt{2}}{2}\xi^{1/4}r^{2}\right)  ^{n}}.
\end{equation}

\section{The exact solution of $U\left(  r\right)  =\xi r^{6}+\mu r^{2}$
\label{V62}}

In this appendix, we provide an exact solution of the eigenproblem of the
potential%
\begin{equation}
U\left(  r\right)  =\xi r^{6}+\mu r^{2}%
\end{equation}
by solving the radial equation directly. This potential has only bound states.

The radial equation reads%
\begin{equation}
\frac{d^{2}u_{l}\left(  r\right)  }{dr^{2}}+\left[  E-\frac{l\left(
l+1\right)  }{r^{2}}-\xi r^{6}-\mu r^{2}\right]  u_{l}\left(  r\right)  =0.
\label{radialeq62}%
\end{equation}
Using the variable substitution%
\begin{equation}
z=\frac{i\sqrt{2}}{2}\xi^{1/4}{r}^{2} \label{zr62}%
\end{equation}
and introducing $f_{l}\left(  z\right)  $ by%
\begin{equation}
u_{l}\left(  z\right)  =A_{l}\exp\left(  -\frac{z^{2}}{2}\right)  z^{\left(
l+1\right)  /2}f_{l}\left(  z\right)
\end{equation}
with $A_{l}$ a constant, we convert the radial equation (\ref{radialeq62})
into an equation of $f_{l}\left(  z\right)  $:%
\begin{equation}
zf_{l}^{\prime\prime}\left(  z\right)  +\left(  l+\frac{3}{2}-2z^{2}\right)
f_{l}^{\prime}\left(  z\right)  +\left[  \left(  \frac{\mu}{2\sqrt{\xi}%
}-l-\frac{5}{2}\right)  z-{\frac{iE\sqrt{2}}{4\xi^{1/4}}}\right]  f_{l}\left(
z\right)  =0. \label{eqf62}%
\end{equation}
This is a Biconfluent Heun equation \cite{ronveaux1995heun}.

The choice of the boundary condition has been discussed in Ref.
\cite{li2016exact}.

\subsection{The regular solution}

The regular solution is a solution satisfying the boundary condition at $r=0$
\cite{li2016exact}. The regular solution at $r=0$\ should satisfy the boundary
condition $\lim_{r\rightarrow0}u_{l}\left(  r\right)  /r^{l+1}=1$. In this
section, we provide the regular solution of Eq. (\ref{eqf62}).

The Biconfluent Heun equation (\ref{eqf62}) has two linearly independent
solutions \cite{ronveaux1995heun}%
\begin{align}
y_{l}^{\left(  1\right)  }\left(  z\right)   &  =N\left(  l+\frac{1}%
{2},0,{\frac{\mu}{2\sqrt{\xi}}},{\frac{iE\sqrt{2}}{2\xi^{1/4}}},z\right)  ,\\
y_{l}^{\left(  2\right)  }\left(  z\right)   &  =cN\left(  l+\frac{1}%
{2},0,{\frac{\mu}{2\sqrt{\xi}}},{\frac{iE\sqrt{2}}{2\xi^{1/4}}},z\right)  \ln
z+\sum_{n\geq0}d_{n}z^{n-l-1/2},
\end{align}
where%
\begin{equation}
c=\frac{1}{l+1/2}\left[  \frac{iE\sqrt{2}}{4\xi^{1/4}}d_{l-1/2}-d_{l-3/2}%
\left(  \frac{\mu}{2\sqrt{\xi}}+\frac{3}{2}-l\right)  \right]
\end{equation}
is a constant with the coefficient $d_{\nu}$ given by the following recurrence
relation,%
\begin{align}
&  d_{-1}=0,\text{ \ }d_{0}=1,\nonumber\\
&  \left(  v+2\right)  \left(  v+\frac{3}{2}-l\right)  d_{v+2}-\frac
{iE\sqrt{2}}{4\xi^{1/4}}d_{v+1}+\left(  \frac{\mu}{2\sqrt{\xi}}-2v+l-\frac
{3}{2}\right)  d_{v}=0
\end{align}
and $N(\alpha,\beta,\gamma,\delta,z)$ is the biconfluent Heun function
\cite{ronveaux1995heun,slavyanov2000special,li2016exact}.

The biconfluent Heun function $N\left(  l+\frac{1}{2},0,{\frac{\mu}{2\sqrt
{\xi}}},{\frac{iE\sqrt{2}}{2\xi^{1/4}}},z\right)  $ has an expansion at $z=0$
\cite{ronveaux1995heun}:%
\begin{equation}
N\left(  l+\frac{1}{2},0,{\frac{\mu}{2\sqrt{\xi}}},{\frac{iE\sqrt{2}}%
{2\xi^{1/4}}},z\right)  =\sum_{n\geq0}\frac{A_{n}}{\left(  l+\frac{3}%
{2}\right)  _{n}}\frac{z^{n}}{n!},
\end{equation}
where the expansion coefficients is determined by the recurrence relation,%
\begin{align}
A_{0}  &  =1,\text{ \ }A_{1}={\frac{iE\sqrt{2}}{4\xi^{1/4}}},\nonumber\\
A_{n+2}  &  =\frac{iE\sqrt{2}}{4\xi^{1/4}}A_{n+1}-\left(  n+1\right)  \left(
n+l+\frac{3}{2}\right)  \left(  \frac{\mu}{2\sqrt{\xi}}-l-\frac{5}%
{2}-2n\right)  A_{n},
\end{align}
and $\left(  a\right)  _{n}=\Gamma\left(  a+n\right)  /\Gamma\left(  a\right)
$ is Pochhammer's symbol.

Only $y_{l}^{\left(  1\right)  }\left(  z\right)  $ satisfies the boundary
condition for the regular solution at $r=0$, so the radial eigenfunction reads%
\begin{align}
u_{l}\left(  z\right)   &  =A_{l}\exp\left(  -\frac{z^{2}}{2}\right)
z^{\left(  l+1\right)  /2}y_{l}^{\left(  1\right)  }\left(  z\right)
\nonumber\\
&  =A_{l}\exp\left(  -\frac{z^{2}}{2}\right)  z^{\left(  l+1\right)
/2}N\left(  l+\frac{1}{2},0,{\frac{\mu}{2\sqrt{\xi}}},{\frac{iE\sqrt{2}}%
{2\xi^{1/4}}},z\right)  .
\end{align}
By Eq. (\ref{zr62}), we obtain the regular solution,%
\begin{equation}
u_{l}\left(  r\right)  =A_{l}\exp\left(  \frac{\xi^{1/2}}{4}{r}^{4}\right)
r^{l+1}N\left(  l+\frac{1}{2},0,{\frac{\mu}{2\sqrt{\xi}}},{\frac{iE\sqrt{2}%
}{2\xi^{1/4}}},\frac{i\sqrt{2}}{2}\xi^{1/4}{r}^{2}\right)  . \label{regular62}%
\end{equation}

\subsection{The irregular solution}

The irregular solution is a solution satisfying the boundary condition at
$r\rightarrow\infty$ \cite{li2016exact}.

The Biconfluent Heun equation (\ref{eqf62}) has two linearly independent
irregular solutions \cite{ronveaux1995heun}:%
\begin{align}
B_{l}^{+}\left(  l+\frac{1}{2},0,{\frac{\mu}{2\sqrt{\xi}}},{\frac{iE\sqrt{2}%
}{2\xi^{1/4}}},z\right)   &  =e^{z^{2}}B_{l}^{+}\left(  l+\frac{1}%
{2},0,-{\frac{\mu}{2\sqrt{\xi}}},-{\frac{E\sqrt{2}}{2\xi^{1/4}}},-iz\right)
\nonumber\\
&  =e^{z^{2}}\left(  -iz\right)  ^{\frac{1}{2}\left(  -{\frac{\mu}{2\sqrt{\xi
}}}-l-\frac{5}{2}\right)  }\sum_{n\geq0}\frac{a_{n}}{\left(  -iz\right)  ^{n}%
}, \label{f162}%
\end{align}%
\begin{align}
H_{l}^{+}\left(  4l+2,0,\frac{2\xi}{\sqrt{-E}},\frac{-i4\sqrt{2}\mu}{\left(
-E\right)  ^{1/4}},z\right)   &  =e^{z^{2}}H_{l}^{+}\left(  l+\frac{1}%
{2},0,-{\frac{\mu}{2\sqrt{\xi}}},-{\frac{E\sqrt{2}}{2\xi^{1/4}}},-iz\right)
\nonumber\\
&  =\left(  -iz\right)  ^{-\frac{1}{2}\left(  -\frac{\mu}{2\sqrt{\xi}}%
+l+\frac{5}{2}\right)  }\sum_{n\geq0}\frac{e_{n}}{\left(  -iz\right)  ^{n}}
\label{f262}%
\end{align}
with the expansion coefficients given by the recurrence relation
\[
a_{0}=1,\text{ \ }a_{1}=-{\frac{E\sqrt{2}}{8\xi^{1/4}}},
\]%
\begin{equation}
2\left(  n+2\right)  a_{n+2}+{\frac{E\sqrt{2}}{4\xi^{1/4}}}a_{n+1}+\left[
{\frac{\mu^{2}}{16\xi}}-\frac{1}{4}\left(  l+\frac{1}{2}\right)  ^{2}%
+1+{\frac{\mu}{2\sqrt{\xi}}}+n\left(  n+2+\frac{\mu}{2\sqrt{\xi}}\right)
\right]  a_{n}=0
\end{equation}
and%
\[
e_{0}=1,\text{ \ }e_{1}={\frac{E\sqrt{2}}{8\xi^{1/4}},}%
\]%
\begin{equation}
2\left(  n+2\right)  e_{n+2}-{\frac{E\sqrt{2}}{4\xi^{1/4}}}e_{n+1}-\left[
\frac{\mu^{2}}{16\xi}-\frac{1}{4}\left(  l+\frac{1}{2}\right)  ^{2}%
+1-{\frac{\mu}{2\sqrt{\xi}}}+n\left(  n+2-{\frac{\mu}{2\sqrt{\xi}}}\right)
\right]  e_{n}=0.
\end{equation}

\subsection{Eigenfunctions and eigenvalues}

To construct the solution, we first express the regular solution
(\ref{regularo132}) as a linear combination of the two irregular solutions
(\ref{f162}) and (\ref{f262}).

The regular solution (\ref{regular62}), with the relation
\cite{ronveaux1995heun,li2016exact}%

\begin{align}
N\left(  l+\frac{1}{2},0,{\frac{\mu}{2\sqrt{\xi}}},{\frac{iE\sqrt{2}}%
{2\xi^{1/4}}},z\right)   &  =K_{1}\left(  l+\frac{1}{2},0,{\frac{\mu}%
{2\sqrt{\xi}}},{\frac{iE\sqrt{2}}{2\xi^{1/4}}}\right)  B_{l}^{+}\left(
l+\frac{1}{2},0,{\frac{\mu}{2\sqrt{\xi}}},{\frac{iE\sqrt{2}}{2\xi^{1/4}}%
},z\right)  \nonumber\\
&  +K_{2}\left(  l+\frac{1}{2},0,{\frac{\mu}{2\sqrt{\xi}}},{\frac{iE\sqrt{2}%
}{2\xi^{1/4}}}\right)  H_{l}^{+}\left(  l+\frac{1}{2},0,{\frac{\mu}{2\sqrt
{\xi}}},{\frac{iE\sqrt{2}}{2\xi^{1/4}}},z\right)
\end{align}
and the expansions (\ref{f162}) and (\ref{f262}), becomes%
\begin{align}
u_{l}\left(  r\right)   &  =A_{l}K_{1}\left(  l+\frac{1}{2},0,{\frac{\mu
}{2\sqrt{\xi}}},{\frac{iE\sqrt{2}}{2\xi^{1/4}}}\right)  \exp\left(  -\frac
{\xi^{1/2}}{4}{r}^{4}\right)  r^{-{\frac{\mu}{2\sqrt{\xi}}}-\frac{3}{2}}%
\sum_{n\geq0}\frac{a_{n}}{\left(  \frac{\sqrt{2}}{2}\xi^{1/4}{r}^{2}\right)
^{n}}\nonumber\\
&  +A_{l}K_{2}\left(  l+\frac{1}{2},0,{\frac{\mu}{2\sqrt{\xi}}},{\frac
{iE\sqrt{2}}{2\xi^{1/4}}}\right)  \exp\left(  \frac{\xi^{1/2}}{4}{r}%
^{4}\right)  r^{\frac{\mu}{2\sqrt{\xi}}-\frac{3}{2}}\sum_{n\geq0}\frac{e_{n}%
}{\left(  \frac{\sqrt{2}}{2}\xi^{1/4}{r}^{2}\right)  ^{n}},\label{uexpzf62}%
\end{align}
where $K_{1}\left(  l+\frac{1}{2},0,{\frac{\mu}{2\sqrt{\xi}}},{\frac
{iE\sqrt{2}}{2\xi^{1/4}}}\right)  $ and $K_{2}\left(  l+\frac{1}{2}%
,0,{\frac{\mu}{2\sqrt{\xi}}},{\frac{iE\sqrt{2}}{2\xi^{1/4}}}\right)  $ are
combination coefficients\ and $z=\frac{i\sqrt{2}}{2}\xi^{1/4}{r}^{2}$.

The boundary condition of bound states, $\left.  u\left(  r\right)
\right\vert _{r\rightarrow\infty}\rightarrow0$, requires that the coefficient
of the second term must vanish since this term diverges when $r\rightarrow
\infty$, i.e.,
\begin{equation}
K_{2}\left(  l+\frac{1}{2},0,{\frac{\mu}{2\sqrt{\xi}}},{\frac{iE\sqrt{2}}%
{2\xi^{1/4}}}\right)  =0, \label{eigenvalue62}%
\end{equation}
where%
\begin{align}
K_{2}\left(  l+\frac{1}{2},0,{\frac{\mu}{2\sqrt{\xi}}},{\frac{iE\sqrt{2}}%
{2\xi^{1/4}}}\right)   &  =\frac{\Gamma\left(  l+\frac{3}{2}\right)  }%
{\Gamma\left(  \frac{l}{2}+\frac{1}{4}-\frac{\mu}{4\sqrt{\xi}}\right)
\Gamma\left(  \frac{l}{2}+\frac{5}{4}+\frac{\mu}{4\sqrt{\xi}}\right)
}\nonumber\\
&  \times J_{\frac{l}{2}+\frac{5}{4}+\frac{\mu}{4\sqrt{\xi}}}\left(  \frac
{l}{2}+\frac{1}{4}+\frac{\mu}{4\sqrt{\xi}},0,\frac{3}{2}\left(  l+\frac{1}%
{2}\right)  -{\frac{\mu}{4\sqrt{\xi}}},\frac{iE\sqrt{2}}{2\xi^{1/4}}\right)
\end{align}
with%
\begin{align}
&  J_{\frac{l}{2}+\frac{5}{4}+\frac{\mu}{4\sqrt{\xi}}}\left(  \frac{l}%
{2}+\frac{1}{4}+\frac{\mu}{4\sqrt{\xi}},0,\frac{3}{2}\left(  l+\frac{1}%
{2}\right)  -{\frac{\mu}{4\sqrt{\xi}}},\frac{iE\sqrt{2}}{2\xi^{1/4}}\right)
\nonumber\\
&  =\int_{0}^{\infty}x^{\frac{l}{2}+\frac{1}{4}+\frac{\mu}{4\sqrt{\xi}}%
}e^{-x^{2}}N\left(  \frac{l}{2}+\frac{1}{4}+\frac{\mu}{4\sqrt{\xi}},0,\frac
{3}{2}\left(  l+\frac{1}{2}\right)  -{\frac{\mu}{4\sqrt{\xi}}},\frac
{iE\sqrt{2}}{2\xi^{1/4}},x\right)  dx.
\end{align}

Eq. (\ref{eigenvalue62}) is an implicit expression of the eigenvalue.

The eigenfunction, by Eqs. (\ref{uexpzf62}) and (\ref{eigenvalue62}), reads%
\begin{equation}
u_{l}\left(  r\right)  =A_{l}K_{1}\left(  l+\frac{1}{2},0,{\frac{\mu}%
{2\sqrt{\xi}}},{\frac{iE\sqrt{2}}{2\xi^{1/4}}}\right)  \exp\left(  -\frac
{\xi^{1/2}}{4}{r}^{4}\right)  r^{-{\frac{\mu}{2\sqrt{\xi}}}-\frac{3}{2}}%
\sum_{n\geq0}\frac{a_{n}}{\left(  \frac{\sqrt{2}}{2}\xi^{1/4}{r}^{2}\right)
^{n}}.
\end{equation}

\section{The exact solution of $U\left(  r\right)  =\xi r^{2}+\mu
r$\label{V21}}

In this appendix, we provide an exact solution of the eigenproblem of the
potential
\begin{equation}
U\left(  r\right)  =\xi r^{2}+\mu r
\end{equation}
by solving the radial equation directly. This potential has only bound states.

The radial equation reads%
\begin{equation}
\frac{d^{2}u_{l}\left(  r\right)  }{dr^{2}}+\left[  E-\frac{l\left(
l+1\right)  }{r^{2}}-\xi r^{2}-\mu r\right]  u_{l}\left(  r\right)  =0.
\label{radialeq21}%
\end{equation}
Using the variable substitution%
\begin{equation}
z=i{\xi}^{1/4}r \label{zr21}%
\end{equation}
and introducing $f_{l}\left(  z\right)  $ by%
\begin{equation}
u_{l}\left(  z\right)  =A_{l}\exp\left(  -\frac{z^{2}}{2}-\frac{\beta}%
{2}z\right)  z^{l+1}f_{l}\left(  z\right)
\end{equation}
with $A_{l}$ a constant, we convert the radial equation (\ref{radialeq21})
into an equation of $f_{l}\left(  z\right)  $:%
\begin{equation}
zf_{l}^{\prime\prime}\left(  z\right)  +\left(  2l+2-{\frac{i\mu}{{\xi}^{3/4}%
}}z-2z^{2}\right)  f_{l}^{\prime}\left(  z\right)  +\left[  \left(  -{\frac
{E}{{\xi}^{1/2}}}-{\frac{{\mu}^{2}}{4{\xi}^{3/2}}}-2l-3\right)  z-\frac{i\mu
}{{\xi}^{3/4}}\left(  l+1\right)  \right]  f_{l}\left(  z\right)  =0.
\label{eqf21}%
\end{equation}
This is a Biconfluent Heun equation \cite{ronveaux1995heun}.

The choice of the boundary condition has been discussed in Ref.
\cite{li2016exact}.

\subsection{The regular solution}

The regular solution is a solution satisfying the boundary condition at $r=0$
\cite{li2016exact}. The regular solution at $r=0$\ should satisfy the boundary
condition $\lim_{r\rightarrow0}u_{l}\left(  r\right)  /r^{l+1}=1$. In this
section, we provide the regular solution of Eq. (\ref{eqf21}).

The Biconfluent Heun equation (\ref{eqf21}) has two linearly independent
solutions \cite{ronveaux1995heun}%
\begin{align}
y_{l}^{\left(  1\right)  }\left(  z\right)   &  =N\left(  2l+1,{\frac{i\mu
}{{\xi}^{3/4}}},-{\frac{E}{{\xi}^{1/2}}}-{\frac{{\mu}^{2}}{4{\xi}^{3/2}}%
,0},z\right)  ,\\
y_{l}^{\left(  2\right)  }\left(  z\right)   &  =cN\left(  2l+1,{\frac{i\mu
}{{\xi}^{3/4}}},-{\frac{E}{{\xi}^{1/2}}}-{\frac{{\mu}^{2}}{4{\xi}^{3/2}}%
},0,z\right)  \ln z+\sum_{n\geq0}d_{n}z^{n-2l-1},
\end{align}
where%
\begin{equation}
c=\frac{1}{2l+1}\left[  \frac{il\mu}{{\xi}^{3/4}}d_{2l}-d_{2l-1}\left(
-{\frac{E}{{\xi}^{1/2}}}-{\frac{{\mu}^{2}}{4{\xi}^{3/2}}}+1-2l\right)
\right]
\end{equation}
is a constant with the coefficient $d_{\nu}$ given by the following recurrence
relation,%
\begin{align}
&  d_{-1}=0,\text{ \ }d_{0}=1,\nonumber\\
&  \left(  v+2\right)  \left(  v+1-2l\right)  d_{v+2}-{\frac{i\mu}{{\xi}%
^{3/4}}}\left(  v+1-l\right)  d_{v+1}+\left(  -{\frac{E}{{\xi}^{1/2}}}%
-{\frac{{\mu}^{2}}{4{\xi}^{3/2}}}-2v-1+2l\right)  d_{v}=0
\end{align}
and $N(\alpha,\beta,\gamma,\delta,z)$ is the biconfluent Heun function
\cite{ronveaux1995heun,slavyanov2000special,li2016exact}.

The biconfluent Heun function $N\left(  2l+1,{\frac{i\mu}{{\xi}^{3/4}}%
},-{\frac{E}{{\xi}^{1/2}}}-{\frac{{\mu}^{2}}{4{\xi}^{3/2}},0},z\right)  $ has
an expansion at $z=0$ \cite{ronveaux1995heun}:%
\begin{equation}
N\left(  2l+1,{\frac{i\mu}{{\xi}^{3/4}}},-{\frac{E}{{\xi}^{1/2}}}-{\frac{{\mu
}^{2}}{4{\xi}^{3/2}},0},z\right)  =\sum_{n\geq0}\frac{A_{n}}{\left(
2l+2\right)  _{n}}\frac{z^{n}}{n!},
\end{equation}
where the expansion coefficients is determined by the recurrence relation,%
\begin{align}
A_{0}  &  =1,\text{ \ }A_{1}=\frac{i\mu}{{\xi}^{3/4}}\left(  l+1\right)
,\nonumber\\
A_{n+2}  &  =\frac{i\mu}{{\xi}^{3/4}}\left(  n+l+2\right)  A_{n+1}-\left(
n+1\right)  \left(  n+2l+2\right)  \left(  -{\frac{E}{{\xi}^{1/2}}}%
-{\frac{{\mu}^{2}}{4{\xi}^{3/2}}}-3-2l-2n\right)  A_{n},
\end{align}
and $\left(  a\right)  _{n}=\Gamma\left(  a+n\right)  /\Gamma\left(  a\right)
$ is Pochhammer's symbol.

Only $y_{l}^{\left(  1\right)  }\left(  z\right)  $ satisfies the boundary
condition for the regular solution at $r=0$, so the radial eigenfunction reads%
\begin{align}
u_{l}\left(  z\right)   &  =A_{l}\exp\left(  -\frac{z^{2}}{2}-\frac{\beta}%
{2}z\right)  z^{l+1}y_{l}^{\left(  1\right)  }\left(  z\right)  \nonumber\\
&  =A_{l}\exp\left(  -\frac{z^{2}}{2}-\frac{\beta}{2}z\right)  z^{l+1}N\left(
2l+1,{\frac{i\mu}{{\xi}^{3/4}}},-{\frac{E}{{\xi}^{1/2}}}-{\frac{{\mu}^{2}%
}{4{\xi}^{3/2}},0},z\right)  .
\end{align}
By Eq. (\ref{zr21}), we obtain the regular solution,%
\begin{equation}
u_{l}\left(  r\right)  =A_{l}\exp\left(  \frac{{\xi}^{1/2}}{2}r^{2}+\frac{\mu
}{2{\xi}^{1/2}}r\right)  r^{l+1}N\left(  2l+1,{\frac{i\mu}{{\xi}^{3/4}}%
},-{\frac{E}{{\xi}^{1/2}}}-{\frac{{\mu}^{2}}{4{\xi}^{3/2}}},0,i{\xi}%
^{1/4}r\right)  .\label{regular21}%
\end{equation}

\subsection{The irregular solution}

The irregular solution is a solution satisfying the boundary condition at
$r\rightarrow\infty$ \cite{li2016exact}.

The Biconfluent Heun equation (\ref{eqf21}) has two linearly independent
irregular solutions \cite{ronveaux1995heun}:%
\begin{align}
B_{l}^{+}\left(  2l+1,{\frac{i\mu}{{\xi}^{3/4}}},-{\frac{E}{{\xi}^{1/2}}%
}-{\frac{{\mu}^{2}}{4{\xi}^{3/2}}},0,z\right)   &  =\exp\left(  \frac{i\mu
}{{\xi}^{3/4}}z+z^{2}\right)  B_{l}^{+}\left(  2l+1,{\frac{\mu}{{\xi}^{3/4}}%
},{\frac{E}{{\xi}^{1/2}}}+{\frac{{\mu}^{2}}{4{\xi}^{3/2}}},0,-iz\right)
\nonumber\\
&  =\exp\left(  \frac{i\mu}{{\xi}^{3/4}}z+z^{2}\right)  \left(  -iz\right)
^{\frac{1}{2}\left(  {\frac{E}{{\xi}^{1/2}}}+{\frac{{\mu}^{2}}{4{\xi}^{3/2}}%
-}2l-3\right)  }\sum_{n\geq0}\frac{a_{n}}{\left(  -iz\right)  ^{n}},
\label{f121}%
\end{align}%
\begin{align}
H_{l}^{+}\left(  2l+1,{\frac{i\mu}{{\xi}^{3/4}}},-{\frac{E}{{\xi}^{1/2}}%
}-{\frac{{\mu}^{2}}{4{\xi}^{3/2}}},0,z\right)   &  =\exp\left(  \frac{i\mu
}{{\xi}^{3/4}}z+z^{2}\right)  H_{l}^{+}\left(  2l+1,{\frac{\mu}{{\xi}^{3/4}}%
},{\frac{E}{{\xi}^{1/2}}}+{\frac{{\mu}^{2}}{4{\xi}^{3/2}}},0,-iz\right)
\nonumber\\
&  =\left(  -iz\right)  ^{-\frac{1}{2}\left(  {\frac{E}{{\xi}^{1/2}}}%
+{\frac{{\mu}^{2}}{4{\xi}^{3/2}}}+2l+3\right)  }\sum_{n\geq0}\frac{e_{n}%
}{\left(  -iz\right)  ^{n}} \label{f221}%
\end{align}
with the expansion coefficients given by the recurrence relation
\[
a_{0}=1,\text{ \ }a_{1}=\frac{\mu}{4{\xi}^{3/4}}\left(  {\frac{E}{{\xi}^{1/2}%
}}+{\frac{{\mu}^{2}}{4{\xi}^{3/2}}}-1\right)  ,
\]%
\begin{align}
&  2\left(  n+2\right)  a_{n+2}+\frac{\mu}{{\xi}^{3/4}}\left(  \frac{3}%
{2}-{\frac{E}{2{\xi}^{1/2}}}-{\frac{{\mu}^{2}}{8{\xi}^{3/2}}}+n\right)
a_{n+1}\nonumber\\
&  +\left[  \frac{1}{4}\left(  {\frac{E}{{\xi}^{1/2}}}+{\frac{{\mu}^{2}}%
{4{\xi}^{3/2}}}\right)  ^{2}-\frac{1}{4}\left(  2l+1\right)  ^{2}+1-{\frac
{E}{{\xi}^{1/2}}}-{\frac{{\mu}^{2}}{4{\xi}^{3/2}}}+n\left(  n+2-{\frac{E}%
{{\xi}^{1/2}}}-{\frac{{\mu}^{2}}{4{\xi}^{3/2}}}\right)  \right]  a_{n}=0
\end{align}
and%
\[
e_{0}=1,\text{ \ }e_{1}=-\frac{\mu}{4{\xi}^{3/4}}\left(  \gamma+1\right)  ,
\]%
\begin{align}
&  2\left(  n+2\right)  e_{n+2}+\frac{\mu}{{\xi}^{3/4}}\left(  \frac{3}%
{2}+{\frac{E}{2{\xi}^{1/2}}}+{\frac{{\mu}^{2}}{8{\xi}^{3/2}}}+n\right)
e_{n+1}\nonumber\\
&  -\left[  \frac{1}{4}\left(  {\frac{E}{{\xi}^{1/2}}}+{\frac{{\mu}^{2}}%
{4{\xi}^{3/2}}}\right)  ^{2}-\frac{1}{4}\left(  2l+1\right)  ^{2}+1+{\frac
{E}{{\xi}^{1/2}}}+{\frac{{\mu}^{2}}{4{\xi}^{3/2}}}+n\left(  n+2+{\frac{E}%
{{\xi}^{1/2}}}+{\frac{{\mu}^{2}}{4{\xi}^{3/2}}}\right)  \right]  e_{n}=0.
\end{align}

\subsection{Eigenfunctions and eigenvalues}

To construct the solution, we first express the regular solution
(\ref{regular21}) as a linear combination of the two irregular solutions
(\ref{f121}) and (\ref{f221}).

The regular solution (\ref{regular21}), with the relation
\cite{ronveaux1995heun,li2016exact}%

\begin{align}
&  N\left(  2l+1,{\frac{i\mu}{{\xi}^{3/4}}},-{\frac{E}{{\xi}^{1/2}}}%
-{\frac{{\mu}^{2}}{4{\xi}^{3/2}}},0,z\right)  \nonumber\\
&  =K_{1}\left(  2l+1,{\frac{i\mu}{{\xi}^{3/4}}},-{\frac{E}{{\xi}^{1/2}}%
}-{\frac{{\mu}^{2}}{4{\xi}^{3/2}}},0\right)  B_{l}^{+}\left(  2l+1,{\frac
{i\mu}{{\xi}^{3/4}}},-{\frac{E}{{\xi}^{1/2}}}-{\frac{{\mu}^{2}}{4{\xi}^{3/2}}%
},0,z\right)  \nonumber\\
&  +K_{2}\left(  2l+1,{\frac{i\mu}{{\xi}^{3/4}}},-{\frac{E}{{\xi}^{1/2}}%
}-{\frac{{\mu}^{2}}{4{\xi}^{3/2}}},0\right)  H_{l}^{+}\left(  2l+1,{\frac
{i\mu}{{\xi}^{3/4}}},-{\frac{E}{{\xi}^{1/2}}}-{\frac{{\mu}^{2}}{4{\xi}^{3/2}}%
},0,z\right)
\end{align}
and the expansions (\ref{f121}) and (\ref{f221}), becomes%
\begin{align}
u_{l}\left(  r\right)   &  =A_{l}K_{1}\left(  2l+1,{\frac{i\mu}{{\xi}^{3/4}}%
},-{\frac{E}{{\xi}^{1/2}}}-{\frac{{\mu}^{2}}{4{\xi}^{3/2}}},0\right)
\exp\left(  -\frac{{\xi}^{1/2}}{2}r^{2}-\frac{\mu}{2{\xi}^{1/2}}r\right)
r^{{\frac{E}{2{\xi}^{1/2}}}+{\frac{{\mu}^{2}}{8{\xi}^{3/2}}}-\frac{1}{2}}%
\sum_{n\geq0}\frac{a_{n}}{\left(  {\xi}^{1/4}r\right)  ^{n}}\nonumber\\
&  +A_{l}K_{2}\left(  2l+1,{\frac{i\mu}{{\xi}^{3/4}}},-{\frac{E}{{\xi}^{1/2}}%
}-{\frac{{\mu}^{2}}{4{\xi}^{3/2}}},0\right)  \exp\left(  \frac{{\xi}^{1/2}}%
{2}r^{2}+\frac{\mu}{2{\xi}^{1/2}}r\right)  r^{-{\frac{E}{2{\xi}^{1/2}}}%
-{\frac{{\mu}^{2}}{8{\xi}^{3/2}}}-\frac{1}{2}}\sum_{n\geq0}\frac{e_{n}%
}{\left(  {\xi}^{1/4}r\right)  ^{n}},\label{uexpzf21}%
\end{align}
where $K_{1}\left(  2l+1,{\frac{i\mu}{{\xi}^{3/4}}},-{\frac{E}{{\xi}^{1/2}}%
}-{\frac{{\mu}^{2}}{4{\xi}^{3/2}}},0\right)  $ and $K_{2}\left(
2l+1,{\frac{i\mu}{{\xi}^{3/4}}},-{\frac{E}{{\xi}^{1/2}}}-{\frac{{\mu}^{2}%
}{4{\xi}^{3/2}}},0\right)  $ are combination coefficients\ and $z=i{\xi}%
^{1/4}r$.

The boundary condition of bound states, $\left.  u\left(  r\right)
\right\vert _{r\rightarrow\infty}\rightarrow0$, requires that the coefficient
of the second term must vanish since this term diverges when $r\rightarrow
\infty$, i.e.,
\begin{equation}
K_{2}\left(  2l+1,{\frac{i\mu}{{\xi}^{3/4}}},-{\frac{E}{{\xi}^{1/2}}}%
-{\frac{{\mu}^{2}}{4{\xi}^{3/2}}},0\right)  =0, \label{eigenvalue21}%
\end{equation}
where%
\begin{align}
&  K_{2}\left(  2l+1,{\frac{i\mu}{{\xi}^{3/4}}},-{\frac{E}{{\xi}^{1/2}}%
}-{\frac{{\mu}^{2}}{4{\xi}^{3/2}}},0\right) \nonumber\\
&  =\frac{\Gamma\left(  2l+2\right)  }{\Gamma\left(  l+\frac{1}{2}+{\frac
{E}{2{\xi}^{1/2}}}+{\frac{{\mu}^{2}}{8{\xi}^{3/2}}}\right)  \Gamma\left(
l+\frac{3}{2}-{\frac{E}{2{\xi}^{1/2}}}-{\frac{{\mu}^{2}}{8{\xi}^{3/2}}%
}\right)  }\nonumber\\
&  \times J_{l+\frac{3}{2}-{\frac{E}{2{\xi}^{1/2}}}-{\frac{{\mu}^{2}}{8{\xi
}^{3/2}}}}\left(  l+\frac{1}{2}-{\frac{E}{2{\xi}^{1/2}}}-{\frac{{\mu}^{2}%
}{8{\xi}^{3/2}}},\frac{i\mu}{{\xi}^{3/4}},\frac{3}{2}\left(  2l+1\right)
+{\frac{E}{2{\xi}^{1/2}}}\right. \nonumber\\
&  +\left.  {\frac{{\mu}^{2}}{8{\xi}^{3/2}}},{\frac{i\mu}{2{\xi}^{3/4}}%
}\left(  -{\frac{E}{{\xi}^{1/2}}}-{\frac{{\mu}^{2}}{4{\xi}^{3/2}}%
}-2l-1\right)  \right)
\end{align}
with%
\begin{align}
&  J_{l+\frac{3}{2}-{\frac{E}{2{\xi}^{1/2}}}-{\frac{{\mu}^{2}}{8{\xi}^{3/2}}}%
}\left(  l+\frac{1}{2}-{\frac{E}{2{\xi}^{1/2}}}-{\frac{{\mu}^{2}}{8{\xi}%
^{3/2}}},\frac{i\mu}{{\xi}^{3/4}},\frac{3}{2}\left(  2l+1\right)  +{\frac
{E}{2{\xi}^{1/2}}}+{\frac{{\mu}^{2}}{8{\xi}^{3/2}}},{\frac{i\mu}{2{\xi}^{3/4}%
}}\left(  -{\frac{E}{{\xi}^{1/2}}}-{\frac{{\mu}^{2}}{4{\xi}^{3/2}}%
}-2l-1\right)  \right) \nonumber\\
&  =\int_{0}^{\infty}dxx^{l+\frac{1}{2}-{\frac{E}{2{\xi}^{1/2}}}-{\frac{{\mu
}^{2}}{8{\xi}^{3/2}}}}e^{-x^{2}}\\
&  \times N\left(  l+\frac{1}{2}-{\frac{E}{2{\xi}^{1/2}}}-{\frac{{\mu}^{2}%
}{8{\xi}^{3/2}}},\frac{i\mu}{{\xi}^{3/4}},\frac{3}{2}\left(  2l+1\right)
+{\frac{E}{2{\xi}^{1/2}}}+{\frac{{\mu}^{2}}{8{\xi}^{3/2}}},{\frac{i\mu}{2{\xi
}^{3/4}}}\left(  -{\frac{E}{{\xi}^{1/2}}}-{\frac{{\mu}^{2}}{4{\xi}^{3/2}}%
}-2l-1\right)  ,x\right)  .
\end{align}

Eq. (\ref{eigenvalue21}) is an implicit expression of the eigenvalue.

The eigenfunction, by Eqs. (\ref{uexpzf21}) and (\ref{eigenvalue21}), reads%
\begin{equation}
u_{l}\left(  r\right)  =A_{l}K_{1}\left(  2l+1,{\frac{i\mu}{{\xi}^{3/4}}%
},-{\frac{E}{{\xi}^{1/2}}}-{\frac{{\mu}^{2}}{4{\xi}^{3/2}}},0\right)
\exp\left(  -\frac{{\xi}^{1/2}}{2}r^{2}-\frac{\mu}{2{\xi}^{1/2}}r\right)
r^{{\frac{E}{2{\xi}^{1/2}}}+{\frac{{\mu}^{2}}{8{\xi}^{3/2}}}-\frac{1}{2}}%
\sum_{n\geq0}\frac{a_{n}}{\left(  {\xi}^{1/4}r\right)  ^{n}}.
\end{equation}

\section{The exact solution of $U\left(  r\right)  =\frac{\xi}{r}+\frac{\mu
}{\sqrt{r}}$\label{Vm1m12}}

In this appendix, we provide an exact solution of the eigenproblem of the
potential
\begin{equation}
U\left(  r\right)  =\frac{\xi}{r}+\frac{\mu}{\sqrt{r}}%
\end{equation}
by solving the radial equation directly. This potential has both bound states
and scattering states.

The radial equation of the potential $U\left(  r\right)  =\frac{\xi}{r}%
+\frac{\mu}{\sqrt{r}}$ reads%
\begin{equation}
\frac{d^{2}}{dr^{2}}u_{l}\left(  r\right)  +\left[  E-\frac{l\left(
l+1\right)  }{r^{2}}-\frac{\xi}{r}-\frac{\mu}{\sqrt{r}}\right]  u_{l}\left(
r\right)  =0. \label{radialeqo112}%
\end{equation}
Using the variable substitution%
\begin{equation}
z=i\left(  -E\right)  ^{1/4}\sqrt{2r} \label{zro112}%
\end{equation}
and introducing $f_{l}\left(  z\right)  $ by%
\begin{equation}
u_{l}\left(  z\right)  =A_{l}\exp\left(  -\frac{z^{2}}{2}-\frac{\beta}%
{2}z\right)  z^{2\left(  l+1\right)  }f_{l}\left(  z\right)
\end{equation}
with $A_{l}$ a constant, we convert the radial equation (\ref{radialeqo112})
into an equation of $f_{l}\left(  z\right)  $:%
\begin{align}
zf_{l}^{\prime\prime}  &  \left(  z\right)  +\left(  4l+3-\frac{i\sqrt{2}\mu
}{\left(  -E\right)  ^{3/4}}z-2z^{2}\right)  f_{l}^{\prime}\left(  z\right)
\nonumber\\
&  +\left[  \left(  {\frac{2\xi}{\left(  -E\right)  ^{1/2}}}-{\frac{{\mu}^{2}%
}{2\left(  -E\right)  ^{3/2}}}-4l-4\right)  z-\frac{i\sqrt{2}\mu}{2\left(
-E\right)  ^{3/4}}\left(  4l+3\right)  \right]  f_{l}\left(  z\right)  =0.
\label{eqfo112}%
\end{align}
This is a Biconfluent Heun equation \cite{ronveaux1995heun}.

The choice of the boundary condition has been discussed in Ref.
\cite{li2016exact}.

\subsection{The regular solution}

The regular solution is a solution satisfying the boundary condition at $r=0$
\cite{li2016exact}. The regular solution at $r=0$\ should satisfy the boundary
condition $\lim_{r\rightarrow0}u_{l}\left(  r\right)  /r^{l+1}=1$ for both
bound states and scattering states. In this section, we provide the regular
solution of Eq. (\ref{eqf21}).

The Biconfluent Heun equation (\ref{eqf21}) has two linearly independent
solutions \cite{ronveaux1995heun}%
\begin{align}
y_{l}^{\left(  1\right)  }\left(  z\right)   &  =N\left(  4l+2,{\frac
{i\sqrt{2}\mu}{\left(  -E\right)  ^{3/4}}},{\frac{2\xi}{\left(  -E\right)
^{1/2}}}-{\frac{{\mu}^{2}}{2\left(  -E\right)  ^{3/2}}},0,z\right)  ,\\
y_{l}^{\left(  2\right)  }\left(  z\right)   &  =cN\left(  4l+2,{\frac
{i\sqrt{2}\mu}{\left(  -E\right)  ^{3/4}}},{\frac{2\xi}{\left(  -E\right)
^{1/2}}}-{\frac{{\mu}^{2}}{2\left(  -E\right)  ^{3/2}}},0,z\right)  \ln
z+\sum_{n\geq0}d_{n}z^{n-4l-2},
\end{align}
where%
\begin{equation}
c=\frac{1}{4l+2}\left\{  {\frac{i\sqrt{2}\mu}{2\left(  -E\right)  ^{3/4}}%
}\left(  4l+1\right)  d_{4l+1}-d_{4l}\left[  {\frac{2\xi}{\left(  -E\right)
^{1/2}}}-{\frac{{\mu}^{2}}{2\left(  -E\right)  ^{3/2}}}-4l\right]  \right\}
\end{equation}
is a constant with the coefficient $d_{\nu}$ given by the following recurrence
relation,%
\begin{align}
&  d_{-1}=0,\text{ \ }d_{0}=1,\nonumber\\
&  \left(  v+2\right)  \left(  v-4l\right)  d_{v+2}-{\frac{i\sqrt{2}\mu
}{2\left(  -E\right)  ^{3/4}}}\left(  2v+1-4l\right)  d_{v+1}+\left[
{\frac{2\xi}{\left(  -E\right)  ^{1/2}}}-{\frac{{\mu}^{2}}{2\left(  -E\right)
^{3/2}}}-2v+4l\right]  d_{v}=0
\end{align}
and $N(\alpha,\beta,\gamma,\delta,z)$ is the biconfluent Heun function
\cite{ronveaux1995heun,slavyanov2000special,li2016exact}.

The biconfluent Heun function $N\left(  4l+2,{\frac{i\sqrt{2}\mu}{\left(
-E\right)  ^{3/4}}},{\frac{2\xi}{\left(  -E\right)  ^{1/2}}}-{\frac{{\mu}^{2}%
}{2\left(  -E\right)  ^{3/2}}},0,z\right)  $ has an expansion at $z=0$
\cite{ronveaux1995heun}:%
\begin{equation}
N\left(  4l+2,{\frac{i\sqrt{2}\mu}{\left(  -E\right)  ^{3/4}}},{\frac{2\xi
}{\left(  -E\right)  ^{1/2}}}-{\frac{{\mu}^{2}}{2\left(  -E\right)  ^{3/2}}%
},0,z\right)  =\sum_{n\geq0}\frac{A_{n}}{\left(  4l+3\right)  _{n}}\frac
{z^{n}}{n!},
\end{equation}
where the expansion coefficients is determined by the recurrence relation,%
\begin{align}
A_{0}  &  =1,\text{ \ }A_{1}={\frac{i\sqrt{2}\mu}{2\left(  -E\right)  ^{3/4}}%
}\left(  4l+3\right)  ,\nonumber\\
A_{n+2}  &  =\frac{i\sqrt{2}\mu}{2\left(  -E\right)  ^{3/4}}\left(
2n+4l+5\right)  A_{n+1}-\left(  n+1\right)  \left(  n+4l+3\right)  \left[
{\frac{2\xi}{\left(  -E\right)  ^{1/2}}}-{\frac{{\mu}^{2}}{2\left(  -E\right)
^{3/2}}}-4l-4-2n\right]  A_{n},
\end{align}
and $\left(  a\right)  _{n}=\Gamma\left(  a+n\right)  /\Gamma\left(  a\right)
$ is Pochhammer's symbol.

Only $y_{l}^{\left(  1\right)  }\left(  z\right)  $ satisfies the boundary
condition for the regular solution at $r=0$, so the radial eigenfunction reads%
\begin{align}
u_{l}\left(  z\right)   &  =A_{l}\exp\left(  -\frac{z^{2}}{2}-\frac{\beta}%
{2}z\right)  z^{2\left(  l+1\right)  }y_{l}^{\left(  1\right)  }\left(
z\right) \nonumber\\
&  =A_{l}\exp\left(  -\frac{z^{2}}{2}-\frac{\beta}{2}z\right)  z^{2\left(
l+1\right)  }N\left(  4l+2,{\frac{i\sqrt{2}\mu}{\left(  -E\right)  ^{3/4}}%
},{\frac{2\xi}{\left(  -E\right)  ^{1/2}}}-{\frac{{\mu}^{2}}{2\left(
-E\right)  ^{3/2}}},0,z\right)  .
\end{align}
By Eq. (\ref{zr21}), we obtain the regular solution%
\begin{align}
u_{l}\left(  r\right)   &  =A_{l}\exp\left(  \left(  -E\right)  ^{1/2}%
r+\frac{\mu}{\left(  -E\right)  ^{1/2}}\sqrt{r}\right)  r^{l+1}\nonumber\\
&  \times N\left(  4l+2,{\frac{i\sqrt{2}\mu}{\left(  -E\right)  ^{3/4}}%
},{\frac{2\xi}{\left(  -E\right)  ^{1/2}}}-{\frac{{\mu}^{2}}{2\left(
-E\right)  ^{3/2}}},0,i\left(  -E\right)  ^{1/4}\sqrt{2r}\right)  .
\label{regularo112}%
\end{align}

\subsection{The irregular solution}

The irregular solution is a solution satisfying the boundary condition at
$r\rightarrow\infty$ \cite{li2016exact}. The boundary conditions for bound
states and scattering states at $r\rightarrow\infty$ are different.

The Biconfluent Heun equation (\ref{eqfo112}) has two linearly independent
irregular solutions \cite{ronveaux1995heun}:%
\begin{align}
&  B_{l}^{+}\left(  4l+2,{\frac{i\sqrt{2}\mu}{\left(  -E\right)  ^{3/4}}%
},{\frac{2\xi}{\left(  -E\right)  ^{1/2}}}-{\frac{{\mu}^{2}}{2\left(
-E\right)  ^{3/2}}},0,z\right) \nonumber\\
&  =\exp\left(  \frac{i\sqrt{2}\mu}{\left(  -E\right)  ^{3/4}}z+z^{2}\right)
B_{l}^{+}\left(  4l+2,{\frac{\sqrt{2}\mu}{\left(  -E\right)  ^{3/4}}}%
,{\frac{{\mu}^{2}}{2\left(  -E\right)  ^{3/2}}}-{\frac{2\xi}{\left(
-E\right)  ^{1/2}}},0,-iz\right) \nonumber\\
&  =\exp\left(  \frac{i\sqrt{2}\mu}{\left(  -E\right)  ^{3/4}}z+z^{2}\right)
\left(  -iz\right)  ^{\frac{1}{2}\left(  {\frac{{\mu}^{2}}{2\left(  -E\right)
^{3/2}}}-{\frac{2\xi}{\left(  -E\right)  ^{1/2}}}-4l-4\right)  }\sum_{n\geq
0}\frac{a_{n}}{\left(  -iz\right)  ^{n}}, \label{f1o112}%
\end{align}%
\begin{align}
&  H_{l}^{+}\left(  4l+2,{\frac{i\sqrt{2}\mu}{\left(  -E\right)  ^{3/4}}%
},{\frac{2\xi}{\left(  -E\right)  ^{1/2}}}-{\frac{{\mu}^{2}}{2\left(
-E\right)  ^{3/2}}},0,z\right) \nonumber\\
&  =\exp\left(  \frac{i\sqrt{2}\mu}{\left(  -E\right)  ^{3/4}}z+z^{2}\right)
H_{l}^{+}\left(  4l+2,{\frac{\sqrt{2}\mu}{\left(  -E\right)  ^{3/4}}}%
,{\frac{{\mu}^{2}}{2\left(  -E\right)  ^{3/2}}}-{\frac{2\xi}{\left(
-E\right)  ^{1/2}}},0,-iz\right) \nonumber\\
&  =\left(  -iz\right)  ^{-\frac{1}{2}\left(  {\frac{{\mu}^{2}}{2\left(
-E\right)  ^{3/2}}}-{\frac{2\xi}{\left(  -E\right)  ^{1/2}}}+4l+4\right)
}\sum_{n\geq0}\frac{e_{n}}{\left(  -iz\right)  ^{n}} \label{f2o112}%
\end{align}
with the expansion coefficients given by the recurrence relation
\[
a_{0}=1,\text{ \ }a_{1}=\frac{\sqrt{2}\mu}{4\left(  -E\right)  ^{3/4}}\left(
\gamma-1\right)  ,
\]%
\begin{align}
2\left(  n+2\right)  a_{n+2}+  &  \frac{\sqrt{2}\mu}{\left(  -E\right)
^{3/4}}\left[  \frac{3}{2}-{\frac{{\mu}^{2}}{4\left(  -E\right)  ^{3/2}}%
}+{\frac{2\xi}{2\left(  -E\right)  ^{1/2}}}+n\right]  a_{n+1}\nonumber\\
&  +\left\{  \frac{1}{4}\left[  {\frac{{\mu}^{2}}{2\left(  -E\right)  ^{3/2}}%
}-{\frac{2\xi}{\left(  -E\right)  ^{1/2}}}\right]  ^{2}-\left(  2l+1\right)
^{2}+1-{\frac{{\mu}^{2}}{2\left(  -E\right)  ^{3/2}}}\right. \nonumber\\
&  +\left.  {\frac{2\xi}{\left(  -E\right)  ^{1/2}}}+n\left[  n+2-{\frac{{\mu
}^{2}}{2\left(  -E\right)  ^{3/2}}}+{\frac{2\xi}{\left(  -E\right)  ^{1/2}}%
}\right]  \right\}  a_{n}=0
\end{align}
and%
\[
e_{0}=1,\ e_{1}=-\frac{\sqrt{2}\mu}{4\left(  -E\right)  ^{3/4}}\left(
\gamma+1\right)  ,
\]%
\begin{align}
2\left(  n+2\right)  e_{n+2}+  &  \frac{\sqrt{2}\mu}{\left(  -E\right)
^{3/4}}\left[  \frac{3}{2}+{\frac{{\mu}^{2}}{4\left(  -E\right)  ^{3/2}}%
}-{\frac{\xi}{\left(  -E\right)  ^{1/2}}}+n\right]  e_{n+1}\nonumber\\
&  -\left\{  \frac{1}{4}\left[  {\frac{{\mu}^{2}}{2\left(  -E\right)  ^{3/2}}%
}-{\frac{2\xi}{\left(  -E\right)  ^{1/2}}}\right]  ^{2}-\left(  2l+1\right)
^{2}+1+{\frac{{\mu}^{2}}{2\left(  -E\right)  ^{3/2}}}\right. \nonumber\\
&  -\left.  {\frac{2\xi}{\left(  -E\right)  ^{1/2}}}+n\left[  n+2+{\frac{{\mu
}^{2}}{2\left(  -E\right)  ^{3/2}}}-{\frac{2\xi}{\left(  -E\right)  ^{1/2}}%
}\right]  \right\}  e_{n}=0.
\end{align}

\subsection{Bound states and scattering states}

\subsubsection{The bound state}

To construct the solution, we first express the regular solution
(\ref{regularo132}) as a linear combination of the two irregular solutions
(\ref{f1o112}) and (\ref{f2o112}).

The regular solution (\ref{regularo112}), with the relation
\cite{ronveaux1995heun,li2016exact}%

\begin{align}
&  N\left(  4l+2,{\frac{i\sqrt{2}\mu}{\left(  -E\right)  ^{3/4}}},{\frac{2\xi
}{\left(  -E\right)  ^{1/2}}}-{\frac{{\mu}^{2}}{2\left(  -E\right)  ^{3/2}}%
},0,z\right)  \nonumber\\
&  =K_{1}\left(  4l+2,{\frac{i\sqrt{2}\mu}{\left(  -E\right)  ^{3/4}}}%
,{\frac{2\xi}{\left(  -E\right)  ^{1/2}}}-{\frac{{\mu}^{2}}{2\left(
-E\right)  ^{3/2}}},0\right)  B_{l}^{+}\left(  4l+2,{\frac{i\sqrt{2}\mu
}{\left(  -E\right)  ^{3/4}}},{\frac{2\xi}{\left(  -E\right)  ^{1/2}}}%
-{\frac{{\mu}^{2}}{2\left(  -E\right)  ^{3/2}}},0,z\right)  \nonumber\\
&  +K_{2}\left(  4l+2,{\frac{i\sqrt{2}\mu}{\left(  -E\right)  ^{3/4}}}%
,{\frac{2\xi}{\left(  -E\right)  ^{1/2}}}-{\frac{{\mu}^{2}}{2\left(
-E\right)  ^{3/2}}},0\right)  H_{l}^{+}\left(  4l+2,{\frac{i\sqrt{2}\mu
}{\left(  -E\right)  ^{3/4}}},{\frac{2\xi}{\left(  -E\right)  ^{1/2}}}%
-{\frac{{\mu}^{2}}{2\left(  -E\right)  ^{3/2}}},0,z\right)
\end{align}
and the expansions (\ref{f1o112}) and (\ref{f2o112}), becomes%
\begin{align}
u_{l}\left(  r\right)   &  =A_{l}K_{1}\left(  4l+2,{\frac{i\sqrt{2}\mu
}{\left(  -E\right)  ^{3/4}}},{\frac{2\xi}{\left(  -E\right)  ^{1/2}}}%
-{\frac{{\mu}^{2}}{2\left(  -E\right)  ^{3/2}}},0\right)  \nonumber\\
&  \times\exp\left(  -\left(  -E\right)  ^{1/2}r-\frac{\mu\sqrt{r}}{\left(
-E\right)  ^{1/2}}\right)  r^{{\frac{{\mu}^{2}}{8\left(  -E\right)  ^{3/2}}%
}-{\frac{\xi}{2\left(  -E\right)  ^{1/2}}}}\sum_{n\geq0}\frac{a_{n}}{\left[
\left(  -E\right)  ^{1/4}\sqrt{2r}\right]  ^{n}}\nonumber\\
&  +A_{l}K_{2}\left(  4l+2,{\frac{i\sqrt{2}\mu}{\left(  -E\right)  ^{3/4}}%
},{\frac{2\xi}{\left(  -E\right)  ^{1/2}}}-{\frac{{\mu}^{2}}{2\left(
-E\right)  ^{3/2}}},0\right)  \nonumber\\
&  \times\exp\left(  \left(  -E\right)  ^{1/2}r+\frac{\mu\sqrt{r}}{\left(
-E\right)  ^{1/2}}\right)  r^{-{\frac{{\mu}^{2}}{8\left(  -E\right)  ^{3/2}}%
}+{\frac{\xi}{2\left(  -E\right)  ^{1/2}}}}\sum_{n\geq0}\frac{e_{n}}{\left[
\left(  -E\right)  ^{1/4}\sqrt{2r}\right]  ^{n}},\label{uexpzfo112}%
\end{align}
where $K_{1}\left(  4l+2,{\frac{i\sqrt{2}\mu}{\left(  -E\right)  ^{3/4}}%
},{\frac{2\xi}{\left(  -E\right)  ^{1/2}}}-{\frac{{\mu}^{2}}{2\left(
-E\right)  ^{3/2}}},0\right)  $ and $K_{2}\left(  4l+2,{\frac{i\sqrt{2}\mu
}{\left(  -E\right)  ^{3/4}}},{\frac{2\xi}{\left(  -E\right)  ^{1/2}}}%
-{\frac{{\mu}^{2}}{2\left(  -E\right)  ^{3/2}}},0\right)  $ are combination
coefficients\ and $z=i\left(  -E\right)  ^{1/4}\sqrt{2r}$.

The boundary condition of bound states, $\left.  u\left(  r\right)
\right\vert _{r\rightarrow\infty}\rightarrow0$, requires that the coefficient
of the second term must vanish since this term diverges when $r\rightarrow
\infty$, i.e.,
\begin{equation}
K_{2}\left(  4l+2,{\frac{i\sqrt{2}\mu}{\left(  -E\right)  ^{3/4}}},{\frac
{2\xi}{\left(  -E\right)  ^{1/2}}}-{\frac{{\mu}^{2}}{2\left(  -E\right)
^{3/2}}},0\right)  =0, \label{eigenvalueo112}%
\end{equation}
where%
\begin{align}
&  K_{2}\left(  4l+2,{\frac{i\sqrt{2}\mu}{\left(  -E\right)  ^{3/4}}}%
,{\frac{2\xi}{\left(  -E\right)  ^{1/2}}}-{\frac{{\mu}^{2}}{2\left(
-E\right)  ^{3/2}}},0\right) \nonumber\\
&  =\frac{\Gamma\left(  4l+3\right)  }{\Gamma\left(  2l+1-{\frac{\xi}{\left(
-E\right)  ^{1/2}}}+{\frac{{\mu}^{2}}{4\left(  -E\right)  ^{3/2}}}\right)
\Gamma\left(  2l+2+{\frac{\xi}{\left(  -E\right)  ^{1/2}}}-{\frac{{\mu}^{2}%
}{4\left(  -E\right)  ^{3/2}}}\right)  }\nonumber\\
&  \times J_{2l+2+{\frac{\xi}{\left(  -E\right)  ^{1/2}}}-{\frac{{\mu}^{2}%
}{4\left(  -E\right)  ^{3/2}}}}\left(  2l+1+{\frac{\xi}{\left(  -E\right)
^{1/2}}}-{\frac{{\mu}^{2}}{4\left(  -E\right)  ^{3/2}}},\frac{i\sqrt{2}\mu
}{\left(  -E\right)  ^{3/4}},6l+3-{\frac{\xi}{\left(  -E\right)  ^{1/2}}%
}\right. \nonumber\\
&  +\left.  {\frac{{\mu}^{2}}{4\left(  -E\right)  ^{3/2}}},\frac{i\sqrt{2}\mu
}{2\left(  -E\right)  ^{3/4}}\left[  {\frac{2\xi}{\left(  -E\right)  ^{1/2}}%
}-{\frac{{\mu}^{2}}{2\left(  -E\right)  ^{3/2}}}-4l-2\right]  \right)
\end{align}
with%
\begin{align}
&  J_{2l+2+{\frac{\xi}{\left(  -E\right)  ^{1/2}}}-{\frac{{\mu}^{2}}{4\left(
-E\right)  ^{3/2}}}}\left(  2l+1+{\frac{\xi}{\left(  -E\right)  ^{1/2}}%
}-{\frac{{\mu}^{2}}{4\left(  -E\right)  ^{3/2}}},\frac{i\sqrt{2}\mu}{\left(
-E\right)  ^{3/4}},6l+3-{\frac{\xi}{\left(  -E\right)  ^{1/2}}}\right.
\nonumber\\
&  +\left.  {\frac{{\mu}^{2}}{4\left(  -E\right)  ^{3/2}}},\frac{i\sqrt{2}\mu
}{2\left(  -E\right)  ^{3/4}}\left[  {\frac{2\xi}{\left(  -E\right)  ^{1/2}}%
}-{\frac{{\mu}^{2}}{2\left(  -E\right)  ^{3/2}}}-4l-2\right]  \right)
\nonumber\\
&  =\int_{0}^{\infty}dxx^{2l+1+{\frac{\xi}{\left(  -E\right)  ^{1/2}}}%
-{\frac{{\mu}^{2}}{4\left(  -E\right)  ^{3/2}}}}e^{-x^{2}}N\left(
2l+1+{\frac{\xi}{\left(  -E\right)  ^{1/2}}}-{\frac{{\mu}^{2}}{4\left(
-E\right)  ^{3/2}}},\frac{i\sqrt{2}\mu}{\left(  -E\right)  ^{3/4}}%
,6l+3-{\frac{\xi}{\left(  -E\right)  ^{1/2}}}\right. \nonumber\\
&  +\left.  {\frac{{\mu}^{2}}{4\left(  -E\right)  ^{3/2}}},\frac{i\sqrt{2}\mu
}{2\left(  -E\right)  ^{3/4}}\left[  {\frac{2\xi}{\left(  -E\right)  ^{1/2}}%
}-{\frac{{\mu}^{2}}{2\left(  -E\right)  ^{3/2}}}-4l-2\right]  ,x\right)  .
\end{align}

Eq. (\ref{eigenvalueo112}) is an implicit expression of the eigenvalue.

The eigenfunction, by Eqs. (\ref{uexpzfo112}) and (\ref{eigenvalueo112}),
reads%
\begin{align}
u_{l}\left(  r\right)   &  =A_{l}K_{1}\left(  4l+2,{\frac{i\sqrt{2}\mu
}{\left(  -E\right)  ^{3/4}}},{\frac{2\xi}{\left(  -E\right)  ^{1/2}}}%
-{\frac{{\mu}^{2}}{2\left(  -E\right)  ^{3/2}}},0\right)  \nonumber\\
&  \times\exp\left(  -\left(  -E\right)  ^{1/2}r-\frac{\mu\sqrt{r}}{\left(
-E\right)  ^{1/2}}\right)  r^{{\frac{{\mu}^{2}}{8\left(  -E\right)  ^{3/2}}%
}-{\frac{\xi}{2\left(  -E\right)  ^{1/2}}}}\sum_{n\geq0}\frac{a_{n}}{\left[
\left(  -E\right)  ^{1/4}\sqrt{2r}\right]  ^{n}}.
\end{align}

\subsubsection{The scattering state}

For scattering states, $E>0$, introduce%
\begin{equation}
E=k^{2}.
\end{equation}
The singularity of the $S$-matrix on the positive imaginary axis corresponds
to the eigenvalues of bound states \cite{joachain1975quantum}, so the zero of
$K_{2}\left(  4l+2,-{\frac{\left(  1+i\right)  \mu}{k^{3/2}}},{\frac{2i\xi}%
{k}}-i{\frac{{\mu}^{2}}{2k^{3}}},0\right)  $ on the positive imaginary is just
the singularity of the $S$-matrix. Considering that the $S$-matrix is unitary,
i.e.,%
\begin{equation}
S_{l}=e^{2i\delta_{l}}, \label{SMo112}%
\end{equation}
we have%
\begin{equation}
S_{l}\left(  k\right)  =\frac{K_{2}^{\ast}\left(  4l+2,-{\frac{\left(
1+i\right)  \mu}{k^{3/2}}},{\frac{2i\xi}{k}}-i{\frac{{\mu}^{2}}{2k^{3}}%
},0\right)  }{K_{2}\left(  4l+2,-{\frac{\left(  1+i\right)  \mu}{k^{3/2}}%
},{\frac{2i\xi}{k}}-i{\frac{{\mu}^{2}}{2k^{3}}},0\right)  }=\frac{K_{2}\left(
4l+2,-{\frac{\left(  1-i\right)  \mu}{k^{3/2}}},-{\frac{2i\xi}{k}}%
+i{\frac{{\mu}^{2}}{2k^{3}}},0\right)  }{K_{2}\left(  4l+2,-{\frac{\left(
1+i\right)  \mu}{k^{3/2}}},{\frac{2i\xi}{k}}-i{\frac{{\mu}^{2}}{2k^{3}}%
},0\right)  }. \label{SM1o112}%
\end{equation}

The scattering wave function can be constructed with the help of the
$S$-matrix. The scattering wave function can be expressed as a linear
combination of the radially ingoing wave $u_{in}\left(  r\right)  $ and the
radially outgoing wave $u_{out}\left(  r\right)  $, which are conjugate to
each other, i.e., \cite{joachain1975quantum}%
\begin{equation}
u_{l}\left(  r\right)  =A_{l}\left[  \left(  -1\right)  ^{l+1}u_{in}\left(
r\right)  +S_{l}\left(  k\right)  u_{out}\left(  r\right)  \right]
.\label{inouto112}%
\end{equation}
From Eq. (\ref{uexpzfo112}), we have%
\begin{align}
u_{in}\left(  r\right)   &  =\exp\left(  -ikr+i\frac{\mu\sqrt{r}}{k}\right)
r^{i{\frac{{\mu}^{2}}{8k^{3}}}+i{\frac{\xi}{2k}}}\sum_{n\geq0}\frac{e_{n}%
}{\left(  -2ikr\right)  ^{n/2}},\nonumber\\
u_{out}\left(  r\right)   &  =\exp\left(  ikr-i\frac{\mu\sqrt{r}}{k}\right)
r^{-i{\frac{{\mu}^{2}}{8k^{3}}}-i{\frac{\xi}{2k}}}\sum_{n\geq0}\frac
{e_{n}^{\ast}}{\left(  2ikr\right)  ^{n/2}}.
\end{align}
Then by Eq. (\ref{SM1o112}) we obtain the scattering wave function,%
\begin{align}
u_{l}\left(  r\right)   &  =A_{l}\left[  \left(  -1\right)  ^{l+1}\exp\left(
-ikr+i\frac{\mu\sqrt{r}}{k}\right)  r^{i{\frac{{\mu}^{2}}{8k^{3}}}+i{\frac
{\xi}{2k}}}\sum_{n\geq0}\frac{e_{n}}{\left(  -2ikr\right)  ^{n/2}}\right.
\nonumber\\
&  +\left.  \frac{K_{2}\left(  4l+2,-{\frac{\left(  1-i\right)  \mu}{k^{3/2}}%
},-{\frac{2i\xi}{k}}+i{\frac{{\mu}^{2}}{2k^{3}}},0\right)  }{K_{2}\left(
4l+2,-{\frac{\left(  1+i\right)  \mu}{k^{3/2}}},{\frac{2i\xi}{k}}-i{\frac
{{\mu}^{2}}{2k^{3}}},0\right)  }\exp\left(  ikr-i\frac{\mu\sqrt{r}}{k}\right)
r^{-i{\frac{{\mu}^{2}}{8k^{3}}}-i{\frac{\xi}{2k}}}\sum_{n\geq0}\frac
{e_{n}^{\ast}}{\left(  2ikr\right)  ^{n/2}}\right]  .
\end{align}
Taking $r\rightarrow\infty$, we have
\begin{align}
&  u_{l}\left(  r\right)  \overset{r\rightarrow\infty}{\sim}\nonumber\\
&  A_{l}\left[  \left(  -1\right)  ^{l+1}\exp\left(  -ikr+\frac{i\mu\sqrt{r}%
}{k}\right)  r^{{\frac{i{\mu}^{2}}{8k^{3}}}+{\frac{i\xi}{2k}}}+\frac
{K_{2}\left(  4l+2,{\frac{-\left(  1-i\right)  \mu}{k^{3/2}}},{\frac{-2i\xi
}{k}}+{\frac{i{\mu}^{2}}{2k^{3}}},0\right)  }{K_{2}\left(  4l+2,{\frac
{-\left(  1+i\right)  \mu}{k^{3/2}}},{\frac{2i\xi}{k}}-{\frac{i{\mu}^{2}%
}{2k^{3}}},0\right)  }\exp\left(  ikr-\frac{i\mu\sqrt{r}}{k}\right)
r^{-{\frac{i{\mu}^{2}}{8k^{3}}}-{\frac{i\xi}{2k}}}\right]  \nonumber\\
&  =A_{l}e^{i\delta_{l}}\sin\left(  kr-\frac{\mu\sqrt{r}}{k}-\left(
{\frac{{\mu}^{2}}{8k^{3}}}+{\frac{\xi}{2k}}\right)  \ln2kr+\delta_{l}%
-\frac{l\pi}{2}\right)  .
\end{align}
By Eqs. (\ref{SMo112}) and (\ref{SM1o112}), we obtain the scattering phase
shift%
\begin{equation}
\delta_{l}=-\arg K_{2}\left(  4l+2,-{\frac{\left(  1+i\right)  \mu}{k^{3/2}}%
},{\frac{2i\xi}{k}}-i{\frac{{\mu}^{2}}{2k^{3}}},0\right)  .
\end{equation}

\section{The exact solution of $U\left(  r\right)  =\xi r^{2}+\frac{\mu}%
{r}+\kappa r$\label{V2m11}}

In this appendix, we provide an exact solution of the eigenproblem of the
potential
\begin{equation}
U\left(  r\right)  =\xi r^{2}+\frac{\mu}{r}+\kappa r
\end{equation}
by solving the radial equation directly. This potential has only bound states.

The radial equation reads%
\begin{equation}
\frac{d^{2}}{dr^{2}}u_{l}\left(  r\right)  +\left[  E-\frac{l\left(
l+1\right)  }{r^{2}}-\xi r^{2}-\frac{\mu}{r}-\kappa r\right]  u_{l}\left(
r\right)  =0. \label{radialeq2o1o1}%
\end{equation}
Using the variable substitution%
\begin{equation}
z=i\xi^{1/4}r \label{zr2o1o1}%
\end{equation}
and introducing $f_{l}\left(  z\right)  $ by%
\begin{equation}
u_{l}\left(  z\right)  =A_{l}\exp\left(  -\frac{z^{2}}{2}-\frac{\beta}%
{2}z\right)  z^{l+1}f_{l}\left(  z\right)
\end{equation}
with $A_{l}$ a constant, we convert the radial equation (\ref{radialeq2o1o1})
into an equation of $f_{l}\left(  z\right)  $:%
\begin{equation}
zf_{l}^{\prime\prime}\left(  z\right)  +\left(  2l+2-{\frac{i\kappa}{{\xi
}^{3/4}}}z-2z^{2}\right)  f_{l}^{\prime}\left(  z\right)  +\left[  \left(
-{\frac{E}{{\xi}^{1/2}}}-{\frac{{\kappa}^{2}}{4{\xi}^{3/2}}}-2l-3\right)
z+\frac{i\mu}{\xi^{1/4}}-\left(  l+1\right)  \frac{i\kappa}{{\xi}^{3/4}%
}\right]  f_{l}\left(  z\right)  =0. \label{eqf2o1o1}%
\end{equation}
This is a Biconfluent Heun equation \cite{ronveaux1995heun}.

The choice of the boundary condition has been discussed in Ref.
\cite{li2016exact}.

\subsection{The regular solution}

The regular solution is a solution satisfying the boundary condition at $r=0$
\cite{li2016exact}. The regular solution at $r=0$\ should satisfy the boundary
condition $\lim_{r\rightarrow0}u_{l}\left(  r\right)  /r^{l+1}=1$. In this
section, we provide the regular solution of Eq. (\ref{eqf2o1o1}).

The Biconfluent Heun equation (\ref{eqf2o1o1}) has two linearly independent
solutions \cite{ronveaux1995heun}%
\begin{align}
y_{l}^{\left(  1\right)  }\left(  z\right)   &  =N\left(  2l+1,{\frac{i\kappa
}{{\xi}^{3/4}}},-{\frac{E}{{\xi}^{1/2}}}-{\frac{{\kappa}^{2}}{4{\xi}^{3/2}}%
},{\frac{-i2\mu}{\xi^{1/4}}},z\right)  ,\\
y_{l}^{\left(  2\right)  }\left(  z\right)   &  =cN\left(  2l+1,{\frac
{i\kappa}{{\xi}^{3/4}}},-{\frac{E}{{\xi}^{1/2}}}-{\frac{{\kappa}^{2}}{4{\xi
}^{3/2}}},{\frac{-i2\mu}{\xi^{1/4}}},z\right)  \ln z+\sum_{n\geq0}%
d_{n}z^{n-2l-1},
\end{align}
where%
\[
c=\frac{1}{2l+1}\left[  d_{2l}\left(  -\frac{i\mu}{\xi^{1/4}}+\frac{il\kappa
}{{\xi}^{3/4}}\right)  -d_{2l-1}\left(  -{\frac{E}{{\xi}^{1/2}}}%
-{\frac{{\kappa}^{2}}{4{\xi}^{3/2}}}+1-2l\right)  \right]
\]
is a constant with the coefficient $d_{\nu}$ given by the following recurrence
relation,%
\begin{align}
&  d_{-1}=0,\text{ \ }d_{0}=1,\nonumber\\
&  \left(  v+2\right)  \left(  v+1-2l\right)  d_{v+2}-\left(  \frac{-i\mu}%
{\xi^{1/4}}+\frac{i\kappa}{{\xi}^{3/4}}\left(  v+1-l\right)  \right)
d_{v+1}+\left(  -{\frac{E}{{\xi}^{1/2}}}-{\frac{{\kappa}^{2}}{4{\xi}^{3/2}}%
}-2v-1+2l\right)  d_{v}=0
\end{align}
and $N(\alpha,\beta,\gamma,\delta,z)$ is the biconfluent Heun function
\cite{ronveaux1995heun,slavyanov2000special,li2016exact}.

The biconfluent Heun function $N\left(  2l+1,{\frac{i\kappa}{{\xi}^{3/4}}%
},-{\frac{E}{{\xi}^{1/2}}}-{\frac{{\kappa}^{2}}{4{\xi}^{3/2}}},{\frac{-i2\mu
}{\xi^{1/4}}},z\right)  $ has an expansion at $z=0$ \cite{ronveaux1995heun}:%
\begin{equation}
N\left(  2l+1,{\frac{i\kappa}{{\xi}^{3/4}}},-{\frac{E}{{\xi}^{1/2}}}%
-{\frac{{\kappa}^{2}}{4{\xi}^{3/2}}},{\frac{-i2\mu}{\xi^{1/4}}},z\right)
=\sum_{n\geq0}\frac{A_{n}}{\left(  2l+2\right)  _{n}}\frac{z^{n}}{n!},
\end{equation}
where the expansion coefficients is determined by the recurrence relation,%
\begin{align}
A_{0}  &  =1,\text{ \ }A_{1}=\frac{-i\mu}{\xi^{1/4}}+\frac{i\kappa}{{\xi
}^{3/4}}\left(  l+1\right)  ,\nonumber\\
A_{n+2}  &  =\left[  \left(  n+l+2\right)  \frac{i\kappa}{{\xi}^{3/4}}%
+\frac{-i\mu}{\xi^{1/4}}\right]  A_{n+1}-\left(  n+1\right)  \left(
n+2l+2\right)  \left(  -{\frac{E}{{\xi}^{1/2}}}-{\frac{{\kappa}^{2}}{4{\xi
}^{3/2}}}-2l-3-2n\right)  A_{n},
\end{align}
and $\left(  a\right)  _{n}=\Gamma\left(  a+n\right)  /\Gamma\left(  a\right)
$ is Pochhammer's symbol.

Only $y_{l}^{\left(  1\right)  }\left(  z\right)  $ satisfies the boundary
condition for the regular solution at $r=0$, so the radial eigenfunction reads%
\begin{align}
u_{l}\left(  z\right)   &  =A_{l}\exp\left(  -\frac{z^{2}}{2}-\frac{\beta}%
{2}z\right)  z^{l+1}y_{l}^{\left(  1\right)  }\left(  z\right) \nonumber\\
&  =A_{l}\exp\left(  -\frac{z^{2}}{2}-\frac{\beta}{2}z\right)  z^{l+1}N\left(
2l+1,{\frac{i\kappa}{{\xi}^{3/4}}},-{\frac{E}{{\xi}^{1/2}}}-{\frac{{\kappa
}^{2}}{4{\xi}^{3/2}}},{\frac{-i2\mu}{\xi^{1/4}}},z\right)  .
\end{align}
By Eq. (\ref{zr2o1o1}), we obtain the regular solution,%

\begin{equation}
u_{l}\left(  r\right)  =A_{l}\exp\left(  \frac{\xi^{1/2}}{2}r^{2}+\frac
{\kappa}{2{\xi}^{1/2}}r\right)  r^{l+1}N\left(  2l+1,{\frac{i\kappa}{{\xi
}^{3/4}}},-{\frac{E}{{\xi}^{1/2}}}-{\frac{{\kappa}^{2}}{4{\xi}^{3/2}}}%
,{\frac{-i2\mu}{\xi^{1/4}}},i\xi^{1/4}r\right)  . \label{regular2o1o1}%
\end{equation}

\subsection{The irregular solution}

The irregular solution is a solution satisfying the boundary condition at
$r\rightarrow\infty$ \cite{li2016exact}.

The Biconfluent Heun equation (\ref{eqf2o1o1}) has two linearly independent
irregular solutions \cite{ronveaux1995heun}:%
\begin{align}
&  B_{l}^{+}\left(  2l+1,{\frac{i\kappa}{{\xi}^{3/4}}},-{\frac{E}{{\xi}^{1/2}%
}}-{\frac{{\kappa}^{2}}{4{\xi}^{3/2}}},{\frac{-i2\mu}{\xi^{1/4}},}z\right)
\nonumber\\
&  =\exp\left(  \frac{i\kappa}{{\xi}^{3/4}}z+z^{2}\right)  B_{l}^{+}\left(
2l+1,{\frac{\kappa}{{\xi}^{3/4}}},{\frac{E}{{\xi}^{1/2}}}+{\frac{{\kappa}^{2}%
}{4{\xi}^{3/2}}},{\frac{2\mu}{\xi^{1/4}},-i}z\right) \nonumber\\
&  =\exp\left(  \frac{i\kappa}{{\xi}^{3/4}}z+z^{2}\right)  \left(  -iz\right)
^{\frac{1}{2}\left(  {\frac{E}{{\xi}^{1/2}}}+{\frac{{\kappa}^{2}}{4{\xi}%
^{3/2}}}-2l-3\right)  }\sum_{n\geq0}\frac{a_{n}}{\left(  -iz\right)  ^{n}},
\label{f12o1o1}%
\end{align}%
\begin{align}
&  H_{l}^{+}\left(  2l+1,{\frac{i\kappa}{{\xi}^{3/4}}},-{\frac{E}{{\xi}^{1/2}%
}}-{\frac{{\kappa}^{2}}{4{\xi}^{3/2}}},{\frac{-i2\mu}{\xi^{1/4}},}z\right)
\nonumber\\
&  =\exp\left(  \frac{i\kappa}{{\xi}^{3/4}}z+z^{2}\right)  H_{l}^{+}\left(
2l+1,{\frac{\kappa}{{\xi}^{3/4}}},{\frac{E}{{\xi}^{1/2}}}+{\frac{{\kappa}^{2}%
}{4{\xi}^{3/2}}},{\frac{2\mu}{\xi^{1/4}},-i}z\right) \nonumber\\
&  =\left(  -iz\right)  ^{-\frac{1}{2}\left(  {\frac{E}{{\xi}^{1/2}}}%
+{\frac{{\kappa}^{2}}{4{\xi}^{3/2}}}+2l+3\right)  }\sum_{n\geq0}\frac{e_{n}%
}{\left(  -iz\right)  ^{n}} \label{f22o1o1}%
\end{align}
with the expansion coefficients given by the recurrence relation
\[
a_{0}=1,\text{ \ }a_{1}={\frac{\mu}{2\xi^{1/4}}}+\frac{\kappa}{4{\xi}^{3/4}%
}\left(  {\frac{E}{{\xi}^{1/2}}}+{\frac{{\kappa}^{2}}{4{\xi}^{3/2}}}-1\right)
,
\]%
\begin{align}
&  2\left(  n+2\right)  a_{n+2}+\left[  \frac{\kappa}{{\xi}^{3/4}}\left(
\frac{3}{2}-{\frac{E}{2{\xi}^{1/2}}}-{\frac{{\kappa}^{2}}{8{\xi}^{3/2}}%
}+n\right)  -\frac{\mu}{\xi^{1/4}}\right]  a_{n+1}\nonumber\\
&  +\left[  \left(  {\frac{E}{2{\xi}^{1/2}}}+{\frac{{\kappa}^{2}}{8{\xi}%
^{3/2}}}-1\right)  ^{2}-\frac{\left(  2l+1\right)  ^{2}}{4}+n\left(
n+2-{\frac{E}{{\xi}^{1/2}}}-{\frac{{\kappa}^{2}}{4{\xi}^{3/2}}}\right)
\right]  a_{n}=0
\end{align}
and%
\[
e_{0}=1,\text{ \ }e_{1}=-\frac{\mu}{2\xi^{1/4}}-\frac{\kappa}{4{\xi}^{3/4}%
}\left(  {\frac{E}{{\xi}^{1/2}}}+{\frac{{\kappa}^{2}}{4{\xi}^{3/2}}}+1\right)
,
\]%
\begin{align}
&  2\left(  n+2\right)  e_{n+2}+\left[  \frac{\kappa}{{\xi}^{3/4}}\left(
\frac{3}{2}+{\frac{E}{2{\xi}^{1/2}}}+{\frac{{\kappa}^{2}}{8{\xi}^{3/2}}%
}+n\right)  +\frac{\mu}{\xi^{1/4}}\right]  e_{n+1}\nonumber\\
&  -\left[  \left(  {\frac{E}{2{\xi}^{1/2}}}+{\frac{{\kappa}^{2}}{8{\xi}%
^{3/2}}}+1\right)  ^{2}-\frac{\left(  2l+1\right)  ^{2}}{4}+n\left(
n+2+{\frac{E}{{\xi}^{1/2}}}+{\frac{{\kappa}^{2}}{4{\xi}^{3/2}}}\right)
\right]  e_{n}=0.
\end{align}

\subsection{Eigenfunctions and eigenvalues}

To construct the solution, we first express the regular solution
(\ref{regular2o1o1}) as a linear combination of the two irregular solutions
(\ref{f12o1o1}) and (\ref{f22o1o1}).

The regular solution (\ref{regular2o1o1}), with the relation
\cite{ronveaux1995heun,li2016exact}%

\begin{align}
&  N\left(  2l+1,{\frac{i\kappa}{{\xi}^{3/4}}},-{\frac{E}{{\xi}^{1/2}}}%
-{\frac{{\kappa}^{2}}{4{\xi}^{3/2}}},{\frac{-i2\mu}{\xi^{1/4}}},z\right)
\nonumber\\
&  =K_{1}\left(  2l+1,{\frac{i\kappa}{{\xi}^{3/4}}},-{\frac{E}{{\xi}^{1/2}}%
}-{\frac{{\kappa}^{2}}{4{\xi}^{3/2}}},{\frac{-i2\mu}{\xi^{1/4}}}\right)
B_{l}^{+}\left(  2l+1,{\frac{i\kappa}{{\xi}^{3/4}}},-{\frac{E}{{\xi}^{1/2}}%
}-{\frac{{\kappa}^{2}}{4{\xi}^{3/2}}},{\frac{-i2\mu}{\xi^{1/4}}},z\right)
\nonumber\\
&  +K_{2}\left(  2l+1,{\frac{i\kappa}{{\xi}^{3/4}}},-{\frac{E}{{\xi}^{1/2}}%
}-{\frac{{\kappa}^{2}}{4{\xi}^{3/2}}},{\frac{-i2\mu}{\xi^{1/4}}}\right)
H_{l}^{+}\left(  2l+1,{\frac{i\kappa}{{\xi}^{3/4}}},-{\frac{E}{{\xi}^{1/2}}%
}-{\frac{{\kappa}^{2}}{4{\xi}^{3/2}}},{\frac{-i2\mu}{\xi^{1/4}}},z\right)
\end{align}
and the expansions (\ref{f12o1o1}) and (\ref{f22o1o1}), becomes%
\begin{align}
u_{l}\left(  r\right)   &  =A_{l}K_{1}\left(  2l+1,{\frac{i\kappa}{{\xi}%
^{3/4}}},-{\frac{E}{{\xi}^{1/2}}}-{\frac{{\kappa}^{2}}{4{\xi}^{3/2}}}%
,{\frac{-i2\mu}{\xi^{1/4}}}\right) \nonumber\\
&  \times\exp\left(  -\frac{\xi^{1/2}r^{2}}{2}-\frac{\kappa}{2{\xi}^{1/2}%
}r\right)  r^{{\frac{E}{2{\xi}^{1/2}}}+{\frac{{\kappa}^{2}}{8{\xi}^{3/2}}%
}-\frac{1}{2}}\sum_{n\geq0}\frac{a_{n}}{\left(  \xi^{1/4}r\right)  ^{n}%
}\nonumber\\
&  +A_{l}K_{2}\left(  2l+1,{\frac{i\kappa}{{\xi}^{3/4}}},-{\frac{E}{{\xi
}^{1/2}}}-{\frac{{\kappa}^{2}}{4{\xi}^{3/2}}},{\frac{-i2\mu}{\xi^{1/4}}%
}\right) \nonumber\\
&  \times\exp\left(  \frac{\xi^{1/2}r^{2}}{2}+\frac{\kappa}{2{\xi}^{1/2}%
}r\right)  r^{-{\frac{E}{2{\xi}^{1/2}}}-{\frac{{\kappa}^{2}}{8{\xi}^{3/2}}%
}-\frac{1}{2}}\sum_{n\geq0}\frac{e_{n}}{\left(  \xi^{1/4}r\right)  ^{n}},
\label{uexpzf2o1o1}%
\end{align}
where $K_{1}\left(  2l+1,{\frac{i\kappa}{{\xi}^{3/4}}},-{\frac{E}{{\xi}^{1/2}%
}}-{\frac{{\kappa}^{2}}{4{\xi}^{3/2}}},{\frac{-i2\mu}{\xi^{1/4}}}\right)  $
and $K_{2}\left(  2l+1,{\frac{i\kappa}{{\xi}^{3/4}}},-{\frac{E}{{\xi}^{1/2}}%
}-{\frac{{\kappa}^{2}}{4{\xi}^{3/2}}},{\frac{-i2\mu}{\xi^{1/4}}}\right)  $ are
combination coefficients\ and $z=i\xi^{1/4}r$.

The boundary condition of bound states, $\left.  u\left(  r\right)
\right\vert _{r\rightarrow\infty}\rightarrow0$, requires that the coefficient
of the second term must vanish since this term diverges when $r\rightarrow
\infty$, i.e.,
\begin{equation}
K_{2}\left(  2l+1,{\frac{i\kappa}{{\xi}^{3/4}}},-{\frac{E}{{\xi}^{1/2}}%
}-{\frac{{\kappa}^{2}}{4{\xi}^{3/2}}},{\frac{-i2\mu}{\xi^{1/4}}}\right)  =0,
\label{eigenvalue2o1o1}%
\end{equation}
where%
\begin{align}
&  K_{2}\left(  2l+1,{\frac{i\kappa}{{\xi}^{3/4}}},-{\frac{E}{{\xi}^{1/2}}%
}-{\frac{{\kappa}^{2}}{4{\xi}^{3/2}}},{\frac{-i2\mu}{\xi^{1/4}}}\right)
\nonumber\\
&  =\frac{\Gamma\left(  2l+2\right)  }{\Gamma\left(  l+\frac{1}{2}+{\frac
{E}{2{\xi}^{1/2}}}+{\frac{{\kappa}^{2}}{8{\xi}^{3/2}}}\right)  \Gamma\left(
l+\frac{3}{2}-{\frac{E}{2{\xi}^{1/2}}}-{\frac{{\kappa}^{2}}{8{\xi}^{3/2}}%
}\right)  }\nonumber\\
&  \times J_{l+\frac{3}{2}-{\frac{E}{2{\xi}^{1/2}}}-{\frac{{\kappa}^{2}}%
{8{\xi}^{3/2}}}}\left(  l+\frac{1}{2}-{\frac{E}{2{\xi}^{1/2}}}-{\frac{{\kappa
}^{2}}{8{\xi}^{3/2}}},\frac{i\kappa}{{\xi}^{3/4}},\frac{3}{2}\left(
2l+1\right)  +{\frac{E}{2{\xi}^{1/2}}}\right. \nonumber\\
&  +\left.  {\frac{{\kappa}^{2}}{8{\xi}^{3/2}}},\frac{-i2\mu}{\xi^{1/4}}%
+\frac{i\kappa}{2{\xi}^{3/4}}\left(  -{\frac{E}{{\xi}^{1/2}}}-{\frac{{\kappa
}^{2}}{4{\xi}^{3/2}}}-2l-1\right)  \right)
\end{align}
with%
\begin{align}
&  J_{l+\frac{3}{2}-{\frac{E}{2{\xi}^{1/2}}}-{\frac{{\kappa}^{2}}{8{\xi}%
^{3/2}}}}\left(  l+\frac{1}{2}-{\frac{E}{2{\xi}^{1/2}}}-{\frac{{\kappa}^{2}%
}{8{\xi}^{3/2}}},\frac{i\kappa}{{\xi}^{3/4}},\frac{3}{2}\left(  2l+1\right)
+{\frac{E}{2{\xi}^{1/2}}}\right. \nonumber\\
&  +\left.  {\frac{{\kappa}^{2}}{8{\xi}^{3/2}}},\frac{-i2\mu}{\xi^{1/4}}%
+\frac{i\kappa}{2{\xi}^{3/4}}\left(  -{\frac{E}{{\xi}^{1/2}}}-{\frac{{\kappa
}^{2}}{4{\xi}^{3/2}}}-2l-1\right)  \right) \nonumber\\
&  =\int_{0}^{\infty}dxx^{l+\frac{1}{2}-{\frac{E}{2{\xi}^{1/2}}}%
-{\frac{{\kappa}^{2}}{8{\xi}^{3/2}}}}e^{-x^{2}}N\left(  l+\frac{1}{2}%
-{\frac{E}{2{\xi}^{1/2}}}-{\frac{{\kappa}^{2}}{8{\xi}^{3/2}}},\frac{i\kappa
}{{\xi}^{3/4}},\frac{3}{2}\left(  2l+1\right)  +{\frac{E}{2{\xi}^{1/2}}%
}\right. \nonumber\\
&  +\left.  {\frac{{\kappa}^{2}}{8{\xi}^{3/2}}},\frac{-i2\mu}{\xi^{1/4}}%
+\frac{i\kappa}{2{\xi}^{3/4}}\left(  -{\frac{E}{{\xi}^{1/2}}}-{\frac{{\kappa
}^{2}}{4{\xi}^{3/2}}}-2l-1\right)  ,x\right)  .
\end{align}

Eq. (\ref{eigenvalue2o1o1}) is an implicit expression of the eigenvalue.

The eigenfunction, by Eqs. (\ref{uexpzf2o1o1}) and (\ref{eigenvalue2o1o1}),
reads%
\begin{equation}
u_{l}\left(  r\right)  =A_{l}K_{1}\left(  2l+1,{\frac{i\kappa}{{\xi}^{3/4}}%
},-{\frac{E}{{\xi}^{1/2}}}-{\frac{{\kappa}^{2}}{4{\xi}^{3/2}}},{\frac{-i2\mu
}{\xi^{1/4}}}\right)  \exp\left(  -\frac{\xi^{1/2}r^{2}}{2}-\frac{\kappa
}{2{\xi}^{1/2}}r\right)  r^{{\frac{E}{2{\xi}^{1/2}}}+{\frac{{\kappa}^{2}%
}{8{\xi}^{3/2}}}-\frac{1}{2}}\sum_{n\geq0}\frac{a_{n}}{\left(  \xi
^{1/4}r\right)  ^{n}}.
\end{equation}

\section{The exact solution of $U\left(  r\right)  =\frac{\xi}{r}+\frac{\mu
}{r^{3/2}}+\frac{\kappa}{r^{1/2}}$\label{Vm1m32m12}}

In this appendix, we provide an exact solution of the eigenproblem of the
potential
\begin{equation}
U\left(  r\right)  =\frac{\xi}{r}+\frac{\mu}{r^{3/2}}+\frac{\kappa}{r^{1/2}}%
\end{equation}
by solving the radial equation directly. This potential has both bound states
and scattering states.

The radial equation reads%
\begin{equation}
\frac{d^{2}}{dr^{2}}u_{l}\left(  r\right)  +\left[  E-\frac{l\left(
l+1\right)  }{r^{2}}-\frac{\xi}{r}-\frac{\mu}{r^{3/2}}-\frac{\kappa}{r^{1/2}%
}\right]  u_{l}\left(  r\right)  =0. \label{radialeqo13212}%
\end{equation}
Using the variable substitution%
\begin{equation}
z=i\left(  -E\right)  ^{1/4}\sqrt{2r} \label{zro13212}%
\end{equation}
and introducing $f_{l}\left(  z\right)  $ by%
\begin{equation}
u_{l}\left(  z\right)  =A_{l}\exp\left(  -\frac{z^{2}}{2}-\frac{\beta}%
{2}z\right)  z^{2\left(  l+1\right)  }f_{l}\left(  z\right)
\end{equation}
with $A_{l}$ a constant, we convert the radial equation (\ref{radialeqo13212})
into an equation of $f_{l}\left(  z\right)  $:%
\begin{align}
zf_{l}^{\prime\prime}  &  \left(  z\right)  +\left(  4l+3-\frac{i\sqrt
{2}\kappa}{\left(  -E\right)  ^{3/4}}z-2z^{2}\right)  f_{l}^{\prime}\left(
z\right) \nonumber\\
&  +\left\{  \left[  {\frac{2\xi}{\left(  -E\right)  ^{1/2}}}-{\frac{{\kappa
}^{2}}{2\left(  -E\right)  ^{3/2}}}-4l-4\right]  z-\left[  \frac{-i2\sqrt
{2}\mu}{\left(  -E\right)  ^{1/4}}+\left(  4l+3\right)  \frac{i\sqrt{2}\kappa
}{2\left(  -E\right)  ^{3/4}}\right]  \right\}  f_{l}\left(  z\right)  =0.
\label{eqfo13212}%
\end{align}
This is a Biconfluent Heun equation \cite{ronveaux1995heun}.

The choice of the boundary condition has been discussed in Ref.
\cite{li2016exact}.

\subsection{The regular solution}

The regular solution is a solution satisfying the boundary condition at $r=0$
\cite{li2016exact}. The regular solution at $r=0$\ should satisfy the boundary
condition $\lim_{r\rightarrow0}u_{l}\left(  r\right)  /r^{l+1}=1$ for both
bound states and scattering states. In this section, we provide the regular
solution of Eq. (\pageref{eqfo13212}).

The Biconfluent Heun equation (\ref{eqfo13212}) has two linearly independent
solutions \cite{ronveaux1995heun}%
\begin{align}
y_{l}^{\left(  1\right)  }\left(  z\right)   &  =N\left(  4l+2,{\frac
{i\sqrt{2}\kappa}{\left(  -E\right)  ^{3/4}}},{\frac{2\xi}{\left(  -E\right)
^{1/2}}}-{\frac{{\kappa}^{2}}{2\left(  -E\right)  ^{3/2}}},{\frac{-i4\sqrt
{2}\mu}{\left(  -E\right)  ^{1/4}}},z\right)  ,\\
y_{l}^{\left(  2\right)  }\left(  z\right)   &  =cN\left(  4l+2,{\frac
{i\sqrt{2}\kappa}{\left(  -E\right)  ^{3/4}}},{\frac{2\xi}{\left(  -E\right)
^{1/2}}}-{\frac{{\kappa}^{2}}{2\left(  -E\right)  ^{3/2}}},{\frac{-i4\sqrt
{2}\mu}{\left(  -E\right)  ^{1/4}}},z\right)  \ln z+\sum_{n\geq0}%
d_{n}z^{n-4l-2},
\end{align}
where%
\begin{equation}
c=\frac{1}{4l+2}\left\{  d_{4l+1}\left[  \frac{-i2\sqrt{2}\mu}{\left(
-E\right)  ^{1/4}}+\frac{i\sqrt{2}\kappa}{2\left(  -E\right)  ^{3/4}}\left(
4l+1\right)  \right]  -d_{4l}\left[  {\frac{2\xi}{\left(  -E\right)  ^{1/2}}%
}-{\frac{{\kappa}^{2}}{2\left(  -E\right)  ^{3/2}}}-4l\right]  \right\}
\end{equation}
is a constant with the coefficient $d_{\nu}$ given by the following recurrence
relation,%
\begin{align}
&  d_{-1}=0,\text{ \ }d_{0}=1,\nonumber\\
&  \left(  v+2\right)  \left(  v-4l\right)  d_{v+2}-\left[  \frac{-i2\sqrt
{2}\mu}{\left(  -E\right)  ^{1/4}}+\frac{i\sqrt{2}\kappa}{2\left(  -E\right)
^{3/4}}\left(  2v+1-4l\right)  \right]  d_{v+1}\nonumber\\
+  &  \left[  {\frac{2\xi}{\left(  -E\right)  ^{1/2}}}-{\frac{{\kappa}^{2}%
}{2\left(  -E\right)  ^{3/2}}}-2v+4l\right]  d_{v}=0
\end{align}
and $N(\alpha,\beta,\gamma,\delta,z)$ is the biconfluent Heun function
\cite{ronveaux1995heun,slavyanov2000special,li2016exact}.

The biconfluent Heun function $N\left(  4l+2,{\frac{i\sqrt{2}\kappa}{\left(
-E\right)  ^{3/4}}},{\frac{2\xi}{\left(  -E\right)  ^{1/2}}}-{\frac{{\kappa
}^{2}}{2\left(  -E\right)  ^{3/2}}},{\frac{-i4\sqrt{2}\mu}{\left(  -E\right)
^{1/4}}},z\right)  $ has an expansion at $z=0$ \cite{ronveaux1995heun}:%
\begin{equation}
N\left(  4l+2,{\frac{i\sqrt{2}\kappa}{\left(  -E\right)  ^{3/4}}},{\frac{2\xi
}{\left(  -E\right)  ^{1/2}}}-{\frac{{\kappa}^{2}}{2\left(  -E\right)  ^{3/2}%
}},{\frac{-i4\sqrt{2}\mu}{\left(  -E\right)  ^{1/4}}},z\right)  =\sum_{n\geq
0}\frac{A_{n}}{\left(  4l+3\right)  _{n}}\frac{z^{n}}{n!},
\end{equation}
where the expansion coefficients is determined by the recurrence relation,%
\begin{align}
A_{0}  &  =1,\text{ \ }A_{1}=\frac{-i2\sqrt{2}\mu}{\left(  -E\right)  ^{1/4}%
}+\frac{i\sqrt{2}\kappa}{2\left(  -E\right)  ^{3/4}}\left(  4l+3\right)
,\nonumber\\
A_{n+2}  &  =\left[  \left(  2n+4l+5\right)  \frac{i\sqrt{2}\kappa}{2\left(
-E\right)  ^{3/4}}+\frac{-i2\sqrt{2}\mu}{\left(  -E\right)  ^{1/4}}\right]
A_{n+1}\nonumber\\
&  -\left(  n+1\right)  \left(  n+4l+3\right)  \left[  {\frac{2\xi}{\left(
-E\right)  ^{1/2}}}-{\frac{{\kappa}^{2}}{2\left(  -E\right)  ^{3/2}}%
}-4l-4-2n\right]  A_{n},
\end{align}
and $\left(  a\right)  _{n}=\Gamma\left(  a+n\right)  /\Gamma\left(  a\right)
$ is Pochhammer's symbol.

Only $y_{l}^{\left(  1\right)  }\left(  z\right)  $ satisfies the boundary
condition for the regular solution at $r=0$, so the radial eigenfunction reads%
\begin{align}
u_{l}\left(  z\right)   &  =A_{l}\exp\left(  -\frac{z^{2}}{2}-\frac{\beta}%
{2}z\right)  z^{2\left(  l+1\right)  }y_{l}^{\left(  1\right)  }\left(
z\right) \nonumber\\
&  =A_{l}\exp\left(  -\frac{z^{2}}{2}-\frac{\beta}{2}z\right)  z^{2\left(
l+1\right)  }N\left(  4l+2,{\frac{i\sqrt{2}\kappa}{\left(  -E\right)  ^{3/4}}%
},{\frac{2\xi}{\left(  -E\right)  ^{1/2}}}-{\frac{{\kappa}^{2}}{2\left(
-E\right)  ^{3/2}}},{\frac{-i4\sqrt{2}\mu}{\left(  -E\right)  ^{1/4}}%
},z\right)  .
\end{align}
By Eq. (\ref{zro13212}), we obtain the regular solution,%

\begin{align}
u_{l}\left(  r\right)   &  =A_{l}\exp\left(  \left(  -E\right)  ^{1/2}%
r+\frac{\kappa\sqrt{r}}{\left(  -E\right)  ^{1/2}}\right)  r^{l+1}\nonumber\\
&  \times N\left(  4l+2,{\frac{i\sqrt{2}\kappa}{\left(  -E\right)  ^{3/4}}%
},{\frac{2\xi}{\left(  -E\right)  ^{1/2}}}-{\frac{{\kappa}^{2}}{2\left(
-E\right)  ^{3/2}}},{\frac{-i4\sqrt{2}\mu}{\left(  -E\right)  ^{1/4}}%
},i\left(  -E\right)  ^{1/4}\sqrt{2r}\right)  . \label{regularo13212}%
\end{align}

\subsection{The irregular solution}

The irregular solution is a solution satisfying the boundary condition at
$r\rightarrow\infty$ \cite{li2016exact}. The boundary conditions for bound
states and scattering states at $r\rightarrow\infty$ are different.

The Biconfluent Heun equation (\ref{eqfo13212}) has two linearly independent
irregular solutions \cite{ronveaux1995heun}:%
\begin{align}
&  B_{l}^{+}\left(  4l+2,{\frac{i\sqrt{2}\kappa}{\left(  -E\right)  ^{3/4}}%
},{\frac{2\xi}{\left(  -E\right)  ^{1/2}}}-{\frac{{\kappa}^{2}}{2\left(
-E\right)  ^{3/2}}},{\frac{-i4\sqrt{2}\mu}{\left(  -E\right)  ^{1/4}}%
,}z\right) \nonumber\\
&  =\exp\left(  \frac{i\sqrt{2}\kappa}{\left(  -E\right)  ^{3/4}}%
z+z^{2}\right)  B_{l}^{+}\left(  4l+2,{\frac{\sqrt{2}\kappa}{\left(
-E\right)  ^{3/4}}},{\frac{{\kappa}^{2}}{2\left(  -E\right)  ^{3/2}}%
-\frac{2\xi}{\left(  -E\right)  ^{1/2}}},{\frac{4\sqrt{2}\mu}{\left(
-E\right)  ^{1/4}},-i}z\right) \nonumber\\
&  =\exp\left(  \frac{i\sqrt{2}\kappa}{\left(  -E\right)  ^{3/4}}%
z+z^{2}\right)  \left(  -iz\right)  ^{\frac{1}{2}\left(  {\frac{{\kappa}^{2}%
}{2\left(  -E\right)  ^{3/2}}-\frac{2\xi}{\left(  -E\right)  ^{1/2}}%
}-4l-4\right)  }\sum_{n\geq0}\frac{a_{n}}{\left(  -iz\right)  ^{n}},
\label{f1o13212}%
\end{align}%
\begin{align}
&  H_{l}^{+}\left(  4l+2,{\frac{i\sqrt{2}\kappa}{\left(  -E\right)  ^{3/4}}%
},{\frac{2\xi}{\left(  -E\right)  ^{1/2}}}-{\frac{{\kappa}^{2}}{2\left(
-E\right)  ^{3/2}}},{\frac{-i4\sqrt{2}\mu}{\left(  -E\right)  ^{1/4}}%
,}z\right) \nonumber\\
&  =\exp\left(  \frac{i\sqrt{2}\kappa}{\left(  -E\right)  ^{3/4}}%
z+z^{2}\right)  H_{l}^{+}\left(  4l+2,{\frac{\sqrt{2}\kappa}{\left(
-E\right)  ^{3/4}}},{\frac{{\kappa}^{2}}{2\left(  -E\right)  ^{3/2}}%
-\frac{2\xi}{\left(  -E\right)  ^{1/2}}},{\frac{4\sqrt{2}\mu}{\left(
-E\right)  ^{1/4}},-i}z\right) \nonumber\\
&  =\left(  -iz\right)  ^{-\frac{1}{2}\left(  {\frac{{\kappa}^{2}}{2\left(
-E\right)  ^{3/2}}-\frac{2\xi}{\left(  -E\right)  ^{1/2}}}+4l+4\right)  }%
\sum_{n\geq0}\frac{e_{n}}{\left(  -iz\right)  ^{n}} \label{f2o13212}%
\end{align}
with the expansion coefficients given by the recurrence relation
\[
a_{0}=1,\text{ \ }a_{1}=\frac{\sqrt{2}\mu}{\left(  -E\right)  ^{1/4}}%
+\frac{\sqrt{2}\kappa}{4\left(  -E\right)  ^{3/4}}\left[  {\frac{{\kappa}^{2}%
}{2\left(  -E\right)  ^{3/2}}-\frac{2\xi}{\left(  -E\right)  ^{1/2}}%
}-1\right]  ,
\]%
\begin{align}
&  2\left(  n+2\right)  a_{n+2}+\left\{  \frac{\sqrt{2}\kappa}{\left(
-E\right)  ^{3/4}}\left[  \frac{3}{2}-{\frac{{\kappa}^{2}}{4\left(  -E\right)
^{3/2}}+\frac{\xi}{\left(  -E\right)  ^{1/2}}}+n\right]  -\frac{2\sqrt{2}\mu
}{\left(  -E\right)  ^{1/4}}\right\}  a_{n+1}\nonumber\\
&  +\left\{  \left[  {\frac{{\kappa}^{2}}{4\left(  -E\right)  ^{3/2}}%
-\frac{\xi}{\left(  -E\right)  ^{1/2}}}-1\right]  ^{2}-\frac{1}{4}\left(
4l+2\right)  ^{2}+n\left[  n+2-{\frac{{\kappa}^{2}}{2\left(  -E\right)
^{3/2}}+\frac{2\xi}{\left(  -E\right)  ^{1/2}}}\right]  \right\}  a_{n}=0
\end{align}
and%
\[
e_{0}=1,\text{ \ }e_{1}=-\frac{\sqrt{2}\mu}{\left(  -E\right)  ^{1/4}}%
-\frac{\sqrt{2}\kappa}{4\left(  -E\right)  ^{3/4}}\left(  {\frac{{\kappa}^{2}%
}{2\left(  -E\right)  ^{3/2}}-\frac{2\xi}{\left(  -E\right)  ^{1/2}}%
}+1\right)
\]%
\begin{align}
&  2\left(  n+2\right)  e_{n+2}+\left\{  \frac{\sqrt{2}\kappa}{\left(
-E\right)  ^{3/4}}\left[  \frac{3}{2}+{\frac{{\kappa}^{2}}{4\left(  -E\right)
^{3/2}}-\frac{\xi}{\left(  -E\right)  ^{1/2}}}+n\right]  +\frac{2\sqrt{2}\mu
}{\left(  -E\right)  ^{1/4}}\right\}  e_{n+1}\nonumber\\
&  -\left\{  \left[  {\frac{{\kappa}^{2}}{4\left(  -E\right)  ^{3/2}}%
-\frac{\xi}{\left(  -E\right)  ^{1/2}}}+1\right]  ^{2}-\frac{1}{4}\left(
4l+2\right)  ^{2}+n\left[  n+2+{\frac{{\kappa}^{2}}{2\left(  -E\right)
^{3/2}}-\frac{2\xi}{\left(  -E\right)  ^{1/2}}}\right]  \right\}  e_{n}=0.
\end{align}

\subsection{Bound states and scattering states}

\subsubsection{The bound state}

To construct the solution, we first express the regular solution
(\ref{regular2o1o1}) as a linear combination of the two irregular solutions
(\ref{f1o13212}) and (\ref{f2o13212}).

The regular solution (\ref{regularo13212}), with the relation
\cite{ronveaux1995heun,li2016exact}%

\begin{align}
&  N\left(  4l+2,{\frac{i\sqrt{2}\kappa}{\left(  -E\right)  ^{3/4}}}%
,{\frac{2\xi}{\left(  -E\right)  ^{1/2}}}-{\frac{{\kappa}^{2}}{2\left(
-E\right)  ^{3/2}}},{\frac{-i4\sqrt{2}\mu}{\left(  -E\right)  ^{1/4}}%
},z\right) \nonumber\\
&  =K_{1}\left(  4l+2,{\frac{i\sqrt{2}\kappa}{\left(  -E\right)  ^{3/4}}%
},{\frac{2\xi}{\left(  -E\right)  ^{1/2}}}-{\frac{{\kappa}^{2}}{2\left(
-E\right)  ^{3/2}}},{\frac{-i4\sqrt{2}\mu}{\left(  -E\right)  ^{1/4}}}\right)
\nonumber\\
&  \times B_{l}^{+}\left(  4l+2,{\frac{i\sqrt{2}\kappa}{\left(  -E\right)
^{3/4}}},{\frac{2\xi}{\left(  -E\right)  ^{1/2}}}-{\frac{{\kappa}^{2}%
}{2\left(  -E\right)  ^{3/2}}},{\frac{-i4\sqrt{2}\mu}{\left(  -E\right)
^{1/4}}},z\right) \nonumber\\
&  +K_{2}\left(  4l+2,{\frac{i\sqrt{2}\kappa}{\left(  -E\right)  ^{3/4}}%
},{\frac{2\xi}{\left(  -E\right)  ^{1/2}}}-{\frac{{\kappa}^{2}}{2\left(
-E\right)  ^{3/2}}},{\frac{-i4\sqrt{2}\mu}{\left(  -E\right)  ^{1/4}}}\right)
\nonumber\\
&  \times H_{l}^{+}\left(  4l+2,{\frac{i\sqrt{2}\kappa}{\left(  -E\right)
^{3/4}}},{\frac{2\xi}{\left(  -E\right)  ^{1/2}}}-{\frac{{\kappa}^{2}%
}{2\left(  -E\right)  ^{3/2}}},{\frac{-i4\sqrt{2}\mu}{\left(  -E\right)
^{1/4}}},z\right)
\end{align}

and the expansions (\ref{f1o13212}) and (\ref{f2o13212}), become%
\begin{align}
u_{l}\left(  r\right)   &  =A_{l}K_{1}\left(  4l+2,{\frac{i\sqrt{2}\kappa
}{\left(  -E\right)  ^{3/4}}},{\frac{2\xi}{\left(  -E\right)  ^{1/2}}}%
-{\frac{{\kappa}^{2}}{2\left(  -E\right)  ^{3/2}}},{\frac{-i4\sqrt{2}\mu
}{\left(  -E\right)  ^{1/4}}}\right)  \nonumber\\
&  \times\exp\left(  -\left(  -E\right)  ^{1/2}r-\frac{\kappa\sqrt{r}}{\left(
-E\right)  ^{1/2}}\right)  r^{{\frac{{\kappa}^{2}}{8\left(  -E\right)  ^{3/2}%
}-\frac{\xi}{2\left(  -E\right)  ^{1/2}}}}\sum_{n\geq0}\frac{a_{n}}{\left[
\left(  -E\right)  ^{1/4}\sqrt{2r}\right]  ^{n}}\nonumber\\
&  +A_{l}K_{2}\left(  4l+2,{\frac{i\sqrt{2}\kappa}{\left(  -E\right)  ^{3/4}}%
},{\frac{2\xi}{\left(  -E\right)  ^{1/2}}}-{\frac{{\kappa}^{2}}{2\left(
-E\right)  ^{3/2}}},{\frac{-i4\sqrt{2}\mu}{\left(  -E\right)  ^{1/4}}}\right)
\nonumber\\
&  \times\exp\left(  \left(  -E\right)  ^{1/2}r+\frac{\kappa\sqrt{r}}{\left(
-E\right)  ^{1/2}}\right)  r^{-{\frac{{\kappa}^{2}}{8\left(  -E\right)
^{3/2}}+\frac{\xi}{2\left(  -E\right)  ^{1/2}}}}\sum_{n\geq0}\frac{e_{n}%
}{\left[  \left(  -E\right)  ^{1/4}\sqrt{2r}\right]  ^{n}}%
,\label{uexpzfo13212}%
\end{align}
where $K_{1}\left(  4l+2,{\frac{i\sqrt{2}\kappa}{\left(  -E\right)  ^{3/4}}%
},{\frac{2\xi}{\left(  -E\right)  ^{1/2}}}-{\frac{{\kappa}^{2}}{2\left(
-E\right)  ^{3/2}}},{\frac{-i4\sqrt{2}\mu}{\left(  -E\right)  ^{1/4}}}\right)
$ and $K_{2}\left(  4l+2,{\frac{i\sqrt{2}\kappa}{\left(  -E\right)  ^{3/4}}%
},{\frac{2\xi}{\left(  -E\right)  ^{1/2}}}-{\frac{{\kappa}^{2}}{2\left(
-E\right)  ^{3/2}}},{\frac{-i4\sqrt{2}\mu}{\left(  -E\right)  ^{1/4}}}\right)
$ are combination coefficients\ and $z=i\left(  -E\right)  ^{1/4}\sqrt{2r}$.

The boundary condition of bound states, $\left.  u\left(  r\right)
\right\vert _{r\rightarrow\infty}\rightarrow0$, requires that the coefficient
of the second term must vanish since this term diverges when $r\rightarrow
\infty$, i.e.,
\begin{equation}
K_{2}\left(  4l+2,{\frac{i\sqrt{2}\kappa}{\left(  -E\right)  ^{3/4}}}%
,{\frac{2\xi}{\left(  -E\right)  ^{1/2}}}-{\frac{{\kappa}^{2}}{2\left(
-E\right)  ^{3/2}}},{\frac{-i4\sqrt{2}\mu}{\left(  -E\right)  ^{1/4}}}\right)
=0, \label{eigenvalueo13212}%
\end{equation}
where%
\begin{align}
&  K_{2}\left(  4l+2,{\frac{i\sqrt{2}\kappa}{\left(  -E\right)  ^{3/4}}%
},{\frac{2\xi}{\left(  -E\right)  ^{1/2}}}-{\frac{{\kappa}^{2}}{2\left(
-E\right)  ^{3/2}}},{\frac{-i4\sqrt{2}\mu}{\left(  -E\right)  ^{1/4}}}\right)
\nonumber\\
&  =\frac{\Gamma\left(  4l+3\right)  }{\Gamma\left(  2l+1-{\frac{\xi}{\left(
-E\right)  ^{1/2}}}+{\frac{{\kappa}^{2}}{4\left(  -E\right)  ^{3/2}}}\right)
\Gamma\left(  2l+2+{\frac{\xi}{\left(  -E\right)  ^{1/2}}}-{\frac{{\kappa}%
^{2}}{4\left(  -E\right)  ^{3/2}}}\right)  }\nonumber\\
&  \times J_{2l+2+{\frac{\xi}{\left(  -E\right)  ^{1/2}}}-{\frac{{\kappa}^{2}%
}{4\left(  -E\right)  ^{3/2}}}}\left(  2l+1+{\frac{\xi}{\left(  -E\right)
^{1/2}}}-{\frac{{\kappa}^{2}}{4\left(  -E\right)  ^{3/2}}},\frac{i\sqrt
{2}\kappa}{\left(  -E\right)  ^{3/4}},6l+3-{\frac{\xi}{\left(  -E\right)
^{1/2}}}\right. \nonumber\\
&  +\left.  {\frac{{\kappa}^{2}}{4\left(  -E\right)  ^{3/2}}},\frac
{-i4\sqrt{2}\mu}{\left(  -E\right)  ^{1/4}}+\frac{i\sqrt{2}\kappa}{2\left(
-E\right)  ^{3/4}}\left[  {\frac{2\xi}{\left(  -E\right)  ^{1/2}}}%
-{\frac{{\kappa}^{2}}{2\left(  -E\right)  ^{3/2}}}-4l-2\right]  \right)
\end{align}
with%
\begin{align}
&  J_{2l+2+{\frac{\xi}{\left(  -E\right)  ^{1/2}}}-{\frac{{\kappa}^{2}%
}{4\left(  -E\right)  ^{3/2}}}}\left(  2l+1+{\frac{\xi}{\left(  -E\right)
^{1/2}}}-{\frac{{\kappa}^{2}}{4\left(  -E\right)  ^{3/2}}},\frac{i\sqrt
{2}\kappa}{\left(  -E\right)  ^{3/4}},6l+3-{\frac{\xi}{\left(  -E\right)
^{1/2}}}\right. \nonumber\\
&  +\left.  {\frac{{\kappa}^{2}}{4\left(  -E\right)  ^{3/2}}},\frac
{-i4\sqrt{2}\mu}{\left(  -E\right)  ^{1/4}}+\frac{i\sqrt{2}\kappa}{2\left(
-E\right)  ^{3/4}}\left[  {\frac{2\xi}{\left(  -E\right)  ^{1/2}}}%
-{\frac{{\kappa}^{2}}{2\left(  -E\right)  ^{3/2}}}-4l-2\right]  \right)
\nonumber\\
&  =\int_{0}^{\infty}dxx^{2l+1+{\frac{\xi}{\left(  -E\right)  ^{1/2}}}%
-{\frac{{\kappa}^{2}}{4\left(  -E\right)  ^{3/2}}}}e^{-x^{2}}N\left(
2l+1+{\frac{\xi}{\left(  -E\right)  ^{1/2}}}-{\frac{{\kappa}^{2}}{4\left(
-E\right)  ^{3/2}}},\frac{i\sqrt{2}\kappa}{\left(  -E\right)  ^{3/4}%
},6l+3-{\frac{\xi}{\left(  -E\right)  ^{1/2}}}\right. \nonumber\\
&  +\left.  {\frac{{\kappa}^{2}}{4\left(  -E\right)  ^{3/2}}},\frac
{-i4\sqrt{2}\mu}{\left(  -E\right)  ^{1/4}}+\frac{i\sqrt{2}\kappa}{2\left(
-E\right)  ^{3/4}}\left[  {\frac{2\xi}{\left(  -E\right)  ^{1/2}}}%
-{\frac{{\kappa}^{2}}{2\left(  -E\right)  ^{3/2}}}-4l-2\right]  ,x\right)  .
\end{align}

Eq. (\ref{eigenvalueo13212}) is an implicit expression of the eigenvalue.

The eigenfunction, by Eqs. (\ref{uexpzfo13212}) and (\ref{eigenvalueo13212}),
reads%
\begin{align}
u_{l}\left(  r\right)   &  =A_{l}K_{1}\left(  4l+2,{\frac{i\sqrt{2}\kappa
}{\left(  -E\right)  ^{3/4}}},{\frac{2\xi}{\left(  -E\right)  ^{1/2}}}%
-{\frac{{\kappa}^{2}}{2\left(  -E\right)  ^{3/2}}},{\frac{-i4\sqrt{2}\mu
}{\left(  -E\right)  ^{1/4}}}\right)  \nonumber\\
&  \times\exp\left(  -\left(  -E\right)  ^{1/2}r-\frac{\kappa\sqrt{r}}{\left(
-E\right)  ^{1/2}}\right)  r^{{\frac{{\kappa}^{2}}{8\left(  -E\right)  ^{3/2}%
}-\frac{\xi}{2\left(  -E\right)  ^{1/2}}}}\sum_{n\geq0}\frac{a_{n}}{\left[
\left(  -E\right)  ^{1/4}\sqrt{2r}\right]  ^{n}}.
\end{align}

\subsubsection{The scattering state}

For scattering states, $E>0$, introduce%
\begin{equation}
E=k^{2}.
\end{equation}

The singularity of the $S$-matrix on the positive imaginary axis corresponds
to the eigenvalues of bound states \cite{joachain1975quantum}, so the zero of
$K_{2}\left(  4l+2,-{\frac{\left(  1+i\right)  \kappa}{k^{3/2}}},{\frac{2i\xi
}{k}}+i{\frac{{\kappa}^{2}}{2k^{3}}},{\frac{4\left(  1-i\right)  \mu}{k^{1/2}%
}}\right)  $ on the positive imaginary is just the singularity of the
$S$-matrix. Considering that the $S$-matrix is unitary, i.e.,%
\begin{equation}
S_{l}=e^{2i\delta_{l}}, \label{SMo13212}%
\end{equation}
we have%
\begin{equation}
S_{l}\left(  k\right)  =\frac{K_{2}^{\ast}\left(  4l+2,-{\frac{\left(
1+i\right)  \kappa}{k^{3/2}}},{\frac{2i\xi}{k}}+i{\frac{{\kappa}^{2}}{2k^{3}}%
},{\frac{4\left(  1-i\right)  \mu}{k^{1/2}}}\right)  }{\left(  4l+2,-{\frac
{\left(  1+i\right)  \kappa}{k^{3/2}}},{\frac{2i\xi}{k}}+i{\frac{{\kappa}^{2}%
}{2k^{3}}},{\frac{4\left(  1-i\right)  \mu}{k^{1/2}}}\right)  }=\frac
{K_{2}\left(  4l+2,-{\frac{\left(  1-i\right)  \kappa}{k^{3/2}}},-{\frac
{2i\xi}{k}}-i{\frac{{\kappa}^{2}}{2k^{3}}},{\frac{4\left(  1+i\right)  \mu
}{k^{1/2}}}\right)  }{K_{2}\left(  4l+2,-{\frac{\left(  1+i\right)  \kappa
}{k^{3/2}}},{\frac{2i\xi}{k}}+i{\frac{{\kappa}^{2}}{2k^{3}}},{\frac{4\left(
1-i\right)  \mu}{k^{1/2}}}\right)  }. \label{SM1o13212}%
\end{equation}

The scattering wave function can be constructed with the help of the
$S$-matrix. The scattering wave function can be expressed as a linear
combination of the radially ingoing wave $u_{in}\left(  r\right)  $ and the
radially outgoing wave $u_{out}\left(  r\right)  $, which are conjugate to
each other, i.e., \cite{joachain1975quantum}%
\begin{equation}
u_{l}\left(  r\right)  =A_{l}\left[  \left(  -1\right)  ^{l+1}u_{in}\left(
r\right)  +S_{l}\left(  k\right)  u_{out}\left(  r\right)  \right]
.\label{inouto13212}%
\end{equation}
From Eq. (\ref{uexpzfo13212}), we have%
\begin{align}
u_{in}\left(  r\right)   &  =\exp\left(  -ikr+i\frac{\kappa\sqrt{r}}%
{k}\right)  r^{i{\frac{{\kappa}^{2}}{8k^{3}}+i\frac{\xi}{2k}}}\sum_{n\geq
0}\frac{e_{n}}{\left(  -2ikr\right)  ^{n/2}},\nonumber\\
u_{out}\left(  r\right)   &  =\exp\left(  ikr-i\frac{\kappa\sqrt{r}}%
{k}\right)  r^{-i{\frac{{\kappa}^{2}}{8k^{3}}-i\frac{\xi}{2k}}}\sum_{n\geq
0}\frac{e_{n}^{\ast}}{\left(  2ikr\right)  ^{n/2}}.
\end{align}
Then by Eq. (\ref{SM1o13212}) we obtain the scattering wave function,%
\begin{align}
u_{l}\left(  r\right)   &  =A_{l}\left[  \left(  -1\right)  ^{l+1}\exp\left(
-ikr+i\frac{\kappa\sqrt{r}}{k}\right)  r^{i{\frac{{\kappa}^{2}}{8k^{3}}%
+i\frac{\xi}{2k}}}\sum_{n\geq0}\frac{e_{n}}{\left(  -2ikr\right)  ^{n/2}%
}\right.  \nonumber\\
&  +\left.  \frac{K_{2}\left(  4l+2,-{\frac{\left(  1-i\right)  \kappa
}{k^{3/2}}},-{\frac{2i\xi}{k}}-i{\frac{{\kappa}^{2}}{2k^{3}}},{\frac{4\left(
1+i\right)  \mu}{k^{1/2}}}\right)  }{K_{2}\left(  4l+2,-{\frac{\left(
1+i\right)  \kappa}{k^{3/2}}},{\frac{2i\xi}{k}}+i{\frac{{\kappa}^{2}}{2k^{3}}%
},{\frac{4\left(  1-i\right)  \mu}{k^{1/2}}}\right)  }\exp\left(
ikr-i\frac{\kappa\sqrt{r}}{k}\right)  r^{-i{\frac{{\kappa}^{2}}{8k^{3}}%
-i\frac{\xi}{2k}}}\sum_{n\geq0}\frac{e_{n}^{\ast}}{\left(  2ikr\right)
^{n/2}}\right]  .
\end{align}
Taking $r\rightarrow\infty$, we have
\begin{align}
&  u_{l}\left(  r\right)  \overset{r\rightarrow\infty}{\sim}A_{l}\left[
\left(  -1\right)  ^{l+1}\exp\left(  -ikr+i\frac{\kappa\sqrt{r}}{k}\right)
r^{i{\frac{{\kappa}^{2}}{8k^{3}}+i\frac{\xi}{2k}}}\right.  \nonumber\\
&  \left.  +\frac{K_{2}\left(  4l+2,-{\frac{\left(  1-i\right)  \kappa
}{k^{3/2}}},-{\frac{2i\xi}{k}}-i{\frac{{\kappa}^{2}}{2k^{3}}},{\frac{4\left(
1+i\right)  \mu}{k^{1/2}}}\right)  }{K_{2}\left(  4l+2,-{\frac{\left(
1+i\right)  \kappa}{k^{3/2}}},{\frac{2i\xi}{k}}+i{\frac{{\kappa}^{2}}{2k^{3}}%
},{\frac{4\left(  1-i\right)  \mu}{k^{1/2}}}\right)  }\exp\left(
ikr-i\frac{\kappa\sqrt{r}}{k}\right)  r^{-i{\frac{{\kappa}^{2}}{8k^{3}}%
-i\frac{\xi}{2k}}}\right]  \nonumber\\
&  =A_{l}e^{i\delta_{l}}\sin\left(  kr-\frac{\kappa}{k}\sqrt{r}-\left(
{\frac{{\kappa}^{2}}{8k^{3}}+\frac{\xi}{2k}}\right)  \ln2kr+\delta_{l}%
-\frac{l\pi}{2}\right)  .
\end{align}
By Eqs. (\ref{SMo13212}) and (\ref{SM1o13212}), we obtain the scattering phase
shift%
\begin{equation}
\delta_{l}=-\arg K_{2}\left(  4l+2,-{\frac{\left(  1+i\right)  \kappa}%
{k^{3/2}}},{\frac{2i\xi}{k}}+i{\frac{{\kappa}^{2}}{2k^{3}}},{\frac{4\left(
1-i\right)  \mu}{k^{1/2}}}\right)  .
\end{equation}

\section{The exact solution of $U\left(  r\right)  =\xi r^{2}+\mu r^{6}+\kappa
r^{4}$ \label{V264}}

In this appendix, we provide an exact solution of the eigenproblem of the
potential
\begin{equation}
U\left(  r\right)  =\xi r^{2}+\mu r^{6}+\kappa r^{4}%
\end{equation}
by solving the radial equation directly. This potential has only bound states.

The radial equation reads%
\begin{equation}
\frac{d^{2}u_{l}\left(  r\right)  }{dr^{2}}+\left[  E-\frac{l\left(
l+1\right)  }{r^{2}}-\xi r^{2}-\mu r^{6}-\kappa r^{4}\right]  u_{l}\left(
r\right)  =0. \label{radialeq264}%
\end{equation}
Using the variable substitution%
\begin{equation}
z=i\frac{\sqrt{2}}{2}\mu^{1/4}{r}^{2} \label{zr264}%
\end{equation}
and introducing $f_{l}\left(  z\right)  $ by%
\begin{equation}
u_{l}\left(  z\right)  =A_{l}\exp\left(  -\frac{z^{2}}{2}-\frac{\beta}%
{2}z\right)  z^{\left(  l+1\right)  /2}f_{l}\left(  z\right)
\end{equation}
with $A_{l}$ a constant, we convert the radial equation (\ref{radialeqo13212})
into an equation of $f_{l}\left(  z\right)  $:%
\begin{align}
&  zf_{l}^{\prime\prime}\left(  z\right)  +\left(  l+\frac{3}{2}-{\frac
{i\sqrt{2}\kappa}{2{\mu}^{3/4}}}z-2z^{2}\right)  f_{l}^{\prime}\left(
z\right) \nonumber\\
&  +\left[  \left(  {\frac{\xi}{2{\mu}^{1/2}}}-{\frac{{\kappa}^{2}}{8{\mu
}^{3/2}}}-l-\frac{5}{2}\right)  z-\frac{i\sqrt{2}E}{4\mu^{1/4}}-\frac
{i\sqrt{2}\kappa}{4{\mu}^{3/4}}\left(  l+\frac{3}{2}\right)  \right]
f_{l}\left(  z\right)  =0. \label{eqf264}%
\end{align}
This is a Biconfluent Heun equation \cite{ronveaux1995heun}.

The choice of the boundary condition has been discussed in Ref.
\cite{li2016exact}.

\subsection{The regular solution}

The regular solution is a solution satisfying the boundary condition at $r=0$
\cite{li2016exact}. The regular solution at $r=0$\ should satisfy the boundary
condition $\lim_{r\rightarrow0}u_{l}\left(  r\right)  /r^{l+1}=1$. In this
section, we provide the regular solution of Eq. (\pageref{eqf264}).

The Biconfluent Heun equation (\ref{eqf264}) has two linearly independent
solutions \cite{ronveaux1995heun}%
\begin{align}
y_{l}^{\left(  1\right)  }\left(  z\right)   &  =N\left(  l+\frac{1}{2}%
,{\frac{i\sqrt{2}\kappa}{2{\mu}^{3/4}}},{\frac{\xi}{2{\mu}^{1/2}}}%
-{\frac{{\kappa}^{2}}{8{\mu}^{3/2}}},{\frac{i\sqrt{2}E}{2\mu^{1/4}}},z\right)
,\\
y_{l}^{\left(  2\right)  }\left(  z\right)   &  =cN\left(  l+\frac{1}%
{2},{\frac{i\sqrt{2}\kappa}{2{\mu}^{3/4}}},{\frac{\xi}{2{\mu}^{1/2}}}%
-{\frac{{\kappa}^{2}}{8{\mu}^{3/2}}},{\frac{i\sqrt{2}E}{2\mu^{1/4}}},z\right)
\ln z+\sum_{n\geq0}d_{n}z^{n-l-\frac{1}{2}},
\end{align}
where%
\begin{equation}
c=\frac{1}{l+\frac{1}{2}}\left\{  d_{l-1/2}\left[  \frac{i\sqrt{2}E}%
{4\mu^{1/4}}+\frac{i\sqrt{2}\kappa}{4{\mu}^{3/4}}\left(  l-\frac{1}{2}\right)
\right]  -d_{l-3/2}\left(  {\frac{\xi}{2{\mu}^{1/2}}}-{\frac{{\kappa}^{2}%
}{8{\mu}^{3/2}}}+\frac{3}{2}-l\right)  \right\}
\end{equation}
is a constant with the coefficient $d_{\nu}$ given by the following recurrence
relation,%
\begin{align}
&  d_{-1}=0,\text{ \ }d_{0}=1,\nonumber\\
&  \left(  v+2\right)  \left(  v+\frac{3}{2}-l\right)  d_{v+2}-\left[
\frac{i\sqrt{2}E}{4\mu^{1/4}}+\frac{i\sqrt{2}\kappa}{4{\mu}^{3/4}}\left(
2v+\frac{5}{2}-l\right)  \right]  d_{v+1}\nonumber\\
&  +\left(  {\frac{\xi}{2{\mu}^{1/2}}}-{\frac{{\kappa}^{2}}{8{\mu}^{3/2}}%
}-2v-\frac{3}{2}+l\right)  d_{v}=0
\end{align}
and $N(\alpha,\beta,\gamma,\delta,z)$ is the biconfluent Heun function
\cite{ronveaux1995heun,slavyanov2000special,li2016exact}.

The biconfluent Heun function $N\left(  l+\frac{1}{2},{\frac{i\sqrt{2}\kappa
}{2{\mu}^{3/4}}},{\frac{\xi}{2{\mu}^{1/2}}}-{\frac{{\kappa}^{2}}{8{\mu}^{3/2}%
}},{\frac{i\sqrt{2}E}{2\mu^{1/4}}},z\right)  $ has an expansion at $z=0$
\cite{ronveaux1995heun}:%
\begin{equation}
N\left(  l+\frac{1}{2},{\frac{i\sqrt{2}\kappa}{2{\mu}^{3/4}}},{\frac{\xi
}{2{\mu}^{1/2}}}-{\frac{{\kappa}^{2}}{8{\mu}^{3/2}}},{\frac{i\sqrt{2}E}%
{2\mu^{1/4}}},z\right)  =\sum_{n\geq0}\frac{A_{n}}{\left(  l+\frac{3}%
{2}\right)  _{n}}\frac{z^{n}}{n!},
\end{equation}
where the expansion coefficients is determined by the recurrence relation,%
\begin{align}
A_{0}  &  =1,\text{ \ }A_{1}=\frac{i\sqrt{2}E}{4\mu^{1/4}}+\frac{i\sqrt
{2}\kappa}{4{\mu}^{3/4}}\left(  l+\frac{3}{2}\right)  ,\nonumber\\
A_{n+2}  &  =\left[  \left(  2n+2+l+\frac{3}{2}\right)  \frac{i\sqrt{2}\kappa
}{4{\mu}^{3/4}}+\frac{i\sqrt{2}E}{4\mu^{1/4}}\right]  A_{n+1}\nonumber\\
&  -\left(  n+1\right)  \left(  n+l+\frac{3}{2}\right)  \left(  {\frac{\xi
}{2{\mu}^{1/2}}}-{\frac{{\kappa}^{2}}{8{\mu}^{3/2}}}-l-\frac{5}{2}-2n\right)
A_{n},
\end{align}
and $\left(  a\right)  _{n}=\Gamma\left(  a+n\right)  /\Gamma\left(  a\right)
$ is Pochhammer's symbol.

Only $y_{l}^{\left(  1\right)  }\left(  z\right)  $ satisfies the boundary
condition for the regular solution at $r=0$, so the radial eigenfunction reads%
\begin{align}
u_{l}\left(  z\right)   &  =A_{l}\exp\left(  -\frac{z^{2}}{2}-\frac{\beta}%
{2}z\right)  z^{\left(  l+1\right)  /2}y_{l}^{\left(  1\right)  }\left(
z\right) \nonumber\\
&  =A_{l}\exp\left(  -\frac{z^{2}}{2}-\frac{\beta}{2}z\right)  z^{\left(
l+1\right)  /2}N\left(  l+\frac{1}{2},{\frac{i\sqrt{2}\kappa}{2{\mu}^{3/4}}%
},{\frac{\xi}{2{\mu}^{1/2}}}-{\frac{{\kappa}^{2}}{8{\mu}^{3/2}}},{\frac
{i\sqrt{2}E}{2\mu^{1/4}}},z\right)  .
\end{align}
By Eq. (\ref{zr264}), we obtain the regular solution,%

\begin{equation}
u_{l}\left(  r\right)  =A_{l}\exp\left(  \frac{\mu^{1/2}}{4}{r}^{4}%
+\frac{\kappa}{4{\mu}^{1/2}}{r}^{2}\right)  r^{l+1}N\left(  l+\frac{1}%
{2},{\frac{i\sqrt{2}\kappa}{2{\mu}^{3/4}}},{\frac{\xi}{2{\mu}^{1/2}}}%
-{\frac{{\kappa}^{2}}{8{\mu}^{3/2}}},{\frac{i\sqrt{2}E}{2\mu^{1/4}}}%
,i\frac{\sqrt{2}}{2}\mu^{1/4}{r}^{2}\right)  . \label{regular264}%
\end{equation}

\subsection{The irregular solution}

The irregular solution is a solution satisfying the boundary condition at
$r\rightarrow\infty$ \cite{li2016exact}.

The Biconfluent Heun equation (\ref{eqf264}) has two linearly independent
irregular solutions \cite{ronveaux1995heun}:%
\begin{align}
&  B_{l}^{+}\left(  l+\frac{1}{2},{\frac{i\sqrt{2}\kappa}{2{\mu}^{3/4}}%
},{\frac{\xi}{2{\mu}^{1/2}}}-{\frac{{\kappa}^{2}}{8{\mu}^{3/2}}},{\frac
{i\sqrt{2}E}{2\mu^{1/4}},}z\right) \nonumber\\
&  =\exp\left(  \frac{i\sqrt{2}\kappa}{2{\mu}^{3/4}}z+z^{2}\right)  B_{l}%
^{+}\left(  l+\frac{1}{2},{\frac{\sqrt{2}\kappa}{2{\mu}^{3/4}}},\frac{{\kappa
}^{2}}{8{\mu}^{3/2}}-{\frac{\xi}{2{\mu}^{1/2}}},-{\frac{\sqrt{2}E}{2\mu^{1/4}%
},-i}z\right) \nonumber\\
&  =\exp\left(  \frac{i\sqrt{2}\kappa}{2{\mu}^{3/4}}z+z^{2}\right)  \left(
-iz\right)  ^{\frac{1}{2}\left(  \frac{{\kappa}^{2}}{8{\mu}^{3/2}}-{\frac{\xi
}{2{\mu}^{1/2}}}-l-\frac{5}{2}\right)  }\sum_{n\geq0}\frac{a_{n}}{\left(
-iz\right)  ^{n}}, \label{f1264}%
\end{align}%
\begin{align}
&  H_{l}^{+}\left(  l+\frac{1}{2},{\frac{i\sqrt{2}\kappa}{2{\mu}^{3/4}}%
},{\frac{\xi}{2{\mu}^{1/2}}}-{\frac{{\kappa}^{2}}{8{\mu}^{3/2}}},{\frac
{i\sqrt{2}E}{2\mu^{1/4}},}z\right) \nonumber\\
&  =\exp\left(  \frac{i\sqrt{2}\kappa}{2{\mu}^{3/4}}z+z^{2}\right)  H_{l}%
^{+}\left(  l+\frac{1}{2},{\frac{\sqrt{2}\kappa}{2{\mu}^{3/4}}},\frac{{\kappa
}^{2}}{8{\mu}^{3/2}}-{\frac{\xi}{2{\mu}^{1/2}}},-{\frac{\sqrt{2}E}{2\mu^{1/4}%
},-i}z\right) \nonumber\\
&  =\left(  -iz\right)  ^{-\frac{1}{2}\left(  \frac{{\kappa}^{2}}{8{\mu}%
^{3/2}}-\frac{\xi}{2{\mu}^{1/2}}+l+\frac{5}{2}\right)  }\sum_{n\geq0}%
\frac{e_{n}}{\left(  -iz\right)  ^{n}} \label{f2264}%
\end{align}
with the expansion coefficients given by the recurrence relation
\[
a_{0}=1,\text{ \ }a_{1}=-{\frac{\sqrt{2}E}{8\mu^{1/4}}}+\frac{\sqrt{2}\kappa
}{8{\mu}^{3/4}}\left(  \frac{{\kappa}^{2}}{8{\mu}^{3/2}}-{\frac{\xi}{2{\mu
}^{1/2}}}-1\right)  ,
\]%
\begin{align}
&  2\left(  n+2\right)  a_{n+2}+\left[  \frac{\sqrt{2}\kappa}{2{\mu}^{3/4}%
}\left(  \frac{3}{2}-\frac{{\kappa}^{2}}{16{\mu}^{3/2}}+{\frac{\xi}{4{\mu
}^{1/2}}}+n\right)  +{\frac{\sqrt{2}E}{4\mu^{1/4}}}\right]  a_{n+1}\nonumber\\
&  +\left[  \left(  \frac{{\kappa}^{2}}{16{\mu}^{3/2}}-{\frac{\xi}{4{\mu
}^{1/2}}}-1\right)  ^{2}-\frac{1}{4}\left(  l+\frac{1}{2}\right)
^{2}+n\left(  n+2-\frac{{\kappa}^{2}}{8{\mu}^{3/2}}+{\frac{\xi}{2{\mu}^{1/2}}%
}\right)  \right]  a_{n}=0
\end{align}
and%
\[
e_{0}=1,\text{ \ }e_{1}={\frac{\sqrt{2}E}{8\mu^{1/4}}}-\frac{\sqrt{2}\kappa
}{8{\mu}^{3/4}}\left(  \frac{{\kappa}^{2}}{8{\mu}^{3/2}}-{\frac{\xi}{2{\mu
}^{1/2}}}+1\right)  ,
\]%
\begin{align}
&  2\left(  n+2\right)  e_{n+2}+\left[  \frac{\sqrt{2}\kappa}{2{\mu}^{3/4}%
}\left(  \frac{3}{2}+\frac{{\kappa}^{2}}{16{\mu}^{3/2}}-{\frac{\xi}{4{\mu
}^{1/2}}}+n\right)  -{\frac{\sqrt{2}E}{4\mu^{1/4}}}\right]  e_{n+1}\nonumber\\
&  -\left[  \left(  \frac{{\kappa}^{2}}{16{\mu}^{3/2}}-{\frac{\xi}{4{\mu
}^{1/2}}}+1\right)  ^{2}-\frac{1}{4}\left(  l+\frac{1}{2}\right)
^{2}+n\left(  n+2+\frac{{\kappa}^{2}}{8{\mu}^{3/2}}-{\frac{\xi}{2{\mu}^{1/2}}%
}\right)  \right]  e_{n}=0.
\end{align}

\subsection{Eigenfunctions and eigenvalues}

To construct the solution, we first express the regular solution
(\ref{regular264}) as a linear combination of the two irregular solutions
(\ref{f1264}) and (\ref{f2264}).

The regular solution (\ref{regular264}), with the relation
\cite{ronveaux1995heun,li2016exact}%

\begin{align}
&  N\left(  l+\frac{1}{2},{\frac{i\sqrt{2}\kappa}{2{\mu}^{3/4}}},{\frac{\xi
}{2{\mu}^{1/2}}}-{\frac{{\kappa}^{2}}{8{\mu}^{3/2}}},{\frac{i\sqrt{2}E}%
{2\mu^{1/4}}},z\right)  \nonumber\\
&  =K_{1}\left(  l+\frac{1}{2},{\frac{i\sqrt{2}\kappa}{2{\mu}^{3/4}}}%
,{\frac{\xi}{2{\mu}^{1/2}}}-{\frac{{\kappa}^{2}}{8{\mu}^{3/2}}},{\frac
{i\sqrt{2}E}{2\mu^{1/4}}}\right)  B_{l}^{+}\left(  l+\frac{1}{2},{\frac
{i\sqrt{2}\kappa}{2{\mu}^{3/4}}},{\frac{\xi}{2{\mu}^{1/2}}}-{\frac{{\kappa
}^{2}}{8{\mu}^{3/2}}},{\frac{i\sqrt{2}E}{2\mu^{1/4}}},z\right)  \nonumber\\
&  +K_{2}\left(  l+\frac{1}{2},{\frac{i\sqrt{2}\kappa}{2{\mu}^{3/4}}}%
,{\frac{\xi}{2{\mu}^{1/2}}}-{\frac{{\kappa}^{2}}{8{\mu}^{3/2}}},{\frac
{i\sqrt{2}E}{2\mu^{1/4}}}\right)  H_{l}^{+}\left(  l+\frac{1}{2},{\frac
{i\sqrt{2}\kappa}{2{\mu}^{3/4}}},{\frac{\xi}{2{\mu}^{1/2}}}-{\frac{{\kappa
}^{2}}{8{\mu}^{3/2}}},{\frac{i\sqrt{2}E}{2\mu^{1/4}}},z\right)
\end{align}
and the expansions (\ref{f1264}) and (\ref{f2264}), becomes%
\begin{align}
&  u_{l}\left(  r\right)  \nonumber\\
&  =A_{l}K_{1}\left(  l+\frac{1}{2},{\frac{i\sqrt{2}\kappa}{2{\mu}^{3/4}}%
},{\frac{\xi}{2{\mu}^{1/2}}}-{\frac{{\kappa}^{2}}{8{\mu}^{3/2}}},{\frac
{i\sqrt{2}E}{2\mu^{1/4}}}\right)  \nonumber\\
&  \times\exp\left(  -\frac{\mu^{1/2}}{4}{r}^{4}-\frac{\kappa}{4{\mu}^{1/2}%
}{r}^{2}\right)  r^{\frac{{\kappa}^{2}}{8{\mu}^{3/2}}-{\frac{\xi}{2{\mu}%
^{1/2}}}-\frac{3}{2}}\sum_{n\geq0}\frac{a_{n}}{\left(  \frac{\sqrt{2}}{2}%
\mu^{1/4}{r}^{2}\right)  ^{n}}\nonumber\\
&  +A_{l}K_{2}\left(  l+\frac{1}{2},{\frac{i\sqrt{2}\kappa}{2{\mu}^{3/4}}%
},{\frac{\xi}{2{\mu}^{1/2}}}-{\frac{{\kappa}^{2}}{8{\mu}^{3/2}}},{\frac
{i\sqrt{2}E}{2\mu^{1/4}}}\right)  \nonumber\\
&  \times\exp\left(  \frac{\mu^{1/2}}{4}{r}^{4}+\frac{\kappa}{4{\mu}^{1/2}}%
{r}^{2}\right)  r^{-\frac{{\kappa}^{2}}{8{\mu}^{3/2}}+\frac{\xi}{2{\mu}^{1/2}%
}-\frac{3}{2}}\sum_{n\geq0}\frac{e_{n}}{\left(  \frac{\sqrt{2}}{2}\mu^{1/4}%
{r}^{2}\right)  ^{n}},\label{uexpzf264}%
\end{align}
where $K_{1}\left(  l+\frac{1}{2},{\frac{i\sqrt{2}\kappa}{2{\mu}^{3/4}}%
},{\frac{\xi}{2{\mu}^{1/2}}}-{\frac{{\kappa}^{2}}{8{\mu}^{3/2}}},{\frac
{i\sqrt{2}E}{2\mu^{1/4}}}\right)  $ and $K_{2}\left(  l+\frac{1}{2}%
,{\frac{i\sqrt{2}\kappa}{2{\mu}^{3/4}}},{\frac{\xi}{2{\mu}^{1/2}}}%
-{\frac{{\kappa}^{2}}{8{\mu}^{3/2}}},{\frac{i\sqrt{2}E}{2\mu^{1/4}}}\right)  $
are combination coefficients\ and $z=i\frac{\sqrt{2}}{2}\mu^{1/4}{r}^{2}$.

The boundary condition of bound states, $\left.  u\left(  r\right)
\right\vert _{r\rightarrow\infty}\rightarrow0$, requires that the coefficient
of the second term must vanish since this term diverges when $r\rightarrow
\infty$, i.e.,
\begin{equation}
K_{2}\left(  l+\frac{1}{2},{\frac{i\sqrt{2}\kappa}{2{\mu}^{3/4}}},{\frac{\xi
}{2{\mu}^{1/2}}}-{\frac{{\kappa}^{2}}{8{\mu}^{3/2}}},{\frac{i\sqrt{2}E}%
{2\mu^{1/4}}}\right)  =0, \label{eigenvalue264}%
\end{equation}
where%
\begin{align}
&  K_{2}\left(  l+\frac{1}{2},{\frac{i\sqrt{2}\kappa}{2{\mu}^{3/4}}}%
,{\frac{\xi}{2{\mu}^{1/2}}}-{\frac{{\kappa}^{2}}{8{\mu}^{3/2}}},{\frac
{i\sqrt{2}E}{2\mu^{1/4}}}\right) \nonumber\\
&  =\frac{\Gamma\left(  l+\frac{3}{2}\right)  }{\Gamma\left(  \frac{l}%
{2}+\frac{1}{4}-{\frac{\xi}{4{\mu}^{1/2}}}+{\frac{{\kappa}^{2}}{16{\mu}^{3/2}%
}}\right)  \Gamma\left(  \frac{l}{2}+\frac{5}{4}+{\frac{\xi}{4{\mu}^{1/2}}%
}-{\frac{{\kappa}^{2}}{16{\mu}^{3/2}}}\right)  }\nonumber\\
&  \times J_{\frac{l}{2}+\frac{5}{4}+{\frac{\xi}{4{\mu}^{1/2}}}-{\frac
{{\kappa}^{2}}{16{\mu}^{3/2}}}}\left(  \frac{l}{2}+\frac{1}{4}+{\frac{\xi
}{4{\mu}^{1/2}}}-{\frac{{\kappa}^{2}}{16{\mu}^{3/2}}},\frac{i\sqrt{2}\kappa
}{2{\mu}^{3/4}},\frac{3}{2}\left(  l+\frac{1}{2}\right)  \right. \nonumber\\
&  -\left.  {\frac{\xi}{4{\mu}^{1/2}}}+{\frac{{\kappa}^{2}}{16{\mu}^{3/2}}%
},\frac{i\sqrt{2}E}{2\mu^{1/4}}+\frac{i\sqrt{2}\kappa}{4{\mu}^{3/4}}\left(
{\frac{\xi}{2{\mu}^{1/2}}}-{\frac{{\kappa}^{2}}{8{\mu}^{3/2}}}-l-\frac{1}%
{2}\right)  \right)
\end{align}
with%
\begin{align}
&  J_{\frac{l}{2}+\frac{5}{4}+{\frac{\xi}{4{\mu}^{1/2}}}-{\frac{{\kappa}^{2}%
}{16{\mu}^{3/2}}}}\left(  \frac{l}{2}+\frac{1}{4}+{\frac{\xi}{4{\mu}^{1/2}}%
}-{\frac{{\kappa}^{2}}{16{\mu}^{3/2}}},\frac{i\sqrt{2}\kappa}{2{\mu}^{3/4}%
},\frac{3}{2}\left(  l+\frac{1}{2}\right)  -{\frac{\xi}{4{\mu}^{1/2}}}\right.
\nonumber\\
&  +\left.  {\frac{{\kappa}^{2}}{16{\mu}^{3/2}}},\frac{i\sqrt{2}E}{2\mu^{1/4}%
}+\frac{i\sqrt{2}\kappa}{4{\mu}^{3/4}}\left(  {\frac{\xi}{2{\mu}^{1/2}}%
}-{\frac{{\kappa}^{2}}{8{\mu}^{3/2}}}-l-\frac{1}{2}\right)  \right)
\nonumber\\
&  =\int_{0}^{\infty}dxx^{\frac{l}{2}+\frac{1}{4}+{\frac{\xi}{4{\mu}^{1/2}}%
}-{\frac{{\kappa}^{2}}{16{\mu}^{3/2}}}}e^{-x^{2}}N\left(  \frac{l}{2}+\frac
{1}{4}+{\frac{\xi}{4{\mu}^{1/2}}}-{\frac{{\kappa}^{2}}{16{\mu}^{3/2}}}%
,\frac{i\sqrt{2}\kappa}{2{\mu}^{3/4}},\frac{3}{2}\left(  l+\frac{1}{2}\right)
-{\frac{\xi}{4{\mu}^{1/2}}}\right. \nonumber\\
&  +\left.  {\frac{{\kappa}^{2}}{16{\mu}^{3/2}}},\frac{i\sqrt{2}E}{2\mu^{1/4}%
}+\frac{i\sqrt{2}\kappa}{4{\mu}^{3/4}}\left(  {\frac{\xi}{2{\mu}^{1/2}}%
}-{\frac{{\kappa}^{2}}{8{\mu}^{3/2}}}-l-\frac{1}{2}\right)  ,x\right)  .
\end{align}

Eq. (\ref{eigenvalue264}) is an implicit expression of the eigenvalue.

The eigenfunction, by Eqs. (\ref{uexpzf264}) and (\ref{eigenvalue264}), reads%
\begin{align}
u_{l}\left(  r\right)   &  =A_{l}K_{1}\left(  l+\frac{1}{2},{\frac{i\sqrt
{2}\kappa}{2{\mu}^{3/4}}},{\frac{\xi}{2{\mu}^{1/2}}}-{\frac{{\kappa}^{2}%
}{8{\mu}^{3/2}}},{\frac{i\sqrt{2}E}{2\mu^{1/4}}}\right)  \nonumber\\
&  \times\exp\left(  -\frac{\mu^{1/2}}{4}{r}^{4}-\frac{\kappa}{4{\mu}^{1/2}%
}{r}^{2}\right)  r^{\frac{{\kappa}^{2}}{8{\mu}^{3/2}}-{\frac{\xi}{2{\mu}%
^{1/2}}}-\frac{3}{2}}\sum_{n\geq0}\frac{a_{n}}{\left(  \frac{\sqrt{2}}{2}%
\mu^{1/4}{r}^{2}\right)  ^{n}}.
\end{align}

\section{The exact solution of $U\left(  r\right)  =\frac{\xi}{r^{2/3}}+\mu
r^{2/3}+\frac{\kappa}{r^{4/3}}$\label{Vm2323m43}}

In this appendix, we provide an exact solution of the eigenproblem of the
potential%
\begin{equation}
U\left(  r\right)  =\frac{\xi}{r^{2/3}}+\mu r^{2/3}+\frac{\kappa}{r^{4/3}}%
\end{equation}
by solving the radial equation directly. This potential has only bound states.

The radial equation of the potential $U\left(  r\right)  =\frac{\xi}{r^{2/3}%
}+\mu r^{2/3}+\frac{\kappa}{r^{4/3}}$, reads%
\begin{equation}
\frac{d^{2}}{dr^{2}}u_{l}\left(  r\right)  +\left[  E-\frac{l\left(
l+1\right)  }{r^{2}}-\frac{\xi}{r^{2/3}}-\mu r^{2/3}-\frac{\kappa}{r^{4/3}%
}\right]  u_{l}\left(  r\right)  =0. \label{radialeq23o23o43}%
\end{equation}
Using the variable substitution%
\begin{equation}
z=i\frac{\sqrt{6}}{2}\mu^{1/4}{r}^{2/3} \label{zr23o23o43}%
\end{equation}
and introducing $f_{l}\left(  z\right)  $ by%
\begin{equation}
u_{l}\left(  z\right)  =A_{l}\exp\left(  -\frac{z^{2}}{2}-\frac{\beta}%
{2}z\right)  z^{3\left(  l+1\right)  /2}f_{l}\left(  z\right)
\end{equation}
with $A_{l}$ a constant, we convert the radial equation
(\ref{radialeq23o23o43}) into an equation of $f_{l}\left(  z\right)  $:%
\begin{align}
&  zf_{l}^{\prime\prime}\left(  z\right)  +\left(  3l+\frac{5}{2}%
+{\frac{iE\sqrt{6}}{2{\mu}^{3/4}}}z-2z^{2}\right)  f_{l}^{\prime}\left(
z\right) \nonumber\\
&  +\left\{  \left(  {\frac{3\xi}{2{\mu}^{1/2}}}-{\frac{3E^{2}}{8{\mu}^{3/2}}%
}-3l-\frac{7}{2}\right)  z-\left[  -{\frac{i3\sqrt{6}\kappa}{4\mu^{1/4}}%
}-{\frac{iE\sqrt{6}}{4{\mu}^{3/4}}}\left(  3l+\frac{5}{2}\right)  \right]
\right\}  f_{l}\left(  z\right)  =0. \label{eqf23o23o43}%
\end{align}
This is a Biconfluent Heun equation \cite{ronveaux1995heun}.

The choice of the boundary condition has been discussed in Ref.
\cite{li2016exact}.

\subsection{The regular solution}

The regular solution is a solution satisfying the boundary condition at $r=0$
\cite{li2016exact}. The regular solution at $r=0$\ should satisfy the boundary
condition $\lim_{r\rightarrow0}u_{l}\left(  r\right)  /r^{l+1}=1$. In this
section, we provide the regular solution of Eq. (\pageref{eqf23o23o43}).

The Biconfluent Heun equation (\ref{eqf23o23o43}) has two linearly independent
solutions \cite{ronveaux1995heun}%
\begin{align}
y_{l}^{\left(  1\right)  }\left(  z\right)   &  =N\left(  3l+\frac{3}%
{2},-{\frac{iE\sqrt{6}}{2{\mu}^{3/4}}},{\frac{3\xi}{2{\mu}^{1/2}}}%
-{\frac{3E^{2}}{8{\mu}^{3/2}}},-{\frac{i3\sqrt{6}\kappa}{2\mu^{1/4}}%
},z\right)  ,\\
y_{l}^{\left(  2\right)  }\left(  z\right)   &  =cN\left(  3l+\frac{3}%
{2},-{\frac{iE\sqrt{6}}{2{\mu}^{3/4}}},{\frac{3\xi}{2{\mu}^{1/2}}}%
-{\frac{3E^{2}}{8{\mu}^{3/2}}},-{\frac{i3\sqrt{6}\kappa}{2\mu^{1/4}}%
},z\right)  \ln z+\sum_{n\geq0}d_{n}z^{n-3l-3/2},
\end{align}
where%
\begin{equation}
c=\frac{1}{3\left(  l+1/2\right)  }\left\{  d_{3l+1/2}\left[  -{\frac
{i3\sqrt{6}\kappa}{4\mu^{1/4}}}-{\frac{iE\sqrt{6}}{4{\mu}^{3/4}}}\left(
3l+\frac{1}{2}\right)  \right]  -d_{3l-1/2}\left(  {\frac{3\xi}{2{\mu}^{1/2}}%
}-{\frac{3E^{2}}{8{\mu}^{3/2}}}+\frac{1}{2}-3l\right)  \right\}
\end{equation}
is a constant with the coefficient $d_{\nu}$ given by the following recurrence
relation,%
\begin{align}
&  d_{-1}=0,\text{ \ }d_{0}=1,\nonumber\\
&  \left(  v+2\right)  \left(  v+\frac{1}{2}-3l\right)  d_{v+2}\nonumber\\
&  +\left[  {\frac{i3\sqrt{6}\kappa}{4\mu^{1/4}}}+{\frac{iE\sqrt{6}}{4{\mu
}^{3/4}}}\left(  2v+\frac{3}{2}-3l\right)  \right]  d_{v+1}+\left(
{\frac{3\xi}{2{\mu}^{1/2}}}-{\frac{3E^{2}}{8{\mu}^{3/2}}}-2v-\frac{1}%
{2}+3l\right)  d_{v}=0
\end{align}
and $N(\alpha,\beta,\gamma,\delta,z)$ is the biconfluent Heun function
\cite{ronveaux1995heun,slavyanov2000special,li2016exact}.

The biconfluent Heun function $N\left(  3l+\frac{3}{2},-{\frac{iE\sqrt{6}%
}{2{\mu}^{3/4}}},{\frac{3\xi}{2{\mu}^{1/2}}}-{\frac{3E^{2}}{8{\mu}^{3/2}}%
},-{\frac{i3\sqrt{6}\kappa}{2\mu^{1/4}}},z\right)  $ has an expansion at $z=0$
\cite{ronveaux1995heun}:%
\begin{equation}
N\left(  3l+\frac{3}{2},-{\frac{iE\sqrt{6}}{2{\mu}^{3/4}}},{\frac{3\xi}{2{\mu
}^{1/2}}}-{\frac{3E^{2}}{8{\mu}^{3/2}}},-{\frac{i3\sqrt{6}\kappa}{2\mu^{1/4}}%
},z\right)  =\sum_{n\geq0}\frac{A_{n}}{\left(  3l+\frac{5}{2}\right)  _{n}%
}\frac{z^{n}}{n!},
\end{equation}
where the expansion coefficients is determined by the recurrence relation,%
\begin{align}
A_{0}  &  =1,\text{ \ }A_{1}=-{\frac{i3\sqrt{6}\kappa}{4\mu^{1/4}}}%
-{\frac{iE\sqrt{6}}{4{\mu}^{3/4}}}\left(  3l+\frac{5}{2}\right)  ,\nonumber\\
A_{n+2}  &  =\left[  -{\frac{iE\sqrt{6}}{4{\mu}^{3/4}}}\left(  2n+3l+\frac
{9}{2}\right)  -{\frac{i3\sqrt{6}\kappa}{4\mu^{1/4}}}\right]  A_{n+1}%
\nonumber\\
&  -\left(  n+1\right)  \left(  n+3l+\frac{5}{2}\right)  \left(  {\frac{3\xi
}{2{\mu}^{1/2}}}-{\frac{3E^{2}}{8{\mu}^{3/2}}}-3l-\frac{7}{2}-2n\right)
A_{n},
\end{align}
and $\left(  a\right)  _{n}=\Gamma\left(  a+n\right)  /\Gamma\left(  a\right)
$ is Pochhammer's symbol.

Only $y_{l}^{\left(  1\right)  }\left(  z\right)  $ satisfies the boundary
condition for the regular solution at $r=0$, so the radial eigenfunction reads%
\begin{align}
u_{l}\left(  z\right)   &  =A_{l}\exp\left(  -\frac{z^{2}}{2}-\frac{\beta}%
{2}z\right)  z^{3\left(  l+1\right)  /2}y_{l}^{\left(  1\right)  }\left(
z\right) \nonumber\\
&  =A_{l}\exp\left(  -\frac{z^{2}}{2}-\frac{\beta}{2}z\right)  z^{3\left(
l+1\right)  /2}N\left(  3l+\frac{3}{2},-{\frac{iE\sqrt{6}}{2{\mu}^{3/4}}%
},{\frac{3\xi}{2{\mu}^{1/2}}}-{\frac{3E^{2}}{8{\mu}^{3/2}}},-{\frac{i3\sqrt
{6}\kappa}{2\mu^{1/4}}},z\right)  .
\end{align}
By Eq. (\ref{zr23o23o43}), we obtain the regular solution,%

\begin{equation}
u_{l}\left(  r\right)  =A_{l}\exp\left(  \frac{3\mu^{1/4}}{4}{r}^{4/3}%
-{\frac{3E}{4{\mu}^{1/2}}}r^{2/3}\right)  r^{l+1}N\left(  3l+\frac{3}%
{2},-{\frac{iE\sqrt{6}}{2{\mu}^{3/4}}},{\frac{3\xi}{2{\mu}^{1/2}}}%
-{\frac{3E^{2}}{8{\mu}^{3/2}}},-{\frac{i3\sqrt{6}\kappa}{2\mu^{1/4}}}%
,i\frac{\sqrt{6}}{2}\mu^{1/4}{r}^{2/3}\right)  .\label{regular23o23o43}%
\end{equation}

\subsection{The irregular solution}

The irregular solution is a solution satisfying the boundary condition at
$r\rightarrow\infty$ \cite{li2016exact}.

The Biconfluent Heun equation (\ref{eqf23o23o43}) has two linearly independent
irregular solutions \cite{ronveaux1995heun}:%
\begin{align}
&  B_{l}^{+}\left(  3l+\frac{3}{2},-{\frac{iE\sqrt{6}}{2{\mu}^{3/4}}}%
,{\frac{3\xi}{2{\mu}^{1/2}}}-{\frac{3E^{2}}{8{\mu}^{3/2}}},-{\frac{i3\sqrt
{6}\kappa}{2\mu^{1/4}},}z\right) \nonumber\\
&  =\exp\left(  -{\frac{iE\sqrt{6}}{2{\mu}^{3/4}}}z+z^{2}\right)  B_{l}%
^{+}\left(  3l+\frac{3}{2},-{\frac{E\sqrt{6}}{2{\mu}^{3/4}}},{\frac{3E^{2}%
}{8{\mu}^{3/2}}-\frac{3\xi}{2{\mu}^{1/2}}},{\frac{3\sqrt{6}\kappa}{2\mu^{1/4}%
},-i}z\right) \nonumber\\
&  =\exp\left(  -{\frac{iE\sqrt{6}}{2{\mu}^{3/4}}}z+z^{2}\right)  \left(
-iz\right)  ^{\frac{1}{2}\left(  {\frac{3E^{2}}{8{\mu}^{3/2}}-\frac{3\xi
}{2{\mu}^{1/2}}}-3l-\frac{7}{2}\right)  }\sum_{n\geq0}\frac{a_{n}}{\left(
-iz\right)  ^{n}}, \label{f123o23o43}%
\end{align}%
\begin{align}
&  H_{l}^{+}\left(  3l+\frac{3}{2},-{\frac{iE\sqrt{6}}{2{\mu}^{3/4}}}%
,{\frac{3\xi}{2{\mu}^{1/2}}}-{\frac{3E^{2}}{8{\mu}^{3/2}}},-{\frac{i3\sqrt
{6}\kappa}{2\mu^{1/4}},}z\right) \nonumber\\
&  =\exp\left(  \frac{i\sqrt{2}\kappa}{2{\mu}^{3/4}}z+z^{2}\right)  H_{l}%
^{+}\left(  3l+\frac{3}{2},-{\frac{E\sqrt{6}}{2{\mu}^{3/4}}},{\frac{3E^{2}%
}{8{\mu}^{3/2}}-\frac{3\xi}{2{\mu}^{1/2}}},{\frac{3\sqrt{6}\kappa}{2\mu^{1/4}%
},-i}z\right) \nonumber\\
&  =\left(  -iz\right)  ^{-\frac{1}{2}\left(  {\frac{3E^{2}}{8{\mu}^{3/2}%
}-\frac{3\xi}{2{\mu}^{1/2}}}+3l+\frac{7}{2}\right)  }\sum_{n\geq0}\frac{e_{n}%
}{\left(  -iz\right)  ^{n}} \label{f223o23o43}%
\end{align}
with the expansion coefficients given by the recurrence relation
\[
a_{0}=1,\text{ \ }a_{1}=\frac{3\sqrt{6}\kappa}{8\mu^{1/4}}-{\frac{E\sqrt{6}%
}{8{\mu}^{3/4}}}\left(  {\frac{3E^{2}}{8{\mu}^{3/2}}-\frac{3\xi}{2{\mu}^{1/2}%
}}-1\right)  ,
\]%
\begin{align}
&  2\left(  n+2\right)  a_{n+2}+\left[  -{\frac{E\sqrt{6}}{2{\mu}^{3/4}}%
}\left(  \frac{3}{2}-{\frac{3E^{2}}{16{\mu}^{3/2}}+\frac{3\xi}{4{\mu}^{1/2}}%
}+n\right)  -\frac{3\sqrt{6}\kappa}{4\mu^{1/4}}\right]  a_{n+1}\nonumber\\
&  +\left[  \left(  {\frac{3E^{2}}{16{\mu}^{3/2}}-\frac{3\xi}{4{\mu}^{1/2}}%
}-1\right)  ^{2}-\frac{1}{4}\left(  3l+\frac{3}{2}\right)  ^{2}+n\left(
n+2-{\frac{3E^{2}}{8{\mu}^{3/2}}+\frac{3\xi}{2{\mu}^{1/2}}}\right)  \right]
a_{n}=0
\end{align}
and%
\[
e_{0}=1,\text{ \ }e_{1}=-\frac{3\sqrt{6}\kappa}{8\mu^{1/4}}+{\frac{E\sqrt{6}%
}{8{\mu}^{3/4}}}\left(  {\frac{3E^{2}}{8{\mu}^{3/2}}-\frac{3\xi}{2{\mu}^{1/2}%
}}+1\right)  ,
\]%
\begin{align}
&  2\left(  n+2\right)  e_{n+2}+\left[  \beta\left(  \frac{3}{2}+{\frac
{3E^{2}}{16{\mu}^{3/2}}-\frac{3\xi}{4{\mu}^{1/2}}}+n\right)  +{\frac{3\sqrt
{6}\kappa}{4\mu^{1/4}}}\right]  e_{n+1}\nonumber\\
&  -\left[  \left(  {\frac{3E^{2}}{16{\mu}^{3/2}}-\frac{3\xi}{4{\mu}^{1/2}}%
}+1\right)  ^{2}-\frac{1}{4}\left(  3l+\frac{3}{2}\right)  ^{2}+n\left(
n+2+{\frac{3E^{2}}{8{\mu}^{3/2}}-\frac{3\xi}{2{\mu}^{1/2}}}\right)  \right]
e_{n}=0.
\end{align}

\subsection{Eigenfunctions and eigenvalues}

To construct the solution, we first express the regular solution
(\ref{regular23o23o43}) as a linear combination of the two irregular solutions
(\ref{f123o23o43}) and (\ref{f223o23o43}).

The regular solution (\ref{regular23o23o43}), with the relation
\cite{ronveaux1995heun,li2016exact}%

\begin{align}
&  N\left(  3l+\frac{3}{2},-{\frac{iE\sqrt{6}}{2{\mu}^{3/4}}},{\frac{3\xi
}{2{\mu}^{1/2}}}-{\frac{3E^{2}}{8{\mu}^{3/2}}},-{\frac{i3\sqrt{6}\kappa}%
{2\mu^{1/4}}},z\right)  \nonumber\\
&  =K_{1}\left(  3l+\frac{3}{2},-{\frac{iE\sqrt{6}}{2{\mu}^{3/4}}},{\frac
{3\xi}{2{\mu}^{1/2}}}-{\frac{3E^{2}}{8{\mu}^{3/2}}},-{\frac{i3\sqrt{6}\kappa
}{2\mu^{1/4}}}\right)  B_{l}^{+}\left(  3l+\frac{3}{2},-{\frac{iE\sqrt{6}%
}{2{\mu}^{3/4}}},{\frac{3\xi}{2{\mu}^{1/2}}}-{\frac{3E^{2}}{8{\mu}^{3/2}}%
},-{\frac{i3\sqrt{6}\kappa}{2\mu^{1/4}}},z\right)  \nonumber\\
&  +K_{2}\left(  3l+\frac{3}{2},-{\frac{iE\sqrt{6}}{2{\mu}^{3/4}}},{\frac
{3\xi}{2{\mu}^{1/2}}}-{\frac{3E^{2}}{8{\mu}^{3/2}}},-{\frac{i3\sqrt{6}\kappa
}{2\mu^{1/4}}}\right)  H_{l}^{+}\left(  3l+\frac{3}{2},-{\frac{iE\sqrt{6}%
}{2{\mu}^{3/4}}},{\frac{3\xi}{2{\mu}^{1/2}}}-{\frac{3E^{2}}{8{\mu}^{3/2}}%
},-{\frac{i3\sqrt{6}\kappa}{2\mu^{1/4}}},z\right)
\end{align}
and the expansions (\ref{f123o23o43}) and (\ref{f223o23o43}), becomes%
\begin{align}
u_{l}\left(  r\right)   &  =A_{l}K_{1}\left(  3l+\frac{3}{2},-{\frac
{iE\sqrt{6}}{2{\mu}^{3/4}}},{\frac{3\xi}{2{\mu}^{1/2}}}-{\frac{3E^{2}}{8{\mu
}^{3/2}}},-{\frac{i3\sqrt{6}\kappa}{2\mu^{1/4}}}\right)  \nonumber\\
&  \times\exp\left(  -\frac{3\mu^{1/2}}{4}{r}^{4/3}+{\frac{3E}{4{\mu}^{1/2}}%
}r^{2/3}\right)  r^{{\frac{E^{2}}{8{\mu}^{3/2}}-\frac{\xi}{2{\mu}^{1/2}}%
}-\frac{1}{6}}\sum_{n\geq0}\frac{a_{n}}{\left(  \frac{\sqrt{6}}{2}\mu^{1/4}%
{r}^{2/3}\right)  ^{n}}\nonumber\\
&  +A_{l}K_{2}\left(  3l+\frac{3}{2},-{\frac{iE\sqrt{6}}{2{\mu}^{3/4}}}%
,{\frac{3\xi}{2{\mu}^{1/2}}}-{\frac{3E^{2}}{8{\mu}^{3/2}}},-{\frac{i3\sqrt
{6}\kappa}{2\mu^{1/4}}}\right)  \nonumber\\
&  \times\exp\left(  \frac{3\mu^{1/2}}{4}{r}^{4/3}-{\frac{3E}{4{\mu}^{1/2}}%
}r^{2/3}\right)  r^{-\frac{E^{2}}{8{\mu}^{3/2}}{+}\frac{\xi}{2{\mu}^{1/2}%
}-\frac{1}{6}}\sum_{n\geq0}\frac{e_{n}}{\left(  \frac{\sqrt{6}}{2}\mu^{1/4}%
{r}^{2/3}\right)  ^{n}},\label{uexpzf23o23o43}%
\end{align}
where $K_{1}\left(  3l+\frac{3}{2},-{\frac{iE\sqrt{6}}{2{\mu}^{3/4}}}%
,{\frac{3\xi}{2{\mu}^{1/2}}}-{\frac{3E^{2}}{8{\mu}^{3/2}}},-{\frac{i3\sqrt
{6}\kappa}{2\mu^{1/4}}}\right)  $ and $K_{2}\left(  3l+\frac{3}{2}%
,-{\frac{iE\sqrt{6}}{2{\mu}^{3/4}}},{\frac{3\xi}{2{\mu}^{1/2}}}-{\frac{3E^{2}%
}{8{\mu}^{3/2}}},-{\frac{i3\sqrt{6}\kappa}{2\mu^{1/4}}}\right)  $ are
combination coefficients\ and $z=i\frac{\sqrt{6}}{2}\mu^{1/4}{r}^{2/3}$.

The boundary condition of bound states, $\left.  u\left(  r\right)
\right\vert _{r\rightarrow\infty}\rightarrow0$, requires that the coefficient
of the second term must vanish since this term diverges when $r\rightarrow
\infty$, i.e.,
\begin{equation}
K_{2}\left(  3l+\frac{3}{2},-{\frac{iE\sqrt{6}}{2{\mu}^{3/4}}},{\frac{3\xi
}{2{\mu}^{1/2}}}-{\frac{3E^{2}}{8{\mu}^{3/2}}},-{\frac{i3\sqrt{6}\kappa}%
{2\mu^{1/4}}}\right)  =0, \label{eigenvalue23o23o43}%
\end{equation}
where%
\begin{align}
&  K_{2}\left(  3l+\frac{3}{2},-{\frac{iE\sqrt{6}}{2{\mu}^{3/4}}},{\frac{3\xi
}{2{\mu}^{1/2}}}-{\frac{3E^{2}}{8{\mu}^{3/2}}},-{\frac{i3\sqrt{6}\kappa}%
{2\mu^{1/4}}}\right) \nonumber\\
&  =\frac{\Gamma\left(  3l+\frac{5}{2}\right)  }{\Gamma\left(  \frac{3}%
{2}l+\frac{3}{4}-{\frac{3\xi}{4{\mu}^{1/2}}}+{\frac{3E^{2}}{16{\mu}^{3/2}}%
}\right)  \Gamma\left(  \frac{3}{2}l+\frac{7}{4}+{\frac{3\xi}{4{\mu}^{1/2}}%
}-{\frac{3E^{2}}{16{\mu}^{3/2}}}\right)  }\nonumber\\
&  \times J_{\frac{3}{2}l+\frac{7}{4}+{\frac{3\xi}{4{\mu}^{1/2}}}%
-{\frac{3E^{2}}{16{\mu}^{3/2}}}}\left(  \frac{3}{2}l+\frac{3}{4}+{\frac{3\xi
}{4{\mu}^{1/2}}}-{\frac{3E^{2}}{16{\mu}^{3/2}}},-{\frac{iE\sqrt{6}}{2{\mu
}^{3/4}}},\frac{3}{2}\left(  3l+\frac{3}{2}\right)  \right. \nonumber\\
&  -\left.  {\frac{3\xi}{4{\mu}^{1/2}}}+{\frac{3E^{2}}{16{\mu}^{3/2}}}%
,-{\frac{i3\sqrt{6}\kappa}{2\mu^{1/4}}}-{\frac{iE\sqrt{6}}{4{\mu}^{3/4}}%
}\left(  {\frac{3\xi}{2{\mu}^{1/2}}}-{\frac{3E^{2}}{8{\mu}^{3/2}}}-3l-\frac
{3}{2}\right)  \right)
\end{align}
with%
\begin{align}
&  J_{\frac{3}{2}l+\frac{7}{4}+{\frac{3\xi}{4{\mu}^{1/2}}}-{\frac{3E^{2}%
}{16{\mu}^{3/2}}}}\left(  \frac{3}{2}l+\frac{3}{4}+{\frac{3\xi}{4{\mu}^{1/2}}%
}-{\frac{3E^{2}}{16{\mu}^{3/2}}},-{\frac{iE\sqrt{6}}{2{\mu}^{3/4}}},\frac
{3}{2}\left(  3l+\frac{3}{2}\right)  \right. \nonumber\\
&  -\left.  {\frac{3\xi}{4{\mu}^{1/2}}}+{\frac{3E^{2}}{16{\mu}^{3/2}}}%
,-{\frac{i3\sqrt{6}\kappa}{2\mu^{1/4}}}-{\frac{iE\sqrt{6}}{4{\mu}^{3/4}}%
}\left(  {\frac{3\xi}{2{\mu}^{1/2}}}-{\frac{3E^{2}}{8{\mu}^{3/2}}}-3l-\frac
{3}{2}\right)  \right) \nonumber\\
&  =\int_{0}^{\infty}dxx^{\frac{3}{2}l+\frac{3}{4}+{\frac{3\xi}{4{\mu}^{1/2}}%
}-{\frac{3E^{2}}{16{\mu}^{3/2}}}}e^{-x^{2}}N\left(  \frac{3}{2}l+\frac{3}%
{4}+{\frac{3\xi}{4{\mu}^{1/2}}}-{\frac{3E^{2}}{16{\mu}^{3/2}}},-{\frac
{iE\sqrt{6}}{2{\mu}^{3/4}}},\frac{3}{2}\left(  3l+\frac{3}{2}\right)  \right.
\nonumber\\
&  -\left.  {\frac{3\xi}{4{\mu}^{1/2}}}+{\frac{3E^{2}}{16{\mu}^{3/2}}}%
,-{\frac{i3\sqrt{6}\kappa}{2\mu^{1/4}}}-{\frac{iE\sqrt{6}}{4{\mu}^{3/4}}%
}\left(  {\frac{3\xi}{2{\mu}^{1/2}}}-{\frac{3E^{2}}{8{\mu}^{3/2}}}-3l-\frac
{3}{2}\right)  ,x\right)  .
\end{align}

Eq. (\ref{eigenvalue23o23o43}) is an implicit expression of the eigenvalue.

The eigenfunction, by Eqs. (\ref{uexpzf23o23o43}) and
(\ref{eigenvalue23o23o43}), reads%
\begin{align}
u_{l}\left(  r\right)   &  =A_{l}K_{1}\left(  3l+\frac{3}{2},-{\frac
{iE\sqrt{6}}{2{\mu}^{3/4}}},{\frac{3\xi}{2{\mu}^{1/2}}}-{\frac{3E^{2}}{8{\mu
}^{3/2}}},-{\frac{i3\sqrt{6}\kappa}{2\mu^{1/4}}}\right)  \nonumber\\
&  \times\exp\left(  -\frac{3\mu^{1/2}}{4}{r}^{4/3}+{\frac{3E}{4{\mu}^{1/2}}%
}r^{2/3}\right)  r^{{\frac{E^{2}}{8{\mu}^{3/2}}-\frac{\xi}{2{\mu}^{1/2}}%
}-\frac{1}{6}}\sum_{n\geq0}\frac{a_{n}}{\left(  \frac{\sqrt{6}}{2}\mu^{1/4}%
{r}^{2/3}\right)  ^{n}}.
\end{align}

\section{The solution of the harmonic-oscillator potential $U\left(  r\right)
=\xi r^{2}$ in terms of the Heun biconfluent function \label{Harmonic}}

In this appendix, for consistency with the solutions of other potentials, we
solve the harmonic-oscillator potential in terms of the Heun biconfluent function.

The radial equation of the harmonic-oscillator potential $U\left(  r\right)
=\xi r^{2}$,
\begin{equation}
\frac{d^{2}}{dr^{2}}u_{l}\left(  r\right)  +\left[  E-\frac{l\left(
l+1\right)  }{r^{2}}-\xi r^{2}\right]  u_{l}\left(  r\right)  =0.
\end{equation}
Introducing $f_{l}\left(  z\right)  $ by%
\begin{equation}
u_{l}\left(  z\right)  =A_{l}e^{-\frac{z^{2}}{2}}z^{l+1}f_{l}\left(  z\right)
\label{hocuy}%
\end{equation}
with%
\begin{equation}
z=\xi^{1/4}r, \label{zrham}%
\end{equation}
where $A_{l}$ is a constant, we then arrive at an equation of $f_{l}\left(
z\right)  $%
\begin{equation}
f_{l}^{\prime\prime}\left(  z\right)  +\frac{-2{z}^{2}+2\left(  l+1\right)
}{z}f_{l}^{\prime}\left(  z\right)  +\left(  \frac{E}{\xi^{1/2}}-2l-3\right)
f_{l}\left(  z\right)  =0. \label{eqfhoc}%
\end{equation}
This is a Biconfluent Heun equation \cite{ronveaux1995heun}.

The choice of the boundary condition has been discussed in Ref.
\cite{li2016exact}.

\subsection{The regular solution}

The regular solution is a solution satisfying the boundary condition at $r=0$
\cite{li2016exact}. The regular solution at $r=0$\ should satisfy the boundary
condition $\lim_{r\rightarrow0}u_{l}\left(  r\right)  /r^{l+1}=1$ for both
bound states and scattering states. In this section, we provide the regular
solution of Eq. (\ref{eqfhoc}).

The Biconfluent Heun equation (\ref{eqfhoc}) has two linearly independent
solutions \cite{ronveaux1995heun}%
\begin{align}
y_{l}^{\left(  1\right)  }\left(  z\right)   &  =N\left(  2l+1,0,\frac{E}%
{\xi^{1/2}},0,z\right)  ,\\
y_{l}^{\left(  2\right)  }\left(  z\right)   &  =cN\left(  2l+1,0,\frac{E}%
{\xi^{1/2}},0,z\right)  \ln z+\sum_{n\geq0}d_{n}z^{n-2l-1},
\end{align}
where%
\begin{equation}
c=\frac{1}{2l+1}\left\{  -d_{2l-1}\left(  \frac{E}{\xi^{1/2}}+1-2l\right)
\right\}
\end{equation}
is a constant with the coefficient $d_{\nu}$ given by the following recurrence
relation,%
\begin{align}
&  d_{-1}=0,\text{ \ }d_{0}=1,\\
&  \left(  \nu+2\right)  \left(  \nu+1-2l\right)  d_{\nu+2}+\left(  \frac
{E}{\xi^{1/2}}-2\left(  \nu+1\right)  +2l+1\right)  d_{\nu}=0.
\end{align}
and $N(\alpha,\beta,\gamma,\delta,z)$ is the biconfluent Heun function
\cite{ronveaux1995heun,slavyanov2000special,li2016exact}.

The biconfluent Heun function $N\left(  2l+1,0,\frac{E}{\xi^{1/2}},0,z\right)
$ has an expansion at $z=0$ \cite{ronveaux1995heun}:%
\begin{equation}
N\left(  2l+1,0,\frac{E}{\xi^{1/2}},0,z\right)  =\sum_{n\geq0}\frac{A_{n}%
}{\left(  2l+2\right)  _{n}}\frac{z^{n}}{n!},
\end{equation}
where the expansion coefficients is determined by the recurrence relation,%
\begin{align}
A_{0}  &  =1,\text{ \ }A_{1}=0,\\
A_{n+2}  &  =-\left(  n+1\right)  \left(  n+2l+2\right)  \left(  \frac{E}%
{\xi^{1/2}}-2l-3-2n\right)  A_{n}%
\end{align}
and $\left(  a\right)  _{n}=\Gamma\left(  a+n\right)  /\Gamma\left(  a\right)
$ is Pochhammer's symbol. In the case of harmonic-oscillator potential, the
recurrence relation among the adjacent three terms reduces to a recurrence
relation between the adjacent two terms.

Only $y_{l}^{\left(  1\right)  }\left(  z\right)  $ satisfies the boundary
condition for the regular solution at $r=0$, so the radial eigenfunction reads%
\begin{align}
u_{l}\left(  z\right)   &  =A_{l}e^{-z^{2}/2}z^{l+1}y_{l}^{\left(  1\right)
}\left(  z\right) \nonumber\\
&  =A_{l}e^{-\frac{z^{2}}{2}}z^{l+1}N\left(  2l+1,0,\frac{E}{\xi^{1/2}%
},0,z\right)  .
\end{align}
By Eq. (\ref{zrham}), we obtain the regular solution,%
\begin{equation}
u_{l}\left(  r\right)  =A_{l}e^{-\sqrt{\xi}r^{2}/2}\xi^{\left(  l+1\right)
/4}r^{l+1}N\left(  2l+1,0,\frac{E}{\xi^{1/2}},0,\xi^{1/4}r\right)  .
\label{regularham}%
\end{equation}

In the case of harmonic-oscillator potential, by the relation between the Heun
function and the hypergeometric function
\cite{ronveaux1995heun,slavyanov2000special},%
\begin{equation}
N\left(  \alpha,0,\gamma,0,z\right)  =\text{ }_{1}F_{1}\left(  \frac{1}%
{2}+\frac{\alpha}{4}-\frac{\gamma}{4},{1+}\frac{\alpha}{2},{{z}^{2}}\right)
{,} \label{Heun1F1ham}%
\end{equation}
the regular solution (\ref{regularham}) reduces to
\begin{equation}
u_{l}\left(  z\right)  =A_{l}e^{-\sqrt{\xi}r^{2}/2}\xi^{\left(  l+1\right)
/4}r^{l+1}\text{ }_{1}F_{1}\left(  \frac{l}{2}+\frac{3}{4}-\frac{E}{4\sqrt
{\xi}},\frac{3}{2}+l,\sqrt{\xi}r^{2}\right)  .
\end{equation}

\subsection{The irregular solution}

The irregular solution is a solution satisfying the boundary condition at
$r\rightarrow\infty$ \cite{li2016exact}. The boundary conditions for bound
states and scattering states at $r\rightarrow\infty$ are different.

The Biconfluent Heun equation (\ref{eqfhoc}) has two linearly independent
irregular solutions \cite{ronveaux1995heun}:%

\begin{align}
B_{l}^{+}\left(  2l+1,0,\frac{E}{\xi^{1/2}},0,z\right)   &  =z^{\frac{1}%
{2}\left[  \frac{E}{\xi^{1/2}}-2-\left(  2l+1\right)  \right]  }\sum_{n\geq
0}\frac{a_{n}}{z^{n}},\label{hoc1}\\
H_{l}^{+}\left(  2l+1,0,\frac{E}{\xi^{1/2}},0,z\right)   &  =z^{-\frac{1}%
{2}\left[  \frac{E}{\xi^{1/2}}+2+\left(  2l+1\right)  \right]  }e^{z^{2}}%
\sum_{n\geq0}\frac{e_{n}}{z^{n}}, \label{hoc2}%
\end{align}
with the expansion coefficients given by the recurrence relation%
\[
a_{0}=1,\text{ \ }a_{1}=0,
\]%
\begin{equation}
2\left(  n+2\right)  a_{n+2}+\left[  \frac{1}{4}\left(  \frac{E^{2}}{\xi
}-\left(  2l+1\right)  ^{2}+4\right)  -\frac{E}{\xi^{1/2}}+n\left(
n+2-\frac{E}{\xi^{1/2}}\right)  \right]  a_{n}=0,
\end{equation}
and%
\[
e_{0}=1,\text{ \ }e_{1}=0,
\]%
\begin{equation}
2\left(  n+2\right)  e_{n+2}-\left[  \frac{1}{4}\left(  \frac{E^{2}}{\xi
}-\left(  2l+1\right)  ^{2}+4\right)  +\frac{E}{\xi^{1/2}}+n\left(
n+2+\frac{E}{\xi^{1/2}}\right)  \right]  e_{n}=0,
\end{equation}
where the recurrence relation among the adjacent three terms reduces to a
recurrence relation between the adjacent two terms.

\subsection{Eigenfunctions and eigenvalues}

To construct the solution, we first express the regular solution
(\ref{regularham}) as a linear combination of the two irregular solutions
(\ref{hoc1}) and (\ref{hoc2}).

The regular solution (\ref{regularham}) with the relation
\cite{ronveaux1995heun,li2016exact}%

\begin{align}
N\left(  2l+1,0,\frac{E}{\xi^{1/2}},0,z\right)   &  =K_{1}\left(
2l+1,0,\frac{E}{\xi^{1/2}},0\right)  B_{l}^{+}\left(  2l+1,0,\frac{E}%
{\xi^{1/2}},0,z\right) \nonumber\\
&  +K_{2}\left(  2l+1,0,\frac{E}{\xi^{1/2}},0\right)  H_{l}^{+}\left(
2l+1,0,\frac{E}{\xi^{1/2}},0,z\right)
\end{align}
and the expansions (\ref{hoc1}) and (\ref{hoc2}) become%
\begin{align}
u_{l}\left(  r\right)   &  =A_{l}e^{-\frac{\sqrt{\xi}}{2}r^{2}}\xi^{\left(
l+1\right)  /4}r^{l+1}N\left(  2l+1,0,\frac{E}{\xi^{1/2}},0,\xi^{1/4}r\right)
\nonumber\\
&  =A_{l}e^{-\frac{\sqrt{\xi}}{2}r^{2}}K_{1}\left(  2l+1,0,\frac{E}{\xi^{1/2}%
},0\right)  \left(  \xi^{1/4}r\right)  ^{\frac{E}{2\xi^{1/2}}-1/2}\sum
_{n\geq0}\frac{a_{n}}{\left(  \xi^{1/4}r\right)  ^{n/2}}\nonumber\\
&  +A_{l}e^{\frac{\sqrt{\xi}}{2}r^{2}}K_{2}\left(  2l+1,0,\frac{E}{\xi^{1/2}%
},0\right)  \left(  \xi^{1/4}r\right)  ^{-\frac{E}{2\xi^{1/2}}-1/2}\sum
_{n\geq0}\frac{e_{n}}{\left(  \xi^{1/4}r\right)  ^{n/2}}, \label{uexpham}%
\end{align}
where $K_{1}\left(  2l+1,0,\frac{E}{\xi^{1/2}},0\right)  $ and $K_{2}\left(
2l+1,0,\frac{E}{\xi^{1/2}},0\right)  $ are combination coefficients\ and
$\xi^{1/4}r$.

Only the first term satisfies the boundary condition of bound states $\left.
u\left(  r\right)  \right\vert _{r\rightarrow\infty}\rightarrow0$, so the
coefficient of the second term must vanish, i.e.,%

\begin{equation}
K_{2}\left(  2l+1,0,\frac{E}{\xi^{1/2}},0\right)  =0, \label{eigenvalueham1}%
\end{equation}
where
\begin{align}
K_{2}\left(  2l+1,0,\frac{E}{\xi^{1/2}},0\right)   &  =\frac{\Gamma\left(
2l+2\right)  }{\Gamma\left(  l+\frac{1}{2}-\frac{E}{2\xi^{1/2}}\right)
\Gamma\left(  l+\frac{3}{2}+\frac{E}{2\xi^{1/2}}\right)  }\nonumber\\
&  \times J_{l+\frac{3}{2}+\frac{E}{2\xi^{1/2}}}\left(  l+\frac{1}{2}+\frac
{E}{2\xi^{1/2}},0,3l+\frac{3}{2}-\frac{E}{2\xi^{1/2}},0\right)  ,
\end{align}
with%
\begin{align}
&  J_{l+\frac{3}{2}+\frac{E}{2\xi^{1/2}}}\left(  l+\frac{1}{2}+\frac{E}%
{2\xi^{1/2}},0,3l+\frac{3}{2}-\frac{E}{2\xi^{1/2}},0\right) \nonumber\\
&  =\int_{0}^{\infty}x^{l+\frac{1}{2}+\frac{E}{2\xi^{1/2}}}e^{-x^{2}}N\left(
l+\frac{1}{2}+\frac{E}{2\xi^{1/2}},0,3l+\frac{3}{2}-\frac{E}{2\xi^{1/2}%
},0,x\right)  dx.
\end{align}
In the case of harmonic-oscillator potential, $K_{2}\left(  2l+1,0,\frac
{E}{\xi^{1/2}},0\right)  $ reduces to\cite{batola1982quelques}
\begin{equation}
K_{2}\left(  2l+1,0,\frac{E}{\xi^{1/2}},0\right)  =\frac{\Gamma\left(
l+\frac{3}{2}\right)  }{\Gamma\left(  \frac{3}{4}+\frac{l}{2}-\frac{E}%
{4\xi^{1/2}}\right)  }.
\end{equation}

\bigskip The zeros of $K_{2}\left(  2l+1,0,\frac{E}{\xi^{1/2}},0\right)  $
correspond to the bound state of the Coulomb potential, so the eigenvalue is
the singularities of $\Gamma\left(  \frac{3}{4}+\frac{l}{2}-\frac{E}%
{4\xi^{1/2}}\right)  $. Thus
\begin{equation}
\frac{3}{4}+\frac{l}{2}-\frac{E}{4\xi^{1/2}}=-n,\text{ \ }n=0,1,2\ldots,
\end{equation}
then
\begin{equation}
E=2\sqrt{\xi}\left(  2n+l+\frac{3}{2}\right)  ,\text{ \ }n=0,1,2\ldots.
\label{eigenvalueham}%
\end{equation}

The eigenfunction, by Eqs. (\ref{uexpham}) and (\ref{eigenvalueham1}), reads%
\begin{equation}
u_{l}\left(  r\right)  =A_{l}e^{-\frac{\sqrt{\xi}}{2}r^{2}}K_{1}\left(
2l+1,0,\frac{E}{\xi^{1/2}},0\right)  \left(  \xi^{1/4}r\right)  ^{\frac
{E}{2\xi^{1/2}}-1/2}\sum_{n\geq0}\frac{a_{n}}{\left(  \xi^{1/4}r\right)
^{n/2}}.
\end{equation}

\section{The solution of the Coulomb potential $U\left(  r\right)  =\frac{\xi
}{r}$ in terms of the Heun biconfluent function \label{Coulomb}}

In this appendix, for consistency with the solutions of other potentials, we
solve he Coulomb potential in terms of the Heun biconfluent function.

The radial equation of the Coulomb potential%

\begin{equation}
U\left(  r\right)  =\frac{\xi}{r}%
\end{equation}
reads%
\begin{equation}
\frac{d^{2}}{dr^{2}}u_{l}\left(  r\right)  +\left[  E-\frac{l\left(
l+1\right)  }{r^{2}}-\frac{\xi}{r}\right]  u_{l}\left(  r\right)  =0.
\end{equation}
Introducing $f_{l}\left(  z\right)  $ by
\begin{equation}
u_{l}\left(  z\right)  =A_{l}e^{-\frac{z^{2}}{2}}z^{2l+2}f_{l}\left(
z\right)  \label{coubuy}%
\end{equation}
with%
\begin{equation}
z=\left(  2r\right)  ^{1/2}\left(  -E\right)  ^{1/4}, \label{zrcoub}%
\end{equation}
where $A_{l}$ is a constant, we then arrive at an equation of $f_{l}\left(
z\right)  $%
\begin{equation}
f_{l}^{\prime\prime}\left(  z\right)  +\frac{-2{z}^{2}+4l+3}{z}f_{l}^{\prime
}\left(  z\right)  +\left(  -\frac{2\xi}{\sqrt{-E}}-4l-4\right)  f_{l}\left(
z\right)  =0 \label{eqfcoulomb}%
\end{equation}
This is a Biconfluent Heun equation \cite{ronveaux1995heun}.

The choice of the boundary condition has been discussed in Ref.
\cite{li2016exact}.

\subsection{The regular solution}

The regular solution is a solution satisfying the boundary condition at $r=0$
\cite{li2016exact}. The regular solution at $r=0$\ should satisfy the boundary
condition $\lim_{r\rightarrow0}u_{l}\left(  r\right)  /r^{l+1}=1$ for both
bound states and scattering states. In this section, we provide the regular
solution of Eq. (\ref{eqfcoulomb}).

The Biconfluent Heun equation (\ref{eqfcoulomb}) has two linearly independent
solutions \cite{ronveaux1995heun}%

\begin{align}
y_{l}^{\left(  1\right)  }\left(  z\right)   &  =N\left(  4l+2,0,-\frac{2\xi
}{\sqrt{-E}},0,z\right)  ,\\
y_{l}^{\left(  2\right)  }\left(  z\right)   &  =cN\left(  4l+2,0,-\frac{2\xi
}{\sqrt{-E}},0,z\right)  \ln z+\sum_{n\geq0}d_{n}z^{n-4l-2},
\end{align}
where%
\begin{equation}
c=\frac{1}{4l+2}\left\{  -d_{4l}\left(  -\frac{2\xi}{\sqrt{-E}}-4l\right)
\right\}
\end{equation}
is a constant with the coefficient $d_{\nu}$ given by the following recurrence
relation,%
\begin{align}
&  d_{-1}=0,\text{ \ }d_{0}=1,\\
&  \left(  \nu+2\right)  \left(  \nu-4l\right)  d_{\nu+2}+\left(  -\frac{2\xi
}{\sqrt{-E}}-2\left(  \nu+1\right)  +4l+2\right)  d_{\nu}=0,
\end{align}
and $N(\alpha,\beta,\gamma,\delta,z)$ is the biconfluent Heun function
\cite{ronveaux1995heun,slavyanov2000special,li2016exact}.

The biconfluent Heun function $N\left(  4l+2,0,-\frac{2\xi}{\sqrt{-E}%
},0,z\right)  $ has an expansion at $z=0$ \cite{ronveaux1995heun}:%

\begin{equation}
N\left(  4l+2,0,-\frac{2\xi}{\sqrt{-E}},0,z\right)  =\sum_{n\geq0}\frac{A_{n}%
}{\left(  4l+3\right)  _{n}}\frac{z^{n}}{n!},
\end{equation}
where the expansion coefficients is determined by the recurrence relation,%
\begin{align}
A_{0}  &  =1,\text{ \ }A_{1}=0,\\
A_{n+2}  &  =-\left(  n+1\right)  \left(  n+4l+3\right)  \left(  -\frac{2\xi
}{\sqrt{-E}}-4l-4-2n\right)  A_{n},
\end{align}
and $\left(  a\right)  _{n}=\Gamma\left(  a+n\right)  /\Gamma\left(  a\right)
$ is Pochhammer's symbol. Similar to the case of harmonic-oscillator
potential, the recurrence relation among the adjacent three terms reduces to a
recurrence relation between the adjacent two terms.

Only $y_{l}^{\left(  1\right)  }\left(  z\right)  $ satisfies the boundary
condition for the regular solution at $r=0$, so the radial eigenfunction reads%
\begin{align}
u_{l}\left(  z\right)   &  =A_{l}e^{-\frac{z^{2}}{2}}z^{2l+2}y_{l}^{\left(
1\right)  }\left(  z\right) \nonumber\\
&  =A_{l}e^{-\frac{z^{2}}{2}}z^{2l+2}N\left(  4l+2,0,-\frac{2\xi}{\sqrt{-E}%
},0,z\right)  ,
\end{align}
By Eq. (\ref{zrcoub}), we obtain the regular solution,%
\begin{equation}
u_{l}\left(  r\right)  =A_{l}e^{-\sqrt{-E}r}\left(  2\sqrt{-E}r\right)
^{l+1}N\left(  4l+2,0,-\frac{2\xi}{\sqrt{-E}},0,\left(  2r\right)
^{1/2}\left(  -E\right)  ^{1/4}\right)  . \label{regularcoub}%
\end{equation}

Similar to the case of harmonic-oscillator potential£¬by the relation
between the Heun function and the hypergeometric function (\ref{Heun1F1ham}),
the regular solution (\ref{regularcoub}) reduce to
\begin{equation}
u_{l}\left(  z\right)  =A_{l}e^{-\sqrt{-E}r}\left(  2\sqrt{-E}\right)
^{l+1}r^{l+1}\text{ }_{1}F_{1}\left(  l+1+\frac{\xi}{2\sqrt{-E}},2\left(
l+1\right)  ,2\sqrt{-E}r\right)  .
\end{equation}

\subsection{The irregular solution}

The irregular solution is a solution satisfying the boundary condition at
$r\rightarrow\infty$ \cite{li2016exact}. The boundary conditions for bound
states and scattering states at $r\rightarrow\infty$ are different.

The Biconfluent Heun equation (\ref{eqfcoulomb}) has two linearly independent
irregular solutions \cite{ronveaux1995heun}:%
\begin{align}
B_{l}^{+}\left(  4l+2,0,-\frac{2\xi}{\sqrt{-E}},0,z\right)   &  =z^{\frac
{1}{2}\left(  -\frac{2\xi}{\sqrt{-E}}-4l-4\right)  }\sum_{n\geq0}\frac{a_{n}%
}{z^{n}},\label{Bcoulomb}\\
H_{l}^{+}\left(  4l+2,0,-\frac{2\xi}{\sqrt{-E}},0,z\right)   &  =z^{-\frac
{1}{2}\left(  -\frac{2\xi}{\sqrt{-E}}+4l+4\right)  }e^{z^{2}}\sum_{n\geq
0}\frac{e_{n}}{z^{n}}, \label{Hcoulomb}%
\end{align}
with the expansion coefficients given by the recurrence relation%
\[
a_{0}=1,\text{ \ }a_{1}=0,
\]%
\begin{equation}
2\left(  n+2\right)  a_{n+2}+\left[  \frac{1}{4}\left(  -\frac{4\xi^{2}}%
{E}-\left(  4l+2\right)  ^{2}+4\right)  +\frac{2\xi}{\sqrt{-E}}+n\left(
n+2+\frac{2\xi}{\sqrt{-E}}\right)  \right]  a_{n}=0
\end{equation}
and%
\[
e_{0}=1,\text{ \ }e_{1}=0,
\]%
\begin{equation}
2\left(  n+2\right)  e_{n+2}-\left[  \frac{1}{4}\left(  -\frac{4\xi^{2}}%
{E}-\left(  4l+2\right)  ^{2}+4\right)  -\frac{2\xi}{\sqrt{-E}}+n\left(
n+2-\frac{2\xi}{\sqrt{-E}}\right)  \right]  e_{n}=0,
\end{equation}
where the recurrence relation among the adjacent three terms reduces to a
recurrence relation between the adjacent two terms.

\subsection{Eigenfunctions and eigenvalues}

To construct the solution, we first express the regular solution
(\ref{regularcoub}) as a linear combination of the two irregular solutions
(\ref{Bcoulomb}) and (\ref{Hcoulomb}).

The regular solution (\ref{regularcoub}), with the relation
\cite{ronveaux1995heun,li2016exact}%
\begin{align}
N\left(  4l+2,0,-\frac{2\xi}{\sqrt{-E}},0,z\right)   &  =K_{1}\left(
4l+2,0,-\frac{2\xi}{\sqrt{-E}},0\right)  B_{l}^{+}\left(  4l+2,0,-\frac{2\xi
}{\sqrt{-E}},0,z\right) \nonumber\\
&  +K_{2}\left(  4l+2,0,-\frac{2\xi}{\sqrt{-E}},0\right)  H_{l}^{+}\left(
4l+2,0,-\frac{2\xi}{\sqrt{-E}},0,z\right)
\end{align}
and the expansions (\ref{f12o1}) and (\ref{f22o1}), becomes%
\begin{align}
u_{l}\left(  r\right)   &  =A_{l}e^{-\sqrt{-E}r}\left(  2\sqrt{-E}r\right)
^{l+1}N\left(  4l+2,0,-\frac{2\xi}{\sqrt{-E}},0,\left(  2r\right)
^{1/2}\left(  -E\right)  ^{1/4}\right) \nonumber\\
&  =A_{l}e^{-\left(  \sqrt{-E}r+\frac{\xi}{2\sqrt{-E}}\ln2\sqrt{-E}r\right)
}K_{1}\left(  4l+2,0,-\frac{2\xi}{\sqrt{-E}},0\right)  \sum_{n\geq0}%
\frac{a_{n}}{\left(  2\sqrt{-E}r\right)  ^{n/2}}\nonumber\\
&  +A_{l}e^{\left(  \sqrt{-E}r+\frac{\xi}{2\sqrt{-E}}\ln2\sqrt{-E}r\right)
}K_{2}\left(  4l+2,0,-\frac{2\xi}{\sqrt{-E}},0\right)  \sum_{n\geq0}%
\frac{e_{n}}{\left(  2\sqrt{-E}r\right)  ^{n/2}} \label{uexpzcoulomb}%
\end{align}
where $K_{1}\left(  4l+2,0,-\frac{2\xi}{\sqrt{-E}},0\right)  $ and
$K_{2}\left(  4l+2,0,-\frac{2\xi}{\sqrt{-E}},0\right)  $ are combination
coefficients\cite{ronveaux1995heun}\ and $z=\left(  -2r\right)  ^{1/2}\left(
-E\right)  ^{1/4}$.

The boundary condition of bound states, $\left.  u\left(  r\right)
\right\vert _{r\rightarrow\infty}\rightarrow0$, requires that the coefficient
of the second term must vanish since this term diverges when $r\rightarrow
\infty$, i.e.,
\begin{equation}
K_{2}\left(  4l+2,0,-\frac{2\xi}{\sqrt{-E}},0\right)  =0,
\label{eigenvaluecoub1}%
\end{equation}
where%
\begin{align}
K_{2}\left(  4l+2,0,-\frac{2\xi}{\sqrt{-E}},0\right)   &  =\frac{\Gamma\left(
4l+3\right)  }{\Gamma\left(  2l+1+\frac{\xi}{\sqrt{-E}}\right)  \Gamma\left(
2l+2-\frac{\xi}{\sqrt{-E}}\right)  }\nonumber\\
&  \times J_{2l+2-\frac{\xi}{\sqrt{-E}}}\left(  2l+1-\frac{\xi}{\sqrt{-E}%
},0,6l+3+\frac{\xi}{\sqrt{-E}},0\right)
\end{align}
with%
\begin{align}
&  J_{2l+2-\frac{\xi}{\sqrt{-E}}}\left(  2l+1-\frac{\xi}{\sqrt{-E}%
},0,6l+3+\frac{\xi}{\sqrt{-E}},0\right) \nonumber\\
&  =\int_{0}^{\infty}x^{2l+1-\frac{\xi}{\sqrt{-E}}}e^{-x^{2}}N\left(
2l+1-\frac{\xi}{\sqrt{-E}},0,6l+3+\frac{\xi}{\sqrt{-E}},0,x\right)  dx.
\end{align}

In the case of Coulomb potential, $K_{2}\left(  4l+2,0,-\frac{2\xi}{\sqrt{-E}%
},0\right)  $ reduces to \cite{batola1982quelques}%
\begin{equation}
K_{2}\left(  4l+2,0,-\frac{2\xi}{\sqrt{-E}},0\right)  =\frac{\Gamma\left(
2l+2\right)  }{\Gamma\left(  1+l+\frac{\xi}{2\sqrt{-E}}\right)  }%
\end{equation}

While, as pointed above, the zeros on the positive imaginary of the
coefficient $K_{2}\left(  4l+2,0,-\frac{2\xi}{\sqrt{-E}},0\right)  $
correspond to the eigenvalues of bound states.

The zeros of $K_{2}\left(  4l+2,0,-\frac{2\xi}{\sqrt{-E}},0\right)  $
correspond to the bound state of the Coulomb potential, so the eigenvalue is
the singularities of $\Gamma\left(  1+l+\frac{\xi}{2\sqrt{-E}}\right)  $.
Thus
\begin{equation}
1+l+\frac{\xi}{2\sqrt{-E}}=-n,\text{ \ }n=0,1,2\ldots,
\end{equation}
then%
\begin{equation}
E=-\frac{\xi^{2}}{4}\left(  n+l+1\right)  ^{-2},\text{ \ }n=0,1,2,\ldots.
\label{eigenvaluecoub}%
\end{equation}

The eigenfunction, by Eqs. (\ref{uexpzcoulomb}) and (\ref{eigenvaluecoub1}),
reads%
\begin{equation}
u_{l}\left(  r\right)  =A_{l}\exp\left(  -\left(  \sqrt{-E}r+\frac{\xi}%
{2\sqrt{-E}}\ln2\sqrt{-E}r\right)  \right)  K_{1}\left(  4l+2,0,-\frac{2\xi
}{\sqrt{-E}},0\right)  \sum_{n\geq0}\frac{a_{n}}{\left(  2\sqrt{-E}r\right)
^{n/2}}.
\end{equation}

%\section{Some title}
%Please always give a title also for appendices.

%\appendix
%\section{Some title}
%Please always give a title also for appendices.

%\acknowledgments%ÖÂл
%%%%%%%%%%ÕýÎĽáÊø

%\begin{thebibliography}{99}

%\end{thebibliography}\endgroup

%\bibitem{a}
%Author, \emph{Title}, \emph{J. Abbrev.} {\bf vol} (year) pg.

%\bibitem{b}
%Author, \emph{Title},
%arxiv:1234.5678.

%\bibitem{c}
%Author, \emph{Title},
%Publisher (year).

% Please avoid comments such as "For a review'', "For some examples",
% "and references therein" or move them in the text. In general,
% please leave only references in the bibliography and move all
% accessory text in footnotes.

% Also, please have only one work for each \bibitem.

%\end{thebibliography}

\providecommand{\href}[2]{#2}\begingroup\raggedright\endgroup

%\bibliographystyle{JHEP}
%\bibliography{refs}% Produces the bibliography via BibTeX.

\begin{thebibliography}{10}

\bibitem{chandrasekhar1995newton}
S.~Chandrasekhar, {\em Newton's Principia for the Common Reader}.
\newblock Clarendon Press, 1995.

\bibitem{arnold1990huygens}
V.~Arnold, {\em Huygens and Barrow, Newton and Hooke: Pioneers in mathematical
  analysis and catastrophe theory from evolvents to quasicrystals}.
\newblock Birkh{\"a}user Basel, 1990.

\bibitem{needham1993newton}
T.~Needham, {\it Newton and the transmutation of force},  {\em The American
  mathematical monthly} {\bf 100} (1993), no.~2 119--137.

\bibitem{hall2000planetary}
R.~W. Hall and K.~Josic, {\it Planetary motion and the duality of force laws},
  {\em SIAM review} {\bf 42} (2000), no.~1 115--124.

\bibitem{goldstein2002classical}
H.~Goldstein, C.~Poole, and J.~Safko, {\em Classical Mechanics}.
\newblock Addison Wesley, 2002.

\bibitem{brau2004sufficient}
F.~Brau, {\it Sufficient conditions for the existence of bound states in a
  central potential},  {\em Journal of Physics A: Mathematical and General}
  {\bf 37} (2004), no.~26 6687.

\bibitem{graham2009spectral}
N.~Graham, M.~Quandt, and H.~Weigel, {\em Spectral methods in quantum field
  theory}, vol.~777.
\newblock Springer, 2009.

\bibitem{ronveaux1995heun}
A.~Ronveaux and F.~M. Arscott, {\em Heun's differential equations}.
\newblock Oxford University Press, 1995.

\bibitem{slavyanov2000special}
S.~Slavyanov and W.~Lay, {\em Special functions: a unified theory based on
  singularities}.
\newblock Oxford University Press, 2000.

\bibitem{li2016exact}
W.-D. Li and W.-S. Dai, {\it Exact solution of inverse-square-root potential
  $\displaystyle V\left(r\right)=-\frac{\alpha}{\sqrt r}$},  {\em Annals of
  Physics} {\bf 373} (2016) 207 -- 215.

\bibitem{batola1982quelques}
F.~Batola, {\it Quelques relations fondamentales entre solutions de
  l'{\'e}quation biconfluente de Heun},  {\em Archive for Rational Mechanics
  and Analysis} {\bf 78} (1982), no.~3 275--291.

\bibitem{flugge1994practical}
S.~Fl{\"u}gge, {\em Practical quantum mechanics}.
\newblock Springer Verlag, 1994.

\bibitem{dong2011wave}
S.~Dong, {\em Wave Equations in Higher Dimensions}.
\newblock Springer Netherlands, 2011.

\bibitem{li2016scattering}
W.-D. Li and W.-S. Dai, {\it Scattering theory without large-distance
  asymptotics in arbitrary dimensions},  {\em Journal of Physics A:
  Mathematical and Theoretical} {\bf 49} (2016), no.~46 465202.

\bibitem{ishkhanyan2015exact}
A.~Ishkhanyan, {\it Exact solution of the Schr{\"o}dinger equation for the
  inverse square root potential},  {\em EPL (Europhysics Letters)} {\bf 112}
  (2015), no.~1 10006.

\bibitem{enss1979asymptotic}
V.~Enss, {\it Asymptotic completeness for quantum-mechanical potential
  scattering: II. Singular and long-range potentials},  {\em Annals of Physics}
  {\bf 119} (1979), no.~1 117--132.

\bibitem{levy1963low}
B.~R. Levy and J.~B. Keller, {\it Low-Energy Expansion of Scattering Phase
  Shifts for Long-Range Potentials},  {\em Journal of Mathematical Physics}
  {\bf 4} (1963), no.~1 54--64.

\bibitem{hinckelmann1971low}
O.~Hinckelmann and L.~Spruch, {\it Low-energy scattering by long-range
  potentials},  {\em Physical Review A} {\bf 3} (1971), no.~2 642.

\bibitem{barford2003renormalization}
T.~Barford and M.~C. Birse, {\it Renormalization group approach to two-body
  scattering in the presence of long-range forces},  {\em Physical Review C}
  {\bf 67} (2003), no.~6 064006.

\bibitem{hod2013scattering}
S.~Hod, {\it Scattering by a long-range potential},  {\em Journal of High
  Energy Physics} {\bf 2013} (2013), no.~9 1--11.

\bibitem{yafaev1998scattering}
D.~Yafaev, {\it The scattering amplitude for the Schr{\"o}dinger equation with
  a long-range potential},  {\em Communications in mathematical physics} {\bf
  191} (1998), no.~1 183--218.

\bibitem{romo1998study}
W.~Romo and S.~Valluri, {\it A study of the momentum dependence of the phase
  shift for finite range and Coulomb potentials and its possible applications},
   {\em Nuclear Physics A} {\bf 636} (1998), no.~4 467--484.

\bibitem{stadnik2013resonant}
Y.~Stadnik, G.~Gossel, V.~Flambaum, and J.~Berengut, {\it Resonant scattering
  of light in a near-black-hole metric},  {\em The European Physical Journal C}
  {\bf 73} (2013), no.~11 1--7.

\bibitem{flambaum2012dense}
V.~Flambaum, G.~Gossel, and G.~Gribakin, {\it Dense spectrum of resonances and
  particle capture in a near-black-hole metric},  {\em Physical Review D} {\bf
  85} (2012), no.~8 084027.

\bibitem{pang2012relation}
H.~Pang, W.-S. Dai, and M.~Xie, {\it Relation between heat kernel method and
  scattering spectral method},  {\em The European Physical Journal C} {\bf 72}
  (2012), no.~5 1--13.

\bibitem{li2015heat}
W.-D. Li and W.-S. Dai, {\it Heat-kernel approach for scattering},  {\em The
  European Physical Journal C} {\bf 75} (2015), no.~6.

\bibitem{karayer2015extension}
H.~Karayer, D.~Demirhan, and F.~B{\"u}y{\"u}kk{\i}l{\i}{\c{c}}, {\it Extension
  of Nikiforov-Uvarov method for the solution of Heun equation},  {\em Journal
  of Mathematical Physics} {\bf 56} (2015), no.~6 063504.

\bibitem{birkandan2007examples}
T.~Birkandan and M.~Hortacsu, {\it Examples of Heun and Mathieu functions as
  solutions of wave equations in curved spaces},  {\em Journal of Physics A:
  Mathematical and Theoretical} {\bf 40} (2007), no.~5 1105.

\bibitem{ciftci2010physical}
H.~Ciftci, R.~L. Hall, N.~Saad, and E.~Dogu, {\it Physical applications of
  second-order linear differential equations that admit polynomial solutions},
  {\em arXiv preprint arXiv:1009.0464} (2010).

\bibitem{hortacsu2011heun}
M.~Hortacsu, {\it Heun functions and their uses in physics},  in {\em Proc.
  13th Regional Conference on Mathematical Physics}, pp.~23--39, 2011.

\bibitem{liu2014scattering}
T.~Liu, W.-D. Li, and W.-S. Dai, {\it Scattering theory without large-distance
  asymptotics},  {\em Journal of High Energy Physics} {\bf 2014} (2014), no.~6
  1--12.

\bibitem{burke2011r}
P.~G. Burke, {\em R-Matrix Theory of Atomic Collisions: Application to Atomic,
  Molecular and Optical Processes}, vol.~61.
\newblock Springer Science \& Business Media, 2011.

\bibitem{ballentine1998quantum}
L.~E. Ballentine, {\em Quantum mechanics: a modern development}.
\newblock World scientific, 1998.

\bibitem{joachain1975quantum}
C.~J. Joachain, {\em Quantum collision theory}.
\newblock North-Holland Publishing Company, Amsterdam, 1975.

\end{thebibliography}

\end{document}